\documentclass[journal]{ieee/IEEEtran}

\def\ARXIVVERSION{1}

\usepackage[T1]{fontenc}
\usepackage[utf8]{inputenc}

\usepackage{amsmath,amssymb,amsfonts}
\usepackage{mathtools}
\usepackage{amsthm}

\theoremstyle{definition}
\newtheorem{theorem}{Theorem}
\newtheorem{assumption}{Assumption}
\newtheorem{definition}{Definition}
\newtheorem{lemma}{Lemma}
\newtheorem{proposition}{Proposition}
\newtheorem{corollary}{Corollary}
\newtheorem{example}{Example}
\newtheorem{remark}{Remark}

\usepackage{graphicx}
\usepackage[caption=false,font=footnotesize]{subfig}
\usepackage{placeins}
\usepackage{subfiles}
\usepackage{booktabs}
\usepackage{tabularx}
\usepackage{threeparttable}
\usepackage{multirow}
\usepackage{diagbox}
\usepackage{enumitem}
\usepackage{siunitx}
\usepackage{algorithm}
\usepackage{algpseudocode}
\usepackage{tikz}
\usetikzlibrary{positioning, arrows.meta, fit, backgrounds, calc}

\usepackage{cite}

\usepackage[hidelinks]{hyperref}

\usepackage[table,dvipsnames]{xcolor}

\usepackage{optidef}

\usepackage{acro}

\DeclareAcronym{ns}{
  short = NS,
  long  = network slice
}

\DeclareAcronym{nss}{
  short = NSS,
  long  = network slice subnet
}

\DeclareAcronym{e2e}{
  short = E2E,
  long  = end-to-end
}

\DeclareAcronym{nsr}{
  short = NSR,
  long  = network slice request
}

\DeclareAcronym{embb}{
  short = eMBB,
  long  = enhanced mobile broadband
}

\DeclareAcronym{urllc}{
  short = URLLC,
  long  = ultra-reliable and low-latency communications
}

\DeclareAcronym{mmtc}{
  short = mMTC,
  long  = massive machine-type communications
}

\DeclareAcronym{dnsr}{
  short = D-NSR,
  long  = decomposed network slice request
}

\DeclareAcronym{nsrdp}{
  short = NSR-DP,
  long  = network slice request decomposition problem
}

\DeclareAcronym{nsst}{
  short = NSST,
  long  = network slice subnet template
}

\DeclareAcronym{gst}{
  short = GST,
  long  = generic network slice template
}

\DeclareAcronym{nest}{
  short = NEST,
  long  = network slice type
}

\DeclareAcronym{bwk}{
  short = BwK,
  long  = bandits with knapsacks
}

\DeclareAcronym{lincbwk}{
  short = linCBwK,
  long  = linear contextual bandits with knapsacks
}

\DeclareAcronym{ckb}{
  short = CKB,
  long  = constrained kernel bandits
}

\DeclareAcronym{cckb}{
  short = CCKB,
  long  = contextual constrained kernel bandits
}

\DeclareAcronym{gp}{
  short        = GP,
  short-plural = s,
  long         = Gaussian process,
  long-plural  = es
}

\DeclareAcronym{ard}{
  short = ARD,
  long  = automatic relevance determination
}

\DeclareAcronym{nlpd}{
  short = NLPD,
  long  = negative log predictive density
}

\DeclareAcronym{rkhs}{
  short = RKHS,
  long  = reproducing kernel Hilbert space
}

\DeclareAcronym{cgpucb}{
  short = CGP-UCB,
  long  = contextual GP-UCB
}

\DeclareAcronym{bo}{
  short = BO,
  long  = Bayesian optimization
}

\DeclareAcronym{rade}{
  short = RADE,
  long  = real-time adaptive decomposition
}

\DeclareAcronym{rails}{
  short = RAILS,
  long  = risk-aware iterated local search
}

\DeclareAcronym{config}{
  short = CONFIG,
  long  = constrained efficient global optimization
}

\DeclareAcronym{slsqp}{
  short = SLSQP,
  long  = sequential least-squares quadratic programming
}

\DeclareAcronym{an}{
  short = AN,
  long  = access network
}

\DeclareAcronym{tn}{
  short = TN,
  long  = transport network
}

\DeclareAcronym{cn}{
  short = CN,
  long  = core network
}

\DeclareAcronym{rb}{
  short = RB,
  long  = resource block
}

\DeclareAcronym{ue}{
  short = UE,
  long  = user equipment
}

\DeclareAcronym{gnb}{
  short = gNB,
  long  = next-generation Node B
}

\DeclareAcronym{nf}{
  short = NF,
  long  = network function
}

\DeclareAcronym{upf}{
  short = UPF,
  long  = user plane function
}

\DeclareAcronym{drl}{
  short = DRL,
  long  = deep reinforcement learning
}

\DeclareAcronym{qos}{
  short = QoS,
  long  = quality of service
}

\DeclareAcronym{mab}{
  short = MAB,
  long  = multi-armed bandit
}

\DeclareAcronym{cdf}{
  short = CDF,
  long  = cumulative distribution function
}

\DeclareAcronym{pasta}{
  short = PASTA,
  long  = Poisson arrivals see time averages
}

\DeclareAcronym{fifo}{
  short = FIFO,
  long  = {first-in, first-out}
}

\DeclareAcronym{ucb}{
  short = UCB,
  long  = upper confidence bound
}

\DeclareAcronym{lcb}{
  short = LCB,
  long  = lower confidence bound
}

\DeclareAcronym{sla}{
  short = SLA,
  long  = service level agreement
}

\newif\ifarxivversion
\newif\ifjournalversion
\newif\ifsupplementversion

\ifdefined\ARXIVVERSION
  \arxivversiontrue
\fi
\ifdefined\JOURNALVERSION
  \journalversiontrue
\fi
\ifdefined\SUPPLEMENTVERSION
  \supplementversiontrue
\fi

\newcommand{\arxivonly}[1]{\ifarxivversion#1\fi}
\newcommand{\journalonly}[1]{\ifjournalversion#1\fi}
\newcommand{\supplementonly}[1]{\ifsupplementversion#1\fi}

\newcommand{\startsupplementresults}{%
  \setcounter{section}{0}%
  \setcounter{equation}{0}%
  \setcounter{figure}{0}%
  \setcounter{table}{0}%
  \setcounter{algorithm}{0}%
  \setcounter{theorem}{0}%
  \setcounter{assumption}{0}%
  \setcounter{definition}{0}%
  \setcounter{corollary}{0}%
  \setcounter{proposition}{0}%
  \setcounter{lemma}{0}%
  \setcounter{example}{0}%
  \setcounter{remark}{0}%
  \renewcommand{\thesection}{S\arabic{section}}%
  \renewcommand{\theequation}{S\arabic{equation}}%
  \renewcommand{\thefigure}{S\arabic{figure}}%
  \renewcommand{\thetable}{S\arabic{table}}%
  \renewcommand{\thealgorithm}{S\arabic{algorithm}}%
  \renewcommand{\thetheorem}{S\arabic{theorem}}%
  \renewcommand{\theassumption}{S\arabic{assumption}}%
  \renewcommand{\thedefinition}{S\arabic{definition}}%
  \renewcommand{\thecorollary}{S\arabic{corollary}}%
  \renewcommand{\theproposition}{S\arabic{proposition}}%
  \renewcommand{\thelemma}{S\arabic{lemma}}%
  \renewcommand{\theexample}{S\arabic{example}}%
  \renewcommand{\theremark}{S\arabic{remark}}%
  \renewcommand{\theHsection}{S\arabic{section}}%
  \renewcommand{\theHequation}{S\arabic{equation}}%
  \renewcommand{\theHfigure}{S\arabic{figure}}%
  \renewcommand{\theHtable}{S\arabic{table}}%
  \renewcommand{\theHalgorithm}{S\arabic{algorithm}}%
  \renewcommand{\theHtheorem}{S\arabic{theorem}}%
  \renewcommand{\theHassumption}{S\arabic{assumption}}%
  \renewcommand{\theHdefinition}{S\arabic{definition}}%
  \renewcommand{\theHcorollary}{S\arabic{corollary}}%
  \renewcommand{\theHproposition}{S\arabic{proposition}}%
  \renewcommand{\theHlemma}{S\arabic{lemma}}%
  \renewcommand{\theHexample}{S\arabic{example}}%
  \renewcommand{\theHremark}{S\arabic{remark}}%
}

\ifjournalversion
  \usepackage{xr-hyper}
\fi
\ifsupplementversion
  \usepackage{xr-hyper}
\fi

\newcommand{\suppsecref}[1]{%
  \ifarxivversion
    Appendix~\ref{#1}%
  \else
    \ifjournalversion
      Supplementary Material, Sec.~\ref{#1}%
    \else
      Sec.~\ref{#1}%
    \fi
  \fi
}

\newcommand{\shortsuppsecref}[1]{%
  \ifarxivversion
    Appendix~\ref{#1}%
  \else
    \ifjournalversion
      Suppl. Mater., Sec.~\ref{#1}%
    \else
      Sec.~\ref{#1}%
    \fi
  \fi
}

\title{Contextual Bandit-Based Decomposition of Network Slice Requirements under Cumulative Resource Budget Constraints}

\author{Masaki~Kobayashi, Akito~Suzuki, Ryoichi Kawahara, Masahiro~Kobayashi
\thanks{Masaki Kobayashi, Akito Suzuki, Masahiro Kobayashi are with Network Service Systems Laboratories, NTT, Inc., 3--9--11 Midori-cho, Musashino-shi, Tokyo 180--8585 Japan
(e-mail: kobayashi.m@ntt.com; akito.suzuki@ntt.com; masahiropk.kobayashi@ntt.com).}
\thanks{Ryoichi Kawahara is with the Faculty of Information Networking for Innovation and Design, Toyo University, Kita-ku, Tokyo 115--8650, Japan (e-mail: ryoichi.kawahara@iniad.org).}
\arxivonly{\thanks{This work has been submitted to the IEEE for possible publication. Copyright may be transferred without notice, after which this version may no longer be accessible.}}
}

\begin{document}
\maketitle

\begin{abstract}
\Ac{e2e} \acp{ns} are provisioned across multiple domains of the 5G network.
In hierarchical \ac{ns} management, a tenant submits a \ac{nsr}, which specifies \ac{e2e} \ac{sla} requirements.
Rather than managing these domains directly, an \ac{e2e} controller decomposes each \ac{nsr} into domain-level \ac{sla} requirements and delegates resource allocation to domain-specific controllers, which return feasibility and resource-consumption feedback.
A poor decomposition policy can therefore cause rejection of the current request by producing infeasible requirements or reduce future admission opportunities by concentrating resource consumption in bottleneck domains.
We call this decomposition-policy optimization problem the \ac{nsrdp}.
For practical operation, online approaches to \ac{nsrdp} have been proposed.
Such approaches must jointly meet two requirements: \textit{(R1)} control long-term resource budgets and \textit{(R2)} adapt each decomposition to the performance targets and guarantee levels specified in the arriving \ac{nsr}'s \ac{sla}.
To meet these requirements, we introduce \ac{cckb} as an online solution for \ac{nsrdp}.
To address \textit{(R1)}, \ac{cckb} raises penalties for using resources that become tight, thereby discouraging decompositions that consume bottleneck resources.
To address \textit{(R2)}, it uses \acp{gp} to predict, for the current \ac{nsr}, the reward and resource usage of candidate decompositions, allowing it to select a decomposition suited to the performance targets and guarantee levels.
We establish high-probability guarantees for the resulting formulation and show through extensive 5G simulations across topology, bottleneck, and traffic-mixture settings that CCKB outperforms the baselines in the large majority of conditions.
\end{abstract}

\begin{IEEEkeywords}
5G, Network Slicing, Online Learning
\end{IEEEkeywords}

\acresetall

\documentclass[src/main.tex]{subfiles}

\begin{document}

\section{Introduction}
\label{sec:introduction}
\subsection{Background}
5G networks are expected to support service categories such as \ac{embb}, \ac{urllc}, and \ac{mmtc}~\cite{itu_r_m_2083}.
Network slicing has been proposed to meet the diverse \ac{e2e} requirements.
A 5G network comprises multiple domains: the \ac{an}, \ac{tn}, and \ac{cn}. 
In each domain, resources are allocated to support individual \acp{nss}.
A \ac{ns} is formed by interconnecting these \acp{nss}~\cite{santos_Hierarchical_architecture}.

\begin{figure}[t]
    \centering
    \includegraphics[width=\columnwidth]{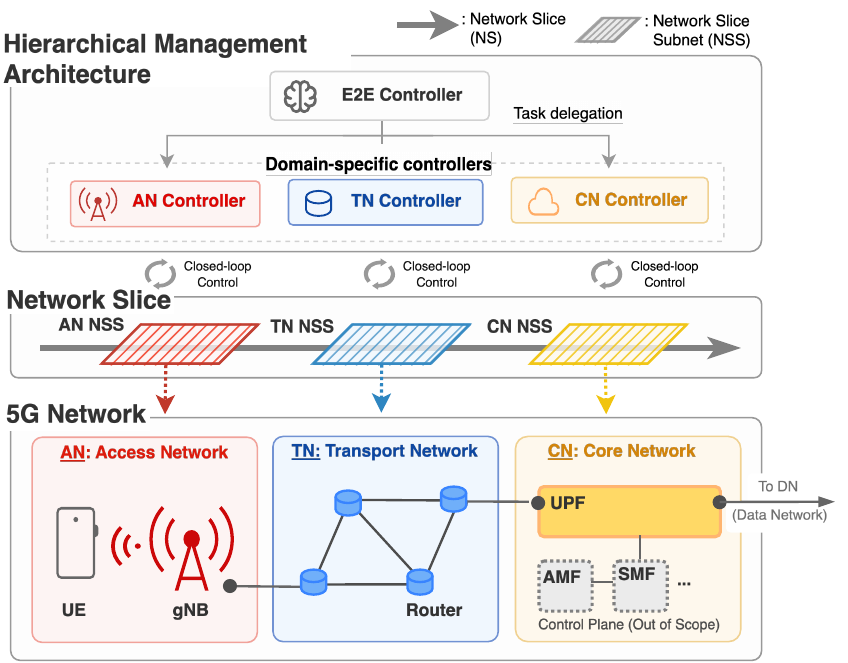}
    \caption{An overview of network slicing and its management architecture. A 5G network comprises multiple domains: the \ac{an}, \ac{tn}, and \ac{cn}. An \ac{ns} is formed by interconnecting the \acp{nss}. The hierarchical management architecture comprises the \ac{e2e} controller and domain-specific controllers.}
    \label{fig:overall_slicing}
\end{figure}

To manage \acp{ns}, a hierarchical management architecture has been specified~\cite{etsi_zsm_003}.
In this architecture, \textit{the \ac{e2e} controller} delegates \ac{ns} management tasks to subordinate \textit{domain-specific controllers},
which autonomously control their respective domains rather than being directly controlled by the \ac{e2e} controller.
Accordingly, management of each domain is entrusted to its domain-specific controller, whose internal control algorithm is treated as a black box by the \ac{e2e} controller.
\figurename~\ref{fig:overall_slicing} illustrates the concept of this \ac{ns} management architecture. 

Within this hierarchical architecture, an \ac{ns} is provisioned from a \ac{nsr} that specifies the tenant's \ac{sla} requirements.
As illustrated in \figurename~\ref{fig:ns_provisioning_procedure}, provisioning comprises three steps:
(i) \textit{\ac{nsr} decomposition},
(ii) \textit{feasibility check}, and
(iii) \textit{\ac{e2e} admission outcome}.
When admitted, the \ac{ns} is realized by consuming the resources, and the \ac{e2e} controller earns revenue from the tenant (Sec.~\ref{sec:system_model}).

\begin{figure*}
    \centering
    \includegraphics[width=0.92\linewidth]{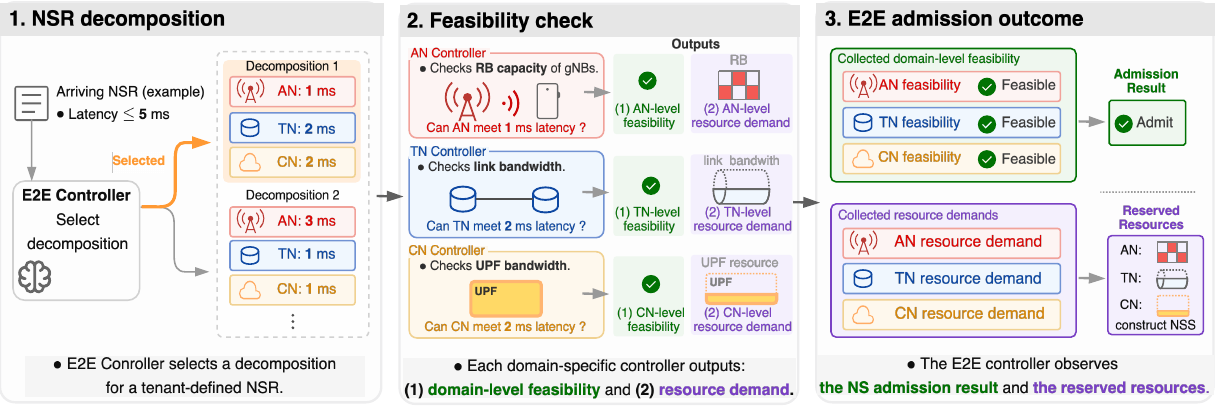}
    \caption{The \ac{ns} provisioning procedure.
    (i) In \textit{\ac{nsr} decomposition}, the \ac{e2e} controller decomposes the requirements specified in the \ac{nsr} and delegates the resulting domain-level \acp{nsr} to domain-specific controllers.
    (ii) In the \textit{feasibility check}, each domain-specific controller attempts resource allocation for its delegated \ac{nsr} and evaluates domain-level feasibility given the remaining resources and the demand required to satisfy the decomposed \ac{nsr}.
    (iii) In the \textit{\ac{e2e} admission outcome}, the \ac{e2e} controller determines whether the \ac{ns} is admitted and observes the corresponding reserved resource demands.}
    \label{fig:ns_provisioning_procedure}
\end{figure*}

\textbf{NSR-DP:}
Among these \ac{ns} provisioning steps, this work focuses on (i) \textit{\ac{nsr} decomposition}, in which the \ac{e2e} controller decomposes the requirements specified in the \ac{nsr} and delegates the resulting domain-level \acp{nsr} to domain-specific controllers.
The decomposition policy of the \ac{e2e} controller affects not only whether the current request can be admitted and generate revenue (reward),
but also how much 5G capacity is consumed (resource consumption).
For example, continuously imposing strict requirements (e.g., an overly short latency requirement) on the \ac{an} can deplete \ac{an} resources while leaving surplus resources in the \ac{tn} and \ac{cn}.
This cross-domain imbalance can leave one domain as a bottleneck, thereby reducing future admission opportunities and long-term cumulative revenue.
We call the problem of optimizing the decomposition policy under this trade-off the \textit{\acf{nsrdp}}.
Although \ac{ns} provisioning optimization has been widely studied~\cite{Afolabi2018SlicingSoftwarization,Su2019ResourceAllocationSlicingSurvey}, existing formulations assume a globally coordinated \ac{e2e} controller that directly manages resources across all 5G domains.
This assumption is misaligned with standardized architectures, where domain-specific controllers autonomously manage each domain~\cite{etsi_zsm_003,3gpp_ts_28_530}. Therefore, solving \ac{nsrdp} under this hierarchical architecture is a key problem.


\subsection{Research Objective and Challenges}
\textbf{Motivation for Online Learning:}
To solve \ac{nsrdp}, the \ac{e2e} controller must choose a decomposition before observing its realized provisioning outcome and thus needs predictive models of the reward and resource consumption of each \ac{nsr}--decomposition pair.
However, accurate model construction is challenging: (i) the \ac{e2e} controller cannot observe the internal resource-allocation logic of domain-specific controllers~\cite{etsi_zsm_002,etsi_zsm_003}, 
which makes accurate analytical modeling~\cite{sla_decomposition,sla_decomposition_multi_tier} difficult; 
and (ii) sufficiently broad logs of \ac{nsr}--decomposition outcomes are often unavailable because commercial 5G network-slicing deployments are still scaling~\cite{ericsson_diff_connectivity_2026}, 
which restricts offline supervised model learning~\cite{sla_decomp_machine_learning,sla_decomp_risk_model,sla_decomp_risk_model_nn}.

These challenges motivate learning from sequential feedback rather than relying only on fixed pre-collected data.
However, sequential learning also creates an exploration--exploitation trade-off: the controller must sometimes try uncertain decompositions to improve its models, rather than always selecting the decomposition that currently appears best.

\textbf{Research Objective:}
An exploration-aware online approach to \ac{nsrdp} must satisfy two operational requirements.
\textit{(R1) Long-term resource budget control:}
for each resource, the policy must control cumulative consumption so that long-term 5G network capacity constraints are respected.
Without this capability, the policy may consume a bottleneck resource too aggressively in early rounds, causing many later requests to be rejected even when substantial capacity remains in the other domains.
\textit{(R2) Request-conditioned decomposition:}
the decomposition must be selected according to the arriving \ac{nsr} because different \acp{nsr} induce different reward--resource trade-offs across domains.
For example, even within the \ac{urllc} class, requests may specify different latency targets or reliability guarantees and thus favor different domain-level allocations.
Without this capability, the controller may apply a decomposition suited to one \ac{sla} specification to a request with different requirements, leading to unnecessarily strict domain-level requirements or inefficient resource consumption.
Among exploration-aware online approaches, Odin~\cite{odin} uses \ac{bo} to optimize decompositions, but it satisfies neither (R1) nor (R2).
We therefore seek to learn an \ac{nsr} decomposition policy satisfying (R1) and (R2) from sequential feedback.

\textbf{Challenges:}
Our previous work~\cite{kobayashi2025icccn} addressed both requirements by formulating \ac{nsrdp} as a \ac{lincbwk} problem~\cite{lincbwk}.
However, it retained two key limitations:
\textit{(L1) finite/discrete decomposition search} and
\textit{(L2) linear realizability}.
Under (L1), the controller can select decompositions only from a predefined finite set, which can cause discretization-induced suboptimality.
Under (L2), expected reward and resource consumption are assumed to depend linearly on the \ac{nsr}--decomposition pair; nonlinear dependence can therefore cause inaccurate predictions and suboptimal decomposition.
Thus, we seek an online method that preserves (R1) and (R2) while relaxing (L1) and (L2).

As in our previous work~\cite{kobayashi2025icccn}, we analyze a stationary-response relaxation in which provisioning outcomes and resource consumption for each \ac{nsr}--decomposition pair are time-invariant and independent of residual resources, rather than the original state-dependent process (Sec.~\ref{sec:optimization_problem}).

\subsection{Our Proposal}
Our proposal has two parts.
(i) We introduce \acf{cckb} as an online solution for the relaxed \ac{nsrdp}.
It combines primal--dual cumulative resource control with \ac{gp}-based
request-conditioned decision making.
(ii) We introduce a conservative proxy formulation to address a learning
difficulty arising from resource-consumption feedback in \ac{nsrdp}.

\noindent\textbf{(i) \acs{cckb}-Based Online Solver:}
The solver comprises two components: \textit{(a)} a \ac{ckb}-style primal--dual mechanism~\cite{zhou2022kernelized_constraints} and \textit{(b)} contextual \ac{gp} models for reward and constraint prediction over \ac{nsr}--decomposition pairs, inspired by \ac{cgpucb}~\cite{krause2011cgpucb}.
The context and action are the arriving \ac{nsr} and its decomposition, respectively, and the long-term constraints represent resource budgets.
For \textit{(a)}, the primal--dual mechanism operates in a continuous decomposition space: \textit{the primal step} selects a decomposition,
while \textit{the dual step} updates budget-penalty weights from cumulative resource usage.
When a resource becomes tight, the corresponding dual penalty increases, discouraging decompositions that heavily consume that resource.
By optimizing the decomposition directly without finite-candidate enumeration, we relax \textit{(L1)} while supporting long-term budget control \textit{(R1)}.
For \textit{(b)}, the contextual \ac{gp} models condition reward and resource-consumption predictions on the current \ac{nsr}, thereby supporting contextual decomposition \textit{(R2)}.
Their nonlinear model class relaxes linear realizability \textit{(L2)}.

\noindent\textbf{(ii) Conservative Proxy Formulation:}
Directly learning resource-consumption models for the relaxed \ac{nsrdp} from provisioning feedback introduces an additional limitation:
\textit{(L3) zero-dominated resource-consumption observations}.
For example, if a resource is included in the request's \ac{ns} topology for only a small subset of requests and many of those requests are rejected, then most observations for that resource are zero.
Consequently, the model tends to make near-zero predictions, underestimating the resource consumption of a given decomposition, which can lead the controller to select resource-intensive decompositions and thereby degrade decision quality.
We address \textit{(L3)} by introducing a proxy target defined only on successful rounds and on resources included in the request's \ac{ns} topology, formulating the corresponding proxy problem, and instantiating \ac{cckb} for that proxy problem (Sec.~\ref{sec:proposed_method}).

\textbf{Performance Guarantees and Empirical Evaluation.}
We evaluate the proposed method theoretically under the relaxed \ac{nsrdp} and empirically under the original, state-dependent \ac{nsrdp}.
The theoretical analysis quantifies the optimality loss introduced by the
proxy formulation and establishes high-probability finite-time bounds on
regret and cumulative constraint violation.
The simulations assess whether the same proxy-based \ac{cckb} design remains effective in the original state-dependent setting (Sec.~\ref{sec:evaluation}).

The contributions of this paper are as follows:
\begin{itemize}
    \item \textbf{NSR-DP Formulation:}
We formulate the online \ac{nsrdp} for heterogeneous requests, where
each decomposition is conditioned on the arriving request and coupled across
a finite horizon through shared resource budgets.
    \item \textbf{CCKB-Based Online Solution:}
    We introduce \ac{cckb} as an online solution for the relaxed \ac{nsrdp},
    accommodating a continuous decision space and nonlinear reward and
    resource models.  We also construct a conservative resource proxy from successful-round
observations for resources in the request's \ac{ns} topology, thereby
mitigating the zero-dominated observation problem \textit{(L3)}.
    \item \textbf{Finite-Time Performance Analysis:}
    For the proxy problem, we establish finite-time regret and cumulative
    constraint-violation guarantees for \ac{cckb} and transfer these
    guarantees to the relaxed \ac{nsrdp}.  The transferred regret bound
    includes an additional term that quantifies the optimality gap induced
    by the proxy formulation.
    \item \textbf{Empirical Evaluation:}
    We evaluate our method in a 5G network simulator.
    Across 12 topology--bottleneck conditions, our method achieves higher mean Total Reward than our previous \ac{lincbwk}-based method~\cite{kobayashi2025icccn}; additional ablations assess its principal components.
\end{itemize}

This paper extends our conference version~\cite{kobayashi2025icccn} 
by generalizing the \ac{nsrdp} formulation, 
developing the \ac{cckb} algorithm to relax \textit{(L1)} and \textit{(L2)}, 
and introducing a proxy formulation that addresses \textit{(L3)} with corresponding theoretical guarantees.

\subsection{Paper Organization and Major Symbols}
\label{subsec:paper_organization}
We first review the related work (Sec.~\ref{sec:related_work}).
We then introduce the hierarchical \ac{ns} management modeling (Sec.~\ref{sec:system_model}) and formulate \ac{nsrdp} and its stationary-response relaxation (Sec.~\ref{sec:optimization_problem}).
Next, we present the proposed \ac{cckb} method, proxy formulation, and theoretical guarantees (Sec.~\ref{sec:proposed_method}), evaluate the resulting proxy-based method through 5G simulations (Sec.~\ref{sec:evaluation}), and discuss implementation-oriented considerations (Sec.~\ref{sec:discussion}).
Table~\ref{tab:symbols} summarizes the major symbols.

\begin{table*}[t]
\caption{Summary of Major Symbols Grouped by Their Roles}
\label{tab:symbols}
\centering
\footnotesize
\setlength{\tabcolsep}{4pt}
\renewcommand{\arraystretch}{1.08}
\newcommand{\catpad}{\rule[-\dimexpr119\dp\strutbox/100\relax]{0pt}{\dimexpr119\ht\strutbox/100+119\dp\strutbox/100\relax}}
\begin{tabularx}{\textwidth}{@{}>{\hspace*{0.8em}\arraybackslash}p{0.24\textwidth}X@{}}
\toprule
\rowcolor{black!10}
\multicolumn{2}{l}{\catpad\textbf{I. Core notation for NSR-DP and the proposed method}} \\
\rowcolor{black!4}
\multicolumn{2}{l}{\catpad\hspace{0.3em}\textbf{A. Online decompositions, outcomes, and budgets}} \\
\(s\in\mathcal{S}\), \(\mathbf{x}\in\mathcal{X}\) & Generic \ac{nsr} and decomposition, respectively. \\
\(t\in[T]\), \(T\in\mathbb{Z}_+\) & Round index and horizon length in the online formulation. \\
\(s_t,\ \mathbf{x}_t\) & Arriving \ac{nsr} and selected decomposition at round \(t\). \\
\(C_{d,j}^{\max}\) & Initial capacity and cumulative horizon budget of resource \((d,j)\). \\
\(Y_t\), \(\mathbf{U}_t\)
& Round-\(t\) \ac{e2e} admission indicator and resource-consumption vector,
respectively, with \(\mathbf{U}_t=\mathbf{0}\) when \(Y_t=0\). \\
\(W_t\), \(\kappa_{\mathrm{price}}(s)\), \(\bar p\) & Round reward \(W_t=\kappa_{\mathrm{price}}(s_t)Y_t\), where \(\kappa_{\mathrm{price}}(s)\) is the revenue for request \(s\) and is bounded by \(\bar p\). \\
\rowcolor{black!4}
\multicolumn{2}{l}{\catpad\hspace{0.3em}\textbf{B. Original and relaxed NSR-DP (Sec.~\ref{sec:optimization_problem})}} \\
\((Y,\mathbf{U})\sim\nu_{s,\mathbf{x}}\) & Generic provisioning outcome for a fixed request--decomposition pair under the stationary-response relaxation. \\
\(f(s,\mathbf{x})\), \(c_{d,j}(s,\mathbf{x})\) & Expected reward and expected resource consumption means under stationary-response relaxation. \\
\(h_{d,j}(s,\mathbf{x})\) & Normalized constraint of the relaxed NSR-DP, \(c_{d,j}(s,\mathbf{x})/C_{d,j}^{\max}-1/T\). \\
\(\pi\in\Pi,\ q(\cdot\mid s)\in\mathcal{Q}\) & \(\pi\): causal policy for the original online NSR-DP; \(q\): stationary policy for the relaxed NSR-DP. \\
\(\mathrm{OPT}^{\mathrm{rel}}\), \(\mathrm{Reg}^{\mathrm{rel}}(T)\), \(\mathrm{Vio}^{\mathrm{rel}}_{d,j}(T)\) & Optimal value, regret, and cumulative normalized violation of the relaxed NSR-DP. \\
\rowcolor{black!4}
\multicolumn{2}{l}{\catpad\hspace{0.3em}\textbf{C. Proxy formulation (Sec.~\ref{sec:proposed_method})}} \\
\(\Gamma_d(s)\), \(\mathbf{1}^{\Gamma}_{d,j}(s)\) & Path-induced resource-index set in domain \(d\) and its membership indicator, \(\mathbf{1}^{\Gamma}_{d,j}(s)=\mathbf{1}\{j\in\Gamma_d(s)\}\). \\
\(m_{d,j}^{\Gamma}(s,\mathbf{x})\), \(h^{\mathrm{prx}}_{d,j}(s,\mathbf{x})\) & Success-conditioned resource demand and its normalized proxy constraint, \(h^{\mathrm{prx}}_{d,j}=m_{d,j}^{\Gamma}/C_{d,j}^{\max}-1/T\). \\
\(\mathrm{OPT}^{\mathrm{prx}}\), \(\mathrm{Reg}^{\mathrm{prx}}(T)\), \(\mathrm{Vio}^{\mathrm{prx}}_{d,j}(T)\) & Optimal value, regret, and cumulative normalized violation of the proxy formulation. \\
\rowcolor{black!4}
\multicolumn{2}{l}{\catpad\hspace{0.3em}\textbf{D. CCKB and GP surrogates (Sec.~\ref{sec:proposed_method})}} \\
\(\boldsymbol{\phi}_t\), \(\rho\), \(V\) & Dual vector, dual cap, and dual-step-size parameter in the primal--dual updates. \\
\(\hat f_t,\hat h_{t,d,j}\), \(\bar f_t,\bar h_{t,d,j}\) & Raw exploration estimates and the clipped surrogates used by \ac{cckb}, respectively. \\
\(\mathcal{T}^{f}_{t-1}\), \(\mathcal{T}^{m_{d,j}^{\Gamma}}_{t-1}\) & Reward data (all rounds) and constraint data (successful rounds whose requested topology includes resource \((d,j)\)). \\
\(\mu_{t-1}^\bullet\), \(\sigma_{t-1}^\bullet\), \(\beta_t^f\), \(\beta_t^{m_{d,j}^{\Gamma}}\) & Posterior mean, posterior standard deviation, and exploration widths for reward and resource-consumption \acp{gp}. \\
\rowcolor{black!10}
\multicolumn{2}{l}{\catpad\textbf{II. Supporting notation for network and provisioning models}} \\
\rowcolor{black!4}
\multicolumn{2}{l}{\catpad\hspace{0.3em}\textbf{E. Basic sets, indices, and network resources (Sec.~\ref{sec:system_model})}} \\
\(\mathcal{D}\), \(d\in\mathcal{D}\) & Domain set and its index: \(\mathcal{D}=\{\mathrm{AN},\mathrm{TN},\mathrm{CN}\}\). \\
\(j\in[J_d]\), \(i\in[M]\) & Indices for resource in domain \(d\) and \ac{sla} component. \\
\(J_d\), \(J_{\mathrm{tot}}\), \((d,j)\) & Number of resources in domain \(d\), total resources \(\sum_{d\in\mathcal{D}}J_d\), and resource index pair. \\
\rowcolor{black!4}
\multicolumn{2}{l}{\catpad\hspace{0.3em}\textbf{F. NSR and decomposition model (Sec.~\ref{sec:system_model})}} \\
\(s=(A,\boldsymbol{\theta},\mathbf{R},\mathbf{g})\) & \ac{nsr} tuple: \ac{gnb} coverage \(A\), traffic/service profile \(\boldsymbol{\theta}\), \ac{sla} specification \((\mathbf{R},\mathbf{g})\). \\
\(\mathbf{R}=(R_i)_{i\in[M]}\), \(R_i=(H_i,T_i)\) & \ac{sla} components and each component's conditioning/target event pair. \\
\(\mathbf{g}=(g_i)_{i\in[M]}^\top\) & Required guarantee levels for \ac{sla} components. \\
\(R_{d,i}(s,\mathbf{x}_T)\), \(g_{d,i}(s,\mathbf{x}_g)\) & Domain-level \ac{sla} target-event component and guarantee-level allocation for component \(i\). \\
\bottomrule
\end{tabularx}
\end{table*}

\ifSubfilesClassLoaded{
  \bibliographystyle{ieee/IEEEtran}
  \bibliography{bib/references}
}{}

\end{document}

\documentclass[src/main.tex]{subfiles}

\begin{document}

\section{Related Work}
\label{sec:related_work}
This section first reviews studies on \ac{ns} provisioning optimization and
identifies the properties required of an online \ac{nsr} decomposition method
(Sec.~\ref{subsec:ns_provisioning_optimization}).  We then examine whether
existing online-learning formulations jointly provide these properties
(Sec.~\ref{subsec:online_learning}).

\subsection{\ac{ns} Provisioning Optimization}
\label{subsec:ns_provisioning_optimization}

Table~\ref{tab:related_work_comparison} positions representative prior
work families along four increasingly specific criteria:
\textit{(G1) compatibility with hierarchical \ac{ns} management},
\textit{(G2) exploration-aware online decision-making},
\textit{(G3) satisfaction of the online \ac{nsrdp} requirements (R1) and (R2)}, and
\textit{(G4) relaxation of the modeling restrictions (L1) and (L2)}.

\begin{table*}[t]
\centering
\begin{threeparttable}
\caption{Sequential positioning of representative prior work families under criteria (G1)--(G4).}
\label{tab:related_work_comparison}
\footnotesize
\setlength{\tabcolsep}{4pt}
\renewcommand{\arraystretch}{1.3}
\newcommand{\rwna}{\textemdash}
\begin{tabular}{@{}
  >{\raggedright\arraybackslash}p{0.08\textwidth}
  >{\raggedright\arraybackslash}p{0.16\textwidth}
  >{\raggedright\arraybackslash}p{0.17\textwidth}
  >{\raggedright\arraybackslash}p{0.16\textwidth}
  >{\raggedright\arraybackslash}p{0.17\textwidth}
  >{\raggedright\arraybackslash}p{0.16\textwidth}
@{}}
\toprule
\multicolumn{2}{c}{\textbf{Methods}}
&
\multicolumn{4}{c}{\textbf{Comparison criteria}}
\\
\cmidrule(lr){1-2}\cmidrule(lr){3-6}
\textbf{Ref.}
&
\textbf{Method family}
&
\textbf{\textit{(G1)} Hierarchical \ac{ns} management?}
&
\textbf{\textit{(G2)} Exploration-aware online?}
&
\textbf{\textit{(G3)} Satisfies \textit{(R1)/(R2)}?}
&
\textbf{\textit{(G4)} Relaxes \textit{(L1)/(L2)}?}
\\
\midrule
\rowcolor{black!4}
\multicolumn{6}{l}{\textbf{(1-i) Centralized \ac{e2e} provisioning}} \\

\cite{Liu2021OnSlicing,Xiao2019NFVdeep,reinforcement_partition_problem,Helmy2025SlicingAI,Zhang2021OnlineAdaptiveInterferenceAware,Tang2018QueueAwareReliableEmbedding,Vieira2024MobilityAwareSFCMigration,Camargo2023DynamicSlicingReconfiguration,Chen2025QoSAwareRoutingFlexible}
&
Centralized \ac{e2e} provisioning
&
No: directly controls domain-internal resources
&
\rwna
&
\rwna
&
\rwna
\\

\rowcolor{black!4}
\multicolumn{6}{l}{\textbf{(1-ii) Analytical and offline data-driven hierarchical decomposition}} \\

\cite{sla_decomposition,sla_decomposition_multi_tier}
&
Hierarchical analytical decomposition
&
Yes
&
No: uses explicit analytical models
&
\rwna
&
\rwna
\\

\cite{sla_decomp_machine_learning,sla_decomp_risk_model,sla_decomp_risk_model_nn}
&
Hierarchical data-driven decomposition
&
Yes
&
No: uses pre-collected logs
&
\rwna
&
\rwna
\\

\rowcolor{black!4}
\multicolumn{6}{l}{\textbf{(2-i) Heuristic online adaptation}} \\

\cite{risk_aware_online,risk_aware_multi_provider}
&
FIFO-based online risk-model adaptation
&
Yes
&
No: no exploration mechanism
&
\rwna
&
\rwna
\\

\rowcolor{black!4}
\multicolumn{6}{l}{\textbf{(2-ii) Exploration-aware online decomposition}} \\

\cite{odin}
&
\acs{config}-type~\cite{xu2023config}
&
Yes
&
Yes
&
No: misses \textit{(R1), (R2)}
&
\rwna
\\

\cite{kobayashi2025icccn}
&
\ac{lincbwk}~\cite{lincbwk} method
&
Yes
&
Yes
&
Yes
&
No: misses \textit{(L1), (L2)}
\\ \midrule

&
Ours
&
Yes
&
Yes
&
Yes
&
Yes
\\

\bottomrule
\end{tabular}
\vspace{2pt}
\begin{tablenotes}[flushleft]
\footnotesize
\item[] \textit{Note:} \textemdash{} indicates that a later criterion is not the main point of comparison once an earlier criterion is missed.
\end{tablenotes}
\end{threeparttable}
\end{table*}

\subsubsection{Controller-Architecture Assumptions: Centralized vs Hierarchical}
\label{subsubsec:rw_controller_architecture}
\ac{ns} provisioning optimization has been widely studied~\cite{Afolabi2018SlicingSoftwarization,Su2019ResourceAllocationSlicingSurvey}.
From the \ac{ns} management-architecture perspective, existing studies can be grouped into two lines.

\textbf{(i) Centralized \ac{e2e} provisioning.}
Representative examples include RL- and online-learning-based \ac{e2e} resource allocation and SFC deployment~\cite{Liu2021OnSlicing,Xiao2019NFVdeep,reinforcement_partition_problem,Helmy2025SlicingAI},
as well as optimization-based provisioning~\cite{Zhang2021OnlineAdaptiveInterferenceAware,Tang2018QueueAwareReliableEmbedding,Vieira2024MobilityAwareSFCMigration,Camargo2023DynamicSlicingReconfiguration,Chen2025QoSAwareRoutingFlexible}.
While important, these formulations assume centralized control over the underlying network resources and are therefore less aligned with standardized hierarchical management frameworks such as ETSI ZSM ISG and 3GPP management and orchestration,
where an \ac{e2e} controller delegates provisioning to domain-specific controllers rather than directly optimizing domain-internal resources~\cite{etsi_zsm_003,3gpp_ts_28_530}.
Thus, the centralized family does not satisfy \textit{(G1)}.

\textbf{(ii) Hierarchical \ac{nsr} decomposition.}
Here, the \ac{e2e} controller translates \ac{e2e} requirements into domain-level targets.
Representative examples include early SLA decomposition formulations~\cite{sla_decomposition,sla_decomposition_multi_tier}, offline machine-learning-based decomposition~\cite{sla_decomp_machine_learning,sla_decomp_risk_model,sla_decomp_risk_model_nn},
and adaptive extensions for dynamic or multi-provider settings~\cite{risk_aware_online,risk_aware_multi_provider}.
These studies are closer to our setting because they acknowledge delegated multi-domain operation,
but the black-box nature of domain-specific controllers and the limited availability of broad request--decomposition logs still make online learning particularly attractive in practice.
Accordingly, the analytical and offline data-driven families satisfy \textit{(G1)} but not \textit{(G2)}.

\subsubsection{Online NSR Decomposition}
Among methods compatible with the standardized hierarchical setting,
the relevant online studies fall into \textit{heuristic online adaptation} and \textit{exploration-aware online decomposition}.

\textbf{(i) Heuristic online adaptation.}
\Ac{rade}~\cite{risk_aware_online} and
\ac{rails}~\cite{risk_aware_multi_provider} adapt \ac{sla} decompositions online
by updating point-estimate risk models through online gradient descent using
recent provisioning feedback retained in a \ac{fifo} buffer.
Although these updates enable adaptation to changing conditions, their
action-selection rules optimize the resulting point estimates without using
predictive uncertainty and therefore do not explicitly manage the
exploration--exploitation trade-off.
We therefore categorize \ac{rade} and \ac{rails} as heuristic online-adaptation methods rather than exploration-aware online algorithms.
Accordingly, both methods satisfy \textit{(G1)} but not \textit{(G2)}.

\textbf{(ii) Exploration-aware online decomposition.}
In contrast, exploration-aware methods use predictive uncertainty in action selection to balance immediate provisioning performance against information acquisition.
Among such approaches, two method families are close.
We provide an overview here and compare their formulations with ours in detail in Sec.~\ref{subsec:requirements_prior_formulations}.

Odin~\cite{odin} is the closest exploration-aware method to our setting:
it tailors a \ac{bo} design inspired by \acf{config}~\cite{xu2023config}.
Accordingly, Odin satisfies \textit{(G1)} and \textit{(G2)} but not \textit{(G3)}:
its \ac{config}-type formulation neither models horizon-level resource budgets \textit{(R1)} nor conditions decomposition selection on the arriving \ac{nsr} \textit{(R2)}.

Our previous work~\cite{kobayashi2025icccn}, which formulates \ac{nsrdp} as a \ac{lincbwk} problem~\cite{lincbwk}, is also directly relevant.
Its formulation satisfies \textit{(G1)}--\textit{(G3)} through request-conditioned decisions under cumulative resource budgets, but not \textit{(G4)}:
it retains \textit{(L1)} by optimizing over a pre-enumerated finite candidate set and \textit{(L2)} by assuming linear reward/resource models.

\arxivonly{%
\underline{\textbf{Takeaway.}}
The preceding review identifies the target properties for an online
\ac{nsr} decomposition method: compatibility with hierarchical management and
exploration-aware learning, satisfaction of \textit{(R1)} and \textit{(R2)},
and relaxation of \textit{(L1)} and \textit{(L2)}.
}

\subsection{Online Learning}
\label{subsec:online_learning}
We next examine whether existing online-learning formulations satisfy
\textit{(R1)} and \textit{(R2)} while relaxing \textit{(L1)} and
\textit{(L2)}.  Table~\ref{tab:online_learning_comparison} compares
representative formulations.

\begin{table*}[t]
\centering
\begin{threeparttable}
\caption{Comparison of representative online-learning formulations under properties (R1), (R2), (L1), and (L2).}
\label{tab:online_learning_comparison}
\footnotesize
\setlength{\tabcolsep}{2.5pt}
\renewcommand{\arraystretch}{1.3}
\begin{tabular}{@{}
  >{\raggedright\arraybackslash}p{0.075\textwidth}
  >{\raggedright\arraybackslash}p{0.175\textwidth}
  >{\raggedright\arraybackslash}p{0.145\textwidth}
  >{\raggedright\arraybackslash}p{0.105\textwidth}
  >{\raggedright\arraybackslash}p{0.065\textwidth}
  >{\raggedright\arraybackslash}p{0.065\textwidth}
  >{\raggedright\arraybackslash}p{0.22\textwidth}
@{}}
\toprule
\multicolumn{2}{c}{\textbf{Methods}} &
\multicolumn{4}{c}{\textbf{Comparison properties}} &
\textbf{Context / Performance Target} \\
\cmidrule(lr){1-2}\cmidrule(lr){3-6}\cmidrule(lr){7-7}
\textbf{Ref.} &
\textbf{Method/formulation} &
\textbf{\textit{(R1)} Across-round resource control?} &
\textbf{\textit{(R2)} Request-conditioned?} &
\textbf{Relaxes \textit{(L1)}?} &
\textbf{Relaxes \textit{(L2)}?} &
\\
\midrule
\rowcolor{black!4}
\multicolumn{7}{l}{\textbf{(i) Continuous-action bandits with across-round constraints}} \\

\cite{shi2022continuum_constraints} &
GP-UCB with constraints &
Yes (cumulative) &
No &
Yes &
Yes &
\textemdash \\

\cite{zhou2022kernelized_constraints} &
CKB &
Yes (soft cumulative) &
No &
Yes &
Yes &
\textemdash \\

\rowcolor{black!4}
\multicolumn{7}{l}{\textbf{(ii) Contextual continuous-action methods}} \\

\cite{krause2011cgpucb,ayala2024rancb} &
CGP-UCB; RANCB &
No (none; step-wise) &
Yes &
Yes &
Yes &
\textemdash \\

\rowcolor{black!4}
\multicolumn{7}{l}{\textbf{(iii) Contextual bandits with across-round constraints}} \\

\cite{lincbwk} &
linCBwK &
Yes (hard budgets) &
Yes &
No &
No &
\textemdash \\

\cite{han2023squarecbwk} &
SquareCBwK &
Yes (hard budgets) &
Yes &
No &
Yes &
\textemdash \\

\cite{slivkins2024packing_covering} &
LagrangeCBwLC &
Yes (packing/covering) &
Yes &
No &
Yes &
\textemdash \\

\cite{guo2024loe2d,guo2025optimistic3} &
LOE2D; Optimistic\(^{3}\) &
Yes (cumulative) &
Yes &
No &
Yes &
\textemdash \\
\rowcolor{black!4}
\multicolumn{7}{l}{\textbf{(iv) Contextual primal--dual BO: shared update structure, different analysis target}}  \\

\cite{xu2023pdcbo} &
PDCBO &
Yes (time average) &
Yes &
Yes &
Yes &
Arbitrary time-varying contexts / context-wise optimum \\

&
CCKB (ours) &
Yes (soft cumulative) &
Yes &
Yes &
Yes &
Stochastic requests / distribution-optimized policy \\

\bottomrule
\end{tabular}
\vspace{2pt}
\begin{tablenotes}[flushleft]
\footnotesize
\item[] \textit{Note:} \textemdash{} indicates that this aspect is not highlighted in this comparison.
\end{tablenotes}
\end{threeparttable}
\end{table*}

\textbf{(i) Continuous-action bandits with across-round constraints.}
A Bayesian continuum-armed bandit with long-term constraints and \ac{ckb} both
optimize over continuous action domains under across-round
constraints~\cite{shi2022continuum_constraints,zhou2022kernelized_constraints}.
In particular, \ac{ckb} models nonlinear reward and constraint functions in
\acp{rkhs}.  These methods therefore satisfy \textit{(R1)} while relaxing
\textit{(L1)} and \textit{(L2)}, but are non-contextual and hence do not
satisfy \textit{(R2)}.

\textbf{(ii) Contextual continuous-action methods.}
\ac{cgpucb} places a \ac{gp} over the joint context--action space and supports
action and context sets that are not necessarily finite, whereas RANCB uses
nonlinear neural models for contextual continuous-action
selection~\cite{krause2011cgpucb,ayala2024rancb}.  Both methods satisfy
\textit{(R2)} while relaxing \textit{(L1)} and \textit{(L2)}.  However,
\ac{cgpucb} imposes no constraints, and RANCB imposes step-wise constraints
that must hold in every round rather than across the horizon.  These methods
therefore do not satisfy \textit{(R1)}.

\textbf{(iii) Contextual bandits with across-round constraints.}
The methods in this group make request-conditioned decisions under across-round
constraints and thus satisfy \textit{(R1)} and \textit{(R2)}.  \ac{lincbwk}
and SquareCBwK enforce hard budgets by stopping when a resource budget is
exhausted~\cite{badanidiyuru2014resourceful,lincbwk,han2023squarecbwk}.  In
contrast, the formulations of LagrangeCBwLC, LOE2D, and
Optimistic\(^{3}\) allow the interaction to continue throughout the horizon
and measure constraint violations cumulatively~\cite{slivkins2024packing_covering,guo2024loe2d,guo2025optimistic3}.
\ac{lincbwk} retains both \textit{(L1)} and \textit{(L2)} because it assumes
finite arms and linear reward and consumption models~\cite{lincbwk}.
SquareCBwK, LagrangeCBwLC, LOE2D, and Optimistic\(^{3}\) relax
\textit{(L2)} because they do not require linear reward and cost models, but
retain a finite action set and hence \textit{(L1)}.

\textbf{(iv) Contextual primal--dual \ac{bo}.}
At the algorithmic level, PDCBO and \ac{cckb} share the same core
primal--dual contextual BO structure: an optimistic Lagrangian primal step
followed by a projected dual update.  However, the two methods adopt different
performance benchmarks.  PDCBO allows arbitrary time-varying
contexts and compares each selected action with an optimal action for the
context observed in that round~\cite{xu2023pdcbo}.  In contrast, under i.i.d.
requests from an unknown distribution \(\mathbb P\), our analysis benchmarks
\ac{cckb} against a request-conditioned policy jointly optimized over
\(\mathbb P\) under a horizon-wide shared resource budget.  This
distribution-aware benchmark captures the opportunity cost of allocating
resources to the current request relative to future arrivals and therefore
matches the relaxed \ac{nsrdp}.
\journalonly{Accordingly, PDCBO's guarantees do not directly cover our analysis target.}

\arxivonly{%
\underline{\textbf{Takeaway.}}
Among the representative methods other than PDCBO,
none simultaneously satisfies \textit{(R1)} and \textit{(R2)} while relaxing both \textit{(L1)} and \textit{(L2)}.
Although PDCBO shares the same core primal--dual contextual BO structure as CCKB,
its theoretical guarantees do not directly apply to the analysis target considered in this paper.
}

\ifSubfilesClassLoaded{
  \bibliographystyle{ieee/IEEEtran}
  \bibliography{bib/references}
}{}

\end{document}

\documentclass[src/main.tex]{subfiles}

\begin{document}

\section{System Model}
\label{sec:system_model}
This section formalizes the hierarchical provisioning system used throughout the paper.
We consider a horizon of \(T\in\mathbb{Z}_+\) provisioning rounds, indexed by \(t\in[T]\).
\begin{samepage}
At a high level, for the \ac{e2e} controller, round \(t\) is characterized by
\[
\begin{aligned}
\text{\textit{observed request}:} \quad & s_t,\\
\text{\textit{selected decomposition}:} \quad & \mathbf{x}_t,\\
\text{\textit{observed feedback}:} \quad & (y_t,\mathbf{u}_t),
\end{aligned}
\]
where \(y_t\) is the realized \ac{e2e} admission outcome, and \(\mathbf{u}_t\) is the realized resource-consumption vector.
\end{samepage}

We first define the network and request model that determines the observed request \(s_t\) (Sec.~\ref{subsec:system_overview}), 
then represent \ac{nsr} decompositions as finite-dimensional vectors and construct the continuous decomposition space \(\mathcal{X}\) from which the controller selects \(\mathbf{x}_t\) (Sec.~\ref{subsec:admissible_decomposition}), 
and finally specify how the selected \(\mathbf{x}_t\) produces the observed admission outcome \(y_t\)
and resource-consumption vector \(\mathbf{u}_t\) (Sec.~\ref{subsec:per_round_provisioning}).

\textbf{Notation:}
We use \(\mathbb{Z}_{+}:=\{1,2,\ldots\}\), \(\mathbb{R}_{+}:=[0,\infty)\), and \([n]:=\{1,\dots,n\}\) for any \(n\in\mathbb{Z}_+\).
\(|\cdot|\) denotes cardinality (e.g., \(|\mathcal{A}|\)).
We denote finite-dimensional real vectors by bold symbols and their scalar components by the corresponding nonbold symbols.
For \(\mathbf{a}, \mathbf{b} \in \mathbb{R}^n\), \(\langle \mathbf{a}, \mathbf{b} \rangle := \mathbf{a}^{\top}\mathbf{b}\) denotes the Euclidean inner product.
For any interval \([a,b]\subset\mathbb{R}\) with \(a\le b\), define \(\operatorname{clip}_{[a,b]}(z):=\min\{b,\max\{a,z\}\}\).

\subsection{Network and NSR Model}
\label{subsec:network_nsr_model}
\label{subsec:system_overview}
We define the network resources (Sec.~\ref{subsubsec:network_model}) and request model (Sec.~\ref{subsubsec:nsr_model}) used in each provisioning round.

\subsubsection{Network Model}
\label{subsubsec:network_model}
Let the set of 5G domains be \(\mathcal{D}:=\{\text{AN},\text{TN},\text{CN}\}\), with \(d\in\mathcal{D}\) denoting a domain.
We represent the 5G network as a directed graph \(\mathcal{G}=(\mathcal{V},\mathcal{E})\), where \(\mathcal{V}=\mathcal{A}\cup\mathcal{R}\cup\mathcal{U}\) contains \acp{gnb}, routers, and \acp{upf}.
For each domain \(d\), let \(J_d\in\mathbb{Z}_+\) be the number of allocatable resources and let \(j\in[J_d]\) index them.
The total number of resources is \(J_{\mathrm{tot}}:=\sum_{d\in\mathcal{D}}J_d\), and resource \((d,j)\) has initial capacity \(C_{d,j}^{\max}\).
In our model, admitted reservations persist across rounds and reduce the remaining capacity; hence, \(C_{d,j}^{\max}\) is also the cumulative capacity budget available over the horizon.
Concretely, \ac{an} resources are schedulable radio units such as \ac{rb} budgets at \acp{gnb}, \ac{tn} resources are egress interfaces, and \ac{cn} resources are attachment interfaces at the \ac{upf} side.
This paper focuses on provisioning the \ac{an}, \ac{tn}, and \ac{cn}
resources that carry user data. Procedures that connect individual devices to
the network or exchange control messages are outside the model.

\subsubsection{\ac{nsr} Model}
\label{subsubsec:nsr_model}
An \ac{nsr} comprises a coverage set \(A\subseteq\mathcal{A}\), a traffic/service profile \(\boldsymbol{\theta}\), \ac{sla} requirement components \(\mathbf{R}:=(R_i)_{i\in[M]}\), and guarantee levels \(\mathbf{g}:=(g_i)_{i\in[M]}^\top\in(0,1]^M\).
\(M\in\mathbb{Z}_+\) is the number of \ac{sla} components and \(i\in[M]\) indexes them.

\textbf{(i) Coverage Set \(A\):}
The coverage set \(A\) specifies the \acp{gnb} that the requested \ac{ns} must cover.
For example, if \(A=\{a_1,a_2\}\), the \ac{ns} must cover \acp{gnb} \(a_1\) and \(a_2\).

\textbf{(ii) Traffic/Service Profile \(\boldsymbol{\theta}\):}
The traffic/service profile \(\boldsymbol{\theta}\) describes request-specific characteristics that influence performance, such as traffic volume.

\textbf{(iii) \ac{sla} Specification \((\mathbf{R},\mathbf{g})\):}
\(R_i=(H_i,T_i)\) specifies the events used to evaluate the \ac{sla}: \(H_i\) is the condition under which the guarantee is evaluated, and \(T_i\) is the performance event that must be achieved under that condition.
The value \(g_i\in(0,1]\) is the required guarantee level.
For a fixed request \(s\) and decomposition \(\mathbf{x}\) (defined in Sec.~\ref{subsec:admissible_decomposition}), let \(\Pr_{s,\mathbf{x}}\) denote the induced law.
Together, \(\mathbf{R}\) and \(\mathbf{g}\) specify the \ac{sla} requirements
imposed on an \ac{ns} provisioned for request \(s\) as
\begin{equation}
    \Pr_{s,\mathbf{x}}(T_i\mid H_i)\ge g_i,
    \quad \forall i\in[M].
    \label{eq:e2e_probabilistic_guarantee}
\end{equation}
In this definition, we assume \(\Pr_{s,\mathbf{x}}(H_i)>0\).

For illustration, let
\(\mathcal{N}\) denote the event of no packet drop over the \ac{e2e} path, and let
\(\Omega\) denote the sure event under \(\Pr_{s,\mathbf{x}}\).  Let \(D\) and \(\Theta\) denote the
\ac{e2e} delay and throughput random variables, respectively, with targets
\(\delta_i>0\) and \(\theta_i>0\).  Table~\ref{tab:e2e_sla_components}
gives three illustrative instantiations of \(H_i\) and \(T_i\).
\begin{table}[t]
\caption{Illustrative \ac{e2e} \ac{sla} requirement components.}
\label{tab:e2e_sla_components}
\centering
\small
\renewcommand{\arraystretch}{1.15} 
\setlength{\tabcolsep}{3pt}
\begin{tabularx}{\columnwidth}{@{}lcc>{\raggedright\arraybackslash}X@{}}
\toprule
\textbf{Component} & \(\boldsymbol{H_i}\) & \(\boldsymbol{T_i}\) & \textbf{Interpretation} \\
\midrule
Latency    & \(\mathcal{N}\) & \(\{D\le\delta_i\}\) & The probability that the \ac{e2e} delay is at most \(\delta_i\), conditioned on no packet drop, is at least \(g_i\). \\
Throughput & \(\Omega\)      & \(\{\Theta\ge\theta_i\}\) & The probability that the \ac{e2e} throughput is at least \(\theta_i\) is at least \(g_i\). \\
Non-drop   & \(\Omega\)      & \(\mathcal{N}\) & The probability of no packet drop over the \ac{e2e} path is at least \(g_i\). \\
\bottomrule
\end{tabularx}
\end{table}

\begin{example}
\label{ex:latency_requirement}
For the latency component in Table~\ref{tab:e2e_sla_components},
\eqref{eq:e2e_probabilistic_guarantee} becomes
\(
\Pr_{s,\mathbf{x}}(D\le\delta_i\mid\mathcal{N})\ge g_i.
\)
For example,
\(
\Pr_{s,\mathbf{x}}(D\le5\,\mathrm{ms}\mid\mathcal{N})\ge0.9999
\)
requires the \ac{e2e} delay to be at most \(5\) ms for at least \(99.99\%\) of successfully delivered packets.
\end{example}

Using this component representation, we define an \ac{nsr}.
\begin{definition}[Network slice request]
\label{def:network_slice_request}
An \ac{nsr} is the tuple
\[
s:=\big(A,\boldsymbol{\theta},\mathbf{R},\mathbf{g}\big).
\]
An \ac{ns} provisioned for \(s\) must satisfy \((\mathbf{R},\mathbf{g})\) over coverage set \(A\) under traffic/service profile \(\boldsymbol{\theta}\).
We denote the set of all \acp{nsr} by \(\mathcal{S}\).
The coverage set \(A\) and target events \(T_i\) correspond to
attributes specifiable in a GSMA NG.116 \ac{nest}~\cite{generic_slice_template}.\footnote{A NEST specifies per-use-case NS
requirements as target values, but not the condition under which each
target is evaluated, the probability with which it must hold, or the
traffic conditions that determine performance. The components $H_i$,
$g_i$, and $\theta$ supply these; $s$ therefore abstracts an NSR
rather than instantiating a NEST.}
\end{definition}


\subsection{Admissible NSR Decomposition}
\label{subsec:admissible_decomposition}
\label{subsec:e2e_nsr_decomp}
This subsection represents each \ac{nsr} decomposition by a finite-dimensional
parameter vector \(\mathbf{x}\) and constructs the continuous decomposition space
\(\mathcal{X}\) from which the \ac{e2e} controller selects \(\mathbf{x}\).
We first define how \(\mathbf{x}\) transforms an \ac{e2e} request
\(s\) into a delegated domain-level \ac{nsr} for each domain
(Sec.~\ref{subsubsec:domain_level_nsr_construction}).
We then state the admissibility conditions for preserving the original \ac{e2e}
\ac{sla} requirement when all delegated domain-level requirements are met
(Sec.~\ref{subsubsec:admissibility_conditions}).
Next, we give concrete constructions that satisfy these conditions and
parameterize the resulting admissible decompositions using nonnegative weight
vectors (Sec.~\ref{subsubsec:concrete_admissible_decomposition}).
Finally, we remove slack from the weight vectors, define exact splits, and
assemble \(\mathcal{X}\) as a product of simplices
(Sec.~\ref{subsubsec:exact_split_decomposition_space}).

\subsubsection{Domain-Level NSR Construction}
\label{subsubsec:domain_level_nsr_construction}
Given an \ac{e2e} \ac{nsr} \(s\) and decomposition
\(\mathbf{x}\), we define the domain-level \ac{nsr} \(s_d(s,\mathbf{x})\)
delegated to domain \(d\).  We first (i) define the components of
\(s_d(s,\mathbf{x})\).  We then (ii) state the operational meaning of
\(s_d(s,\mathbf{x})\): the probabilistic guarantee that the domain-specific
controller for \(d\) must satisfy.

\textbf{(i) Domain-level components:}
We partition the decomposition as
\(\mathbf{x}:=(\mathbf{x}_T,\mathbf{x}_g)\).  Selecting \(\mathbf{x}_T\)
determines the domain-level target events \(T_{d,i}(s,\mathbf{x}_T)\) from
\(T_i\), whereas selecting \(\mathbf{x}_g\) determines the domain-level
guarantee levels \(g_{d,i}(s,\mathbf{x}_g)\) from \(g_i\).  Concrete
parameterizations of \(\mathbf{x}_T\) and \(\mathbf{x}_g\) are given in
Sec.~\ref{subsubsec:concrete_admissible_decomposition}.
We assume that each \ac{e2e} conditioning event \(H_i\) is specified together
with domain-level evaluation conditions \(H_{d,i}(s)\), each evaluable within
domain \(d\), such that
\begin{equation}
H_i=\bigcap_{d\in\mathcal{D}}H_{d,i}(s).
\label{eq:domain_evaluation_condition_construction}
\end{equation}
These domain-level conditions are fixed before optimization.  Accordingly,
\(\mathbf{x}\) parameterizes only the decompositions of \(T_i\) and \(g_i\),
not the domain-level construction of \(H_i\).
For the latency requirement in Example~\ref{ex:latency_requirement},
\(H_i=\mathcal{N}\) and \(H_{d,i}(s)=\mathcal{N}_d\), where
\(\mathcal{N}_d\) is the event that the packet traverses domain \(d\) without
being dropped; hence, \(\mathcal{N}=\bigcap_{d\in\mathcal D}\mathcal{N}_d\).

Using the resulting domain-level quantities and \(\boldsymbol{\theta}\)
inherited from \(s\), we define the delegated domain-level \ac{nsr} as
\begin{equation}
s_d(s,\mathbf{x})
:=\big(\boldsymbol{\theta},\mathbf{R}_d(s,\mathbf{x}_T),
\mathbf{g}_d(s,\mathbf{x}_g)\big).
\label{eq:domain_level_nsr}
\end{equation}
Here,
\(R_{d,i}(s,\mathbf{x}_T):=\big(H_{d,i}(s),T_{d,i}(s,\mathbf{x}_T)\big)\),
\(\mathbf{R}_d(s,\mathbf{x}_T):=\big(R_{d,i}(s,\mathbf{x}_T)\big)_{i\in[M]}\),
and
\(\mathbf{g}_d(s,\mathbf{x}_g):=\big(g_{d,i}(s,\mathbf{x}_g)\big)_{i\in[M]}^\top\).

\textbf{(ii) Operational meaning:}
The following inequality is the domain-level counterpart of
\eqref{eq:e2e_probabilistic_guarantee} and defines when domain \(d\) satisfies
component \(i\) of its delegated \ac{nsr} \(s_d(s,\mathbf{x})\).  It represents hierarchical control
in the \ac{ns} management architecture: the \ac{e2e} controller delegates the
domain-level evaluation condition, target event, and guarantee level, and the
domain-specific controller provisions its resources to satisfy the resulting
probabilistic guarantee:
\begin{equation}
\Pr_{s,\mathbf{x}}\!\left(T_{d,i}(s,\mathbf{x}_T)\mid H_{d,i}(s)\right)
\ge g_{d,i}(s,\mathbf{x}_g),
\quad \forall d\in\mathcal D,\ i\in[M].
\label{eq:domain_probabilistic_guarantee}
\end{equation}
We assume \(\Pr_{s,\mathbf{x}}(H_{d,i}(s))>0\) whenever this conditional probability is used.

\subsubsection{Admissibility Conditions}
\label{subsubsec:admissibility_conditions}
A decomposition cannot be chosen arbitrarily because, even if every domain-specific controller can satisfy its delegated request, the original \ac{e2e} requirement need not be satisfied.
For example, for an \ac{e2e} delay target of \(10\) ms across the \ac{an}, \ac{tn}, and \ac{cn}, domain targets of \((2,3,5)\) ms preserve the target because their sum is \(10\) ms.
In contrast, targets of \((4,4,4)\) ms allow every domain target to be met while the \ac{e2e} delay reaches \(12\) ms.
Therefore, the \ac{e2e} controller cannot select domain-level targets independently.

Operationally, for each component \(i\), guarantee preservation requires that satisfaction of every delegated domain-level guarantee imply satisfaction of the original \ac{e2e} guarantee:
\begin{equation}
\begin{aligned}
&\forall d\in\mathcal{D},\quad
\Pr_{s,\mathbf{x}}\!\left(
T_{d,i}(s,\mathbf{x}_T)\,\middle|\,H_{d,i}(s)
\right)
\ge g_{d,i}(s,\mathbf{x}_g)
\\[-1mm]
&\hspace{24mm}\Rightarrow
\Pr_{s,\mathbf{x}}(T_i\mid H_i)\ge g_i.
\end{aligned}
\label{eq:operational_guarantee_preservation}
\end{equation}
We formalize this requirement as follows.
\begin{definition}[Admissible decomposition]
\label{def:admissible_decomp}
Fix an \ac{nsr} \(s\in\mathcal{S}\).
A decomposition \(\mathbf{x}\) is admissible for \(s\) if, for every \ac{sla} component \(i\in[M]\),
\begin{equation}
\begin{aligned}
&\textnormal{(A1)}\quad
H_i\cap\bigcap_{d\in\mathcal{D}}T_{d,i}(s,\mathbf{x}_T)\subseteq T_i,\\
&\textnormal{(A2)}\quad
g_i\le\prod_{d\in\mathcal{D}}g_{d,i}(s,\mathbf{x}_g).
\end{aligned}
\label{eq:admissibility_conditions}
\end{equation}
\end{definition}
Under cross-domain conditional-independence conditions that are reasonable
when domain-specific controllers operate on separate resources and no
common-cause failures or traffic fluctuations affect multiple domains,
\textnormal{(A1)} and \textnormal{(A2)} are sufficient for the
guarantee-preservation implication in
\eqref{eq:operational_guarantee_preservation}.  A proof and a detailed
discussion of the applicability of the independence conditions are provided
in \suppsecref{appendix:cond_independence_guarantee}.

\subsubsection{Concrete Construction of Admissible Decompositions}
\label{subsubsec:concrete_admissible_decomposition}
Definition~\ref{def:admissible_decomp} characterizes admissibility but does not
provide finite-dimensional constructions.  We therefore specify
\textit{(i) target decomposition}, in which \(\mathbf{x}_T\) determines the
domain-level target events \(T_{d,i}(s,\mathbf{x}_T)\), and
\textit{(ii) guarantee decomposition}, in which \(\mathbf{x}_g\) determines the
domain-level guarantee levels \(g_{d,i}(s,\mathbf{x}_g)\).  The feasible
parameter values are subsequently restricted to form the continuous decomposition space
\(\mathcal{X}\) used for optimization.
Table~\ref{tab:sla_decomposition_constructions} summarizes these constructions
for the illustrative \ac{sla} components introduced in
Table~\ref{tab:e2e_sla_components}.
\begin{table*}[t]
\caption{Illustrative target and guarantee decompositions.}
\label{tab:sla_decomposition_constructions}
\centering
\small
\renewcommand{\arraystretch}{1.15} 
\setlength{\tabcolsep}{3.5pt}
\begin{tabular}{@{}
>{\raggedright\arraybackslash}m{0.09\textwidth}
>{\centering\arraybackslash}m{0.12\textwidth}
>{\centering\arraybackslash}m{0.13\textwidth}
>{\centering\arraybackslash}m{0.11\textwidth}
>{\centering\arraybackslash}m{0.128\textwidth}
>{\centering\arraybackslash}m{0.085\textwidth}
>{\centering\arraybackslash}m{0.11\textwidth}
>{\centering\arraybackslash}m{0.128\textwidth}
@{}}
\toprule
& \multicolumn{4}{c}{\textbf{(i) Target decomposition: }\(T_i\xrightarrow{\mathbf{x}_T}\{T_{d,i}\}_{d\in\mathcal D}\)}
& \multicolumn{3}{c}{\textbf{(ii) Guarantee decomposition: }\(g_i\xrightarrow{\mathbf{x}_g}\{g_{d,i}\}_{d\in\mathcal D}\)} \\
\cmidrule(lr){2-5}\cmidrule(l){6-8}
\textbf{Component}
& \textbf{Rule}
& \textbf{E2E input} \newline \(T_i\)
& \textbf{Parameter} \newline \(\mathbf{x}_T\)
& \textbf{Domain output} \newline \(T_{d,i}\)
& \textbf{E2E input} \newline \(g_i\)
& \textbf{Parameter} \newline \(\mathbf{x}_g\)
& \textbf{Domain output} \newline \(g_{d,i}\) \\
\midrule
Latency
& (i-a) Allocation
& \(\{D\le\delta_i\}\)
& \(\substack{\mathbf w_i\ge\mathbf 0,\\ \sum_d w_{d,i}\le1}\)
& \(\{D_d\le w_{d,i}\delta_i\}\)
& \(g_i\)
& \(\substack{\boldsymbol\eta_i\ge\mathbf 0,\\ \sum_d\eta_{d,i}\le1}\)
& \(g_i^{\eta_{d,i}}\) \\
Throughput
& (i-b) Replication
& \(\{\Theta\ge\theta_i\}\)
& None
& \(\{\Theta_d\ge\theta_i\}\)
& \(g_i\)
& \(\substack{\boldsymbol\eta_i\ge\mathbf 0,\\ \sum_d\eta_{d,i}\le1}\)
& \(g_i^{\eta_{d,i}}\) \\
Non-drop
& (i-b) Replication
& Success event: \(\mathcal N\)
& None
& \(\mathcal N_d\)
& \(g_i\)
& \(\substack{\boldsymbol\eta_i\ge\mathbf 0,\\ \sum_d\eta_{d,i}\le1}\)
& \(g_i^{\eta_{d,i}}\) \\
\bottomrule
\end{tabular}
\end{table*}

\textbf{(i) Target Decomposition:}
Target decomposition constructs the domain-level target events
\(T_{d,i}(s,\mathbf{x}_T)\) from the \ac{e2e} target event \(T_i\) so that
\(H_i\) and all domain-level target events jointly imply \(T_i\), i.e.,
\(
H_i\cap\bigcap_{d\in\mathcal D}T_{d,i}(s,\mathbf{x}_T)\subseteq T_i,
\)
as required by \textnormal{(A1)}.  For the illustrative components in
Table~\ref{tab:sla_decomposition_constructions}, the construction depends on
how the \ac{e2e} target relates to the domain-level quantities.  (a) Target-value
allocation applies when a numerical \ac{e2e} bound is allocated among the
domains, as for latency. (b) Target-condition replication applies when no
numerical split is needed and the \ac{e2e} target follows from requiring every
domain to satisfy its corresponding local condition, as for throughput and
non-drop.

\textbf{(i-a) Target-value allocation:}
Target-value allocation constructs the domain-level target events \(T_{d,i}\)
by \textit{translating the \ac{e2e} bound in \(T_i\) into one bound for each domain.}
For example, consider the latency target \(T_i=\{D\le\delta_i\}\), where
\(
D=\sum_{d\in\mathcal D}D_d,
\)
and \(D_d\) is the latency contributed by domain \(d\).  Let
\(\delta_{d,i}\ge0\) denote the delay bound assigned to domain \(d\), and
choose these bounds such that
\(
\sum_{d\in\mathcal D}\delta_{d,i}\le\delta_i.
\)
For \(\delta_i=10\) ms, one choice is
\((\delta_{d,i})_{d\in\mathcal D}=(2,3,5)\) ms.  On \(H_i\), if every
domain-specific controller satisfies its delegated latency target
\(T_{d,i}=\{D_d\le\delta_{d,i}\}\), then
\begin{equation*}
D=\sum_dD_d
\le\sum_d\delta_{d,i}
\le\delta_i,
\end{equation*}
and hence
\(
H_i\cap\bigcap_{d\in\mathcal D}T_{d,i}\subseteq T_i,
\)
which verifies \textnormal{(A1)}.

\noindent\textbf{(Generic construction):}
The latency example above is an instance of the following construction for any
\ac{e2e} quantity that is the sum of its domain-level contributions.  We use
\(Q_i\), \(Q_{d,i}\), \(\tau_i\), and \(\tau_{d,i}\) to denote the generic
counterparts of \(D\), \(D_d\), \(\delta_i\), and \(\delta_{d,i}\), respectively,
with \(\tau_i>0\).  The construction is summarized as follows:
\begin{itemize}[leftmargin=*,labelsep=0.5em]
\item \textit{Input:} The \ac{e2e} target event is written as follows:
\begin{equation*}
T_i=\{Q_i\le\tau_i\},
\qquad
Q_i=\sum_{d\in\mathcal D}Q_{d,i}.
\end{equation*}
\item \textit{Allocation:} A nonnegative vector
\(\mathbf w_i=(w_{d,i})_{d\in\mathcal D}\), where \(w_{d,i}\) is the fraction
of the \ac{e2e} bound assigned to domain \(d\), satisfying
\(
\sum_{d\in\mathcal D}w_{d,i}\le1,
\)
and defining the domain-level bounds as
\(\tau_{d,i}:=w_{d,i}\tau_i\).
\item \textit{Output:} The domain-level target events
\begin{equation*}
T_{d,i}(s,\mathbf{x}_T):=\{Q_{d,i}\le \tau_{d,i}\}.
\end{equation*}
\end{itemize}
The latency targets
\((2,3,5)\) ms above correspond to \(\mathbf w_i=(0.2,0.3,0.5)\).
The output satisfies \(\bigcap_dT_{d,i}(s,\mathbf{x}_T)\subseteq T_i\), and
hence \textnormal{(A1)} holds.  If the required relation is instead
\(\sum_dQ_{d,i}\ge\tau_i\), the same construction applies by reversing the
inequalities in \(T_i\), \(T_{d,i}\), and the allocation-sum condition.

\textbf{(i-b) Target-condition replication:}
Unlike target-value allocation, target-condition replication does not divide a
numerical \ac{e2e} bound among the domains.  Instead, it defines \(T_{d,i}\) by
applying the condition in \(T_i\) to the corresponding domain.
Consider the \ac{e2e} throughput target
\(T_i=\{\Theta\ge\theta_i\}\) introduced in
Table~\ref{tab:e2e_sla_components}.  Let \(\Theta_d\) denote the throughput in
domain \(d\).  We model the \ac{e2e} throughput as the bottleneck across the
domains, i.e., \(\Theta=\min_d\Theta_d\).  Therefore,
\begin{equation*}
T_i=\{\Theta\ge\theta_i\}
=\bigcap_{d\in\mathcal D}\{\Theta_d\ge\theta_i\},
\end{equation*}
and setting \(T_{d,i}=\{\Theta_d\ge\theta_i\}\) consequently satisfies
\textnormal{(A1)}.  For example, if \(\theta_i=100\) Mbit/s, the same target
\(\Theta_d\ge100\) Mbit/s is imposed on every domain.

\noindent\textbf{(Generic construction):}
The throughput example above is an instance of the following construction.  We
use \(C_{d,i}(s)\) to denote a target event that can be evaluated within domain
\(d\); it is not a decomposition parameter.  In the throughput example,
\(C_{d,i}(s)=\{\Theta_d\ge\theta_i\}\), whereas, for the non-drop component,
\(C_{d,i}(s)=\mathcal N_d\).  The construction is summarized as follows:
\begin{itemize}[leftmargin=*,labelsep=0.5em]
\item \textit{Input:} The \ac{e2e} target event can be written as
\begin{equation*}
T_i=\bigcap_{d\in\mathcal D}C_{d,i}(s).
\end{equation*}
\item \textit{Allocation:} No numerical target value is allocated, and no
target-decomposition parameter is introduced.
\item \textit{Output:} The domain-level target events
\begin{equation*}
T_{d,i}(s,\mathbf{x}_T):=C_{d,i}(s), \qquad d\in\mathcal D.
\end{equation*}
\end{itemize}
The output satisfies
\(
H_i\cap\bigcap_{d\in\mathcal D}T_{d,i}(s,\mathbf{x}_T)
\subseteq\bigcap_{d\in\mathcal D}C_{d,i}(s)=T_i,
\)
which verifies \textnormal{(A1)}.

\textbf{(ii) Guarantee Decomposition:}
Guarantee decomposition constructs the delegated guarantee levels
\(g_{d,i}(s,\mathbf{x}_g)\) so that their product is at least the \ac{e2e}
level \(g_i\), as required by \textnormal{(A2)}.  We use the following
exponent-weight parameterization:
\begin{itemize}[leftmargin=*,labelsep=0.5em]
\item \textit{Input:} The \ac{e2e} guarantee level \(g_i\in(0,1]\).
\item \textit{Allocation:} Nonnegative exponent weights
\(\boldsymbol\eta_i=(\eta_{d,i})_{d\in\mathcal D}\) satisfying
\(\sum_d\eta_{d,i}\le1\).
\item \textit{Output:} The domain-level guarantee levels
\begin{equation*}
g_{d,i}(s,\mathbf{x}_g):=g_i^{\eta_{d,i}}.
\end{equation*}
\end{itemize}
Because \(g_i\in(0,1]\), the output satisfies
\(
g_i\le g_i^{\sum_d\eta_{d,i}}
=\prod_{d\in\mathcal D}g_{d,i}(s,\mathbf{x}_g).
\)
Thus, \textnormal{(A2)} holds. 
\begin{remark}
The latency, throughput, and non-drop components above are illustrative.  An additional \ac{sla} component can be incorporated by
defining domain-level target and guarantee constructions that satisfy
\textnormal{(A1)} and \textnormal{(A2)}, respectively.
\end{remark}

\subsubsection{Exact-Split Decomposition Space}
\label{subsubsec:exact_split_decomposition_space}
Sec.~\ref{subsubsec:concrete_admissible_decomposition} introduces nonnegative
domain-weight vectors: \(\mathbf w_i\) allocates a numerical target value,
whereas \(\boldsymbol\eta_i\) allocates a guarantee level.  We call such a
weight vector an \emph{exact split} if its entries sum to \(1\).  
We next explain \textit{why the decomposition space \(\mathcal X\) available to the
\ac{e2e} controller retains only exact splits.}

\textbf{(i) Slack Removal for Target Allocation:}
For an additive upper-bound target \(T_i=\{Q_i\le\tau_i\}\), admissibility
requires \(\sum_d w_{d,i}\le1\).  If \(\sum_d w_{d,i}<1\), part of the \ac{e2e} target
bound is not assigned to any domain.  Increasing one or more weights until they
sum to \(1\) relaxes the corresponding domain-level targets while preserving
\textnormal{(A1)}.  For example, the latency weights
\(\mathbf w_i=(0.2,0.3,0.4)\) are admissible but leave \(10\%\) of the \ac{e2e}
delay budget unused.  Replacing them with \((0.2,0.3,0.5)\) preserves
admissibility, uses the full budget, and makes no delegated target stricter.   

\textbf{(ii) Slack Removal for Guarantee Allocation:}
Guarantee admissibility requires \(\sum_d\eta_{d,i}\le1\).  If
\(\sum_d\eta_{d,i}<1\) and \(g_i<1\), then
\(\prod_d g_{d,i}>g_i\), so the delegated guarantee levels collectively exceed
the \ac{e2e} level required by \textnormal{(A2)}.  Because
\(g_i^{\eta_{d,i}}\) is nonincreasing in \(\eta_{d,i}\) for
\(g_i\in(0,1]\), increasing one or more exponent weights until they sum to
\(1\) makes no delegated guarantee stricter and preserves \textnormal{(A2)}.

\textbf{(iii) Product-of-Simplices Decomposition Space:}
The arguments in \textnormal{(i)} and \textnormal{(ii)} show that allocation
slack can be removed from both \(\mathbf w_i\) and \(\boldsymbol\eta_i\) without
making any delegated requirement stricter or violating \textnormal{(A1)} or
\textnormal{(A2)}.  We therefore restrict both weight vectors to exact splits.
By definition, each exact-split weight vector is nonnegative and sums to \(1\),
and hence belongs to the following \((|\mathcal D|-1)\)-dimensional simplex:
\begin{equation*}
\Delta^{|\mathcal D|-1}
:=\left\{\mathbf a\in\mathbb R_{\ge0}^{|\mathcal D|}
\;\middle|\;\sum_{d\in\mathcal D}a_d=1\right\}.
\end{equation*}
Fig.~\ref{fig:latency_weight_simplex} shows this simplex for a latency
allocation vector, together with one exact split and one non-admissible split.

\begin{figure}[t]
\centering
\begin{tikzpicture}[x=0.45cm,y=0.45cm,font=\footnotesize]
  \def\L{4.2}
  \def\H{3.637}
  \coordinate (AN) at (0,0);
  \coordinate (TN) at (\L,0);
  \coordinate (CN) at (\L/2,\H);
  \coordinate (Wgood) at (2.31,1.8185);
  \coordinate (Wbad) at (3.95,3.95);
  \fill[black!10] (AN) -- (TN) -- (CN) -- cycle;
  \draw[thick] (AN) -- (TN) -- (CN) -- cycle;
  \node[below, font=\normalsize] at (AN) {AN};
  \node[below, font=\normalsize] at (TN) {TN};
  \node[above, font=\normalsize] at (CN) {CN};
  \filldraw[black] (Wgood) circle (1.1pt);
  \node[anchor=west, align=left] (GoodLabel) at (2.95,2.45)
    {exact split: \(\mathbf{w}_i=(0.2,0.3,0.5)\)};
  \draw[->, thick] (GoodLabel.west) .. controls (2.55,2.25) and (2.35,2.00) .. (Wgood);
  \draw[thick] ($(Wbad)+(-0.10,-0.10)$) -- ($(Wbad)+(0.10,0.10)$);
  \draw[thick] ($(Wbad)+(-0.10,0.10)$) -- ($(Wbad)+(0.10,-0.10)$);
  \node[anchor=west, align=left] at (4.15,3.95)
    {non-admissible: \(\mathbf{w}_i=(0.4,0.4,0.4)\)\\\(\sum_d w_{d,i}=1.2\), so \((4,4,4)\) ms};
\end{tikzpicture}
\caption{Simplex for latency allocation vector with \(\delta_i=10\) ms. Each point in the simplex is a weight vector \(\mathbf{w}_i\in\Delta^2\), which yields domain-level delay targets \((w_{\mathrm{AN},i}\delta_i,w_{\mathrm{TN},i}\delta_i,w_{\mathrm{CN},i}\delta_i)\) whose sum is exactly \(\delta_i\).}
\label{fig:latency_weight_simplex}
\end{figure}

Let \(K_T,K_g\in\mathbb Z_{\ge0}\) denote the numbers of \ac{sla} components
whose target values and guarantee levels, respectively, are decomposed using
domain-weight vectors.  Accordingly, the target- and guarantee-decomposition
spaces are \(\mathcal X_T=\big(\Delta^{|\mathcal D|-1}\big)^{K_T}\) and
\(\mathcal X_g=\big(\Delta^{|\mathcal D|-1}\big)^{K_g}\), respectively.
Their Cartesian product gives the overall decomposition space
\begin{equation}
\mathcal X=\mathcal X_T\times\mathcal X_g
=\big(\Delta^{|\mathcal D|-1}\big)^{K_T+K_g}.
\label{eq:exact_split_decomposition_space}
\end{equation}

\subsection{Per-Round Hierarchical Provisioning Procedure}
\label{subsec:per_round_provisioning}
\label{subsec:ns_management_architecture}
We formalize the three-stage provisioning procedure illustrated in Fig.~\ref{fig:ns_provisioning_procedure}.
First, given \(s_t\) and the selected decomposition \(\mathbf{x}_t\), the
\ac{e2e} controller constructs and delegates the domain-level inputs
(Sec.~\ref{subsubsec:per_round_decomposition}).
Each domain-specific controller then evaluates provisioning feasibility and
reports its feasibility and resource-demand outputs
(Sec.~\ref{subsubsec:per_round_feasibility}).
Finally, the \ac{e2e} controller aggregates these outputs into the observed
admission outcome \(y_t\) and resource-consumption vector \(\mathbf{u}_t\)
(Sec.~\ref{subsubsec:per_round_outcome}).

\subsubsection{\ac{nsr} Decomposition and Domain-Level Delegation}
\label{subsubsec:per_round_decomposition}
Under the standardized hierarchical \ac{ns} management architecture, the \ac{e2e} controller delegates management tasks while domain-specific controllers autonomously manage their respective domains~\cite{etsi_zsm_003,3gpp_ts_28_530}.
In this paper, we model the information exchanged in this delegation by assuming that, at round \(t\), the \ac{e2e} controller provides each domain-specific controller \(d\in\mathcal{D}\) with two complementary inputs:
\begin{itemize}[leftmargin=*]
    \item \textbf{Path-Induced Resource Set \(\Gamma_d(s_t)\):}
    This set identifies \textit{which domain resources may be used to provision the
    \ac{nsr}.}
    \item \textbf{Delegated Request \(s_d(s_t,\mathbf{x}_t)\):}
    Defined in \eqref{eq:domain_level_nsr}, this request specifies the traffic
    profile and decomposed \ac{sla} requirements that the domain-specific
    controller uses to determine \textit{how much of each resource is required to provision the \ac{nsr}.}
\end{itemize}
Using these two inputs, each domain-specific controller determines the
allocation amounts and evaluates feasibility through its internal control
procedure, whose input--output behavior is specified in
Sec.~\ref{subsubsec:per_round_feasibility}.
Fig.~\ref{fig:path_mapping_and_decomposition} summarizes this delegation
interface and the corresponding role separation.

We now formally define \(\Gamma_d(s)\).
For each request \(s\) and covered \ac{gnb} \(a\in A\), let \(\Phi_{\mathrm{path}}(s,a)\) denote the directed downlink path selected by the \ac{e2e} controller using a fixed routing rule from the selected \ac{upf} \(u\in\mathcal{U}\) to \(a\), represented by the resources traversed along the path, including the downlink output resource of \(a\).
For example, \(\Phi_{\mathrm{path}}(s,a)\) may be obtained as a shortest admissible path from the selected \ac{upf} to \(a\).
The resulting resource set in domain \(d\) is
\begin{equation}
\Gamma_d(s)
:=
\left\{j\in[J_d]\mid \exists a\in A,\ (d,j)\in\Phi_{\mathrm{path}}(s,a)\right\}.
\label{eq:path_induced_resource_set}
\end{equation}
%
Because the routing rule is fixed, $\Gamma_d(s)$ is determined before
a decomposition is selected.\footnote{Path selection is therefore
outside the black box: the E2E controller computes
$\Phi_{\mathrm{path}}(s,a)$, and hence $\Gamma_d(s)$, itself. Only the
internal input--output mapping $\Phi_d$, which turns
$(\Gamma_d(s), s_d(s,\mathbf{x}))$ into a feasibility indicator and a
demand vector, is treated as a black box.}

\begin{figure}[t]
    \centering
    \includegraphics[width=0.8\columnwidth]{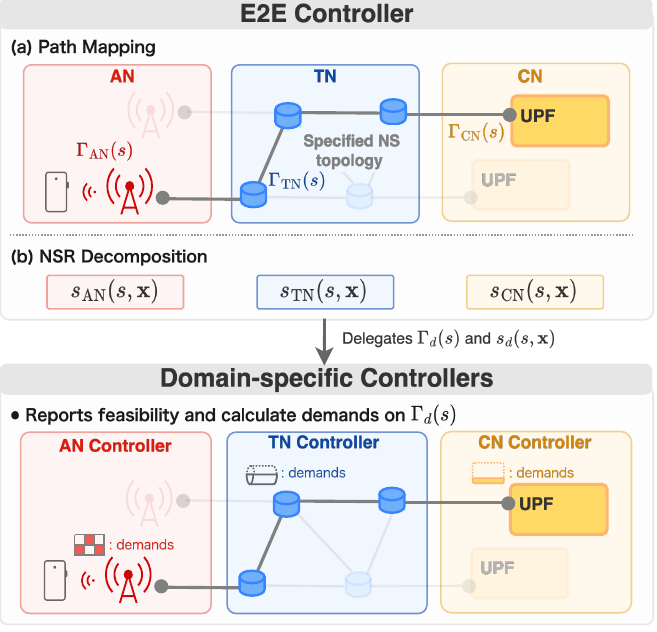}
    \caption{Role separation in hierarchical provisioning. The \ac{e2e} controller provides the path-induced resource set \(\Gamma_d(s_t)\) and delegated request \(s_d(s_t,\mathbf{x}_t)\); each domain-specific controller retains control of its internal resource allocation and reports feasibility and, when locally feasible, demand.}
    \label{fig:path_mapping_and_decomposition}
\end{figure}

\subsubsection{Domain-Level Feasibility Check}
\label{subsubsec:per_round_feasibility}
After receiving the two delegated inputs \(\Gamma_d(s_t)\) and \(s_d(s_t,\mathbf{x}_t)\), 
each domain-specific controller applies its internal control procedure \(\Phi_d\), and outputs the following two random domain-level variables:
\begin{itemize}[leftmargin=*]
    \item \textbf{Feasibility Indicator \(Y_{t,d} \in \{0,1\}\):}
    This binary indicator equals \(1\) when the controller can provision the
    delegated request \(s_d(s_t,\mathbf{x}_t)\) using the resources in
    \(\Gamma_d(s_t)\) under its current local conditions, and \(0\) otherwise.
    \item \textbf{Demand Vector
    \(\mathbf{B}_{t,d}\in\mathbb{R}_+^{J_d}\):}
    When \(Y_{t,d}=1\), \(B_{t,d,j}\) gives the amount that would
    be reserved from resource \((d,j)\) if the \ac{e2e} request were admitted.
    For \(j\notin\Gamma_d(s_t)\), \(B_{t,d,j}=0\).
\end{itemize}

We treat \(\Phi_d\) as a black box in the problem formulation.
\suppsecref{appendix:performance_pipeline_overview} gives its concrete
instantiation for the experiments.

\subsubsection{\ac{e2e} Admission and Consumption}
\label{subsubsec:per_round_outcome}
The \ac{e2e} controller aggregates the domain-level outputs \(Y_{t,d}\) and \(\mathbf{B}_{t,d}\) to determine two round-level quantities: the admission indicator \(Y_t\) and resource consumption \(\mathbf{U}_t\).

\noindent\textbf{(Admission Indicator \(Y_t\)):}
\(Y_t\) is defined as
\begin{equation}
Y_t:=\prod_{d\in\mathcal{D}}Y_{t,d}.
\label{eq:per_round_e2e_feasibility}
\end{equation}
Thus, \(Y_t=1\) if and only if every domain is locally feasible, and the request is admitted exactly in this case.

\noindent\textbf{(Realized Resource Consumption \(\mathbf{U}_t\)):}
The \ac{e2e} controller then constructs
\(\mathbf{U}_t\in\mathbb{R}_+^{J_{\mathrm{tot}}}\) as
\begin{equation}
U_{t,d,j}
:=
\begin{cases}
B_{t,d,j}, & Y_t=1 \ \land\ j\in\Gamma_d(s_t),\\
0, & Y_t=0 \ \lor\ j\notin\Gamma_d(s_t).
\end{cases}
\label{eq:realized_consumption_from_success_report}
\end{equation}
A rejected request therefore incurs no resource consumption, whereas on an
admitted round, only resources in the requested \ac{ns} topology incur
consumption, equal to their reported reservation demand \(B_{t,d,j}\).
In either case, \(0\le U_{t,d,j}\le C_{d,j}^{\max}\) almost surely.

At the end of the round, the \ac{e2e} controller observes realizations \(y_t\) and \(\mathbf{u}_t\) of the aggregate outcome \((Y_t,\mathbf{U}_t)\).


\end{document}

\documentclass[src/main.tex]{subfiles}

\begin{document}

\section{NSR-DP and Its Relaxation}
\label{sec:optimization_problem}
At each round \(t\), the provisioning procedure in Sec.~\ref{subsec:per_round_provisioning} is executed, and the \ac{e2e} controller's decision is the decomposition \(\mathbf{x}_t\in\mathcal{X}\).
We first formulate this sequential decomposition-selection problem as the \ac{nsrdp}, in which a causal policy selects \(\mathbf{x}_t\) from the current \ac{nsr} and past provisioning feedback (Sec.~\ref{subsec:original_nsrdp}).
Because its state-dependent provisioning responses make the problem difficult to analyze, we then introduce a stationary-response relaxation for tractable algorithm design (Sec.~\ref{subsec:assumption_relaxation}).
Finally, we define regret and cumulative normalized constraint violation as performance criteria for designing and evaluating online policies for the relaxed problem (Sec.~\ref{subsec:policy_evaluation_criteria}) and compare the relaxed \ac{nsrdp} formulation with prior formulations (Sec.~\ref{subsec:requirements_prior_formulations}).

\subsection{\ac{nsrdp}}
\label{subsec:original_nsrdp}
\label{subsec:protocol_and_formulation}
\label{subsec:iterational_resource_allocation}
\label{subsec:formulation}
We first define the round reward and causal policy class (Sec.~\ref{subsubsec:original_reward_policy}), and then formulate the \ac{nsrdp} (Sec.~\ref{subsubsec:original_formulation}).

\subsubsection{Reward and Causal Policy}
\label{subsubsec:original_reward_policy}
To quantify the value of successful provisioning, let \(\kappa_{\mathrm{price}}:\mathcal{S}\to\mathbb{R}_+\) denote the revenue earned by the network operator when request \(s\) is successfully provisioned, and assume \(0\le\kappa_{\mathrm{price}}(s)\le\bar p\) for some finite upper bound \(\bar p\).
The round reward is
\begin{equation}
W_t:=\kappa_{\mathrm{price}}(s_t)Y_t.
\label{eq:per_round_reward}
\end{equation}

Next, we define the history and policy class.
At the end of round \(t\), the \ac{e2e} controller observes realizations \(y_t\in\{0,1\}\) of \(Y_t\) and \(\mathbf{u}_t=(u_{t,d,j})_{d\in\mathcal{D},j\in[J_d]}\) of \(\mathbf{U}_t\).
The history is
\begin{equation}
\mathcal{H}_t
:=
\big((s_\tau,\mathbf{x}_\tau,y_\tau,\mathbf{u}_\tau)\big)_{\tau=1}^{t},
\qquad \mathcal{H}_0:=\emptyset.
\label{eq:nsrdp_history}
\end{equation}
Let \(\mathfrak{H}_t\) be the set of admissible histories through round \(t\), with \(\mathfrak{H}_0:=\{\emptyset\}\), and let \(\mathcal{P}(\mathcal{X})\) be the set of probability measures on \(\mathcal{X}\).
Let \(\Pi\) be the class of causal randomized policies \(\pi=(\pi_t)_{t=1}^T\) with
\(
\pi_t:\mathcal{S}\times\mathfrak{H}_{t-1}\to\mathcal{P}(\mathcal{X}).
\)

\subsubsection{Optimization Formulation}
\label{subsubsec:original_formulation}
The decision variable of the original \ac{nsrdp} is the policy \(\pi\in\Pi\).
For each \(\pi\in\Pi\), let \(\mathbb{E}_{\pi}\) denote expectation under the stochastic process induced by \(\pi\).
With the reward \(W_t\) and resource consumption \(U_{t,d,j}\) defined in \eqref{eq:per_round_reward} and \eqref{eq:realized_consumption_from_success_report}, respectively, the problem is
\begin{subequations}
\label{eq:nsrdp_exp}
\begin{align}
\mathrm{OPT}
:=
\max_{\pi\in\Pi}\quad
&\mathbb{E}_{\pi}\!\left[\sum_{t=1}^{T}W_t\right]
\label{eq:opt_target}\\
\text{s.t.}\quad
&\mathbb{E}_{\pi}\!\left[\sum_{t=1}^{T}U_{t,d,j}\right]
\le C_{d,j}^{\max},
\quad \forall \substack{d\in\mathcal{D}\\j\in[J_d]}.
\label{eq:opt_dj}
\end{align}
\end{subequations}
The objective \eqref{eq:opt_target} maximizes expected cumulative reward, and each constraint \eqref{eq:opt_dj} limits the expected cumulative consumption of resource \((d,j)\) by \(C_{d,j}^{\max}\).
\begin{remark}[Expected Constraints and Hard Capacity Enforcement]
For tractable online algorithm design and analysis, \eqref{eq:opt_dj} expresses
each resource budget over the horizon as a constraint on expected cumulative
consumption, following the standard long-term soft-constraint setting reviewed
in Sec.~\ref{subsec:online_learning}.
These per-resource constraints discourage the online policy from consuming
bottleneck resources too aggressively.
However, an expected constraint does not guarantee that \textit{realized cumulative
consumption remains within \(C_{d,j}^{\max}\) for every realization of requests
and provisioning outcomes.}
Because physical resource capacities cannot be exceeded in operation, the
experiments check the remaining capacities in every round, reject any request
that cannot be accommodated, and thereby evaluate the proposed method under
hard capacity enforcement.
Developing an alternative \ac{nsrdp} formulation that directly enforces
physical capacity constraints and deriving corresponding theoretical
guarantees remain future work.
\end{remark}

\subsection{Stationary-Response Relaxation}
\label{subsec:assumption_relaxation}
In the original \ac{nsrdp}, the conditional distribution of \((Y_t,\mathbf{U}_t)\) may depend on the history \(\mathcal{H}_{t-1}\), through past provisioning outcomes and the resulting remaining capacities.
To obtain a tractable model for algorithm design, 
we adopt a stationary-response relaxation that abstracts away this state dependence by assuming that \textit{the response distribution depends only on the current pair \((s_t,\mathbf{x}_t)\).}
We first formalize this relaxation through a stationary response law (Sec.~\ref{subsubsec:stationary_response_law}).
We then use this law to derive the relaxed \ac{nsrdp} (Sec.~\ref{subsubsec:stationary_policy_reduction}).

\subsubsection{Stationary Response Law}
\label{subsubsec:stationary_response_law}
We formalize the stationary-response relaxation through the following assumption.
\begin{assumption}[Stationary response law]
\label{ass:inter_round_stationarity}
There is a family of distributions \(\{\nu_{s,\mathbf{x}}\}_{(s,\mathbf{x})\in\mathcal{S}\times\mathcal{X}}\) on \(\{0,1\}\times\mathbb{R}_+^{J_{\mathrm{tot}}}\) such that, under any policy \(\pi\),
\begin{equation}
(Y_t,\mathbf{U}_t)
\mid
(\mathcal{H}_{t-1},s_t,\mathbf{x}_t)
\sim
\nu_{s_t,\mathbf{x}_t},
\quad \forall t\in[T].
\label{eq:stationary_response_law}
\end{equation}
\end{assumption}

That is, conditional on the current pair \((s_t,\mathbf{x}_t)\), the provisioning outcome \((Y_t,\mathbf{U}_t)\) is 
independent of the past history \(\mathcal{H}_{t-1}\), and the same response law applies across rounds.


\subsubsection{Derivation of the Relaxed \ac{nsrdp}}
\label{subsubsec:stationary_policy_reduction}
The reduction uses the following request-arrival assumption.
\begin{assumption}[Independent request arrivals]
\label{ass:independent_request_arrivals}
At the beginning of each round \(t\), request \(s_t\) is independently drawn from a common unknown distribution \(\mathbb{P}\) on \(\mathcal{S}\).
\end{assumption}

With Assumptions~\ref{ass:inter_round_stationarity} and~\ref{ass:independent_request_arrivals} imposed on the original \ac{nsrdp} in \eqref{eq:nsrdp_exp}, 
the conditional mean reward and resource consumption depend only on the current request--decomposition pair.
For each \((s,\mathbf{x})\), let \((Y,\mathbf{U})\sim\nu_{s,\mathbf{x}}\) and define these means as
\begin{equation}
\label{eq:mean_reward_consumption}
\begin{aligned}
f(s,\mathbf{x})
&:=
\mathbb{E}_{(Y,\mathbf{U})\sim\nu_{s,\mathbf{x}}}\!
\left[\kappa_{\mathrm{price}}(s)Y\right],\\
c_{d,j}(s,\mathbf{x})
&:=
\mathbb{E}_{(Y,\mathbf{U})\sim\nu_{s,\mathbf{x}}}\!
\left[U_{d,j}\right].
\end{aligned}
\end{equation}

For any \(\pi\in\Pi\), let \(\Pr_{\pi}\) denote the trajectory law induced by \(\pi\).
We represent its request-conditioned decomposition choices by
\(q_{\pi,t}(\cdot\mid s):=\Pr_{\pi}(\mathbf{x}_t\in\cdot\mid s_t=s)\) and their time average by
\(\bar q_{\pi}(\cdot\mid s):=T^{-1}\sum_{t=1}^Tq_{\pi,t}(\cdot\mid s)\).
Under Assumptions~\ref{ass:inter_round_stationarity} and~\ref{ass:independent_request_arrivals}, the objective and budget terms in \eqref{eq:nsrdp_exp} depend on \(\pi\) only through \(\bar q_\pi\).
In particular, the reward objective satisfies
\[
\begin{aligned}
\mathbb{E}_{\pi}\!\left[\sum_{t=1}^{T}W_t\right]
&=
\sum_{t=1}^{T}
\mathbb{E}_{s\sim\mathbb{P}}\!\left[
\mathbb{E}_{\mathbf{x}\sim q_{\pi,t}(\cdot\mid s)}[f(s,\mathbf{x})]
\right]
\\
&=
T\cdot\mathbb{E}_{s\sim\mathbb{P}}\!\left[
\mathbb{E}_{\mathbf{x}\sim\bar q_{\pi}(\cdot\mid s)}[f(s,\mathbf{x})]
\right].
\end{aligned}
\]
The budget terms in \eqref{eq:opt_dj} admit the same reduction with \(c_{d,j}\) in place of \(f\), yielding the average per-round constraints below.

\textbf{Relaxed Formulation and Exactness:}
Let \(\mathcal{Q}\) be the class of measurable stationary request-conditioned
policies \(q(\cdot\mid s)\in\mathcal{P}(\mathcal{X})\), which contains
\(\bar q_\pi\) for every \(\pi\in\Pi\).
Using the expected reward \(f(s,\mathbf{x})\) and expected consumption
\(c_{d,j}(s,\mathbf{x})\) of resource \((d,j)\) for each
request--decomposition pair, as defined in
\eqref{eq:mean_reward_consumption}, the resulting relaxed \ac{nsrdp} is
\begin{subequations}
\label{eq:relaxed_problem}
\begin{align}
\mathrm{OPT}^{\mathrm{rel}}
\!:=\!
\max_{q\in\mathcal{Q}}
&\mathbb{E}_{s\sim\mathbb{P}}\!\left[
\mathbb{E}_{\mathbf{x}\sim q(\cdot\mid s)}[f(s,\mathbf{x})]
\right]
\label{eq:relaxed_obj}\\
\text{s.t.}\quad\!\!\!
&\mathbb{E}_{s\sim\mathbb{P}}\!\left[
\mathbb{E}_{\mathbf{x}\sim q(\cdot\mid s)}
\left[c_{d,j}(s,\mathbf{x})\right]
\right]
\!\le\!\frac{C_{d,j}^{\max}}{T},
\quad\!\!\! \forall \substack{d\in\mathcal{D}\\j\in[J_d]}.
\label{eq:relaxed_con}
\end{align}
\end{subequations}
The objective \eqref{eq:relaxed_obj} maximizes expected per-round reward, while each constraint \eqref{eq:relaxed_con} limits expected per-round consumption to \(C_{d,j}^{\max}/T\).
The policy therefore determines both the decomposition used for each request
and the allocation of the shared resource budget across heterogeneous requests
over the decision horizon.
Conversely, any \(q\in\mathcal{Q}\) can be implemented by the causal policy \(\pi_t(\cdot\mid s_t,\mathcal{H}_{t-1})=q(\cdot\mid s_t)\).
Hence, under Assumptions~\ref{ass:inter_round_stationarity} and~\ref{ass:independent_request_arrivals}, the reduction is exact.


\subsection{Policy Evaluation Criteria}
\label{subsec:policy_evaluation_criteria}
\textbf{Difficulty of online learning for the relaxed \ac{nsrdp}:}
In the online setting, the primitives \(\mathbb{P}\) and \(\{\nu_{s,\mathbf{x}}\}_{(s,\mathbf{x})\in\mathcal{S}\times\mathcal{X}}\) are unknown, so the optimizer and optimal value of \eqref{eq:relaxed_problem} cannot be computed.
This information gap is the central challenge in solving the relaxed \ac{nsrdp} online.
With full information, the Lagrangian dual of \eqref{eq:relaxed_problem} decomposes across request types and can be solved by standard subgradient methods, with expectations evaluated exactly when tractable and approximated numerically otherwise.
To evaluate and design online policies under this information gap, 
we fix any \(\pi\in\Pi\) and consider both (i) its reward gap from the full-information benchmark and (ii) its cumulative budget excess.

\textbf{(i) Regret:}
To measure the reward gap, we compare the policy's cumulative mean reward with the relaxed optimum:
\begin{equation}
\mathrm{Reg}^{\mathrm{rel}}(T)
:=
T\,\mathrm{OPT}^{\mathrm{rel}}
-\sum_{t=1}^{T}f(s_t,\mathbf{x}_t).
\label{eq:policy_regret}
\end{equation}
Here, \(f(s,\mathbf{x})\), defined in
\eqref{eq:mean_reward_consumption}, is the expected reward for the
request--decomposition pair \((s,\mathbf{x})\).
Thus, \(\sum_{t=1}^{T}f(s_t,\mathbf{x}_t)\) is the cumulative mean reward
of the evaluated online policy, whereas
\(T\mathrm{OPT}^{\mathrm{rel}}\) is the cumulative expected reward
achieved by an offline optimal policy that knows the request distribution
\(\mathbb{P}\) and provisioning-outcome laws
\(\{\nu_{s,\mathbf{x}}\}\).
Accordingly, \(\mathrm{Reg}^{\mathrm{rel}}(T)\) measures \textit{the cumulative
mean-reward gap between the online policy, which operates without knowing
\(\mathbb{P}\) or \(\{\nu_{s,\mathbf{x}}\}\), and this offline optimum.}

\textbf{(ii) Violation:}
To measure how much cumulative consumption exceeds the capacity budget \(C_{d,j}^{\max}\), we
define the normalized per-round constraint function \(h_{d,j}\):
\begin{equation}
h_{d,j}(s,\mathbf{x})
:=
\frac{c_{d,j}(s,\mathbf{x})}{C_{d,j}^{\max}}-\frac{1}{T}.
\label{eq:method_exact_normalized_mean}
\end{equation}
We then define
\begin{equation}
\mathrm{Vio}^{\mathrm{rel}}_{d,j}(T)
:=
\left[
\sum_{t=1}^{T}h_{d,j}(s_t,\mathbf{x}_t)
\right]_+,
\quad \forall \substack{d\in\mathcal{D}\\j\in[J_d]}.
\label{eq:policy_relaxed_violation}
\end{equation}
Here, \(c_{d,j}(s,\mathbf{x})\), defined in
\eqref{eq:mean_reward_consumption}, is the expected consumption of resource
\((d,j)\) for the request--decomposition pair \((s,\mathbf{x})\).
Thus, \(\sum_{t=1}^{T}h_{d,j}(s_t,\mathbf{x}_t)\) is positive exactly when the evaluated online policy's
cumulative mean consumption of resource \((d,j)\) exceeds its capacity budget
\(C_{d,j}^{\max}\).
Accordingly, \(\mathrm{Vio}^{\mathrm{rel}}_{d,j}(T)\) measures the normalized
amount of this excess.

\subsection{Requirements and Applicability of Prior Methods}
\label{subsec:requirements_prior_formulations}
We first summarize the requirements
\textit{(R1)} and \textit{(R2)} encoded in the relaxed \ac{nsrdp} (Sec.~\ref{subsubsec:relaxed_nsrdp_requirements}).
We then compare the relaxed \ac{nsrdp} with the formulations underlying the
two exploration-aware online approaches \cite{odin, kobayashi2025icccn} identified in
Sec.~\ref{sec:related_work}.
We show that neither formulation directly targets the full relaxed
\ac{nsrdp} and explain how these mismatches can lead to suboptimal reward or
budget mismatch (Sec.~\ref{subsubsec:comparison_prior_formulations}).

\subsubsection{Requirements of the Relaxed \ac{nsrdp}}
\label{subsubsec:relaxed_nsrdp_requirements}
In the relaxed \ac{nsrdp}~\eqref{eq:relaxed_problem}, \textit{(R1)} is imposed 
by the resource constraints~\eqref{eq:relaxed_con}, which require every feasible \(q\in\mathcal{Q}\) to keep the expected per-round consumption at most \(C_{d,j}^{\max}/T\).
\textit{(R2)} is encoded by
optimizing over the policy class \(\mathcal{Q}\), whose elements map each
arriving request \(s\) to a distribution
\(q(\cdot\mid s)\in\mathcal{P}(\mathcal{X})\).

\subsubsection{Comparison with Prior Formulations}
\label{subsubsec:comparison_prior_formulations}
\mbox{{}}\newline
\textbf{Absence of \textit{(R1)} and \textit{(R2)} in the \acs{config} Formulation:}
Odin~\cite{odin} adopts a \ac{config}-type formulation~\cite{xu2023config} that does not account for these two requirements.
For \textit{(R1)}, \ac{config}, with constraint function \(g\), measures cumulative constraint violation as \(\sum_{t=1}^{T}[g(\mathbf{x}_t)]_+\)~\cite[Def.~2.8]{xu2023config}, whereas \eqref{eq:policy_relaxed_violation} uses the positive part of the sum.
Because \ac{config} applies the positive part separately in each round, a negative value of \(g(\mathbf{x}_t)\) in one round cannot offset a positive value in another; hence, it does not represent a resource budget shared across the \(T\) rounds.
For \textit{(R2)}, its decision \(\mathbf{x}_t\) does not depend on an arriving \ac{nsr} \(s_t\), unlike \(q(\cdot\mid s)\) in \eqref{eq:relaxed_problem}.

\textbf{Restricted Formulation in Our Previous Method:}
Our previous method~\cite{kobayashi2025icccn}, based on \ac{lincbwk}~\cite{lincbwk}, accounts for both \textit{(R1)} and \textit{(R2)}.
However, it restricts the decomposition space to a finite subset \(\bar{\mathcal{X}}\subset\mathcal{X}\) selected in advance, equivalently restricting \(q(\cdot\mid s)\) from \(\mathcal{P}(\mathcal{X})\) to \(\mathcal{P}(\bar{\mathcal{X}})\).
This restriction can exclude decompositions needed to attain the optimum of \eqref{eq:relaxed_problem}, giving rise to \textit{(L1) finite/discrete decomposition search}.
The previous method also assumes that the expected reward \(f(s,\mathbf{x})\) and resource consumption \(c_{d,j}(s,\mathbf{x})\) are linear in the request--decomposition pair \((s,\mathbf{x})\).
If these functions are nonlinear, this assumption cannot represent them accurately, giving rise to \textit{(L2) linear realizability}.

These limitations motivate an online solver that directly targets
\eqref{eq:relaxed_problem} without imposing \textit{(L1)} or \textit{(L2)}.

\end{document}

\documentclass[src/main.tex]{subfiles}

\begin{document}

\section{Proposed Method}
\label{sec:proposed_method}


\begin{figure*}[t]
\centering
\includegraphics[width=0.88\linewidth]{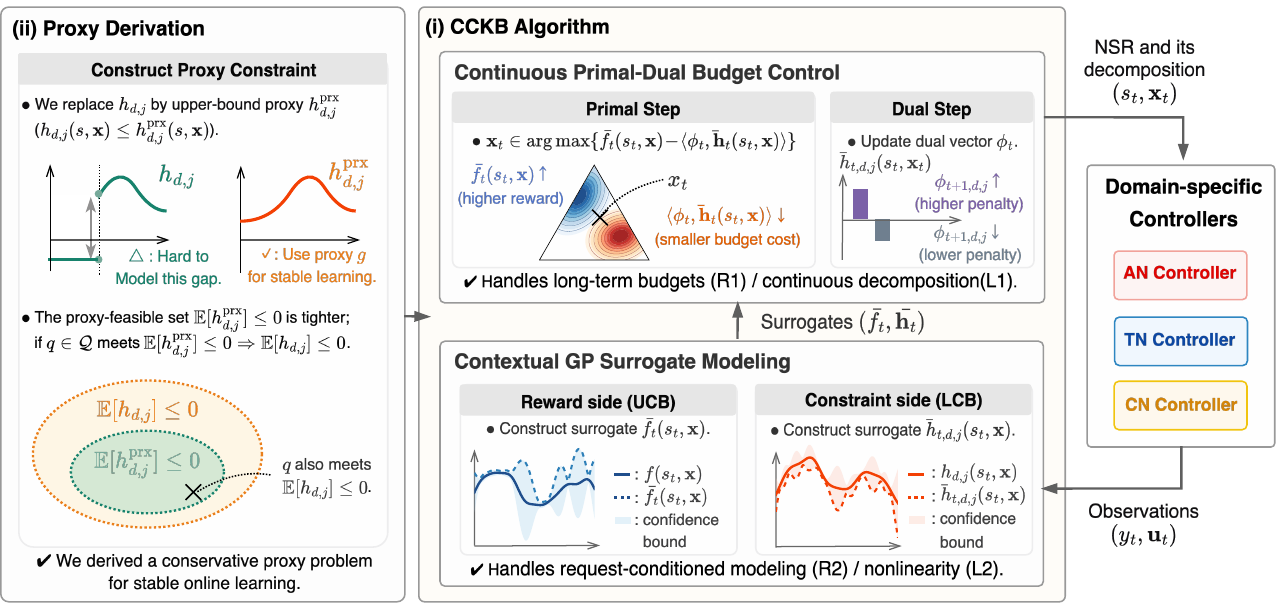}
\caption{Overview of the proposed method: (i) the \ac{cckb} algorithm; and (ii) the proxy derivation.}
\label{fig:method_overview_reduction}
\end{figure*}

\textbf{Method Overview:}
We first present \ac{cckb}, a general online learning algorithm
that can directly target the relaxed \ac{nsrdp} by treating requests as
contexts, decomposition decisions as actions, and resource budgets as
long-term constraints
(Sec.~\ref{subsec:primal_dual_algorithm}).
Direct application of \ac{cckb} to the relaxed \ac{nsrdp} requires
\ac{gp} regression for the original constraint functions \(h_{d,j}\).
However, such regression is hindered by
\textit{(L3) zero-dominated resource-consumption observations},
which can degrade the decision performance of \ac{cckb}.
To address \textit{(L3)}, we formulate a conservative proxy problem for the
relaxed \ac{nsrdp} (Sec.~\ref{subsec:reduction}).
We then instantiate \ac{cckb} for this proxy problem
(Sec.~\ref{subsec:gp_surrogates_and_kernel_design}).
Fig.~\ref{fig:method_overview_reduction} summarizes this overall procedure.

\textbf{Positioning and Contributions:}
\ac{cckb} combines the primal--dual resource-control mechanism of
\ac{ckb}~\cite{zhou2022kernelized_constraints} with contextual \ac{gp} models
over request--decomposition pairs, following the modeling principle of
\ac{cgpucb}~\cite{krause2011cgpucb}.  This combination shares the same core
primal--dual contextual \ac{bo} structure as PDCBO~\cite{xu2023pdcbo}.
Building on this structure, we formulate and analyze its application to the
relaxed \ac{nsrdp}, including a conservative proxy construction for
\textit{(L3)} and finite-time bounds on
\(\mathrm{Reg}^{\mathrm{rel}}(T)\) and every
\(\mathrm{Vio}^{\mathrm{rel}}_{d,j}(T)\).


\subsection{CCKB for the Relaxed \ac{nsrdp}}
\label{subsec:primal_dual_algorithm}
At round \(t\), after observing context \(s_t\) and selecting action
\(\mathbf{x}_t\), define \(z_t:=(s_t,\mathbf{x}_t)\).
For the relaxed \ac{nsrdp}, let \(f\) be the expected reward in
\eqref{eq:mean_reward_consumption}, and let
\(
\mathbf h:=(h_{d,j})_{d\in\mathcal D,\;j\in[J_d]}
\)
be the vector of normalized resource constraints defined in
\eqref{eq:method_exact_normalized_mean}.
Their function classes are
\(
\mathcal{F}_f:=\{f:\mathcal{S}\times\mathcal{X}\to[0,\bar p]\}
\)
and
\(
\mathcal{F}_{h_{d,j}}
:=\{h:\mathcal{S}\times\mathcal{X}\to[-1/T,\,1-1/T]\}.
\)
We first present the \ac{cckb} algorithm in this relaxed-\ac{nsrdp} notation
(Sec.~\ref{subsubsec:primal_dual_budget_control}) and then discuss the
difficulty that arises when \ac{cckb} is applied directly to the relaxed
\ac{nsrdp} (Sec.~\ref{subsubsec:original_constraint}).

\newcommand{\cckbphasebar}{%
  \makebox[0pt][r]{%
    \smash{\textcolor{black!45}{\rule[-0.25\baselineskip]{0.35pt}{1.25\baselineskip}}}%
    \hspace{0.65em}}}

\begin{algorithm}[t]
\caption{CCKB for the relaxed \ac{nsrdp}: contextual extension of the \ac{ckb} primal--dual framework~\cite{zhou2022kernelized_constraints}}
\label{alg:component1}
\begin{algorithmic}[1]
\Require Horizon \(T\), dual cap \(\rho\), dual step size \(V>0\), reward mappings \(\mathfrak{A}_{t}^{f}\), constraint mappings \(\mathfrak{A}_{t}^{h_{d,j}}\)\label{algline:component1:require}
\State \textbf{Initialize} \(\mathcal{H}_0\gets\emptyset\), \(\boldsymbol{\phi}_1\gets\mathbf{0}\in[0,\rho]^{J_{\mathrm{tot}}}\)
\For{\(t=1,\dots,T\)}\label{algline:component1:for}
    \State \(\triangleright\) \texttt{Construct surrogates.}
    \State \cckbphasebar \(\hat f_t\gets \mathfrak{A}_{t}^{f}(\mathcal{H}_{t-1})\)\label{algline:component1:surrogate-rhat}
    \State \cckbphasebar \(\hat{\mathbf h}_t\gets \bigl(\mathfrak{A}_{t}^{h_{d,j}}(\mathcal{H}_{t-1})\bigr)_{d\in\mathcal D,\;j\in[J_d]}\)\label{algline:component1:surrogate-ghat}
    \State \cckbphasebar \(\bar f_t(s,\mathbf{x})\gets\operatorname{clip}_{[0,\bar p]}(\hat f_t(s,\mathbf{x}))\) \label{algline:component1:clip-start}
    \State \cckbphasebar \(\bar{\mathbf h}_t(s,\mathbf{x})\gets\operatorname{clip}_{[-1/T,\,1-1/T]}(\hat{\mathbf h}_t(s,\mathbf{x}))\)\label{algline:component1:surrogate-end}
    \State \(\triangleright\) \texttt{Primal step.}
    \State \cckbphasebar \(\mathcal{A}_t(\mathbf{x}\mid s_t)\gets \bar f_t(s_t,\mathbf{x})-\langle\boldsymbol{\phi}_t,\bar{\mathbf h}_t(s_t,\mathbf{x})\rangle\)\label{algline:component1:primal-acq}
    \State \cckbphasebar Select \(\mathbf{x}_t \in \arg\max_{\mathbf{x}\in\mathcal{X}} \mathcal{A}_t(\mathbf{x}\mid s_t)\)\label{algline:component1:primal-end}
    \State \(\triangleright\) \texttt{Observe outcomes.}
    \State \cckbphasebar Observe feedback \((y_t,\mathbf{u}_t)\).
    \State \(\triangleright\) \texttt{Dual step.}
    \State \cckbphasebar \(\boldsymbol{\phi}_{t+1}\gets \operatorname{proj}_{[0,\rho]^{J_{\mathrm{tot}}}}\!\left(\boldsymbol{\phi}_t+\frac{1}{V}\bar{\mathbf h}_t(s_t,\mathbf{x}_t)\right)\)\label{algline:component1:dual}
    \State \cckbphasebar Update \(\mathcal H_t\).\label{algline:component1:dual-end}
\EndFor\label{algline:component1:endfor}
\end{algorithmic}
\end{algorithm}

\subsubsection{Primal--Dual CCKB Solver}
\label{subsubsec:primal_dual_budget_control}
Algorithm~\ref{alg:component1} summarizes \ac{cckb} for the relaxed \ac{nsrdp}.
The core elements of Algorithm~\ref{alg:component1} are (i) \textit{exploration strategies} and (ii) \textit{a dual vector}, which constitute the CKB-like primal--dual mechanism~\cite{zhou2022kernelized_constraints}.

\textbf{(i) Exploration Strategies:}
In lines~\ref{algline:component1:surrogate-rhat} and~\ref{algline:component1:surrogate-ghat}, the exploration mappings have the form
\(\mathfrak{A}_{t}^{f},\mathfrak{A}_{t}^{h_{d,j}}:
\mathcal{H}_{t-1}\rightarrow
\mathbb{R}^{\mathcal{S}\times\mathcal{X}}.
\)
Here, \(\mathbb R^{\mathcal S\times\mathcal X}\) denotes the set of all
real-valued functions on \(\mathcal S\times\mathcal X\).
Given the history \(\mathcal H_{t-1}\), \(\mathfrak A_t^f\) returns the raw
exploration estimate \(\hat f_t\) of the expected reward \(f\), whereas
\(\mathfrak A_t^{h_{d,j}}\) returns \(\hat h_{t,d,j}\) of the normalized mean
constraint \(h_{d,j}\).
Clipping these estimates to the ranges of their target functions yields
the reward and constraint surrogates
\(\bar f_t\in\mathcal F_f\) and
\(\bar h_{t,d,j}\in\mathcal F_{h_{d,j}}\).

We instantiate the reward mapping \(\mathfrak{A}_{t}^{f}\) to return an
\ac{ucb} of \(f\) and each constraint mapping
\(\mathfrak{A}_{t}^{h_{d,j}}\) to return a \ac{lcb} of \(h_{d,j}\).
The reward \ac{ucb} assigns larger values to decompositions
with a high predicted reward or high reward uncertainty, whereas the constraint
\ac{lcb} assigns smaller values to decompositions with low predicted resource
consumption or high consumption uncertainty.
Given the acquisition function \(\mathcal{A}_t\) in
line~\ref{algline:component1:primal-acq}, using these reward and constraint
mappings favors decompositions that may yield high rewards or consume few
resources.
For continuous \(\mathcal{X}\), the maximization in
line~\ref{algline:component1:primal-end} is solved numerically (e.g., by
multi-start \ac{slsqp}~\cite{kraft1988slsqp}).

To apply \ac{cckb} to a particular problem, its reward mapping
\(\mathfrak{A}_{t}^{f}\) and constraint mappings
\(\{\mathfrak{A}_{t}^{h_{d,j}}\}_{d\in\mathcal D,\,j\in[J_d]}\) must be
instantiated for that problem.
Section~\ref{subsubsec:original_constraint} describes how these mappings are
instantiated when \ac{cckb} is applied directly to the relaxed \ac{nsrdp},
whereas Sec.~\ref{subsec:gp_surrogates_and_kernel_design} describes how the
corresponding mappings are instantiated when \ac{cckb} is applied to the proxy
problem.

\textbf{(ii) Dual Vector:}
The dual vector
\(\boldsymbol{\phi}_t=(\phi_{t,d,j})_{d\in\mathcal D,\,j\in[J_d]}\)
stacks the resource-wise shadow prices in the primal--dual mechanism.
When resource \((d,j)\) is repeatedly tight, projected ascent increases
\(\phi_{t,d,j}\),
which amplifies the penalty term during action selection and discourages actions that rely heavily on that resource (line \ref{algline:component1:primal-acq}).
Hence, resource importance is reflected adaptively through the dual variables.

\subsubsection{Direct Application of \ac{cckb} to the Relaxed \ac{nsrdp}}
\label{subsubsec:original_constraint}
We first instantiate the reward and constraint mappings in
Algorithm~\ref{alg:component1} for the relaxed \ac{nsrdp}, and then explain
why this direct application encounters \textit{(L3)}.

\noindent\textbf{\underline{Direct Instantiation:}}
The mappings in Algorithm~\ref{alg:component1} are implemented as follows:
\begin{itemize}
    \item \textbf{Reward Mapping \(\mathfrak{A}_t^f\):}
    Use all past operator-reward observations
    \(\kappa_{\mathrm{price}}(s_\tau)y_\tau\) to fit a \ac{gp} regression
    model for \(f\), and return its raw \ac{ucb} estimate.
    \item \textbf{Constraint Mapping \(\mathfrak{A}_t^{h_{d,j}}\):}
    For each resource \((d,j)\), use all past normalized consumption
    observations \(u_{\tau,d,j}/C_{d,j}^{\max}-1/T\) to fit a \ac{gp}
    regression model for \(h_{d,j}\), and return its raw \ac{lcb} estimate.
\end{itemize}

However, this direct constraint \ac{gp} regression encounters
\textit{(L3) zero-dominated resource-consumption observations}.  By
\eqref{eq:realized_consumption_from_success_report}, its
regression response \(u_{t,d,j}/C_{d,j}^{\max}-1/T\) reflects the demand
\(B_{t,d,j}\) only when \(Y_t=1\) and \(j\in\Gamma_d(s_t)\), and otherwise
equals the repeated value \(-1/T\).  When such repeated values dominate the
training data, they obscure how \(B_{t,d,j}\) varies with the
request--decomposition pair \(z_t=(s_t,\mathbf{x}_t)\), making the constraint
\ac{gp} difficult to fit accurately.

\subsection{Conservative Proxy Problem}
\label{subsec:reduction}
Algorithm~\ref{alg:component1} relaxes \textit{(L1)} and \textit{(L2)} for the relaxed \ac{nsrdp} but encounters \textit{(L3)} when directly fitting a constraint \ac{gp} for \(h_{d,j}\).
To address this remaining limitation, we derive a proxy problem whose
constraint functions are more amenable to \ac{gp} fitting.
Specifically, we construct the proxy constraint \(h^{\mathrm{prx}}_{d,j}\) so that
it can be estimated using only resource-consumption observations
\(u_{\tau,d,j}\) from rounds satisfying
\(y_\tau=1\) and \(j\in\Gamma_d(s_\tau)\) (Sec.~\ref{subsubsec:reduction}).
We then formulate the proxy problem using \(h^{\mathrm{prx}}_{d,j}\), and show that any policy feasible for the proxy problem is also feasible
for the relaxed \ac{nsrdp}, which allows us to replace the original constraints with the proxy constraints while preserving feasibility
(Sec.~\ref{subsubsec:proxy_theoretical_implications}).

\subsubsection{Proxy Construction}
\label{subsubsec:reduction}
Starting from the selected observations \(u_{\tau,d,j}\), we
(i) define the conditional mean demand
\(m_{d,j}^{\Gamma}\) and use it to decompose the original mean consumption
\(c_{d,j}\), and then (ii) derive an upper bound on \(c_{d,j}\) and
substitute it into \(h_{d,j}\) to define the conservative proxy constraint
\(h^{\mathrm{prx}}_{d,j}\).

\textbf{(i) Mean Consumption Decomposition:}
We introduce the following quantities and use them to decompose the mean
consumption \(c_{d,j}\) in \eqref{eq:mean_reward_consumption}:
\begin{itemize}
    \item \textbf{Provisioning Success:}
    \(p(s,\mathbf{x}):=\Pr(Y=1\mid s,\mathbf{x})\) is the probability that
    provisioning succeeds for \((s,\mathbf{x})\).
    \item \textbf{Topology Membership:}
    For each \ac{nsr} \(s\), the topology-membership indicator records whether
    resource \((d,j)\) belongs to the requested \ac{ns} topology and is defined as
    \begin{equation}
    \mathbf{1}^{\Gamma}_{d,j}(s)
    :=
    \mathbf{1}\!\left\{j\in\Gamma_d(s)\right\}.
    \label{eq:method_path_indicator}
    \end{equation}
    \item \textbf{Conditional Mean Demand:}
    Define the mean demand represented by the selected observations as
    \begin{equation}
    m_{d,j}^{\Gamma}(s,\mathbf{x})
    :=
    \begin{cases}
    \mathbb{E}\!\left[
    U_{d,j}
    \mid
    Y=1,s,\mathbf{x}
    \right],
    & \!\! \substack{j\in\Gamma_d(s),\\
    p(s,\mathbf{x})>0},\\
    0,
    & \!\! \text{otherwise}.
    \end{cases}
    \label{eq:method_path_gated_mean}
    \end{equation}
\end{itemize}
By \eqref{eq:realized_consumption_from_success_report}, \(U_{d,j}=0\)
whenever \(Y=0\).  We distinguish the cases \(p(s,\mathbf{x})=0\) and
\(p(s,\mathbf{x})>0\).  For an on-path input with \(p(s,\mathbf{x})=0\),
\(Y=0\) almost surely.  Because the success-conditioned expectation
\(\mathbb{E}[U_{d,j}\mid Y=1,s,\mathbf{x}]\) is then not uniquely defined,
\eqref{eq:method_path_gated_mean} assigns its zero extension.  Consequently,
\[
c_{d,j}(s,\mathbf{x})
=
p(s,\mathbf{x})m_{d,j}^{\Gamma}(s,\mathbf{x})
=
0.
\]
For inputs with \(p(s,\mathbf{x})>0\), conditioning on \(Y\) and substituting
the definitions above gives
\begin{equation}
\begin{aligned}
c_{d,j}(s,\mathbf{x})
&=
\mathbb{E}\!\left[U_{d,j}\mid s,\mathbf{x}\right]\\
&=
p(s,\mathbf{x})
\mathbb{E}\!\left[U_{d,j}\mid Y=1,s,\mathbf{x}\right]\\
&=
p(s,\mathbf{x})m_{d,j}^{\Gamma}(s,\mathbf{x}).
\end{aligned}
\label{eq:method_consumption_factorization}
\end{equation}
The equality also holds when \(p(s,\mathbf{x})=0\), as shown above.

\textbf{(ii) Proxy Constraint Derivation:}
Equation~\eqref{eq:method_consumption_factorization} expresses the mean
consumption as the product of the provisioning-success probability and the
zero-extended conditional mean demand.  Because
\(p(s,\mathbf{x})\le 1\), it gives the upper bound
\begin{equation}
c_{d,j}(s,\mathbf{x})
\le
m_{d,j}^{\Gamma}(s,\mathbf{x}).
\label{eq:method_success_upper_bound}
\end{equation}
Replacing \(c_{d,j}\) in the definition of \(h_{d,j}\) in
\eqref{eq:method_exact_normalized_mean} with the upper bound
\eqref{eq:method_success_upper_bound}, we define the proxy
constraint\footnote{Further justification of this design choice and a discussion
of alternative modeling methods are provided in
Sec.~\ref{subsec:discussion_why_proxy}.}
\begin{equation}
h^{\mathrm{prx}}_{d,j}(s,\mathbf{x})
:=
\frac{m_{d,j}^{\Gamma}(s,\mathbf{x})}{C_{d,j}^{\max}}-\frac{1}{T}.
\label{eq:method_mean_functions}
\end{equation}
We write
\(
\mathbf h^{\mathrm{prx}}
:=(h^{\mathrm{prx}}_{d,j})_{d\in\mathcal D,\;j\in[J_d]}
\)
for the stacked proxy constraint vector.
Since \(0\le U_{d,j}\le C_{d,j}^{\max}\) almost surely, the proxy constraint satisfies
\(
-\frac{1}{T}\le h^{\mathrm{prx}}_{d,j}(s,\mathbf{x})\le 1-\frac{1}{T}.
\)

\subsubsection{Proxy Problem and Feasibility Transfer}
\label{subsubsec:proxy_theoretical_implications}
(i) We first formulate the proxy problem induced by \(\mathbf h^{\mathrm{prx}}\)
and define its performance metrics.
(ii) We then show that feasibility for the
proxy problem implies feasibility for the relaxed \ac{nsrdp}.

\textbf{(i) Proxy Problem and Performance Metrics:}
Using \(h^{\mathrm{prx}}_{d,j}\), we define the following proxy optimization problem \(\mathrm{OPT}^{\mathrm{prx}}\):
\begin{subequations}
\label{eq:proxy_relaxed_problem}
\begin{align}
\mathrm{OPT}^{\mathrm{prx}}
:=
\max_{q\in\mathcal{Q}}\quad
&\mathbb{E}_{s\sim\mathbb{P}}\!\left[
\mathbb{E}_{\mathbf{x}\sim q(\cdot\mid s)}\!\left[f(s,\mathbf{x})\right]
\right] \label{eq:proxy_relaxed_obj} \\
\text{s.t.}\quad
&\mathbb{E}_{s\sim\mathbb{P}}\!\left[
\mathbb{E}_{\mathbf{x}\sim q(\cdot\mid s)}\!\left[h^{\mathrm{prx}}_{d,j}(s,\mathbf{x})\right]
\right]
\le 0. \label{eq:proxy_relaxed_con}
\end{align}
\end{subequations}
Here, \(f(s,\mathbf{x})\), defined in
\eqref{eq:mean_reward_consumption}, is the expected reward for the
request--decomposition pair \((s,\mathbf{x})\).
The proxy objective \eqref{eq:proxy_relaxed_obj} is identical to the relaxed
objective \eqref{eq:relaxed_obj}, whereas the proxy resource constraint
\eqref{eq:proxy_relaxed_con} replaces the relaxed constraint
\eqref{eq:relaxed_con}.
In the remainder, we use \(\mathrm{OPT}^{\mathrm{prx}}\) as the benchmark.
For a policy \(\pi\), define the proxy counterparts of \eqref{eq:policy_regret} and \eqref{eq:policy_relaxed_violation} by
\begin{equation}
\mathrm{Reg}^{\mathrm{prx}}(T)
:=
T\cdot \mathrm{OPT}^{\mathrm{prx}}
-
\sum_{t=1}^T f(s_t,\mathbf{x}_t),
\label{eq:policy_proxy_regret}
\end{equation}
\begin{equation}
\mathrm{Vio}^{\mathrm{prx}}_{d,j}(T)
:=
\left[
\sum_{t=1}^{T} h^{\mathrm{prx}}_{d,j}(s_t,\mathbf{x}_t)
\right]_+.
\label{eq:policy_proxy_violation}
\end{equation}

\textbf{(ii) Feasibility Transfer to the Relaxed \ac{nsrdp}:}
Normalizing \eqref{eq:method_success_upper_bound} by \(C_{d,j}^{\max}\) and
subtracting \(1/T\) shows that
\begin{equation}
h_{d,j}(s,\mathbf{x})
\le
h^{\mathrm{prx}}_{d,j}(s,\mathbf{x}).
\label{eq:pointwise_domination_gorig_g}
\end{equation}
Taking expectations in \eqref{eq:pointwise_domination_gorig_g} gives the
following result.
\begin{lemma}[Feasibility transfer under conservative proxying]
\label{lem:proxy_feasible_implies_relaxed_feasible}
For \(q\in\mathcal Q\), if \(q\) is feasible for \eqref{eq:proxy_relaxed_problem}, then \(q\) is feasible for \eqref{eq:relaxed_problem}.
\end{lemma}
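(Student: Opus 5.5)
The plan is to integrate the pointwise domination \eqref{eq:pointwise_domination_gorig_g} against the joint law of \((s,\mathbf{x})\) induced by \(\mathbb{P}\) and \(q\), and then rescale the resulting inequality into the form of \eqref{eq:relaxed_con}. The argument has three steps.

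\textbf{Step 1: well-definedness of the expectations.} Fix \(q\in\mathcal Q\) feasible for \eqref{eq:proxy_relaxed_problem}, and fix a resource \((d,j)\). I first check that the iterated expectations \(\mathbb{E}_{s\sim\mathbb{P}}\mathbb{E}_{\mathbf{x}\sim q(\cdot\mid s)}[\cdot]\) of \(h_{d,j}\) and of \(h^{\mathrm{prx}}_{d,j}\) exist. Both functions are bounded: \(h_{d,j}\) takes values in \([-1/T,\,1-1/T]\) because \(0\le c_{d,j}\le C^{\max}_{d,j}\), and the same interval bounds \(h^{\mathrm{prx}}_{d,j}\), as noted after \eqref{eq:method_mean_functions}. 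Both are measurable, since \(q\) is a measurable request-conditioned kernel and \(c_{d,j}\), \(m^{\Gamma}_{d,j}\) are (conditional) expectations. The expectations are therefore finite.

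\textbf{Step 2: monotonicity of expectation.} Inequality \eqref{eq:pointwise_domination_gorig_g} holds for every \((s,\mathbf{x})\). This includes the case \(p(s,\mathbf{x})=0\), where \(c_{d,j}=0\le m^{\Gamma}_{d,j}\), and the off-path case \(j\notin\Gamma_d(s)\), where \(U_{d,j}=0\) almost surely and so \(c_{d,j}=0=m^{\Gamma}_{d,j}\). Both cases are already covered by \eqref{eq:method_consumption_factorization}. Monotonicity of expectation then gives
\(\mathbb{E}_{s}\mathbb{E}_{\mathbf{x}\sim q}[h_{d,j}]\le\mathbb{E}_{s}\mathbb{E}_{\mathbf{x}\sim q}[h^{\mathrm{prx}}_{d,j}]\le 0\), where the last inequality is the proxy constraint \eqref{eq:proxy_relaxed_con}.

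\textbf{Step 3: rescaling to the relaxed constraint.} Substituting the definition \eqref{eq:method_exact_normalized_mean} and using linearity, the left-hand side equals \(\mathbb{E}_s\mathbb{E}_{\mathbf{x}\sim q}[c_{d,j}]/C^{\max}_{d,j}-1/T\). Multiplying by \(C^{\max}_{d,j}>0\) gives exactly \eqref{eq:relaxed_con} for \((d,j)\). Since \((d,j)\) was arbitrary and the policy class \(\mathcal Q\) is the same in \eqref{eq:relaxed_problem} and \eqref{eq:proxy_relaxed_problem}, \(q\) is feasible for \eqref{eq:relaxed_problem}.

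The argument has no substantive obstacle. The only step needing care is the one flagged in Step 2: confirming that the zero extension in \eqref{eq:method_path_gated_mean} does not break the pointwise bound on the degenerate sets \(p(s,\mathbf{x})=0\) or \(j\notin\Gamma_d(s)\), which the paper has already handled.
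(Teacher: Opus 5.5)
Your proof is correct and follows the paper's argument: the paper also obtains the lemma by integrating the pointwise domination \(h_{d,j}\le h^{\mathrm{prx}}_{d,j}\) from \eqref{eq:pointwise_domination_gorig_g} against \(\mathbb{P}\) and \(q(\cdot\mid s)\), which is exactly your Steps 2--3. Your checks on boundedness and measurability, and on the degenerate cases \(p(s,\mathbf{x})=0\) and \(j\notin\Gamma_d(s)\), are careful bookkeeping that the paper leaves implicit rather than a different route.
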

If Lemma~\ref{lem:proxy_feasible_implies_relaxed_feasible}
did not hold, a policy feasible for the proxy problem could violate the
relaxed \ac{nsrdp} constraints, so solving the proxy problem would not justify
resource-budget feasibility for the relaxed \ac{nsrdp}.  The lemma therefore
allows us to \textit{optimize and analyze the proxy problem while preserving
feasibility for the relaxed \ac{nsrdp}.}
Corollary~\ref{cor:concrete_finite_time_proxy_performance}(ii) quantifies the resulting optimality
loss.

\subsection{CCKB Instantiation for the Proxy Problem}
\label{subsec:gp_surrogates_and_kernel_design}
We now instantiate Algorithm~\ref{alg:component1} for the proxy
problem~\eqref{eq:proxy_relaxed_problem}.

\noindent\textbf{\underline{Proxy Instantiation:}}
The proxy problem retains the reward target \(f\) but replaces each
constraint target \(h_{d,j}\) with \(h^{\mathrm{prx}}_{d,j}\), defined in
\eqref{eq:method_mean_functions}.
Accordingly, the reward mapping
\(\mathfrak{A}_t^f\) remains unchanged, whereas each constraint mapping
\(\mathfrak{A}_t^{h_{d,j}}\) is replaced by
\(\mathfrak{A}_t^{h^{\mathrm{prx}}_{d,j}}\).  Other components of Algorithm~\ref{alg:component1}
remain unchanged.  The mappings are implemented as follows:
\begin{itemize}
    \item \textbf{Reward Mapping \(\mathfrak{A}_t^f\) (Unchanged):}
    Use the same reward mapping as in the direct instantiation
    (Sec.~\ref{subsubsec:original_constraint}).
    \item \textbf{Constraint Mapping \(\mathfrak{A}_t^{h^{\mathrm{prx}}_{d,j}}\):}
    Retain only past rounds satisfying
    \(y_\tau=1 \land \mathbf{1}^{\Gamma}_{d,j}(s_\tau)=1\), fit a \ac{gp}
    for \(m_{d,j}^{\Gamma}\) to the resource-consumption observations
    \(u_{\tau,d,j}\), and convert its \ac{lcb} into a raw exploration
    estimate of \(h^{\mathrm{prx}}_{d,j}\).
\end{itemize}

\begin{table}[t]
    \caption{Comparison of the direct and proxy \ac{cckb} instantiations.}
    \label{tab:direct_proxy_instantiation}
    \centering
    \footnotesize
    \setlength{\tabcolsep}{3pt}
\begin{tabularx}{\columnwidth}{@{}XXXX@{}}
\toprule
\multicolumn{2}{c}{\textbf{Surrogate Design}} &
\multicolumn{2}{c}{\textbf{\ac{cckb} Instantiation}} \\
\cmidrule(r){1-2}\cmidrule(l){3-4}
\textbf{Surrogate Type} & \textbf{Specification}
& \textbf{Direct} & \textbf{Proxy} \\
\midrule
\multirow{2}{*}{Reward}
 & Target  & \(f\) & \(f\) (unchanged) \\
 & Mapping & \(\mathfrak{A}_t^f\)
           & \(\mathfrak{A}_t^f\) (unchanged) \\
\midrule
\multirow{2}{*}{Constraint}
 & Target  & \(h_{d,j}\) & \(h^{\mathrm{prx}}_{d,j}\) \\
 & Mapping & \(\mathfrak{A}_t^{h_{d,j}}\)
           & \(\mathfrak{A}_t^{h^{\mathrm{prx}}_{d,j}}\) \\
\bottomrule
\end{tabularx}
\end{table}

Table~\ref{tab:direct_proxy_instantiation} summarizes the correspondence between the two instantiations.
The following subsubsections provide concrete \ac{gp}-based implementations of the exploration mappings \(\mathfrak{A}_t^f\) and \(\{\mathfrak{A}_t^{h^{\mathrm{prx}}_{d,j}}\}_{d\in\mathcal D,\,j\in[J_d]}\).
We first specify the common \ac{gp} posterior notation and feature-map design (Sec.~\ref{subsubsec:gp_modeling_general}),
and then instantiate the reward mapping \(\mathfrak{A}_t^f\) (Sec.~\ref{subsubsec:gp_reward_model}) and the proxy constraint mappings \(\{\mathfrak{A}_t^{h^{\mathrm{prx}}_{d,j}}\}_{d\in\mathcal D,\,j\in[J_d]}\) (Sec.~\ref{subsubsec:gp_constraint_model}).

\subsubsection{Common GP Settings}
\label{subsubsec:gp_modeling_general}
To fix the notation,
(i) we first specify common \ac{gp} posterior notation for a scalar function \(\bullet\),
which is followed by (ii) our additional design of feature maps.
For each modeled function \(\bullet\) (i.e., \(\bullet=f\) for reward and
\(\bullet=m_{d,j}^{\Gamma}\) for each constraint \((d,j)\)),
we define a dataset
\(
\mathcal{T}^{\bullet}_{t-1}
\subset
(\mathcal{S}\times\mathcal{X})\times\mathbb{R}
\),
with sample count
\(N_{t-1}^{\bullet}:=\left|\mathcal{T}_{t-1}^{\bullet}\right|\).

\textbf{(i) \ac{gp} Posterior Summaries:}
For each modeled function \(\bullet\), we use a zero-mean \ac{gp} surrogate
with kernel \(k_{\bullet}\) and regularization parameter
\(\eta_{\bullet}>0\).  Given \(\mathcal{T}^{\bullet}_{t-1}\), standard
\ac{gp} regression with the regularized Gram matrix
\(\mathbf{K}^{\bullet}_{t-1}+\eta_{\bullet}\mathbf{I}\) yields the posterior
mean \(\mu^{\bullet}_{t-1}(z)\) and standard deviation
\(\sigma^{\bullet}_{t-1}(z)\); see~\cite{rasmussen2006gpml}.  Here,
\(\mathbf{K}^{\bullet}_{t-1}\) is the kernel Gram matrix over the training
inputs.  When \(\mathcal{T}^{\bullet}_{t-1}=\emptyset\), these quantities
reduce to the prior values \(0\) and \(\sqrt{k_{\bullet}(z,z)}\), respectively.

\textbf{(ii) Feature Map:}
To allow different inductive biases across the modeled functions, we use
function-specific feature maps on the joint input \(z\):
\(\varphi_{\bullet}:\mathcal{S}\times\mathcal{X}\rightarrow\mathcal{Z}_{\bullet}.\)
We then define each kernel through \(\varphi_{\bullet}\) as
\(k_{\bullet}(z,z'):=\kappa_{\bullet}\!\big(\varphi_{\bullet}(z),
\varphi_{\bullet}(z')\big)\), where
\(z,z'\in\mathcal{S}\times\mathcal{X}\) and
\(\kappa_{\bullet}\) is a positive semidefinite kernel.
This is equivalent to defining a kernel on \(z\), but makes the representation explicit.

\subsubsection{Reward Mapping}
\label{subsubsec:gp_reward_model}
Reward learning is fully observed, and the reward-training dataset up to round \(t-1\) is
\[
\mathcal{T}^{f}_{t-1}
:=
\{(z_\tau,\kappa_{\mathrm{price}}(s_\tau)y_\tau)\}_{\tau=1}^{t-1}.
\]

Using these posterior summaries with \(\bullet=f\), we define the raw reward
\ac{ucb} estimate as
\begin{equation}
\hat f_t(z)
:=
\mu^{f}_{t-1}(z)+\beta^{f}_{t}(\alpha_f)\sigma^{f}_{t-1}(z).
\label{eq:reward_ucb}
\end{equation}
Since \(f(z)\in[0,\bar p]\), the reward surrogate used in Algorithm~\ref{alg:component1} is
\(
\bar f_t(z)
:=
\operatorname{clip}_{[0,\bar p]}\!\bigl(\hat f_t(z)\bigr).
\)
The exploration mapping \(\mathfrak{A}_{t}^{f}\) in Algorithm~\ref{alg:component1} is instantiated as
\(
\mathfrak{A}_{t}^{f}(\mathcal{H}_{t-1})
:=
\hat f_t.
\)
Here \(\alpha_f\in(0,1)\) controls the reward confidence level, and the
corresponding exploration width \(\beta_t^f(\alpha_f)\) is specified in
Sec.~\ref{subsubsec:concrete_proxy_finite_time_bounds}.

\subsubsection{Constraint Mapping}
\label{subsubsec:gp_constraint_model}
For each resource \((d,j)\), the constraint \ac{gp} models
\(m_{d,j}^{\Gamma}\) in \eqref{eq:method_path_gated_mean}, i.e., the
success-conditioned demand on resources included in the request's \ac{ns}
topology.
Selective constraint update keeps only rounds with \(y_t=1\) and \(\mathbf{1}^{\Gamma}_{d,j}(s_t)=1\). Although \(u_{t,d,j}\) is observed at every round, only those rounds are retained in \(\mathcal{T}^{m_{d,j}^{\Gamma}}_{t-1}\):
\[
\mathcal{T}^{m_{d,j}^{\Gamma}}_{t-1}
:=
\big\{(z_\tau,u_{\tau,d,j})\ \big|\ \tau\in[t-1],\,y_\tau=1,\mathbf{1}^{\Gamma}_{d,j}(s_\tau)=1\big\}.
\]
This filter addresses \textit{(L3)} by removing zeros caused by rejected requests and by resources outside the request's \ac{ns} topology before fitting the \ac{gp}.

For \(z=(s,\mathbf{x})\), we define the raw demand \ac{lcb} estimate from
the posterior summaries for \(m_{d,j}^{\Gamma}\):
\begin{equation}
\widehat m_{t,d,j}^{\Gamma}(z)
\!:=\!
\begin{cases}
\begin{aligned}
&\mu^{m_{d,j}^{\Gamma}}_{t-1}(z)\!-\!\beta_{t}^{m_{d,j}^{\Gamma}}\!\left(\!\frac{\alpha_h}{J_{\mathrm{tot}}}\!\right)
\sigma^{m_{d,j}^{\Gamma}}_{t-1}(z),
\end{aligned}
& \!\!\!\!\mathbf{1}^{\Gamma}_{d,j}(s)\!=\!1,\\
0,
& \!\!\!\!\mathbf{1}^{\Gamma}_{d,j}(s)\!=\!0.
\end{cases}
\label{eq:constraint_gp_posterior}
\end{equation}
The posterior quantities in the first branch are defined and evaluated only
at on-path inputs.
Here \(\alpha_h\in(0,1)\) is the total failure probability allocated to
simultaneous confidence bounds for all \(J_{\mathrm{tot}}\) unknown demand
functions.  For each resource \((d,j)\), we allocate failure probability
\(\alpha_h/J_{\mathrm{tot}}\) to the confidence bound for
\(m_{d,j}^{\Gamma}\), and
\(\beta_{t}^{m_{d,j}^{\Gamma}}(\alpha_h/J_{\mathrm{tot}})\in\mathbb{R}_{>0}\)
is the corresponding exploration width.
We then derive a raw exploration estimate of the proxy constraint as:
\begin{equation}
\widehat h^{\mathrm{prx}}_{t,d,j}(z)
:=
\frac{\widehat m_{t,d,j}^{\Gamma}(z)}{C_{d,j}^{\max}}
-\frac{1}{T}.
\label{eq:constraint_surrogate}
\end{equation}
Since \(h^{\mathrm{prx}}_{d,j}(z)\in[-1/T,\,1-1/T]\), the proxy constraint surrogate is
\(
\bar h^{\mathrm{prx}}_{t,d,j}(z)
:=
\operatorname{clip}_{[-1/T,\,1-1/T]}\!\bigl(\hat h^{\mathrm{prx}}_{t,d,j}(z)\bigr),
\)
and
\(
\mathfrak{A}_{t}^{h^{\mathrm{prx}}_{d,j}}(\mathcal{H}_{t-1})
:=
\hat h^{\mathrm{prx}}_{t,d,j}.
\)
Accordingly, the proxy instantiation substitutes
\(\mathfrak A_t^{h_{d,j}}\leftarrow
\mathfrak A_t^{h^{\mathrm{prx}}_{d,j}}\) and
\(\bar h_{t,d,j}\leftarrow\bar h^{\mathrm{prx}}_{t,d,j}\) in
Algorithm~\ref{alg:component1}.


\subsection{Theoretical Guarantees}
\label{subsec:success_only_feedback}
\label{subsec:theory}
Theorem~\ref{thm:general_cckb_regret_violation}
bounds the regret \(\mathrm{Reg}^{\mathrm{rel}}(T)\) and cumulative constraint
violation \(\mathrm{Vio}^{\mathrm{rel}}_{d,j}(T)\) for \ac{cckb} applied
directly to the relaxed \ac{nsrdp}.
For proxy-based \ac{cckb},
Corollary~\ref{cor:concrete_finite_time_proxy_performance} uses \ac{gp}
surrogate-error bounds to bound \(\mathrm{Reg}^{\mathrm{prx}}(T)\), transfers
this bound to \(\mathrm{Reg}^{\mathrm{rel}}(T)\) by accounting for the proxy
optimality gap, and bounds \(\mathrm{Vio}^{\mathrm{rel}}_{d,j}(T)\).

\textbf{Scope of Theoretical Results:}
Algorithm~\ref{alg:component1} remains executable when the analytical
assumptions fail, but its guarantees may not hold.  These guarantees concern
the relaxed \ac{nsrdp}; effectiveness for the original
\ac{nsrdp}~\eqref{eq:nsrdp_exp} is evaluated empirically in
Sec.~\ref{sec:evaluation} with residual resources explicitly managed.

\subsubsection{Regularity and Optimization Conditions}
\label{subsubsec:cckb_analytical_foundations}
We adopt the standard bounded-\ac{rkhs} conditions~\cite{zhou2022kernelized_constraints}.
\begin{assumption}[RKHS and kernel conditions]
\label{ass:cckb_kernel_regular}
The means \(f\) and \(h_{d,j}\) lie in the \acp{rkhs} of \(k_f\) and
\(k_{h_{d,j}}\), with \(\|f\|_{\mathcal H_{k_f}}\le B^f\) and finite
\(\|h_{d,j}\|_{\mathcal H_{k_{h_{d,j}}}}\).
On inputs with \(\mathbf1_{d,j}^{\Gamma}(s)=1\),
\(m_{d,j}^{\Gamma}\in\mathcal H_{k_{m_{d,j}^{\Gamma}}}\) with norm at most
\(B^{m_{d,j}^{\Gamma}}\).  All kernels satisfy \(k_\bullet(z,z)\le1\)
and are continuous in their action arguments.
\end{assumption}

\begin{assumption}[Measurability in the request]
\label{ass:cckb_request_measurability}
For every fixed \(\mathbf x\in\mathcal X\), the function
\(f(\cdot,\mathbf x)\) is measurable.  For every resource \((d,j)\),
\(h_{d,j}(\cdot,\mathbf x)\), the zero extension of
\(m_{d,j}^{\Gamma}(\cdot,\mathbf x)\), and the indicator
\(\mathbf1_{d,j}^{\Gamma}(\cdot)\) are measurable.
\end{assumption}

For \(q\in\mathcal Q\), define
\(v_q^f(s):=\mathbb E_{\mathbf x\sim q(\cdot\mid s)}[f(s,\mathbf x)]\)
and
\(\mathbf v_q^h(s):=\mathbb E_{\mathbf x\sim q(\cdot\mid s)}[\mathbf h(s,\mathbf x)]\),
whose \((d,j)\) component is \(v_q^{h_{d,j}}(s)\).

\begin{assumption}[Slater condition]
\label{ass:cckb_slater}
For each fixed horizon \(T\), there exist a policy
\(q^\circ\in\mathcal Q\) and a margin
\(\xi_T>0\) such that
\[
\mathbb E_{s\sim\mathbb P}[\mathbf v_{q^\circ}^h(s)]
\le-\xi_T\mathbf1.
\]
\end{assumption}

Provided that the relaxed \ac{nsrdp} and the proxy problem each admit an
optimal policy and satisfy Assumption~\ref{ass:cckb_slater} for their
respective constraints, Corollary~\ref{cor:cckb_slater_consequence}
guarantees the existence of optimal dual vectors with finite \(\ell_1\)
norms.  We denote corresponding norm bounds by
\(\Lambda_T^{\mathrm{rel}}\) and \(\Lambda_T^{\mathrm{prx}}\), respectively.
The corollary and its application to the proxy problem are stated in
\suppsecref{appendix:slater_and_generality}; the proof is in
\suppsecref{appendix:cckb_slater_consequence}.

\arxivonly{\subsubsection{Finite-Time Performance Guarantees for CCKB}}
\journalonly{\subsubsection{Generic Finite-Time Guarantee}}
\label{subsubsec:general_cckb_guarantee}
  \ifarxivversion
    We establish finite-time regret and cumulative constraint-violation
guarantees for \ac{cckb} applied directly to the relaxed \ac{nsrdp}.  We
first specify the surrogate conditions and request-concentration events used
in Theorem~\ref{thm:general_cckb_regret_violation}, together with their
failure-probability budgets, and then state the resulting finite-time bounds
and their overall confidence level.
We subsequently instantiate the same guarantee for the proxy problem.

\textbf{(i) Surrogate Conditions:}
Algorithm~\ref{alg:component1} selects actions and updates the dual vector
using the surrogates \(\bar f_t\) and \(\bar h_{t,d,j}\) in place of \(f\) and \(h_{d,j}\).
Following the GP-UCB analysis of the
original \ac{ckb}~\cite{zhou2022kernelized_constraints}, we characterize their
accuracy through (C1) one-sided bounds used for action selection and
(C2) cumulative-error bounds along the selected actions.

\noindent\textbf{(One-Sided-Bound Condition):}
\textit{The reward surrogate must not underestimate the reward, whereas each
constraint surrogate must not overestimate its target constraint.}
Specifically, for every \(t\in[T]\) and
\(z\in\mathcal S\times\mathcal X\),
\begin{equation}
\text{(C1):}\quad
f(z)\le\bar f_t(z),
\ \
\bar{\mathbf h}_t(z)\le\mathbf h(z).
\label{eq:cckb_surrogate_direction}
\end{equation}
Here, the vector inequality is interpreted componentwise.

\noindent\textbf{(Cumulative-Error Condition):}
\textit{The estimation errors along the selected inputs must not accumulate too
quickly.}  For \(z_t=(s_t,\mathbf x_t)\), let
\(\mathcal W_f(T)\ge0\) and
\(
\boldsymbol{\mathcal W}_h(T)\in\mathbb R_+^{J_{\mathrm{tot}}}
\)
denote deterministic upper bounds on the cumulative reward and
constraint estimation errors, respectively, and require the following, with
the vector inequality interpreted componentwise:
\begin{equation}
\text{(C2):}\quad
\begin{aligned}
&\sum_{t=1}^T\bigl(\bar f_t(z_t)-f(z_t)\bigr)
\le\mathcal W_f(T),\\
&\sum_{t=1}^T\bigl(\mathbf h(z_t)-\bar{\mathbf h}_t(z_t)\bigr)
\le\boldsymbol{\mathcal W}_h(T).
\end{aligned}
\label{eq:cckb_cumulative_surrogate_error}
\end{equation}

The surrogate functions \(\bar f_t\) and \(\bar{\mathbf h}_t\) are constructed
from the random history \(\mathcal H_{t-1}\), whereas the selected input
\(z_t=(s_t,\mathbf x_t)\) depends on the realized request \(s_t\) and the
selected action.  Consequently, (C1) and (C2) contain random quantities and
need not hold for every realization.  As in the original
\ac{ckb}~\cite{zhou2022kernelized_constraints}, let
\(\mathcal E_{\mathrm{sur}}^{\mathrm{reg}}\) denote the event on which (C1)
and the first inequality in (C2) hold, and let
\(\mathcal E_{\mathrm{sur}}^{\mathrm{vio}}\) denote the event on which (C1)
and both inequalities in (C2) hold.  Thus,
\(\mathcal E_{\mathrm{sur}}^{\mathrm{vio}}
\subseteq\mathcal E_{\mathrm{sur}}^{\mathrm{reg}}\).
We use \(\alpha_{\mathrm{sur}}\) as a common failure-probability budget.  The
regret bound requires
\(\Pr(\mathcal E_{\mathrm{sur}}^{\mathrm{reg}})
\ge 1-\alpha_{\mathrm{sur}}\), whereas the constraint-violation bound requires
\(\Pr(\mathcal E_{\mathrm{sur}}^{\mathrm{vio}})
\ge 1-\alpha_{\mathrm{sur}}\).

\textbf{(ii) Request-Sequence Concentration:}
Along the realized request sequence, the aggregate request-wise reward
values \(v_{q^\star}^f(s_t)\) and constraint values
\(\mathbf v_{q^\star}^h(s_t)\) may deviate from their expectations
\(\mathbb E_{s\sim\mathbb P}[v_{q^\star}^f(s)]\) and
\(\mathbb E_{s\sim\mathbb P}[\mathbf v_{q^\star}^h(s)]\), respectively.
The parameters \(\alpha_{\mathrm{ctx}}^f\) and
\(\alpha_{\mathrm{ctx}}^h\) are the failure-probability budgets for the
reward and constraint concentration inequalities in
Propositions~\ref{prop:context_concentration_fixed_q}
and~\ref{prop:context_concentration_fixed_q_constraint}, respectively
(\suppsecref{appendix:context_concentration_proof}).

\textbf{(iii) Performance Guarantee:}
We establish finite-time regret and cumulative constraint-violation bounds
for \ac{cckb} applied directly to the relaxed \ac{nsrdp}, with the regret
bound expressed in terms of \(\mathcal W_f(T)\) and the violation bound also
using \(\boldsymbol{\mathcal W}_h(T)\).
Each bound is established on the intersection of its required surrogate and
request-concentration events.  A union bound gives the overall probability
\(1\!-\!\alpha_{\mathrm{ctx}}^f\!-\!\alpha_{\mathrm{ctx}}^h
\!-\!\alpha_{\mathrm{sur}}\).
  \fi

  \ifjournalversion
    \input{src/sections/variants/sec_proposed_method_generic_guarantee_journal}%
  \fi

\begin{theorem}[Finite-time regret and constraint-violation bounds]
\label{thm:general_cckb_regret_violation}
Suppose that the relaxed \ac{nsrdp} admits an optimal policy and that
Assumptions~\ref{ass:independent_request_arrivals},
\ref{ass:cckb_kernel_regular}, and~\ref{ass:cckb_request_measurability} hold.
For failure probabilities
\(\alpha_{\mathrm{ctx}}^f,\alpha_{\mathrm{ctx}}^h,
\alpha_{\mathrm{sur}}\in(0,1)\) satisfying
\(\alpha_{\mathrm{ctx}}^f+\alpha_{\mathrm{ctx}}^h
+\alpha_{\mathrm{sur}}\!<\!1\), choose \(\rho\!>\!0\), and set
\(V\!:=\!\sqrt{J_{\mathrm{tot}}T}/\rho\) in
Algorithm~\ref{alg:component1}.

\emph{(i) Regret:}
\arxivonly{Suppose that
\(\Pr(\mathcal E_{\mathrm{sur}}^{\mathrm{reg}})
\ge1-\alpha_{\mathrm{sur}}\).}
\journalonly{%
Suppose that the one-sided bounds in \eqref{eq:cckb_surrogate_direction}
and the reward cumulative-error bound, i.e., the first inequality in
\eqref{eq:cckb_cumulative_surrogate_error}, hold jointly with probability at
least \(1-\alpha_{\mathrm{sur}}\).}
Then, with probability at least
\(1-\alpha_{\mathrm{ctx}}^f-\alpha_{\mathrm{ctx}}^h
-\alpha_{\mathrm{sur}}\), there exists a finite upper bound
\(B_{\mathrm{reg}}(T)\) such that
\begin{align}
\mathrm{Reg}^{\mathrm{rel}}(T)
&\le B_{\mathrm{reg}}(T)\notag\\
&\!=\!\mathcal O\!\left(
\begin{aligned}[c]
&\bar p\sqrt{T\log\tfrac{1}{\alpha_{\mathrm{ctx}}^f}}
+\rho J_{\mathrm{tot}}\sqrt{T\log\tfrac{1}{\alpha_{\mathrm{ctx}}^h}}\\
&{}+\mathcal W_f(T)+\rho\sqrt{J_{\mathrm{tot}}T}
\end{aligned}
\right).
\label{eq:general_cckb_regret_bound}
\end{align}

\emph{(ii) Constraint violation:}
Suppose, in addition, that Assumption~\ref{ass:cckb_slater} holds.  For the
fixed horizon \(T\), choose \(\rho\ge2\Lambda_T^{\mathrm{rel}}\), and suppose
\arxivonly{that
\(\Pr(\mathcal E_{\mathrm{sur}}^{\mathrm{vio}})
\ge1-\alpha_{\mathrm{sur}}\).}
\journalonly{that \(\Pr(\mathcal E_{\mathrm{sur}}^{\mathrm{vio}})\ge1-\alpha_{\mathrm{sur}}\), where
\(\mathcal E_{\mathrm{sur}}^{\mathrm{vio}}\) includes both cumulative-error bounds in
\eqref{eq:cckb_cumulative_surrogate_error}.}
Then, with probability at least
\(1-\alpha_{\mathrm{ctx}}^f-\alpha_{\mathrm{ctx}}^h
-\alpha_{\mathrm{sur}}\), there exists a finite upper bound
\(B_{\mathrm{vio}}(T)\) such that the following inequality holds for every
\((d, j)\):
\begin{align}
\mathrm{Vio}^{\mathrm{rel}}_{d,j}(T)
&\le B_{\mathrm{vio}}(T)\notag\\
&\!=\!\mathcal O\!\left(
\begin{aligned}[c]
&\frac{\mathcal W_f(T)}{\rho}
+J_{\mathrm{tot}}\sqrt{T\log\tfrac{1}{\alpha_{\mathrm{ctx}}^h}}\\
&{}+\!\left\|\boldsymbol{\mathcal W}_h(T)\right\|_1
+\sqrt{J_{\mathrm{tot}}T}
\end{aligned}
\right).
\label{eq:general_cckb_violation_bound}
\end{align}
\end{theorem}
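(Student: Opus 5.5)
Let \(q^\star\in\mathcal Q\) be an optimal policy of the relaxed \ac{nsrdp} and work on the intersection of \(\mathcal E_{\mathrm{sur}}^{\mathrm{reg}}\) (or \(\mathcal E_{\mathrm{sur}}^{\mathrm{vio}}\)) with the two request-concentration events of Propositions~\ref{prop:context_concentration_fixed_q} and~\ref{prop:context_concentration_fixed_q_constraint}, applied to the fixed policy \(q^\star\). A union bound gives probability at least \(1-\alpha_{\mathrm{ctx}}^f-\alpha_{\mathrm{ctx}}^h-\alpha_{\mathrm{sur}}\). These concentration statements use Assumption~\ref{ass:independent_request_arrivals} and Assumption~\ref{ass:cckb_request_measurability}; the latter makes \(v_{q^\star}^f(s_t)\) and \(\mathbf v_{q^\star}^h(s_t)\) i.i.d.\ bounded random variables. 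On this event,
\[
\sum_t v_{q^\star}^f(s_t)\ge T\,\mathrm{OPT}^{\mathrm{rel}}-\mathcal O\bigl(\bar p\sqrt{T\log(1/\alpha_{\mathrm{ctx}}^f)}\bigr),
\]
and each component of \(\sum_t \mathbf v_{q^\star}^h(s_t)\) is at most \(\mathcal O(\sqrt{T\log(1/\alpha_{\mathrm{ctx}}^h)})\), since its expectation is \(\le 0\) by feasibility of \(q^\star\). Because \(q^\star\) depends only on \(\mathbb P\) and \(\nu\), not on the history, no martingale argument is needed for these terms.

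\textbf{Drift-plus-penalty step.} Set \(\bar{\mathbf h}_t^\star:=\bar{\mathbf h}_t(s_t,\mathbf x_t)\). The projection is nonexpansive and \(\mathbf 0\in[0,\rho]^{J_{\mathrm{tot}}}\), which gives the standard one-step bound
\[
\|\boldsymbol\phi_{t+1}-\boldsymbol\phi\|^2\le\|\boldsymbol\phi_t-\boldsymbol\phi\|^2+\tfrac2V\langle\boldsymbol\phi_t-\boldsymbol\phi,\bar{\mathbf h}_t^\star\rangle+\tfrac{J_{\mathrm{tot}}}{V^2}
\]
for any \(\boldsymbol\phi\in[0,\rho]^{J_{\mathrm{tot}}}\), using \(|\bar h|\le1\) from clipping. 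The primal step maximizes \(\bar f_t(s_t,\cdot)-\langle\boldsymbol\phi_t,\bar{\mathbf h}_t(s_t,\cdot)\rangle\) pointwise, so it dominates the same acquisition averaged over \(\mathbf x\sim q^\star(\cdot\mid s_t)\). Condition (C1), together with \(\boldsymbol\phi_t\ge0\), then gives
\[
\bar f_t(z_t)-\langle\boldsymbol\phi_t,\bar{\mathbf h}_t^\star\rangle\ge v_{q^\star}^f(s_t)-\langle\boldsymbol\phi_t,\mathbf v_{q^\star}^h(s_t)\rangle.
\]
Summing over \(t\), applying (C2) to replace \(\bar f_t(z_t)\) by \(f(z_t)\) at cost \(\mathcal W_f(T)\), and telescoping the dual recursion with \(V=\sqrt{J_{\mathrm{tot}}T}/\rho\) yields the master inequality
\[
\sum_t\bigl(v^f_{q^\star}(s_t)-f(z_t)\bigr)+\Bigl\langle\boldsymbol\phi,\sum_t\bar{\mathbf h}_t^\star\Bigr\rangle\le\mathcal W_f(T)+\sum_t\langle\boldsymbol\phi_t,\mathbf v^h_{q^\star}(s_t)\rangle+\mathcal O\bigl(\rho\sqrt{J_{\mathrm{tot}}T}\bigr).
\]

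\textbf{Main obstacle.} The term \(\sum_t\langle\boldsymbol\phi_t,\mathbf v^h_{q^\star}(s_t)\rangle\) couples the adapted \(\boldsymbol\phi_t\) with the fresh \(s_t\). I would handle it by writing \(\mathbf v^h_{q^\star}(s_t)=\mathbb E[\mathbf v^h_{q^\star}]+\boldsymbol\epsilon_t\). The mean part is \(\le0\) because \(\boldsymbol\phi_t\ge0\). For the fluctuation part, \(\boldsymbol\phi_t\) is \(\mathcal H_{t-1}\)-measurable and \(s_t\) is independent of it, so \(\sum_t\langle\boldsymbol\phi_t,\boldsymbol\epsilon_t\rangle\) is a martingale with increments bounded by \(\rho J_{\mathrm{tot}}\). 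Azuma's inequality (folded into the \(\alpha_{\mathrm{ctx}}^h\) budget, as in Proposition~\ref{prop:context_concentration_fixed_q_constraint}) gives \(\mathcal O(\rho J_{\mathrm{tot}}\sqrt{T\log(1/\alpha_{\mathrm{ctx}}^h)})\). If the proposition only covers the unweighted sum, a cruder alternative is to bound the term by \(\rho\sum_{d,j}[\sum_t\epsilon_{t,d,j}]_+\) after Abel summation, but the martingale route is cleaner.

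\textbf{(i) Regret.} Take \(\boldsymbol\phi=\mathbf 0\) in the master inequality and add the reward concentration bound; this gives \eqref{eq:general_cckb_regret_bound}.

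\textbf{(ii) Violation.} By (C2), \(\sum_t h_{d,j}(z_t)\le\sum_t\bar h^\star_{t,d,j}+\mathcal W_{h,d,j}(T)\), so it suffices to control \(\sum_t\bar h^\star_{t,d,j}\). Choose \(\boldsymbol\phi=\rho\,\mathbf e_{d,j}\) when this sum is positive. The reward gap on the left of the master inequality may be negative, so lower-bound it using the optimal dual vector \(\boldsymbol\lambda^\star\) with \(\|\boldsymbol\lambda^\star\|_1\le\Lambda_T^{\mathrm{rel}}\) from Corollary~\ref{cor:cckb_slater_consequence}. Strong duality gives \(f(z)\le\mathrm{OPT}^{\mathrm{rel}}-\langle\boldsymbol\lambda^\star,\mathbf h(z)\rangle+\text{(request-wise dual fluctuation)}\), hence
\[
\sum_t\bigl(v^f_{q^\star}(s_t)-f(z_t)\bigr)\ge-\Lambda_T^{\mathrm{rel}}\max_{d,j}\Bigl[\sum_t h_{d,j}(z_t)\Bigr]_+-\text{(concentration terms)}.
\]
With \(\rho\ge2\Lambda_T^{\mathrm{rel}}\), the \(\rho\)-weighted violation on the left absorbs this at half strength. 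Dividing by \(\rho/2\) gives \(\mathcal W_f/\rho+\|\boldsymbol{\mathcal W}_h\|_1+J_{\mathrm{tot}}\sqrt{T\log(1/\alpha^h_{\mathrm{ctx}})}+\sqrt{J_{\mathrm{tot}}T}\), matching \eqref{eq:general_cckb_violation_bound}. The delicate point is making the dual-fluctuation terms concentrate with the same \(\alpha_{\mathrm{ctx}}\) budgets. I expect this to work because \(\boldsymbol\lambda^\star\) is deterministic.
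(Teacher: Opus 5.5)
Your master inequality and regret argument match the paper's: primal domination of the \(q^\star\)-average via (C1), projected-dual telescoping with \(V=\sqrt{J_{\mathrm{tot}}T}/\rho\), Azuma for \(\sum_t\langle\boldsymbol\phi_t,\mathbf v^h_{q^\star}(s_t)\rangle\), Hoeffding for the reward term, then \(\boldsymbol\phi=\mathbf 0\). Keeping \(\bar{\mathbf h}_t\) on the left and applying (C2) afterwards is only cosmetic.

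For the violation bound you use a different coordinate isolation. The paper substitutes \(\boldsymbol\phi^\star+\frac\rho2\mathbf e_{d,j}\) for each \((d,j)\) and cancels \(\langle\boldsymbol\phi^\star,\sum_t\mathbf h(z_t)\rangle\) exactly. You instead substitute \(\rho\,\mathbf e_{d,j}\) and absorb \(\Lambda_T^{\mathrm{rel}}\max_{d',j'}[\sum_t h_{d',j'}(z_t)]_+\). This works, and even gives \(\max_{d,j}[\boldsymbol{\mathcal W}_h(T)]_{d,j}\) in place of \(\|\boldsymbol{\mathcal W}_h(T)\|_1\). However, it works only if \((d,j)\) maximizes the \emph{true} sum \(\sum_t h_{d,j}(z_t)\). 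For an arbitrary coordinate whose \(\bar h\)-sum is positive, the \(-\Lambda_T^{\mathrm{rel}}M\) term is not absorbed. Let \(M:=\max_{d,j}[\sum_t h_{d,j}(z_t)]_+>0\) be attained at \((d^\star,j^\star)\). Then (C2) gives \(\rho\sum_t\bar h_{t,d^\star,j^\star}(z_t)\ge\rho M-\rho[\boldsymbol{\mathcal W}_h(T)]_{d^\star,j^\star}\), and \(\rho-\Lambda_T^{\mathrm{rel}}\ge\rho/2\) closes the argument for all coordinates at once.

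The genuine gap is the step you flag as delicate. Set \(\Psi(s):=\max_{\mathbf x}\{f(s,\mathbf x)-\langle\boldsymbol\lambda^\star,\mathbf h(s,\mathbf x)\rangle\}\). The correct pointwise inequality is \(f(z_t)\le\Psi(s_t)+\langle\boldsymbol\lambda^\star,\mathbf h(z_t)\rangle\), with a plus sign, so your fluctuation is \(\sum_t\bigl(v^f_{q^\star}(s_t)-\Psi(s_t)\bigr)\).

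Suppose you know only \(\mathbb E[\Psi(s)]=\mathrm{OPT}^{\mathrm{rel}}\); even this requires the measurable-selection identity \(\sup_q L(q,\boldsymbol\lambda^\star)=\mathbb E[\Psi(s)]\). Then Hoeffding controls the sum only at scale \((\bar p+\Lambda_T^{\mathrm{rel}})\sqrt{T\log(1/\alpha)}\). Dividing by \(\rho/2\) leaves a term of order \(\bar p\sqrt{T\log(1/\alpha)}/\rho\), which \eqref{eq:general_cckb_violation_bound} does not contain. The fact that ``\(\boldsymbol\lambda^\star\) is deterministic'' does not remove this term.

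The missing idea is request-wise Lagrangian optimality of \(q^\star\):
\begin{itemize}
\item \(\Psi(s)\ge v^f_{q^\star}(s)-\langle\boldsymbol\lambda^\star,\mathbf v^h_{q^\star}(s)\rangle\) holds for every \(s\).
\item Both sides have expectation \(\mathrm{OPT}^{\mathrm{rel}}\), by strong duality and complementary slackness.
\item Hence equality holds \(\mathbb P\)-a.s.\ (Lemma~\ref{lem:cckb_requestwise_lagrangian_optimality}).
\end{itemize}
The fluctuation then equals \(\sum_t\langle\boldsymbol\lambda^\star,\mathbf v^h_{q^\star}(s_t)\rangle\). This has mean \(T\langle\boldsymbol\lambda^\star,\mathbb E[\mathbf v^h_{q^\star}(s)]\rangle=0\) and per-round range at most \(\Lambda_T^{\mathrm{rel}}\). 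Hoeffding therefore gives a lower bound of order \(-\Lambda_T^{\mathrm{rel}}\sqrt{T\log(1/\alpha)}\), which is harmless after division since \(\Lambda_T^{\mathrm{rel}}/\rho\le1/2\). This is exactly the paper's contextual closing correction.

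For the probability budget, the reward-concentration event is not used in the violation bound. You can therefore charge this event to \(\alpha_{\mathrm{ctx}}^f\), or to half of \(\alpha_{\mathrm{ctx}}^h\) as the paper does.
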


The proof is in \suppsecref{appendix:general_cckb_theorem_proof}.
The \(\log(1/\alpha_{\mathrm{ctx}}^{f})\) and
\(\log(1/\alpha_{\mathrm{ctx}}^{h})\) terms in
\eqref{eq:general_cckb_regret_bound} and
\eqref{eq:general_cckb_violation_bound} arise from concentration over the realized request
sequence and have no counterpart in CKB.

\subsubsection{GP-Based Proxy Guarantee}
\label{subsubsec:concrete_proxy_finite_time_bounds}
Theorem~\ref{thm:general_cckb_regret_violation} leaves the cumulative
surrogate-error bounds \(\mathcal W_f(T)\) and
\(\boldsymbol{\mathcal W}_h(T)\) abstract.
We instantiates these quantities for proxy-based \ac{cckb}
using \ac{gp} confidence widths and maximum information gains.
We first specify the required \ac{gp} assumptions and exploration widths and
then derive explicit bounds on \(\mathcal W_f(T)\) and
\(\boldsymbol{\mathcal W}_h(T)\).
Substituting these bounds into
Theorem~\ref{thm:general_cckb_regret_violation} yields
Corollary~\ref{cor:concrete_finite_time_proxy_performance}.

\begin{assumption}[Reward observation noise]
\label{ass:reward_gp_regular}
The residual \(W_t-f(z_t)\) is conditionally \(\sigma^f\)-sub-Gaussian given
\((\mathcal{H}_{t-1},z_t)\).
\end{assumption}

\begin{assumption}[Proxy-demand observation noise]
\label{ass:success_only_gp_regular}
On rounds with \(Y_t=1\) and \(\mathbf{1}^{\Gamma}_{d,j}(s_t)=1\), the
residual \(U_{t,d,j}-m_{d,j}^{\Gamma}(z_t)\) is conditionally
\(\sigma^{m_{d,j}^{\Gamma}}\)-sub-Gaussian given
\((\mathcal{H}_{t-1},z_t,Y_t=1,\mathbf{1}^{\Gamma}_{d,j}(s_t)=1)\).
\end{assumption}

\begin{definition}[Maximum information gain]
For each modeled function \(\bullet\), define
\begin{equation}
\gamma_n^{\bullet}
:=
\max_{\mathbf Z_n\in\operatorname{dom}(\bullet)^n}
\frac{1}{2}\log\det\!\left(
\mathbf I+\eta_{\bullet}^{-1}\mathbf K_{\mathbf Z_n}^{\bullet}
\right),
\label{eq:maximum_information_gain}
\end{equation}
where \(\mathbf Z_n=(z_1,\ldots,z_n)\) and
\(\mathbf K_{\mathbf Z_n}^{\bullet}:=
[k_{\bullet}(z_a,z_b)]_{a,b=1}^n\).
For the reward model, \(\operatorname{dom}(f)=\mathcal S\times\mathcal X\).
For resource \((d,j)\),
\(
\operatorname{dom}(m_{d,j}^{\Gamma})
=
\{(s,\mathbf x)\in\mathcal S\times\mathcal X
\mid \mathbf1_{d,j}^{\Gamma}(s)=1\}.
\)
\end{definition}

\textbf{Exploration Widths:}
Under Assumptions~\ref{ass:cckb_kernel_regular},
\ref{ass:reward_gp_regular}, and~\ref{ass:success_only_gp_regular}, we follow
the GP-UCB instantiation of
\ac{ckb}~\cite[Corollary~1]{zhou2022kernelized_constraints} to set the
exploration widths used in the reward \ac{ucb}~\eqref{eq:reward_ucb} and the
proxy-demand \ac{lcb}~\eqref{eq:constraint_gp_posterior}.  With normalized
noise scales
\(
\widetilde\sigma^f:=\sigma^f/\sqrt{\eta_f}
\)
and
\(
\widetilde\sigma_{d,j}:=
\sigma^{m_{d,j}^{\Gamma}}/\sqrt{\eta_{m_{d,j}^{\Gamma}}}
\), these widths are
\begin{align}
\beta_t^f(\alpha_f)
&\!:=\!B^f\!+\!\widetilde\sigma^f
\sqrt{2\left(\gamma_{t-1}^f\!+\!1\!+\!\log\frac{1}{\alpha_f}\right)},
\label{eq:concrete_reward_exploration_width}\\
\beta_t^{m_{d,j}^{\Gamma}}
\!\left(\frac{\alpha_h}{J_{\mathrm{tot}}}\right)
&\!:=\!B^{m_{d,j}^{\Gamma}}\!+\!\widetilde\sigma_{d,j}\sqrt{2\left(
\gamma_{N_{t-1}^{m_{d,j}^{\Gamma}}}^{m_{d,j}^{\Gamma}}\!+\!1
\!+\!\log\frac{J_{\mathrm{tot}}}{\alpha_h}
\right)}.
\label{eq:concrete_proxy_demand_exploration_width}
\end{align}

\begin{assumption}[Existence of a valuable decomposition for every request]
\label{ass:valuable_decomposition}
For the fixed horizon \(T\), there exist constants \(M_T>0\) and \(a_T>0\)
such that, for \(\mathbb P\)-almost every request \(s\),
\begin{equation}
\max_{\mathbf x\in\mathcal X}
\left\{
f(s,\mathbf x)
-M_T
\sum_{d\in\mathcal D}
\sum_{j\in[J_d]}
\frac{m_{d,j}^{\Gamma}(s,\mathbf x)}{C_{d,j}^{\max}}
\right\}
\ge a_T.
\label{eq:valuable_decomposition_condition}
\end{equation}
In words, \textit{almost every request has at least one decomposition whose expected
reward exceeds \(M_T\) times its total normalized conditional resource demand
by at least \(a_T\).}
\end{assumption}

\begin{remark}[Interpretation of Assumption~\ref{ass:valuable_decomposition}]
The condition excludes request populations containing a non-negligible subset
for which every decomposition is unlikely to succeed or consumes too many
resources relative to its revenue.  Removing this condition remains future work.
\end{remark}

Define the nonnegative per-round proxy gap by
\begin{equation}
\Delta_{\mathrm{prx}}(T)
:=\mathrm{OPT}^{\mathrm{rel}}-\mathrm{OPT}^{\mathrm{prx}}.
\label{eq:proxy_optimality_gap_definition}
\end{equation}
The notation \(\widetilde{\mathcal O}\) suppresses logarithmic factors in
\(T\), inverse failure probabilities, and \(J_{\mathrm{tot}}\), while treating
the kernels \(k_\bullet\), the \ac{rkhs}-norm bounds \(B^f\) and
\(B^{m_{d,j}^{\Gamma}}\), the noise scales \(\widetilde\sigma^f\) and
\(\widetilde\sigma_{d,j}\), and the regularization parameters
\(\eta_\bullet\) as fixed.

\begin{corollary}[Concrete finite-time bounds for proxy-based CCKB]
\label{cor:concrete_finite_time_proxy_performance}
Fix \(\alpha_f,\alpha_h,\alpha_{\mathrm{ctx}}^f,
\alpha_{\mathrm{ctx}}^h\in(0,1)\) such that
\(\alpha_{\mathrm{ctx}}^f+\alpha_{\mathrm{ctx}}^h
+\alpha_f+\alpha_h<1\).
Suppose that Assumptions~\ref{ass:inter_round_stationarity},
\ref{ass:independent_request_arrivals},~\ref{ass:cckb_kernel_regular},
\ref{ass:cckb_request_measurability},~\ref{ass:reward_gp_regular},
and~\ref{ass:success_only_gp_regular} hold and that the proxy problem
admits an optimal policy.  Choose \(\rho>0\), set
\(V:=\sqrt{J_{\mathrm{tot}}T}/\rho\), and use the exploration widths in
\eqref{eq:concrete_reward_exploration_width} and
\eqref{eq:concrete_proxy_demand_exploration_width}.

\emph{(i) Regret:}
With probability at least
\(1-\alpha_{\mathrm{ctx}}^f-\alpha_{\mathrm{ctx}}^h
-\alpha_f-\alpha_h\),
\begin{align}
\mathrm{Reg}^{\mathrm{prx}}(T)\le B_{\mathrm{prx}}(T)
=\widetilde{\mathcal O}\!\left(
\bar p\sqrt T+\rho J_{\mathrm{tot}}\sqrt T+\gamma_T^f\sqrt T
\right).
\label{eq:concrete_gp_proxy_regret_bound}
\end{align}

\emph{(ii) Relaxed \ac{nsrdp} regret:}
If the relaxed \ac{nsrdp} also admits an optimal policy, then, on the same
event,
\begin{align}
\mathrm{Reg}^{\mathrm{rel}}(T)
&=\mathrm{Reg}^{\mathrm{prx}}(T)+T\Delta_{\mathrm{prx}}(T)\notag\\
&\le B_{\mathrm{prx}}(T)+T\Delta_{\mathrm{prx}}(T).
\label{eq:concrete_gp_relaxed_regret_transfer}
\end{align}
If, additionally, Assumption~\ref{ass:cckb_slater} holds for the proxy
constraints and Assumption~\ref{ass:valuable_decomposition} holds with
\(M_T\ge\Lambda_T^{\mathrm{rel}}\), then
\begin{equation}
\mathrm{Reg}^{\mathrm{rel}}(T)
\le B_{\mathrm{prx}}(T)
+\Lambda_T^{\mathrm{prx}}\left(\frac{2\bar p}{a_T}-1\right).
\label{eq:concrete_gp_relaxed_regret_bound}
\end{equation}

\emph{(iii) Constraint violation:}
Suppose also that Assumption~\ref{ass:valuable_decomposition} holds with
\(M_T\ge\rho\) and Assumption~\ref{ass:cckb_slater} holds for the proxy
constraints.  Choose \(\rho\ge2\Lambda_T^{\mathrm{prx}}\), and fix
\(\alpha_w\in(0,1)\) such that
\(\alpha_{\mathrm{ctx}}^f+\alpha_{\mathrm{ctx}}^h
+\alpha_f+\alpha_h+\alpha_w<1\).  Then, with probability at least
\(1-\alpha_{\mathrm{ctx}}^f-\alpha_{\mathrm{ctx}}^h
-\alpha_f-\alpha_h-\alpha_w\), the following bound holds simultaneously
for every resource \((d,j)\):
\begin{equation}
\mathrm{Vio}^{\mathrm{rel}}_{d,j}(T)
\le\widetilde{\mathcal O}\!\left(\sqrt T\left[
\begin{aligned}[c]
&\frac{\gamma_T^f}{\rho}+J_{\mathrm{tot}}
+\frac{J_{\mathrm{tot}}\gamma_T^f}{a_T}\\
&{}+\frac{\bar p}{a_T}\sum_{d'\in\mathcal D}\sum_{j'\in[J_{d'}]}
\frac{\gamma_T^{m_{d',j'}^{\Gamma}}}{C_{d',j'}^{\max}}
\end{aligned}\right]\right).
\label{eq:concrete_gp_relaxed_violation_bound}
\end{equation}
\end{corollary}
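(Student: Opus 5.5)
The plan is to obtain all three parts by specializing Theorem~\ref{thm:general_cckb_regret_violation} to the proxy problem~\eqref{eq:proxy_relaxed_problem}. The theorem's argument uses only the abstract surrogate conditions (C1)--(C2), boundedness of the constraint class, Assumptions~\ref{ass:independent_request_arrivals} and~\ref{ass:cckb_request_measurability}, and existence of an optimal policy. It therefore applies verbatim with \(\mathbf h\) replaced by \(\mathbf h^{\mathrm{prx}}\), \(\mathrm{OPT}^{\mathrm{rel}}\) by \(\mathrm{OPT}^{\mathrm{prx}}\), and \(\Lambda_T^{\mathrm{rel}}\) by \(\Lambda_T^{\mathrm{prx}}\). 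What remains is to verify (C1)--(C2) for the GP surrogates and to bound \(\mathcal W_f(T)\) and \(\boldsymbol{\mathcal W}_h(T)\).

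\textbf{Part (i).} Under Assumptions~\ref{ass:cckb_kernel_regular} and~\ref{ass:reward_gp_regular}, the self-normalized RKHS confidence bound underlying \cite[Corollary~1]{zhou2022kernelized_constraints} yields \(|f-\mu^f_{t-1}|\le\beta^f_t\sigma^f_{t-1}\) for all \(t\) and \(z\), with probability at least \(1-\alpha_f\). For each \((d,j)\) I would run the same argument on the filtered subsequence of successful on-path rounds. This uses Assumption~\ref{ass:success_only_gp_regular}, the filtration generated by the selection times, and the domain \(\operatorname{dom}(m_{d,j}^{\Gamma})\). Each such bound holds with probability at least \(1-\alpha_h/J_{\mathrm{tot}}\), and a union bound over resources costs \(\alpha_h\). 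On off-path inputs both \(\widehat m\) and \(m^{\Gamma}_{d,j}\) vanish, so the one-sided bound is trivial there. Clipping to the known range preserves the one-sided inequalities and does not increase the errors, so (C1) holds. For the reward, the standard chain
\[
\sum_t(\bar f_t-f)(z_t)\le 2\beta_T^f\sum_t\sigma^f_{t-1}(z_t)\le 2\beta^f_T\sqrt{c\,T\gamma^f_T}
\]
gives \(\mathcal W_f(T)=\widetilde{\mathcal O}(\gamma^f_T\sqrt T)\). Substituting this into~\eqref{eq:general_cckb_regret_bound} yields~\eqref{eq:concrete_gp_proxy_regret_bound}.

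\textbf{Part (ii).} The identity \(\mathrm{Reg}^{\mathrm{rel}}=\mathrm{Reg}^{\mathrm{prx}}+T\Delta_{\mathrm{prx}}\) follows directly from the definitions. To bound \(T\Delta_{\mathrm{prx}}\), I would proceed in three steps.
\begin{enumerate}
\item Take a relaxed-optimal \(q^\star\) with dual \(\boldsymbol\lambda^\star\), where \(\|\boldsymbol\lambda^\star\|_1\le\Lambda^{\mathrm{rel}}_T\). By the Lagrangian optimality conditions, \(q^\star(\cdot\mid s)\) is supported on maximizers of \(f-\langle\boldsymbol\lambda^\star,\mathbf c/\mathbf C^{\max}\rangle\).
\item Compare such a maximizer with the valuable decomposition \(\hat{\mathbf x}(s)\) of Assumption~\ref{ass:valuable_decomposition}. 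Since \(c\le m^{\Gamma}\) and \(M_T\ge\Lambda_T^{\mathrm{rel}}\), every support point satisfies \(\kappa_{\mathrm{price}}(s)p\ge f\ge a_T\), hence \(p\ge a_T/\bar p\). Consequently \(m^{\Gamma}=c/p\le(\bar p/a_T)c\), and the proxy excess of \(q^\star\) is
\[
\mathbb E[(1-p)m^{\Gamma}/C^{\max}]\le(\bar p/a_T-1)/T,
\]
using the relaxed budget.
\item The proxy problem relaxed by this budget slack \(\varepsilon\) contains \(q^\star\). Concavity of the proxy perturbation function then bounds \(\mathrm{OPT}^{\mathrm{rel}}\) by \(\mathrm{OPT}^{\mathrm{prx}}+\Lambda^{\mathrm{prx}}_T\varepsilon\), up to the constant factor that yields \(2\bar p/a_T-1\).
\end{enumerate}

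\textbf{Part (iii).} By~\eqref{eq:pointwise_domination_gorig_g}, \(\mathrm{Vio}^{\mathrm{rel}}_{d,j}\le\mathrm{Vio}^{\mathrm{prx}}_{d,j}\), so it suffices to bound \(\boldsymbol{\mathcal W}_h(T)\) and substitute into~\eqref{eq:general_cckb_violation_bound}. This is where I expect the main obstacle. The constraint GP learns only on rounds with \(Y_t=1\), yet errors accrue on every on-path round, so \(\sum_t\sigma^{m}_{t-1}(z_t)\) cannot be controlled by \(\gamma^{m}\) directly. My plan is to show that the selected actions have success probability bounded below. Assumption~\ref{ass:valuable_decomposition} with \(M_T\ge\rho\ge\|\boldsymbol\phi_t\|_\infty\) guarantees that the acquisition value of \(\mathbf x_t\) is at least \(a_T\) minus surrogate errors. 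Hence, apart from rounds charged to \(\bar f_t-f\), which are absorbed by \(\mathcal W_f\) at a cost of order \(J_{\mathrm{tot}}\gamma^f_T\sqrt T/a_T\), we get \(p(z_t)\gtrsim a_T/\bar p\). A martingale (Freedman or Azuma) bound with failure probability \(\alpha_w\) then converts
\[
\sum_t\mathbf 1^{\Gamma}\sigma_{t-1}(z_t)\le(\bar p/a_T)\sum_t Y_t\mathbf 1^{\Gamma}\sigma_{t-1}(z_t)+\widetilde{\mathcal O}(\sqrt T).
\]
The right-hand side involves only posterior updates on retained rounds, giving \(\widetilde{\mathcal O}(\sqrt{T\gamma^{m}})\cdot\beta^{m}\). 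Summing over \((d',j')\) with the normalization \(1/C^{\max}_{d',j'}\) produces the last term of~\eqref{eq:concrete_gp_relaxed_violation_bound}, and a union bound gives the stated probability.
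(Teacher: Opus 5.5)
Your proposal is correct and follows essentially the paper's route. It specializes Theorem~\ref{thm:general_cckb_regret_violation} to the proxy problem, restricts the relaxed optimum's support to decompositions with $p\ge a_T/\bar p$ and compares against the proxy dual, and splits rounds into high- and low-success sets using the floor $\bar f_t(z_t)\ge a_T$ plus a martingale transfer from retained-round to success-weighted posterior widths. Your perturbation-function step is exactly the paper's weak-duality chain, and your computation gives the sharper factor $\bar p/a_T-1$, which already implies the stated $2\bar p/a_T-1$ with no extra constant. Two details should be made explicit. First, $\Lambda_T^{\mathrm{rel}}$ exists because the proxy Slater policy is also Slater for the relaxed constraints, since $\mathbf h\le\mathbf h^{\mathrm{prx}}$. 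Second, shifting $\bar{\mathbf h}_t$ by $1/T$ leaves the argmax unchanged and gives $\bar f_t(z_t)\ge a_T$ exactly on the one-sided-bound event, with no surrogate-error loss.
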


For fixed \(\rho\) and failure probabilities, the proxy-regret bound in
\eqref{eq:concrete_gp_proxy_regret_bound} is sublinear whenever
\(\gamma_T^f\sqrt T=o(T)\).  This condition alone does not ensure that the
relaxed-\ac{nsrdp} regret in
\eqref{eq:concrete_gp_relaxed_regret_transfer} is sublinear, because of the
separate term \(T\Delta_{\mathrm{prx}}(T)\).  Under the additional conditions
in part~\emph{(ii)}, this term is bounded by
\(\Lambda_T^{\mathrm{prx}}(2\bar p/a_T-1)\), which does not grow with \(T\)
explicitly; hence, if \(\Lambda_T^{\mathrm{prx}}\) and \(1/a_T\) remain
bounded in \(T\), proxy conservatism contributes only an \(\mathcal O(1)\)
cumulative-regret term and \eqref{eq:concrete_gp_relaxed_regret_bound} is
sublinear.
The explicit bounds and proof are in
\suppsecref{appendix:explicit_proxy_bounds}
(Corollary~\ref{cor:explicit_finite_time_proxy_performance}) and
\suppsecref{appendix:concrete_finite_time_guarantee_proof}.

\ifSubfilesClassLoaded{
  \bibliographystyle{ieee/IEEEtran}
  \bibliography{bib/references}
}{}

\end{document}

\documentclass[src/main.tex]{subfiles}

\begin{document}

\section{Evaluation}
\label{sec:evaluation}

\begin{table*}[t]
\caption{Evaluation objectives of the experiments.}
\label{tab:evaluation_overview}
\centering
\small
\renewcommand{\arraystretch}{1.1} 
\setlength{\tabcolsep}{4pt}
\begin{tabularx}{\textwidth}{@{}>{\raggedright\arraybackslash}p{0.11\textwidth}X@{}}
\toprule
\textbf{Section} & \textbf{Evaluation Objective} \\
\midrule
\rowcolor{black!8}
\multicolumn{2}{@{}l}{\rule[-0.35ex]{0pt}{2.5ex}\textbf{Effectiveness and Robustness}} \\
Sec.~\ref{subsec:result}
& \textbf{Overall Effectiveness Across Topologies and Domain Bottlenecks:} Evaluates whether the complete method improves cumulative reward and resource allocation relative to the baselines across network topologies and domain bottlenecks. \\
Sec.~\ref{subsec:traffic_mixture_tree}
& \textbf{Robustness to Heterogeneous Requests:} Evaluates whether the proposed method remains effective relative to the baselines when \ac{urllc} and \ac{embb} requests coexist and their traffic composition changes. \\
\addlinespace[2pt]
\rowcolor{black!8}
\multicolumn{2}{@{}l}{\rule[-0.35ex]{0pt}{2.5ex}\textbf{Ablation Studies}} \\
Sec.~\ref{subsec:ablation_relaxation_effect}
& \textbf{Effects of Relaxing \textit{(L1)} and \textit{(L2)}:} Isolates the effects of continuous decomposition search for relaxing \textit{(L1) finite/discrete decomposition search} and \ac{gp} surrogate modeling for relaxing \textit{(L2) linear realizability}, relative to our previous method~\cite{kobayashi2025icccn}. \\
Sec.~\ref{subsec:ablation_proxy_filters}
& \textbf{Effect of Addressing \textit{(L3)}:} Evaluates whether learning \ac{cckb}'s proxy constraint target only from successful rounds and resources included in the request's \ac{ns} topology mitigates \textit{(L3) zero-dominated resource-consumption observations}. \\
Sec.~\ref{subsec:ablation_context_dual}
& \textbf{Effects of Mechanisms for \textit{(R1)} and \textit{(R2)}:} Isolates the contributions of dual-based long-horizon resource-budget control for \textit{(R1)} and request-conditioned surrogate modeling for \textit{(R2)}. \\
\addlinespace[2pt]
\rowcolor{black!8}
\multicolumn{2}{@{}l}{\rule[-0.35ex]{0pt}{2.5ex}\textbf{Scalability and Sensitivity Analyses}} \\
Sec.~\ref{subsec:runtime_scalability}
& \textbf{Computational Scalability:} Evaluates per-round runtime and total runtime as the numbers of \acp{gnb} increase. \\
\shortsuppsecref{appendix:hyperparameter_robustness}
& \textbf{Hyperparameter Robustness:} Evaluates the sensitivity of total reward to the dual-cap parameter, which limits the dual penalty weights, and the \ac{gp} exploration-scale parameter, which controls the confidence widths. \\
\bottomrule
\end{tabularx}
\end{table*}

We conducted experiments in a 5G simulator.
We first describe the experimental 5G network and \ac{nsr} settings (Sec.~\ref{sec:experimental_setup}),
then summarize the comparison methods (Sec.~\ref{subsec:comparison_methods})
and define the evaluation metrics (Sec.~\ref{subsec:evaluation_metrics}).
We then present experiments and their results.
Table~\ref{tab:evaluation_overview} summarizes the experiments and their corresponding objectives.

\subsection{Experimental Setup}
\label{sec:experimental_setup}
\subsubsection{5G Network}
\label{subsubsec:network_setup}

\begin{figure*}[t]
\centering
\includegraphics[width=0.78\linewidth]{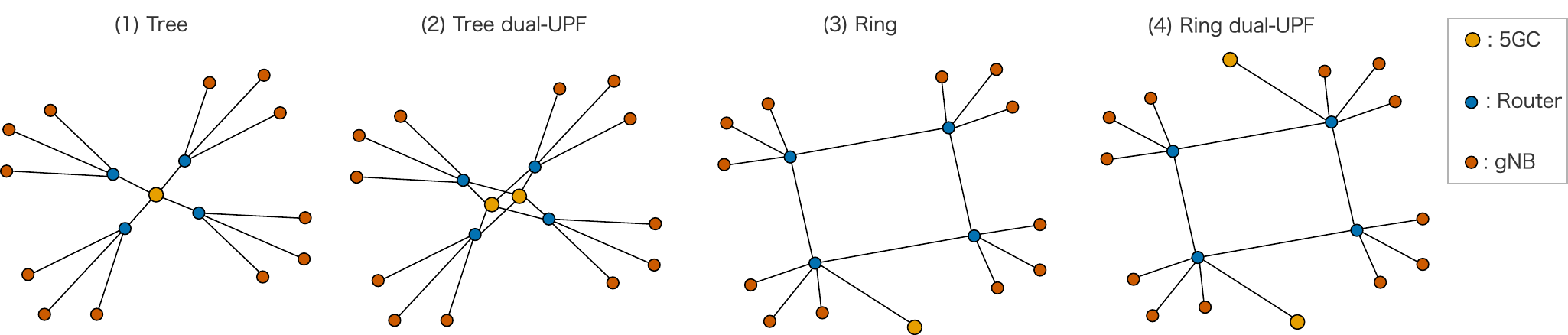}
\caption{Four topology configurations used in evaluation (\(|\mathcal{A}|=12\)): (1) Tree, (2) Tree dual-\ac{upf}, (3) Ring, and (4) Ring dual-\ac{upf}. }
\label{fig:topology_four_variants_gnb12}
\end{figure*}

We use four topologies with 12 \acp{gnb} (Fig.~\ref{fig:topology_four_variants_gnb12}): Tree, Tree dual-\ac{upf}, Ring, and Ring dual-\ac{upf}.

\textbf{General Settings:}
We model a 5G system as a finite-buffer queueing network in which each traversed interface is represented by an M/M/1/\(K\) queue.
The experiments generate only downlink user-plane traffic.
Across all four configurations, the maximum \ac{upf}-to-\ac{gnb} path length is at most \(100\)~km, which satisfies the \(20\) to \(300\)~km assumption range in \cite{itu_t_workshop_transport_distance_han_li}.
For propagation-delay calculations, we assume a propagation speed of \(1.96\times10^{8}\)~m/s.
Each access-link abstraction is configured with maximum capacity
\(C_{\mathrm{AN}}^{\max}=17.83\)~Gbit/s, 128 buffer slots, and zero
propagation distance.
These AN settings are shared across all four topologies.
The radio profile comprises one \ac{ue} per \ac{gnb}, i.e., 12 \acp{ue} in
total. We use \(C_{\mathrm{AN}}^{\max}\) as the nominal service-capacity
ceiling of each downlink queue. Its derivation from the custom wideband radio
profile, relation to the standardized NR bandwidth configurations, and mapping
to the queue service rate are given in
\suppsecref{appendix:access_link_service_capacity}.
Across transport and core-facing links, each directed link is parameterized by distance, bandwidth, and buffer depth, so queueing and propagation are modeled.
Following these settings, we set four topologies as follows:

\textbf{(1) Tree, (2) Tree dual-\ac{upf}:}
Both tree configurations share the same TN/AN structure: four TN routers, each serving three \acp{gnb}, i.e., twelve TN-to-AN links, each configured at \(70\)~km, \(20\)~Gbit/s, and 256 buffer slots.
In the single-\ac{upf} tree, one \ac{upf} connects radially to all four TN routers through four CN-to-TN links, each set to \(30\)~km, \(40\)~Gbit/s, and 512 slots.
In the dual-\ac{upf} tree, two \acp{upf} dual-home every TN router (eight CN-to-TN links in total), with each CN-to-TN link set to \(30\)~km, \(20\)~Gbit/s, and 512 slots.
Hence, total CN-side ingress capacity is \(160\)~Gbit/s in both tree configurations.

\textbf{(3) Ring, (4) Ring dual-\ac{upf}:}
Both configurations use the same access fan-out structure as the tree configurations (four TN routers, each serving three \acp{gnb}), but configure the twelve TN-to-AN links at \(10\)~km, \(20\)~Gbit/s, and 256 slots and connect the TN routers in a four-hop ring (\(30\)~km, \(50\)~Gbit/s, and 256 slots per TN hop).
In the single-\ac{upf} ring, one \ac{upf} attaches to one TN router through one CN-to-TN link (\(30\)~km, \(160\)~Gbit/s, 512 slots).
In the dual-\ac{upf} ring, two \acp{upf} attach to two opposite TN routers through two CN-to-TN links, each configured at \(30\)~km, \(80\)~Gbit/s, and 512 slots.
Thus, total CN-side ingress capacity is again \(160\)~Gbit/s in both configurations.

\subsubsection{Hierarchical Network Slice Management}
\label{subsubsec:hierarchical_ns_management_setup}
We instantiate the hierarchical \ac{ns} management framework as follows.

\textbf{(i) Path Assignment:}
Within the \ac{e2e} controller, the fixed path-assignment rule determines
\(\Gamma_d(s_t)\), the path-induced resource set in domain \(d\), as defined in
Sec.~\ref{subsubsec:per_round_decomposition}.
In the experiments, for each active \ac{gnb} \(a\), the controller selects
uniformly at random one of the minimum-distance paths from any \ac{upf} in
\(\mathcal{U}\) to \(a\).
The selected path defines \(\Phi_{\mathrm{path}}(s_t,a)\).
Controller ownership is fixed by interface type: \ac{cn} manages only core-facing attachment links (from the \ac{upf} to router ingress), \ac{tn} manages intra-transport links and transport-to-access links, and \ac{an} manages radio-side resources at \acp{gnb}.

\textbf{(ii) Domain-Specific Controller Procedure:}
Within each domain-specific controller, the internal procedure \(\Phi_d\)
computes resource demands and evaluates the local feasibility of the delegated
request, as defined in Sec.~\ref{subsubsec:per_round_feasibility}.
In the experiments, we instantiate \(\Phi_d\) using an M/M/1/\(K\) model over
the resources in \(\Gamma_d(s_t)\).
The controller first evaluates \ac{sla} feasibility at the physical upper ratio.
An infeasible upper ratio produces a local rejection; otherwise, the controller
computes the capacity-bounded estimate
\(\widehat\beta^{\mathrm{req}}_{t,d,j}\) of the minimum required ratio.
The complete experimental instantiation of \(\Phi_d\), including the random
allocation overhead and final check against remaining capacity, is given in
\suppsecref{appendix:performance_pipeline_overview}.

\subsubsection{Network Slice Request}
\label{subsubsec:nsr_setup}
We instantiate the \ac{nsr} tuple  \(s\!=\!(A,\boldsymbol{\theta},\mathbf{R},\mathbf{g})\) with its distribution \(\mathbb{P}^{\mathrm{NSR}}_{\mathrm{mix}}\) as follows.

\textbf{(i) \ac{nsr} Instantiation:}
We consider
two request classes, \ac{urllc} and \ac{embb}, which impose contrasting
latency, reliability, and throughput requirements on user-plane traffic.
We exclude \ac{mmtc} because evaluating its support for a very large number of
connected devices~\cite{itu_r_m_2410} would require modeling device
connections, which is outside our scope. For both classes, we set
\(\mathbf{R}\!=\!(R_1,R_2,R_3)\) and
\(\mathbf{g}\!=\!(g_1,g_2,g_3)^\top\).
We map \ac{urllc} and \ac{embb} to 5QI 86 and 5QI 6,
respectively~\cite{3gpp_ts_23_501}.
The complete mapping from each experimental parameter to an element of
\((A,\boldsymbol{\theta},\mathbf{R},\mathbf{g})\), together with the
class-conditional values, is given in
\journalonly{Supplementary }Table~\ref{tab:slice_parameter_basis}.
The decomposition rules are detailed in Sec.~\ref{subsubsec:exp_decomposition_rules}.
The \ac{urllc} packet-size setting reported in
\journalonly{Supplementary }Table~\ref{tab:slice_parameter_basis} follows 3GPP
TR~38.824~\cite{3gpp_tr_38_824}, whereas the \ac{embb} delay and throughput
settings follow 3GPP TS~28.530~\cite{3gpp_ts_28_530} and ITU-R
M.2410~\cite{itu_r_m_2410}.

\textbf{(ii) \ac{nsr} Distribution:}
Let \(\mathbb{P}^{\mathrm{NSR}}_{\mathrm{\acs{urllc}}}\) and \(\mathbb{P}^{\mathrm{NSR}}_{\mathrm{\acs{embb}}}\) denote the class-conditional \ac{nsr} distributions.
For each request, the parameters listed as intervals in
\journalonly{Supplementary }Table~\ref{tab:slice_parameter_basis} are sampled independently and uniformly
from those intervals, whereas the parameters listed as single values are fixed
within each class.
We use
\(\mathbb{P}^{\mathrm{NSR}}_{\mathrm{mix}}
=
\alpha\,\mathbb{P}^{\mathrm{NSR}}_{\mathrm{\acs{embb}}}
+
(1-\alpha)\,\mathbb{P}^{\mathrm{NSR}}_{\mathrm{\acs{urllc}}}\),
where \(\alpha\in[0,1]\) is the \ac{embb} mixing ratio.

The successful-provisioning revenue function is fixed class-wise as \(\kappa_{\mathrm{price}}(s)\!=\!1\) for \ac{urllc} and \(\kappa_{\mathrm{price}}(s)\!=\!50\) for \ac{embb}, reflecting the larger resource footprint of \ac{embb}.

\subsubsection{Instantiation of the Decomposition Space}
\label{subsubsec:exp_decomposition_rules}
Following the target and guarantee constructions in
Sec.~\ref{subsubsec:concrete_admissible_decomposition}, we use the exact
splits defined in Sec.~\ref{subsubsec:exact_split_decomposition_space}.
The components \(R_1\), \(R_2\), and \(R_3\) represent the latency,
throughput, and non-drop requirements, respectively. Only \(R_1\) uses
target-value allocation; \(R_2\) and \(R_3\) use target-condition replication
and therefore introduce no target-decomposition parameters. Thus, \(K_T=1\).
We apply guarantee decomposition to all three components, so
\(K_g=3\). With \(\mathcal D=\{\mathrm{AN},\mathrm{TN},\mathrm{CN}\}\),
\eqref{eq:exact_split_decomposition_space} therefore gives the experimental
decomposition space
\begin{equation}
\mathcal X_T=\Delta^2,\qquad
\mathcal X_g=(\Delta^2)^3,\qquad
\mathcal X=(\Delta^2)^4.
\label{eq:exp_exact_split_space}
\end{equation}


\subsubsection{Runtime}
All experiments were run in a Docker container on an Intel Xeon Platinum 8468 server (2 sockets \(\times\) 48 physical cores, 192 logical CPUs) with 256~GB RAM.
The kernel version was Linux 5.15.0.
The Python runtime was Python 3.12.12, and the \ac{bo}/\ac{gp} stack comprised PyTorch 2.10.0, BoTorch 0.16.1, and GPyTorch 1.15.1~\cite{pytorch_project,botorch_project,gpytorch_project}.

\subsection{Comparison Methods}
\label{subsec:comparison_methods}
We prepare three baselines:
\ac{lincbwk}~\cite{kobayashi2025icccn},
\ac{config}~\cite{xu2023config}, and Random.
They differ in \textit{(i) Surrogate Model:} how they model reward and resource consumption,
\textit{(ii) Search Space:} which decomposition space they search, and
\textit{(iii) Selection Algorithm:} how they select decompositions, as summarized in Table~\ref{tab:comparison_kernel_search_space}.
We present the settings shared across methods (Sec.~\ref{subsubsec:comparison_common_settings}), 
formalize the surrogate-model and decomposition search-space options (Sec.~\ref{subsubsec:comparison_design_components}), 
and provide the detailed method-specific instantiations (Sec.~\ref{subsubsec:comparison_methods_detail}).

\begin{table}[t]
\caption{Overview of Compared Methods}
\label{tab:comparison_kernel_search_space}
\centering
\footnotesize
\setlength{\tabcolsep}{3pt}
\renewcommand{\arraystretch}{1.1}
\begin{tabularx}{\columnwidth}{@{}l>{\raggedright\arraybackslash}X>{\raggedright\arraybackslash}p{0.2\columnwidth}>{\raggedright\arraybackslash}p{0.3\columnwidth}@{}}
\toprule
\begin{tabular}[c]{@{}l@{}}\textbf{Method}\end{tabular}
&
\begin{tabular}[c]{@{}c@{\;}l@{}}
\multirow{1}{*}{\textbf{(i)}} & \textbf{Surrogate} \\
& \textbf{Model}
\end{tabular}
&
\begin{tabular}[c]{@{}c@{\;}l@{}}
\multirow{1}{*}{\textbf{(ii)}} & \textbf{Search} \\
& \textbf{Space}
\end{tabular}
&
\begin{tabular}[c]{@{}c@{\;}l@{}}
\multirow{1}{*}{\textbf{(iii)}} & \textbf{Selection} \\
& \textbf{Algorithm}
\end{tabular}
\\
\midrule
CCKB (Ours) & Nonlinear \ac{gp} & Continuous & Primal--dual (Alg.~\ref{alg:component1}) \\
\ac{lincbwk}~\cite{kobayashi2025icccn} & Linear \ac{gp} & Discrete & Primal--dual (Alg.~\ref{alg:component1}) \\
\ac{config}~\cite{xu2023config} & Nonlinear \ac{gp} & Continuous & \ac{config} \\
Random & None & Continuous & Uniform random \\
\bottomrule
\end{tabularx}
\end{table}


\subsubsection{Common Settings}
\label{subsubsec:comparison_common_settings}
To ensure a fair policy comparison, we use common settings for surrogate
inputs and observations, hyperparameters, and the learning protocol where
applicable.

\textbf{(i) Surrogate Inputs and Observations:}
All surrogate-based methods use the same input, output processing,
and observation-noise models, as detailed in
\suppsecref{appendix:comparison_shared_components}.
These experimental fitting choices are distinct from the conditional
sub-Gaussian assumptions in Sec.~\ref{subsec:success_only_feedback}.

\textbf{(ii) Hyperparameters:}
For implementation, we replace the confidence widths
in \eqref{eq:concrete_reward_exploration_width} and
\eqref{eq:concrete_proxy_demand_exploration_width} with the common
tunable schedule \(\beta_t=c_{\beta}\sqrt{\log(t+2)}\), where
\(c_{\beta}>0\).  This schedule grows as
\(\Theta(\sqrt{\log t})\).  Although this simplification enables a common
implementation across the \ac{gp} models, it does not satisfy the
exploration-width conditions in
\eqref{eq:concrete_reward_exploration_width} and
\eqref{eq:concrete_proxy_demand_exploration_width}, and thus the corresponding
theoretical guarantee cannot be invoked for the experimental implementation.  Similar
practical modifications have been adopted in prior empirical
work~\cite{pmlr-v37-kandasamy15}.
Unless otherwise specified in each experimental setup, we set
\(c_{\beta}=0.1\).
For \ac{cckb} and \ac{lincbwk}, which use primal--dual updates, we set the dual-cap parameter to \(\rho=1.0\).

\textbf{(iii) Learning Protocol:}
Unless otherwise specified, each run consists of \(T=400\) rounds.
For \ac{gp}-based policies, each run begins with 50 rounds of uniformly random exploration.
Thereafter, the surrogate models are refitted every five rounds.
For each \ac{gp}, parameter learning is attempted up to 20 times.
A fit is considered unsuccessful if none of these attempts produces a numerically valid fit, either because parameter learning does not converge or because the covariance matrix cannot be stably factorized.
If the reward \ac{gp} or any constraint \ac{gp} cannot be fitted successfully at a round when surrogate refitting is scheduled, the event is counted once as a surrogate-refitting failure.
For each failed model, the \ac{e2e} controller reuses its most recent valid reward or constraint estimate, when available.
If any failed surrogate has no valid previous fit, the controller performs random exploration.
We report the frequency of such failures using the \textit{Surrogate-Refitting Failure Rate} defined in Sec.~\ref{subsec:evaluation_metrics}.

\subsubsection{Surrogate Models and Decomposition Search Spaces}
\label{subsubsec:comparison_design_components}
Of the three design dimensions summarized in
Table~\ref{tab:comparison_kernel_search_space}, this subsection formalizes the
surrogate-model and decomposition-search-space options used by the compared
methods.  

\textbf{(i) Surrogate-Model Options:}
  \ifarxivversion
    The two choices for the surrogate-model dimension are a nonlinear \ac{gp},
and a linear \ac{gp}. Following the notation in
Sec.~\ref{subsec:gp_surrogates_and_kernel_design}, we instantiate their kernels
on the feature vector \(\tilde{\mathbf{u}}(z)\) as follows.

\noindent
\textbf{(nonlinear):}
The Mat\'ern-\(5/2\) kernel with \ac{ard} is defined by
\begin{align}
k^{\bullet,\mathrm{Mat}}(z,z^{\prime})
&:=
\sigma_{k,\bullet}^{2}
\left(1+\sqrt{5}\,r_{\bullet}+\frac{5}{3}r_{\bullet}^{2}\right)e^{-\sqrt{5}r_{\bullet}},
\label{eq:comparison_kernel_matern_symbol}
\\
&\text{where}\quad
r_{\bullet}^{2}
:=
\sum_{j=1}^{D}
\frac{
\left(
\tilde u_{j}(z)-\tilde u_{j}(z^{\prime})
\right)^2
}{\ell_{\bullet,j}^{2}}.
\label{eq:comparison_kernel_matern_distance}
\end{align}

\noindent
\textbf{(linear):}
The linear kernel is defined by
\begin{equation}
k^{\bullet,\mathrm{Lin}}(z,z^{\prime})
:=
\sigma_{k,\bullet}^{2}\,\tilde{\mathbf{u}}(z)^{\top}\tilde{\mathbf{u}}(z^{\prime}).
\label{eq:comparison_kernel_linear_symbol}
\end{equation}
Here, \(\sigma_{k,\bullet}^{2}\) denotes the kernel output scale for target \(\bullet\).
The Mat\'ern-\(5/2\) form provides a standard nonlinear \ac{gp} prior with moderate smoothness on \(\tilde{\mathbf{u}}(z)\)~\cite{matern1986spatial}, and \ac{ard} assigns dimension-specific length-scales \(\{\ell_{\bullet,j}\}_{j=1}^{D}\) to capture anisotropy across feature dimensions.
  \fi

  \ifjournalversion
    \input{src/sections/variants/sec_evaluation_kernel_options_journal}%
  \fi

\textbf{(ii) Search-space Options:}
Recall from \eqref{eq:exp_exact_split_space} that the experimental
decomposition space is \(\mathcal{X}=(\Delta^2)^4\).
For numerical stability, all methods restrict \(\mathcal{X}\) by imposing the
coordinate-wise lower bound \(w_{\mathrm{lb}}=0.05\) on every simplex field.
Continuous-search methods use this restricted region directly, whereas
discrete-search methods use its grid-discretized subset.

\noindent
\textbf{(continuous):}
For a single simplex field, define the lower-bounded continuous region as
\(\mathcal{W}^{\mathrm{cont}}(w_{\mathrm{lb}})
:=\{\mathbf{w}\in\Delta^2 \mid w_d\ge w_{\mathrm{lb}},\ \forall d\in\mathcal{D}\}\).
The corresponding search space is
\begin{equation}
\mathcal{X}_{\mathrm{cont}}(w_{\mathrm{lb}})
:=
\left(\mathcal{W}^{\mathrm{cont}}(w_{\mathrm{lb}})\right)^4.
\label{eq:comparison_search_space_continuous}
\end{equation}

\noindent
\textbf{(discrete):}
For a single simplex field, let
\(\mathcal{W}^{\mathrm{disc}}_{\Delta_{\mathrm{grid}}}(w_{\mathrm{lb}})
\subseteq\mathcal{W}^{\mathrm{cont}}(w_{\mathrm{lb}})\) denote the
grid-discretized subset of the continuous region.
Here, \(\Delta_{\mathrm{grid}}>0\) is the simplex grid step size, i.e., \(\mathcal{W}^{\mathrm{disc}}_{\Delta_{\mathrm{grid}}}(w_{\mathrm{lb}})\) keeps feasible points whose coordinates lie on the \(\Delta_{\mathrm{grid}}\)-spaced lattice, so smaller \(\Delta_{\mathrm{grid}}\) yields a finer discretization and a larger candidate set.
The corresponding four-field search space is
\begin{equation}
\mathcal{X}_{\Delta_{\mathrm{grid}}}(w_{\mathrm{lb}})
:=
\left(\mathcal{W}^{\mathrm{disc}}_{\Delta_{\mathrm{grid}}}(w_{\mathrm{lb}})\right)^4.
\label{eq:comparison_search_space_discrete}
\end{equation}
Thus, \(\mathcal{X}_{\Delta_{\mathrm{grid}}}(w_{\mathrm{lb}})\) is the
grid-discretized subset of \(\mathcal{X}_{\mathrm{cont}}(w_{\mathrm{lb}})\).
Fig.~\ref{fig:comparison_search_space_simplex} visualizes this construction
for a single simplex field.

\begin{figure}[t]
\centering
\resizebox{\columnwidth}{!}{%
\begin{tikzpicture}[font=\scriptsize]
  \def\L{4.0}
  \def\H{3.464}
  \def\B{5}

  \begin{scope}[xshift=0cm]
    \node[font=\normalsize, align=center] at (2,-1.00) {(1) Continuous Region $\mathcal{W}^{\mathrm{cont}}(w_{\mathrm{lb}})$};

    \begin{scope}[shift={(2,0)},scale=0.75,shift={(-2,0)}]

    \fill[black!15] (0.3,0.1732) -- (3.7,0.1732) -- (2.0,3.1176) -- cycle;
    \draw[thick] (0,0) -- (\L,0) -- (\L/2,\H) -- cycle;
    \draw[densely dashed] (0.3,0.1732) -- (3.7,0.1732) -- (2.0,3.1176) -- cycle;

    \node[below, font=\normalsize] at (0,0) {AN};
    \node[below, font=\normalsize] at (\L,0) {TN};
    \node[above, font=\normalsize] at (\L/2,\H) {CN};
    \node[align=center, scale=1.6, transform shape] at (2.0,1.15) {$w_d\ge w_{\mathrm{lb}}$};
    \coordinate (tncnOuterMid) at ($(\L,0)!0.50!(\L/2,\H)$);
    \coordinate (tncnInnerMid) at ($(3.7,0.1732)!0.50!(2.0,3.1176)$);
    \coordinate (tncnInsetExt) at ($(tncnOuterMid)!1.5!(tncnInnerMid)$);
    \draw[thick] (tncnOuterMid) -- (tncnInsetExt);
    \node[right, scale=1.6, transform shape] (wlblabel) at (3.62,2.35) {$w_{\mathrm{lb}}$};
    \draw[thick,->] (wlblabel.west) .. controls (3.35,2.25) and (3.10,1.98) .. ($(tncnOuterMid)!0.55!(tncnInsetExt)$);
    \end{scope}
  \end{scope}

  \begin{scope}[xshift=5.6cm]
    \node[font=\normalsize, align=center] at (2,-1.00) {(2) Discrete Region $\mathcal{W}^{\mathrm{disc}}_{0.2}(w_{\mathrm{lb}})$};

    \begin{scope}[shift={(2,0)},scale=0.75,shift={(-2,0)}]

    \draw[thick] (0,0) -- (\L,0) -- (\L/2,\H) -- cycle;
    \node[below, font=\normalsize] at (0,0) {AN};
    \node[below, font=\normalsize] at (\L,0) {TN};
    \node[above, font=\normalsize] at (\L/2,\H) {CN};
    \foreach \i in {1,2,3,4} {
      \pgfmathsetmacro{\xbase}{\L*\i/\B}
      \pgfmathsetmacro{\tickval}{\i/\B}
      \draw[line width=0.6pt] (\xbase,0) -- (\xbase,-0.10);

      \pgfmathsetmacro{\xleft}{(\L/2)*\i/\B}
      \pgfmathsetmacro{\yleft}{\H*\i/\B}
      \draw[line width=0.6pt] (\xleft,\yleft) -- ++(-0.08,0.05);

      \pgfmathsetmacro{\xright}{\L-(\L/2)*\i/\B}
      \draw[line width=0.6pt] (\xright,\yleft) -- ++(0.08,0.05);
    }

    \foreach \a in {0,1,2,3,4,5} {
      \foreach \b in {0,1,2,3,4,5} {
        \pgfmathtruncatemacro{\c}{\B-\a-\b}
        \ifnum\c>-1
          \pgfmathsetmacro{\x}{\L*\b/\B + (\L/2)*\c/\B}
          \pgfmathsetmacro{\y}{\H*\c/\B}
          \fill[black!30] (\x,\y) circle (0.045);
        \fi
      }
    }

    \foreach \a/\b/\c in {1/1/3,1/2/2,1/3/1,2/1/2,2/2/1,3/1/1} {
      \pgfmathsetmacro{\x}{\L*\b/\B + (\L/2)*\c/\B}
      \pgfmathsetmacro{\y}{\H*\c/\B}
      \filldraw[black] (\x,\y) circle (0.08);
    }
    \pgfmathsetmacro{\xex}{\L*1/\B + (\L/2)*3/\B}
    \pgfmathsetmacro{\yex}{\H*3/\B}
    \node[anchor=west, align=left, scale=1.6, transform shape] (discExampleLabel) at (2.95,2.62)
      {$(w_{\mathrm{AN}},w_{\mathrm{TN}},w_{\mathrm{CN}})$\\$=(0.2,0.2,0.6)$};
    \draw[thick,->] (\xex,\yex) .. controls (2.22,2.24) and (2.58,2.50) .. (discExampleLabel.west);
    \end{scope}
  \end{scope}
\end{tikzpicture}}
\caption{Continuous and discrete candidate regions for a single simplex field. (1) Continuous: \(\mathcal{W}^{\mathrm{cont}}(w_{\mathrm{lb}})=\{\mathbf{w}\in\Delta^2 \mid w_d\ge w_{\mathrm{lb}},\ \forall d\}\). (2) Discrete: \(\mathcal{W}^{\mathrm{disc}}_{\Delta_{\mathrm{grid}}}(w_{\mathrm{lb}})\) for \(\Delta_{\mathrm{grid}}=0.2\); filled black points are retained candidates.}
\label{fig:comparison_search_space_simplex}
\end{figure}

\subsubsection{Method-Specific Instantiations}
\label{subsubsec:comparison_methods_detail}
We now specify the implementation of each method.

\textbf{(i) CCKB (Ours):}
For the reward target \(f\) and every constraint target
\(m_{d,j}^{\Gamma}\), \ac{cckb} uses the nonlinear Mat\'ern-\(5/2\)
kernel with \ac{ard}
\arxivonly{defined in \eqref{eq:comparison_kernel_matern_symbol}--\eqref{eq:comparison_kernel_matern_distance}}
\journalonly{specified in \suppsecref{appendix:comparison_kernel_definitions}},
i.e., \(k^{f}=k^{f,\mathrm{Mat}}\) and
\(k^{m_{d,j}^{\Gamma}}=k^{m_{d,j}^{\Gamma},\mathrm{Mat}}\) for all \((d,j)\).
The primal step optimizes $\mathcal{A}_t(\mathbf{x}\mid s_t)$ over \(\mathcal{X}_{\mathrm{cont}}(w_{\mathrm{lb}})\).

\textbf{(ii) \ac{lincbwk}~\cite{kobayashi2025icccn}:}
For the reward target \(f\) and every constraint target
\(m_{d,j}^{\Gamma}\), \ac{lincbwk} uses the linear-kernel \ac{gp}
\arxivonly{defined in \eqref{eq:comparison_kernel_linear_symbol}}
\journalonly{specified in \suppsecref{appendix:comparison_kernel_definitions}},
i.e., \(k^{f}=k^{f,\mathrm{Lin}}\) and
\(k^{m_{d,j}^{\Gamma}}=k^{m_{d,j}^{\Gamma},\mathrm{Lin}}\) for all \((d,j)\).
With the linear kernel, each \ac{gp} posterior mean is linear in
\arxivonly{the common feature vector \(\tilde{\mathbf{u}}(z)\)}
\journalonly{its standardized input}, as in Bayesian linear
regression~\cite{rasmussen2006gpml}, thereby recovering
\textit{(L2) linear realizability} assumed in our previous
method~\cite{kobayashi2025icccn}.
Its primal step evaluates \(\mathcal{A}_t(\mathbf{x}\mid s_t)\) for every
\(\mathbf{x}\in\mathcal{X}_{\Delta_{\mathrm{grid}}}(w_{\mathrm{lb}})\) and
selects a maximizer.
By the four-field construction above, the number of candidates is
\(
\left|\mathcal{X}_{\Delta_{\mathrm{grid}}}(w_{\mathrm{lb}})\right|
=
\left|\mathcal{W}^{\mathrm{disc}}_{\Delta_{\mathrm{grid}}}(w_{\mathrm{lb}})\right|^4.
\)
We use \(\Delta_{\mathrm{grid}}=0.2\) to keep exhaustive evaluation tractable,
yielding \(1296\) candidate decompositions.

\textbf{(iii) \ac{config}~\cite{xu2023config}:}
Odin~\cite{odin} minimizes a sum of unknown domain-level cost functions
subject to an \ac{e2e} \ac{sla} constraint, whereas our formulation maximizes
provisioning reward subject to cumulative resource-budget constraints.
A direct implementation of Odin would therefore require changing our objective
and constraint model.
Instead, we use \ac{config}, the non-contextual constrained-\ac{bo} method on which
Odin's design is based, as the decomposition-selection algorithm under our
formulation.
\ac{config} uses the same \ac{gp} kernels as \ac{cckb}.
To retain \ac{config}'s non-contextual structure, we fix the \ac{nsr} component of
every \ac{gp} input to the first-round request, i.e.,
\(z_t^{\mathrm{cfg}}=(s_1,\mathbf{x}_t)\), rather than
\((s_t,\mathbf{x}_t)\).
Consequently, \ac{config} selects \(\mathbf{x}_t\) without conditioning on the
current request \(s_t\).
Finally, \ac{config} searches over \(\mathcal{X}_{\mathrm{cont}}(w_{\mathrm{lb}})\).

\textbf{(iv) Random:}
Each round samples a feasible decomposition from
\(\mathcal{X}_{\mathrm{cont}}(w_{\mathrm{lb}})\).

\subsection{Evaluation Metrics}
\label{subsec:evaluation_metrics}
We use \textit{(i) Total Reward} as the primary performance metric.
We additionally use \textit{(ii) Resource Usage Ratio} to examine how each
policy distributes resource consumption across domains.
For surrogate-based policies, we report \textit{(iii) Surrogate-Refitting
Failure Rate} to assess the numerical reliability of methods.
Under mixed traffic, we report both Total Reward and Successful-Slice
Count, together with a per-class breakdown; Sec.~\ref{subsec:traffic_mixture_tree}
explains why both aggregate metrics are needed.

For each policy under each experimental setting, we perform multiple
\(T\)-round runs with different random seeds.
We report Total Reward and Resource Usage Ratio as the mean and standard
deviation across runs.
In contrast, each reported Surrogate-Refitting Failure Rate pools the failed
and scheduled refitting operations over all runs covered by that result.

\textbf{(i) Total Reward:}
We define the total reward as
\[
R:=\sum_{t=1}^{T}\kappa_{\mathrm{price}}(s_t)\,y_t.
\]

\textbf{(ii) Resource Usage Ratio:}
Let \(C_{T,d,j}^{\mathrm{avail}}\) denote the remaining capacity
of resource \((d,j)\) at the end of one run.
The usage ratio is defined as
\[
u_d:=\frac{1}{J_d}\sum_{j=1}^{J_d}
\frac{C_{d,j}^{\max}-C_{T,d,j}^{\mathrm{avail}}}{C_{d,j}^{\max}}.
\]

\textbf{(iii) Surrogate-Refitting Failure Rate:}
We first define the
number of surrogate-refitting failures in one run as
\[
C_{\mathrm{fit}}
:=
\sum_{t=1}^{T}
\mathbf{1}\!\left[
\text{surrogate refitting fails at round }t
\right].
\]
A scheduled refitting operation is counted once as a failure if the reward
\ac{gp} or any constraint \ac{gp} exhausts its fitting attempts without a valid
fit.
This metric is not applicable to Random, which does not fit surrogate models.

\subsection{Overall Effectiveness Across Topologies and Domain Bottlenecks}
\label{subsec:result}
\subsubsection{Setup}
\label{subsubsec:topology_bottleneck_setup}

We consider only the \ac{urllc} class, corresponding to \(\alpha=0\) in the mixed-\ac{nsr} model \(\mathbb{P}^{\mathrm{NSR}}_{\mathrm{mix}}\).

We evaluate all four topology configurations in Fig.~\ref{fig:topology_four_variants_gnb12} under three domain-bottleneck profiles, each of which reduces the resource availability of the AN, TN, or CN.
These combinations yield 12 experimental conditions, each of which is evaluated over 10 runs with different random seeds.

The bottleneck profiles are implemented using the resource-availability vector
\((a_{\mathrm{AN}},a_{\mathrm{TN}},a_{\mathrm{CN}})\), where the capacity of resource \(j\in[J_d]\) in each domain
\(d\in\mathcal{D}\) is scaled to \(a_d C_{d,j}^{\max}\).
We use \((0.2,1.0,1.0)\), \((1.0,0.2,1.0)\), and \((1.0,1.0,0.2)\) for the AN, TN, and CN bottleneck profiles, respectively.

\subsubsection{Result}
\label{subsubsec:topology_bottleneck_result}

The comparison yields the three observations.

\begin{figure*}[t]
\centering
\includegraphics[width=0.83\textwidth]{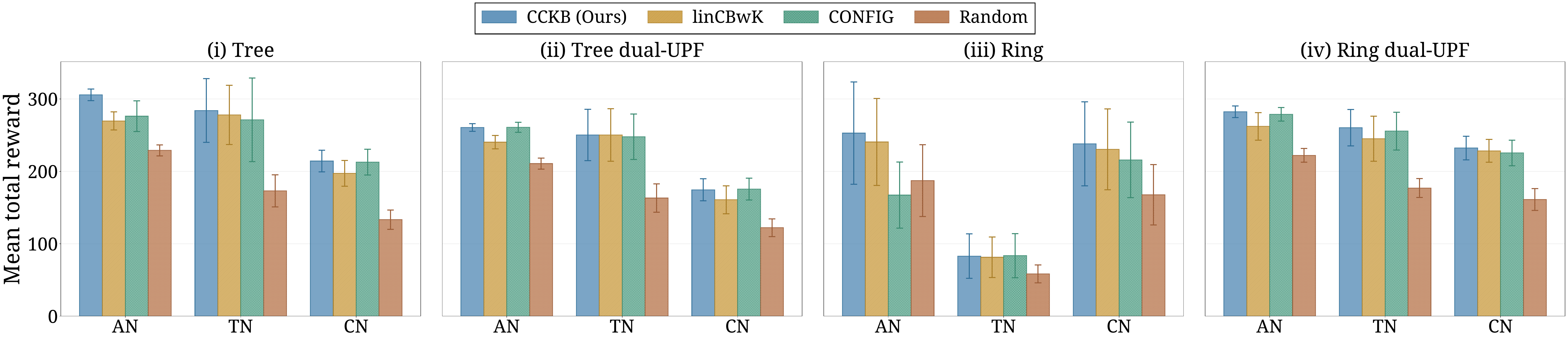}
\caption{Mean Total Reward under the three domain-bottleneck profiles for
\ac{cckb}, \ac{lincbwk}, \ac{config}, and Random:
(i) Tree, (ii) Tree dual-\ac{upf}, (iii) Ring, and
(iv) Ring dual-\ac{upf}. Each subplot corresponds to one topology,
and each AN, TN, or CN label denotes the bottleneck domain \(d\), for which
\(a_d=0.2\), while \(a_{d'}=1.0\) for every other domain \(d'\ne d\).
Bars and whiskers show the mean and standard
deviation over 10 runs, respectively. Higher is better.}
\label{fig:topology_bottleneck_reward_by_condition}
\end{figure*}

\begin{figure*}[t]
\centering
\includegraphics[width=0.86\textwidth]{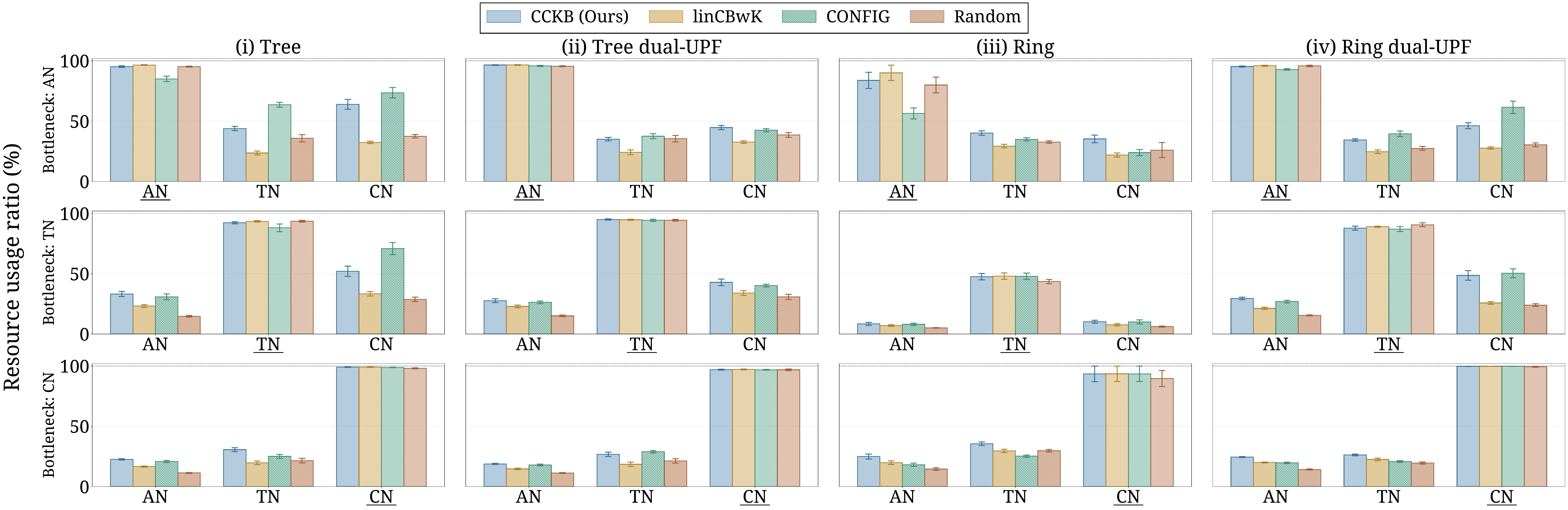}
\caption{Resource Usage Ratio at the end of each run for CCKB, linCBwK, \ac{config}, and Random. Bars show the mean, and whiskers show the standard deviation. Underlined domain labels (e.g., \(\underline{\mathrm{AN}}\)) denote the bottleneck; lower values on that domain and higher values on non-bottleneck domains are preferable.}
\label{fig:topology_bottleneck_urllc_main_usage_bar_whisker_grid}
\end{figure*}

\textbf{Total Reward:}
Across the 12 topology--bottleneck conditions, CCKB achieves higher mean Total Reward than linCBwK in every condition (Fig.~\ref{fig:topology_bottleneck_reward_by_condition}).
This consistent improvement supports the practical effectiveness of CCKB's two extensions over linCBwK that directly relax \textit{(L1) finite/discrete decomposition search} and \textit{(L2) linear realizability}.
Compared with \ac{config}, CCKB achieves higher mean Total Reward in 9 of the 12 conditions.
The exceptions are Tree dual-\ac{upf} with an AN bottleneck, Tree dual-\ac{upf} with a CN bottleneck, and Ring with a TN bottleneck, for which the reward gaps are visually small in Fig.~\ref{fig:topology_bottleneck_reward_by_condition}.
This comparison indicates that jointly addressing the two NSR-DP requirements, \textit{(R1) long-horizon multi-resource budget control} and \textit{(R2) request-conditioned decomposition}, is practically important.

\textbf{Resource Usage Ratio:}
Fig.~\ref{fig:topology_bottleneck_urllc_main_usage_bar_whisker_grid} provides a complementary view of resource allocation.
The bottleneck domain exhibits the highest usage ratio and is typically close to saturation across methods and conditions.
Overall, \ac{cckb}, \ac{lincbwk}, and \ac{config} tend to show higher usage ratios in non-bottleneck domains than Random.

For \ac{config}, however, higher non-bottleneck usage does not always translate into higher mean Total Reward.
For example, under the Tree topology with a TN bottleneck, \ac{config} places heavier load on non-bottleneck domains than \ac{cckb} but obtains lower mean Total Reward.
Our failure-case inspection shows that \ac{config} sometimes assigns overly strict domain latency targets \(\delta_{t,d}\) to selected non-bottleneck domains.
When \(\delta_{t,d}\) is smaller than the propagation delay required in the corresponding domain, slice construction fails.
A plausible explanation for selecting such targets is degraded reward-model fidelity under \ac{config}'s context-agnostic \ac{gp} evaluation, which may overestimate the benefit of strict decompositions and misjudge their provisioning success.

\textbf{Surrogate-Refitting Failure Rate:}
Table~\ref{tab:topology_bottleneck_refit_failure_rate_by_method} summarizes the surrogate-refitting failure rate.
CCKB and \ac{config} complete all 8{,}400 scheduled refitting operations without
failure, whereas linCBwK fails in 407 of 8{,}400 operations (4.85\%); the metric
is not applicable to Random because it does not fit surrogate models.
This failure rate directly indicates lower numerical fitting stability for
linCBwK in this setting and motivates the follow-up \ac{nlpd} analysis of
surrogate fidelity in Sec.~\ref{subsubsec:topology_bottleneck_followup}.

\begin{table}[t]
\caption{Surrogate-Refitting Failure Rate by method, aggregated over all topology--bottleneck conditions.}
\label{tab:topology_bottleneck_refit_failure_rate_by_method}
\centering
\small
\setlength{\tabcolsep}{5pt}
\resizebox{\columnwidth}{!}{%
\begin{tabular}{
    c
    c
    c
    c
}
\toprule
\multicolumn{1}{c}{\textbf{CCKB}} & \multicolumn{1}{c}{\textbf{linCBwK}} & \multicolumn{1}{c}{\textbf{\ac{config}}} & \multicolumn{1}{c}{\textbf{Random}} \\
\midrule
0/8{,}400 (0.00\%) & 407/8{,}400 (4.85\%) & 0/8{,}400 (0.00\%) & N/A \\
\bottomrule
\end{tabular}
}
\end{table}

\subsubsection{Follow-up Analysis}
\label{subsubsec:topology_bottleneck_followup}
To further examine the surrogate fidelity suggested by the resource-usage and
refitting-stability results, we use the Tree topology with an AN bottleneck as a
focused diagnostic condition and compare the constraint-\ac{gp} \ac{nlpd}
trajectories.
Here, \ac{nlpd} is used as a surrogate-fit metric; lower values indicate better fit.
Fig.~\ref{fig:topology_bottleneck_constraint_gp_metric_history_tree_an} shows a consistent ordering: \ac{cckb} keeps the lowest \ac{nlpd}, \ac{config} is lower than \ac{lincbwk} but still above \ac{cckb}, and \ac{lincbwk} remains highest with the broadest dispersion.
Thus, the contextual nonlinear constraint-\ac{gp} in \ac{cckb} provides the best predictive fit among the compared surrogates in this condition, while linCBwK's poorer fit is consistent with the higher surrogate-refitting failure rate in Table~\ref{tab:topology_bottleneck_refit_failure_rate_by_method}.

\begin{figure}[t]
\centering
\includegraphics[width=0.85\linewidth]{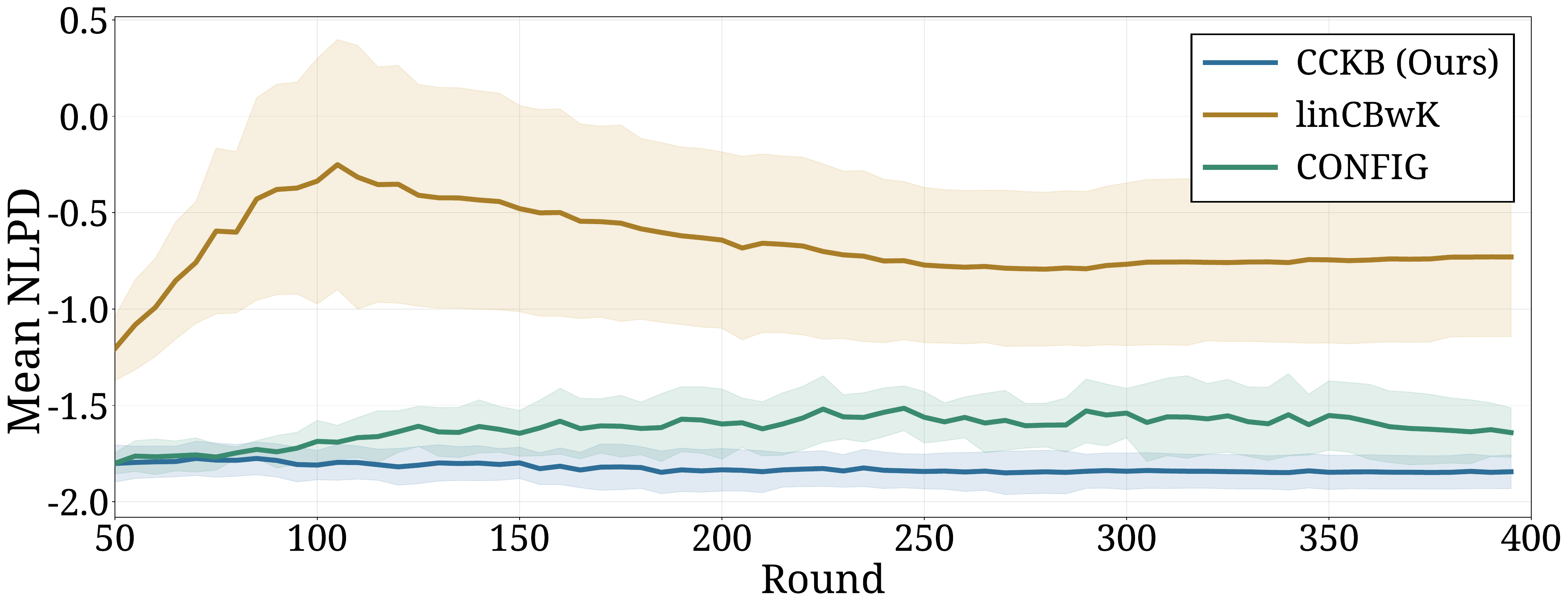}
\caption{Constraint-\ac{gp} \ac{nlpd} trajectories under the Tree topology with an AN bottleneck. Lines show the means over 10 runs, and shaded bands show cross-run variability. Lower values indicate better predictive quality.}
\label{fig:topology_bottleneck_constraint_gp_metric_history_tree_an}
\end{figure}

\subsection{Robustness to Heterogeneous Requests}
\label{subsec:traffic_mixture_tree}

\subsubsection{Setup}
\label{subsubsec:traffic_mixture_tree_setup}
We consider the Tree and Ring topologies under all three bottleneck profiles,
yielding six topology--bottleneck conditions.
For each condition, we vary the \ac{embb} mixing ratio over
\(
\alpha \in \{0,0.02,0.04,0.06,0.08\}.
\)
Each condition is evaluated over 10 runs with
different random seeds.

\textbf{Mixture Range:}
We determine the sweep range from the aggregate mean offered traffic rate of
each request class before provisioning and resource-allocation decisions,
rather than from the request counts alone.
For each class, this rate is the product of the number of covered \acp{gnb},
the packet-arrival rate per \ac{gnb}, and the mean packet size.
Let \(L_{\mathrm{urllc}}\) and \(L_{\mathrm{embb}}\) denote the corresponding
class-specific rates.
Under the settings in \journalonly{Supplementary }Table~\ref{tab:slice_parameter_basis}, a \ac{urllc}
request covers three \acp{gnb} and generates
\(L_{\mathrm{urllc}}=3\times20{,}000\times1{,}600
=96\)~Mbit/s, whereas an \ac{embb} request covers nine \acp{gnb} and
generates
\(L_{\mathrm{embb}}=9\times15{,}000\times9{,}600
=1{,}296\)~Mbit/s.
Thus, one \ac{embb} request generates \(13.5\) times as much offered traffic
as one \ac{urllc} request.
Consequently, the request mixing ratio alone does not reflect how strongly the
introduced \ac{embb} requests affect the aggregate offered traffic.
The two classes contribute equally to the mean offered traffic at
\(\alpha^\star=L_{\mathrm{urllc}}/
(L_{\mathrm{embb}}+L_{\mathrm{urllc}})\approx 0.069\).
Because \(\alpha=0.08\) is the first tested ratio above this balanced point,
the sweep captures the intended transition from \ac{urllc}-only traffic to a
mixed workload in which the two classes make comparable contributions to the
aggregate mean offered traffic.

\textbf{Evaluation Metric:}
Under mixed traffic, we report two metrics.  Total Reward is the optimization
objective itself.  However, \ac{embb} requests are few in number, and each is
priced 50 times higher than a \ac{urllc} request.  Total Reward is therefore
largely determined by the small number of admitted \ac{embb} slices.  This
number varies considerably across seeds, and the differences between methods
are comparable to or smaller than that variation.  Total Reward therefore has
limited resolution for comparing methods.  We use Successful-Slice Count as
the primary metric because it is less sensitive to this variation, and report
Total Reward and the per-class breakdown in
\suppsecref{appendix:mixed_traffic_reward_class_results}.

\textbf{Class-Specific Exploration Estimates:}
To avoid pooling observations from request classes with substantially different
traffic loads and \ac{sla} parameters, we maintain separate reward and constraint
exploration estimates for \ac{urllc} and \ac{embb} requests.
Let \(c_t\in\mathcal{C}:=\{\mathrm{\acs{urllc}},\mathrm{\acs{embb}}\}\)
denote the class of the round-\(t\) \ac{nsr}.
For each class \(k\in\mathcal{C}\), let \(\hat f_{t,k}\) and
\(\{\hat h_{t,d,j,k}\}_{d,j}\) denote its raw reward and constraint exploration estimates.
The active estimates at round \(t\) are constructed as
\begin{align}
\hat f_t(s_t,\mathbf{x})
&:=
\sum_{k\in\mathcal{C}}
\mathbf{1}\{c_t=k\}\,\hat f_{t,k}(s_t,\mathbf{x}),
\label{eq:traffic_mixture_reward_surrogate}\\
\hat h_{t,d,j}(s_t,\mathbf{x})
&:=
\sum_{k\in\mathcal{C}}
\mathbf{1}\{c_t=k\}\,\hat h_{t,d,j,k}(s_t,\mathbf{x}).
\label{eq:traffic_mixture_constraint_surrogate}
\end{align}
By \eqref{eq:traffic_mixture_reward_surrogate} and
\eqref{eq:traffic_mixture_constraint_surrogate}, only the estimates matching
the arriving request class are evaluated and updated, while the primal--dual
control loop remains unified.\footnote{The class-specific construction does not
share observations across request classes. Consequently, each model is
trained on fewer observations than a shared cross-class model; developing
a model that exploits similarities between the two classes is left for future
work.}
We apply the same class-specific split to \ac{cckb}, \ac{lincbwk}, and \ac{config},
so their performance differences are not attributable to class separation.

\subsubsection{Result}
\label{subsubsec:traffic_mixture_tree_result}
\ac{cckb} attains the highest mean in five, six, five, four, and five of the six
conditions at \(\alpha=0\), \(0.02\), \(0.04\), \(0.06\), and \(0.08\),
respectively (Fig.~\ref{fig:mixture_success_slices_vs_embb_ratio_tree_ring}).
Equivalently, it ranks first in 25 of the 30 evaluated points: 14 of 15 in the
Tree topology and 11 of 15 in the Ring topology.
Excluding the \ac{urllc}-only baseline, \ac{cckb} remains best in 20 of the 24
heterogeneous-traffic points, showing that its mean-count advantage persists
throughout the sweep toward bandwidth-balanced traffic rather than being
confined to the single-class setting.
The five exceptions are concentrated in the \ac{tn} bottleneck (Tree at
\(\alpha=0.04\) and Ring at \(\alpha\in\{0,0.06,0.08\}\)) and Ring--\ac{cn} at
\(\alpha=0.06\).
Even at these points, the largest absolute deficit from \ac{cckb} to the best
competing mean is 3.4 successfully provisioned slices, and the largest relative
deficit is 3.2\% of the competing mean.  Thus, the few ranking reversals are
small compared with the overall admission levels.

\suppsecref{appendix:mixed_traffic_reward_class_results} reports Total Reward
and the per-class breakdown for the same runs.  \ac{cckb} attains higher Total
Reward than \ac{lincbwk} at 23 of the 24 mixed-traffic points (mean +19)
and than Random at all 24 (mean +77), consistent with the
Successful-Slice Count ranking.  Against \ac{config}, the comparison is mixed
(15 of 24, mean +6): averaged over the mixing ratios, \ac{cckb} admits more
\ac{urllc} slices in each topology--bottleneck condition, and eight of the nine
points where \ac{config} leads are driven by the \ac{embb} component rather
than by \ac{urllc} admissions.

\begin{figure}[t]
\centering
\subfloat[Tree topology.\label{fig:mixture_success_slices_vs_embb_ratio_tree}]{
\includegraphics[width=0.95\linewidth]{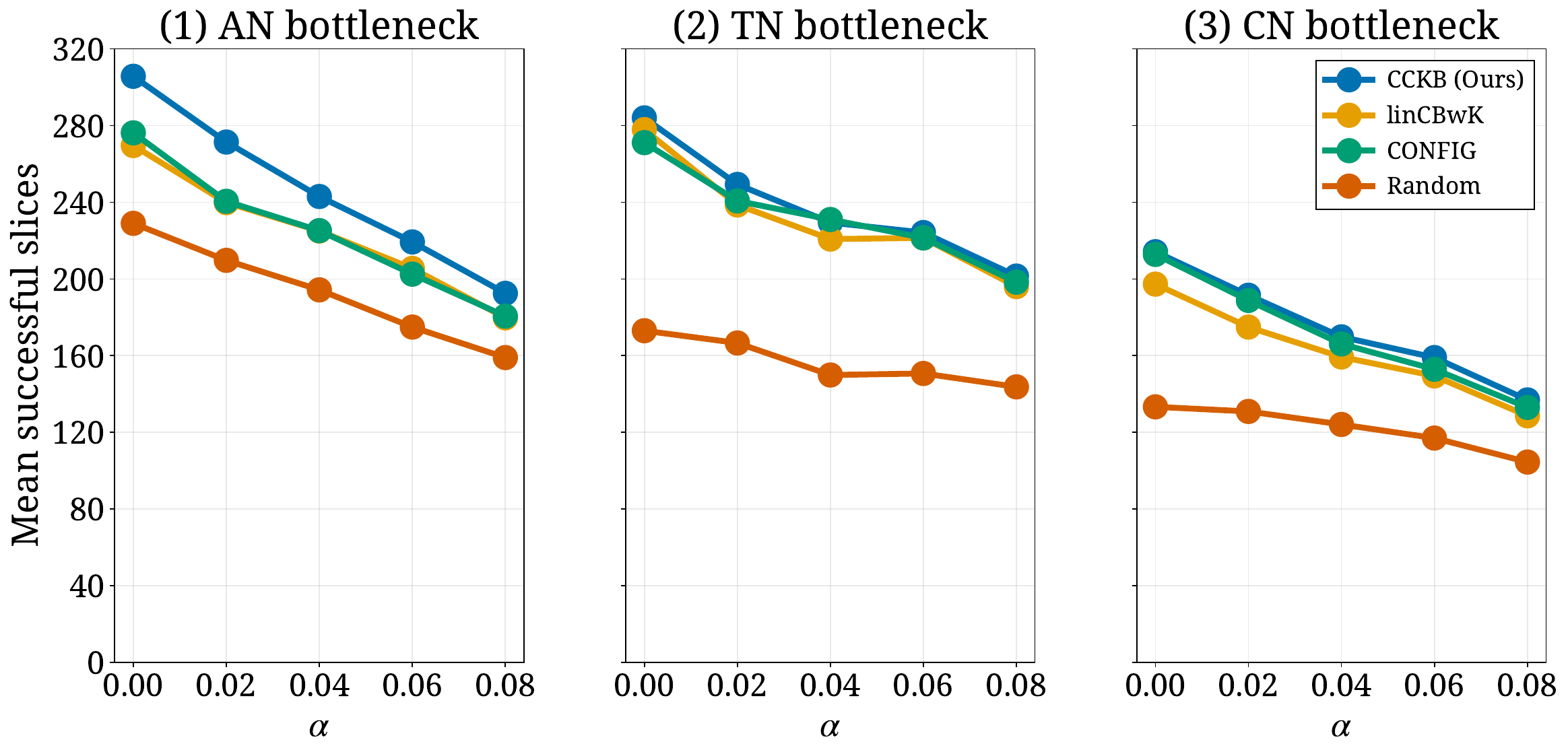}
}\\[0.6em]
\subfloat[Ring topology.\label{fig:mixture_success_slices_vs_embb_ratio_ring}]{
\includegraphics[width=0.95\linewidth]{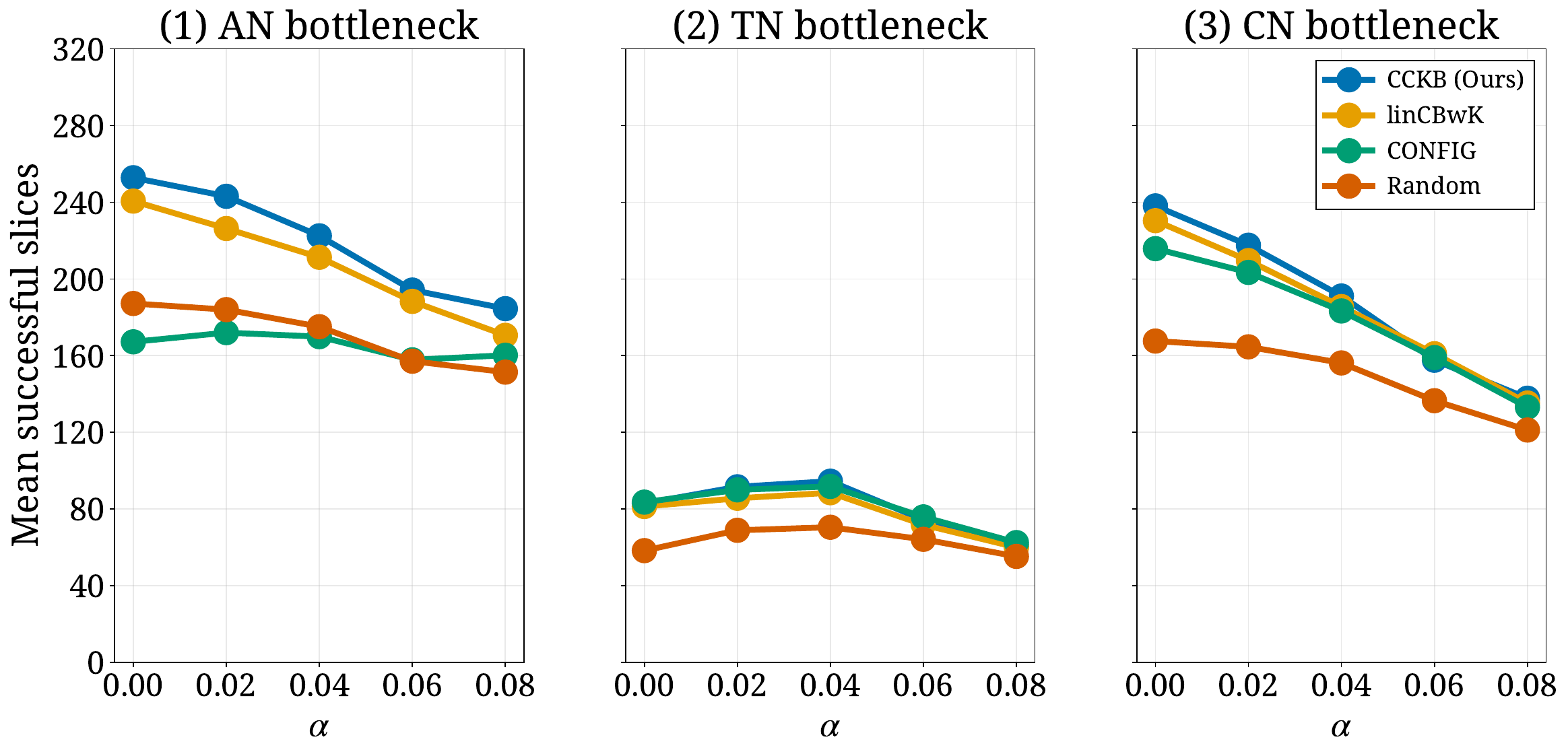}
}
\caption{Mean number of successfully provisioned \acp{ns} over 10 runs as the \ac{embb} mixing ratio varies over \(\alpha\in\{0,0.02,0.04,0.06,0.08\}\) under AN, TN, and CN bottleneck conditions in (a) Tree and (b) Ring topologies.}
\label{fig:mixture_success_slices_vs_embb_ratio_tree_ring}
\end{figure}

\subsection{Effects of Relaxing \textit{(L1)} and \textit{(L2)}}
\label{subsec:ablation_relaxation_effect}

\subsubsection{Setup}
\label{subsubsec:ablation_relaxation_setup}
The setup follows Sec.~\ref{subsubsec:topology_bottleneck_setup}, except that we
focus on the Tree topology and evaluate its three bottleneck conditions.

\textbf{Ablation Design:}
To isolate the effects of the two improvements addressing \textit{(L1)} and
\textit{(L2)}, we compare four variants obtained by independently combining
two surrogate-model options (nonlinear and linear \acp{gp}) and two
decomposition search-space options (continuous and discrete).
Table~\ref{tab:ablation_design_kernel_search} summarizes the resulting \(2\times2\) design.

\begin{table}[t]
\caption{Factorial Ablation Design ($2 \times 2$): Surrogate Models and Decomposition Search Spaces.}
\label{tab:ablation_design_kernel_search}
\centering
\small
\setlength{\tabcolsep}{3pt}
\renewcommand{\arraystretch}{1.1}
\begin{tabular}{@{}>{\raggedright\arraybackslash}p{0.25\columnwidth}>{\centering\arraybackslash}p{0.25\columnwidth}>{\centering\arraybackslash}p{0.25\columnwidth}@{}}
\toprule
\textbf{Method}
&
\begin{tabular}[c]{@{}c@{\;}l@{}}
\multirow{1}{*}{\textbf{(i)}} & \textbf{Surrogate} \\
& \textbf{Model}
\end{tabular}
&
\begin{tabular}[c]{@{}c@{\;}l@{}}
\multirow{1}{*}{\textbf{(ii)}} & \textbf{Search} \\
& \textbf{Space}
\end{tabular}
\\
\midrule
CCKB (Ours) & Nonlinear \ac{gp} & Continuous \\
Lin-Cont & Linear \ac{gp} & Continuous \\
Mat-Disc & Nonlinear \ac{gp} & Discrete \\
\ac{lincbwk} & Linear \ac{gp} & Discrete \\
\bottomrule
\end{tabular}
\end{table}

For both the reward and constraint surrogates, the nonlinear \ac{gp} option
uses \(k^{\bullet,\mathrm{Mat}}\), whereas the linear \ac{gp} option uses
\(k^{\bullet,\mathrm{Lin}}\).
\arxivonly{These kernels are defined in
\eqref{eq:comparison_kernel_matern_symbol}--\eqref{eq:comparison_kernel_matern_distance}
and \eqref{eq:comparison_kernel_linear_symbol}, respectively.}
\journalonly{These kernels are specified in
\suppsecref{appendix:comparison_kernel_definitions}.}
The continuous-search variants optimize over
\(\mathcal{X}_{\mathrm{cont}}(w_{\mathrm{lb}})\) defined in
\eqref{eq:comparison_search_space_continuous}, whereas the discrete-search
variants evaluate all candidates in
\(\mathcal{X}_{\Delta_{\mathrm{grid}}}(w_{\mathrm{lb}})\) defined in
\eqref{eq:comparison_search_space_discrete}.
All parameter settings are identical to those used in the preceding experiments.


\subsubsection{Result}
\label{subsubsec:ablation_relaxation_result}

The ablation yields two main observations.

\begin{figure}[t]
\centering
\includegraphics[width=0.55\linewidth]{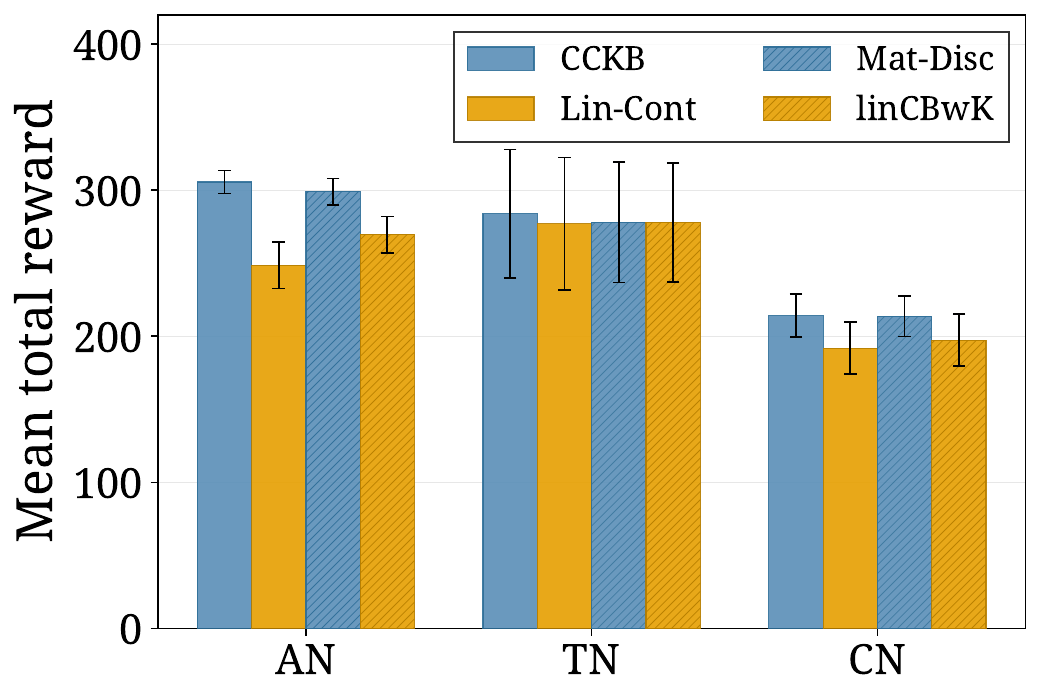}
\caption{Mean Total Reward by bottleneck for four ablation variants on the Tree topology (error bars indicate standard deviation).}
\label{fig:ablation_reward_mean_kernel_search}
\end{figure}

\begin{table}[t]
\caption{Surrogate-Refitting Failure Rate for the surrogate-model/search-space ablation on the Tree topology.}
\label{tab:ablation_refit_failure_rate_kernel_search}
\centering
\small
\setlength{\tabcolsep}{4pt}
\resizebox{\columnwidth}{!}{%
\begin{tabular}{c c c c}
\toprule
\textbf{CCKB} & \textbf{Lin-Cont} & \textbf{Mat-Disc} & \textbf{linCBwK} \\
\midrule
0/2{,}100 (0.00\%) & 71/2{,}100 (3.38\%) & 235/2{,}100 (11.19\%) & 86/2{,}100 (4.10\%) \\
\bottomrule
\end{tabular}
}
\end{table}

\textbf{Total Reward:}
CCKB achieves the highest mean Total Reward in all three bottleneck conditions
(Fig.~\ref{fig:ablation_reward_mean_kernel_search}).
The comparison with Lin-Cont indicates that, under the same continuous search
space, the nonlinear \ac{gp} instantiation is more effective than the linear
\ac{gp} instantiation.
The comparison with Mat-Disc further shows that, under the same nonlinear
surrogate model, continuous search is also beneficial.
By contrast, Lin-Cont does not consistently outperform linCBwK, suggesting that
enlarging the search space alone does not overcome the limitations of linear
reward and constraint surrogates.

\textbf{Surrogate-Refitting Failure Rate:}
The reward comparison alone does not reveal the numerical stability of the
surrogate models.
Table~\ref{tab:ablation_refit_failure_rate_kernel_search} shows that CCKB
completes all 2{,}100 scheduled surrogate-refitting operations without
failure, whereas the other three designs exhibit nonzero failure rates.
Thus, among the compared designs, CCKB achieves the highest Total Reward while
maintaining stable surrogate refitting.

\subsection{Effect of Addressing \textit{(L3)}}
\label{subsec:ablation_proxy_filters}

\subsubsection{Setup}
\label{subsubsec:ablation_proxy_filters_setup}
The setup follows Sec.~\ref{subsubsec:ablation_relaxation_setup}.

\textbf{Ablation Design:}
We compare three variants that differ only in the sample selection rule used for the constraint proxy learner:
\textit{(i): CCKB} uses only samples from rounds satisfying both \(y_\tau=1\)
and \(\mathbf{1}^{\Gamma}_{d,j}(s_\tau)=1\),
\textit{(ii): Path-Only} removes the condition \(y_\tau=1\) and uses samples
satisfying \(\mathbf{1}^{\Gamma}_{d,j}(s_\tau)=1\), and
\textit{(iii): No-Filter} uses all observations and directly learns the relaxed-\ac{nsrdp} constraint
\(h_{d,j}\).
The samples retained by the three variants are
\[
\begin{array}{@{}c@{\quad}l@{\quad}l@{}}
\textnormal{(i)}
& \textnormal{CCKB:}
& \big\{(z_\tau,u_{\tau,d,j})\ \big|\
\substack{\tau\in[t-1],\ y_\tau=1,\\
\mathbf{1}^{\Gamma}_{d,j}(s_\tau)=1}
\big\},\\
\textnormal{(ii)}
& \textnormal{Path-Only:}
& \big\{(z_\tau,u_{\tau,d,j})\ \big|\
\substack{\tau\in[t-1],\\ \mathbf{1}^{\Gamma}_{d,j}(s_\tau)=1}
\big\},\\
\textnormal{(iii)}
& \textnormal{No-Filter:}
& \big\{(z_\tau,u_{\tau,d,j})\ \big|\ \tau\in[t-1]\big\}.
\end{array}
\]

\subsubsection{Result}
\label{subsubsec:ablation_proxy_filters_result}

The filter ablation yields two observations.

\begin{figure*}[t]
\centering
\subfloat[Mean Total Reward by bottleneck.\label{fig:ablation_proxy_filters_reward_and_nlpd_reward}]{
\includegraphics[width=0.25\textwidth]{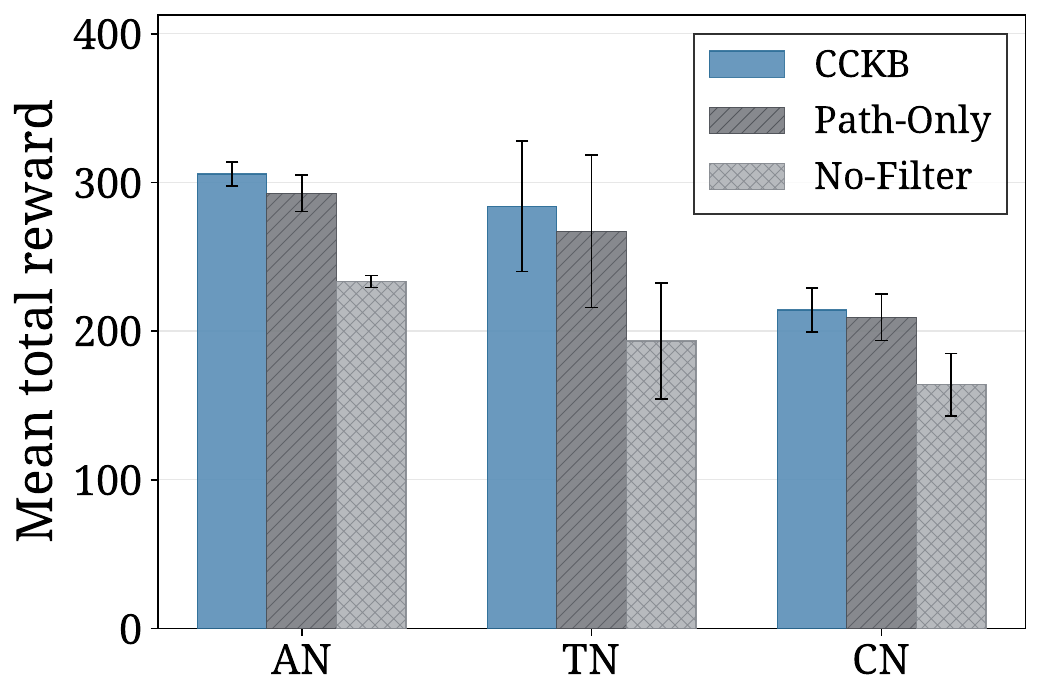}
}
\hfill
\subfloat[Constraint \ac{gp} \ac{nlpd} history by bottleneck.\label{fig:ablation_proxy_filters_reward_and_nlpd_nlpd}]{
\includegraphics[width=0.7\textwidth]{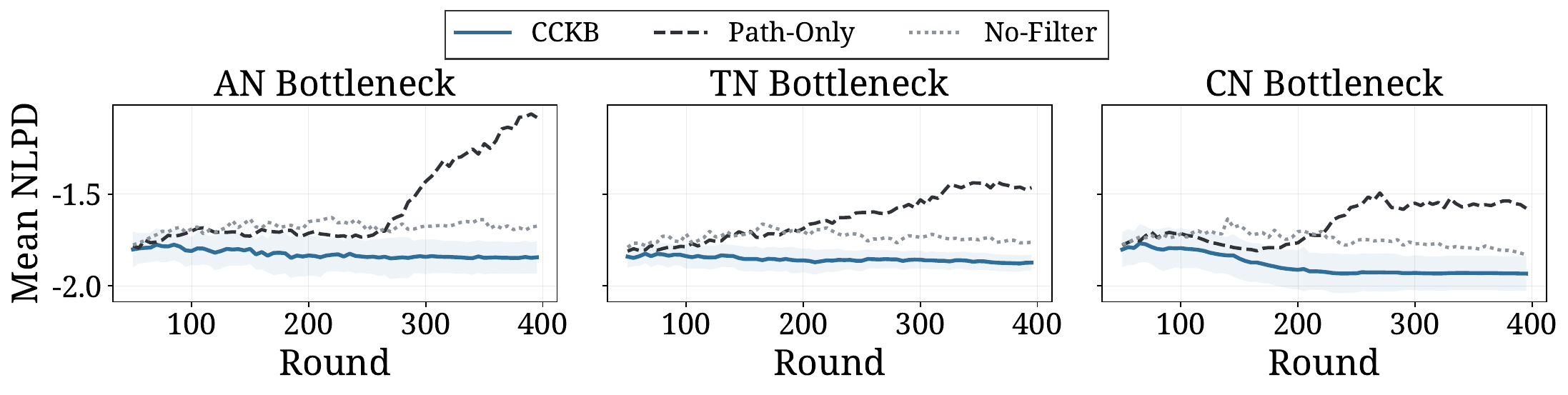}
}
\caption{Effects of the proxy-construction filters. (a) Mean Total Reward. (b) Constraint-\ac{gp} \ac{nlpd} trajectories; lower values indicate better predictive fit.}
\label{fig:ablation_proxy_filters_reward_and_nlpd}
\end{figure*}

\textbf{Total Reward:}
CCKB achieves the highest mean Total Reward in all three bottleneck conditions
(Fig.~\ref{fig:ablation_proxy_filters_reward_and_nlpd}(a)), supporting the
practical benefit of using both filters in the proxy construction.
All three variants complete all 2{,}100 scheduled
refitting operations without failure, so this reward comparison is not
confounded by surrogate-refitting instability.

\textbf{Constraint-Surrogate Fit:}
The constraint \ac{gp} in CCKB attains the lowest \ac{nlpd}
(Fig.~\ref{fig:ablation_proxy_filters_reward_and_nlpd}(b)), showing that the
combined filters also provide the best predictive fit among the three variants.
Path-Only is lower than No-Filter in the early rounds, but its \ac{nlpd} deteriorates once resource depletion starts (e.g., around round 260 in the AN-bottleneck case).

Together, the reward and \ac{nlpd} results support learning the proxy target
\(h^{\mathrm{prx}}_{d,j}\) with both sample-selection filters rather than directly learning
\(h_{d,j}\) from all observations, thereby supporting the practical effectiveness of the proposed treatment of \textit{(L3)}.

\subsection{Effects of Mechanisms for \textit{(R1)} and \textit{(R2)}}
\label{subsec:ablation_context_dual}

\subsubsection{Setup}
\label{subsubsec:ablation_context_dual_setup}
The setup follows Sec.~\ref{subsubsec:ablation_relaxation_setup}.

\textbf{Ablation Design:}
We compare CCKB with three variants: w/o Context removes context conditioning, w/o Dual removes dual-based budget control, and w/o Context \& Dual removes both components.
All four variants use the same kernel, continuous search space, training protocol, and hyperparameters.
Among the ablation variants, w/o Context retains the primal--dual mechanism
and therefore corresponds to a non-contextual, proxy-based \ac{ckb} instantiation.

\noindent
\textbf{(w/o Context):}
To ablate the request conditioning used to address \textit{(R2)}, the two
variants without context conditioning use the same fixed-request surrogate-input
construction as that used for \ac{config} in
Sec.~\ref{subsubsec:comparison_methods_detail}.
Specifically, we replace the current request \(s_t\) with the first-round request
\(s_1\), evaluating the surrogates at \((s_1,\mathbf{x})\), rather than
\((s_t,\mathbf{x})\).

\noindent
\textbf{(w/o Dual):}
To ablate the dual-based budget control used to address \textit{(R1)}, the two
variants without dual-based budget control skip the dual update in
Line~\ref{algline:component1:dual} of Algorithm~\ref{alg:component1}, so the
dual variables remain at zero.
Consequently, the dual penalty is inactive throughout the run.

\subsubsection{Result}
\label{subsubsec:ablation_context_dual_result}


\begin{figure}[t]
\centering
\includegraphics[width=0.6\linewidth]{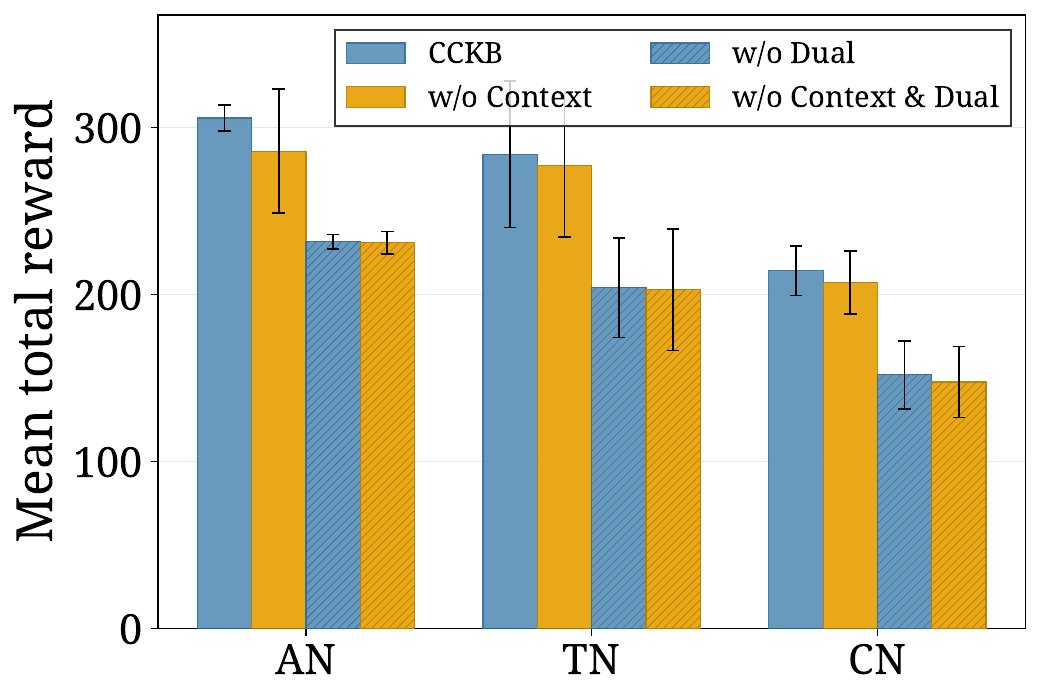}
\caption{Mean Total Reward by bottleneck for the four context/dual ablation variants on the Tree topology (error bars indicate standard deviation).}
\label{fig:ablation_context_dual_reward_mean}
\end{figure}

Fig.~\ref{fig:ablation_context_dual_reward_mean} shows that CCKB achieves the highest mean Total Reward in all three bottleneck conditions.
Relative to CCKB, w/o Dual reduces the mean reward by 74.2, 80.0, and 62.3
under the AN, TN, and CN bottlenecks, respectively.
By contrast, w/o Context yields smaller reductions of 19.9, 6.7, and 7.0,
respectively.
w/o Dual and w/o Context \& Dual form the lower-performing pair in every
condition, with only a small additional difference between them.

All four variants complete all 2{,}100 scheduled surrogate-refitting operations
without failure (0/2{,}100 per method); therefore, we omit a separate
failure-rate table for this ablation, and the reward differences are not
confounded by surrogate-refitting instability.
These descriptive results suggest that the dual-based mechanism for
\textit{(R1)} provides the dominant contribution, while the context-conditioning
mechanism for \textit{(R2)} provides a smaller complementary improvement.

\subsection{Computational Scalability}
\label{subsec:runtime_scalability}

\subsubsection{Setup}
\label{subsubsec:runtime_scalability_setup}
Unless stated below, the setup follows
Sec.~\ref{subsubsec:ablation_relaxation_setup}.
We use five runs with different random seeds, refit the \ac{gp} models every
round, and impose no domain bottleneck by setting the resource-availability
vector to
\((a_{\mathrm{AN}},a_{\mathrm{TN}},a_{\mathrm{CN}})=(1.0,1.0,1.0)\).

\textbf{Experiment Design:}
We use the Tree topology and increase the network size as
\(|\mathcal{A}|\in\{12,24,36,48\}\) and
\(|\mathcal{R}|\in\{4,8,12,16\}\).
We compare \ac{cckb}, \ac{lincbwk}, and Random under the same CPU allocation,
which averages 9 cores per run.
\ac{config} is omitted because its runtime is likewise dominated by the same \ac{gp}
computations as \ac{cckb}; including it would therefore not provide a distinct
scaling comparison.

\subsubsection{Result}
\label{subsubsec:runtime_scalability_result}
The evaluation yields two main observations.

\begin{figure}[t]
\centering
\subfloat[Per-round runtime at \(|\mathcal{A}|=12\).\label{fig:runtime_scalability_gnb12_step}]{
\includegraphics[width=0.95\linewidth]{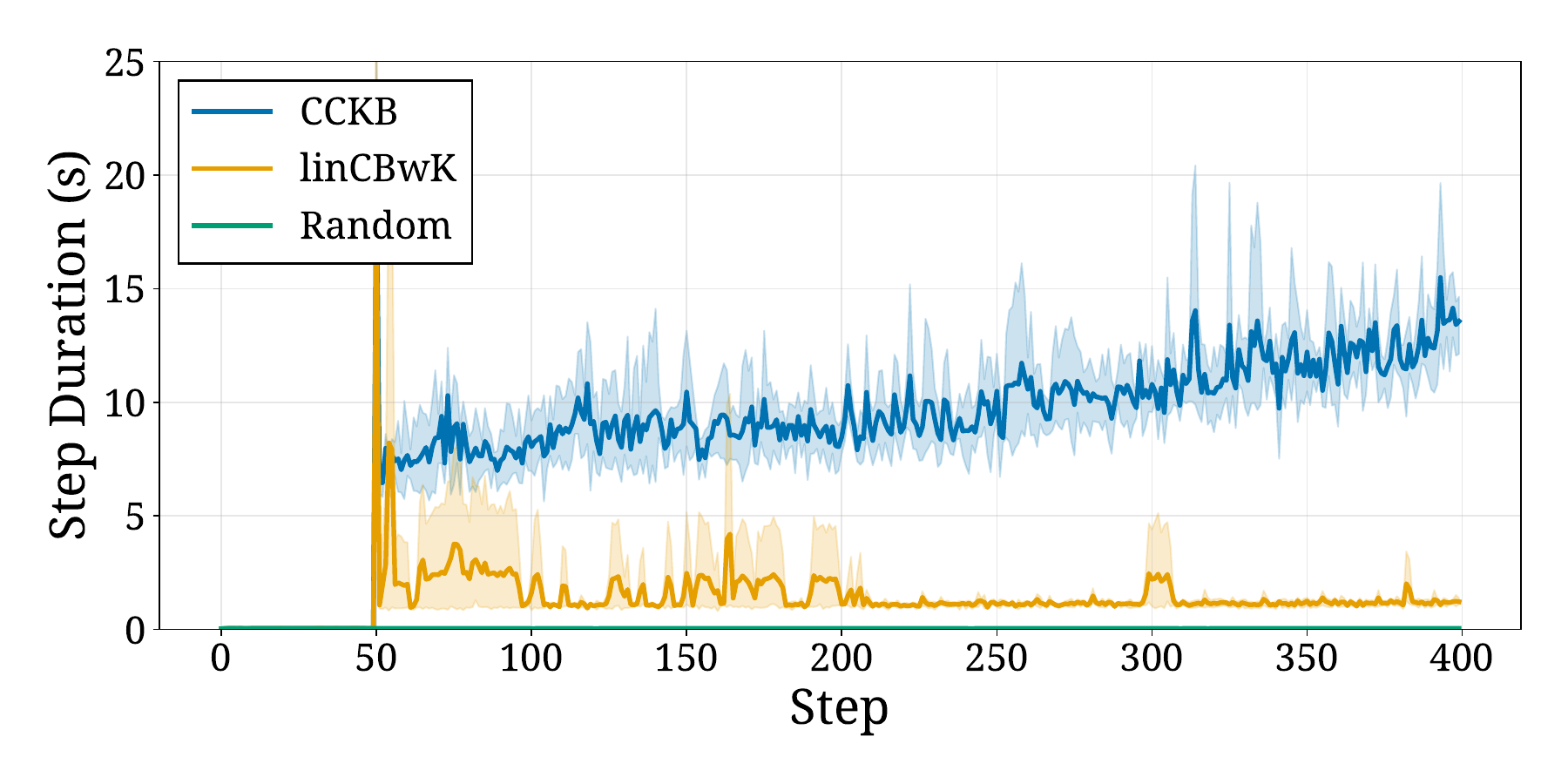}
}\\[0.4em]
\subfloat[Total runtime per \(T=400\)-round run as \(|\mathcal{A}|\) increases.\label{fig:runtime_scalability_total_vs_gnb}]{
\includegraphics[width=0.80\linewidth]{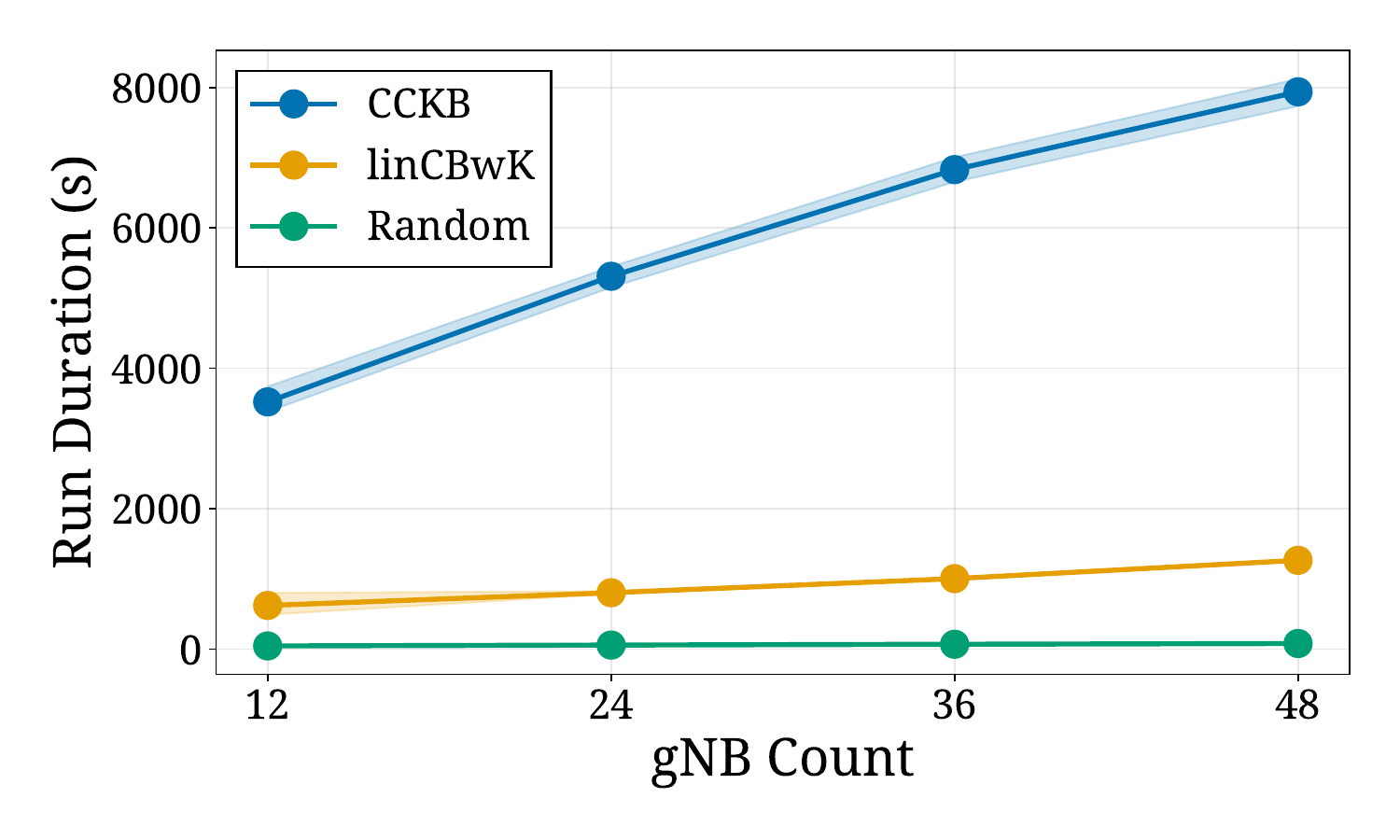}
}
\caption{Runtime scalability over five runs. Lines in (a) and markers in (b) show the means, and bands show the 10th--90th percentiles across runs.}
\end{figure}

\textbf{Per-Round Runtime:}
Fig.~\ref{fig:runtime_scalability_gnb12_step} shows that \ac{cckb} has the highest per-round runtime, followed by \ac{lincbwk}, while Random is much lower.
At \(|\mathcal{A}|=12\), the mean per-round runtime is \(8.64\)~s for \ac{cckb},
\(1.39\)~s for \ac{lincbwk}, and \(0.045\)~s for Random.
Although the measured operations differ, CCKB's computational overhead is
below the approximately 30~s reported for the instantiation of an \ac{ns} on an
experimental multi-slice platform~\cite{garcia2020experimenting}, suggesting
compatibility with orchestration workflows operating on a timescale of tens of
seconds.
CCKB's per-round runtime increases gradually over the horizon, whereas linCBwK
remains lower and Random remains nearly constant.
The gradual growth under CCKB is consistent with the selective updates in its
constraint-surrogate construction limiting the growth of effective training
samples.

\textbf{Scaling with Network Size:}
Fig.~\ref{fig:runtime_scalability_total_vs_gnb} shows monotonic runtime growth for all methods as \(|\mathcal{A}|\) increases.
For \ac{cckb}, the total run duration increases approximately linearly with \(|\mathcal{A}|\) over the tested range.
This trend is consistent with the complexity discussion in Sec.~\ref{subsec:discussion_time_complexity}, where the dominant cost scales linearly with the number of resource-side models.

\ifSubfilesClassLoaded{
  \bibliographystyle{ieee/IEEEtran}
  \bibliography{bib/references}
}{}

\end{document}

\documentclass[src/main.tex]{subfiles}

\begin{document}

\section{Discussion}
\label{sec:discussion}
\subsection{Time Complexity and Implementation Remarks}
\label{subsec:discussion_time_complexity}
We compare the computational complexity of the proposed method with that of our previous method~\cite{kobayashi2025icccn}.
We focus on the dominant cost of constructing the estimates \(\hat f_t\) and \(\hat h^{\mathrm{prx}}_{t,d,j}\), from which the clipped surrogates used by \ac{cckb} are obtained.

In our previous method~\cite{kobayashi2025icccn}, the corresponding reward and constraint estimates are computed by linear regression with a fixed \(p\)-dimensional parameter vector.
When the inverse covariance matrix is updated recursively via the Sherman--Morrison formula~\cite{sherman1950adjustment}, the per-model update cost is \(O(p^2)\).
Since the number of constraint models is on the order of the total number of resources \(J_{\mathrm{tot}}\), the per-round computational complexity is \(O(J_{\mathrm{tot}} p^2)\).
Therefore, over \(T\) rounds, the total complexity is \(O(J_{\mathrm{tot}} p^2 T)\).

By contrast, in the current naive exact-\ac{gp} implementation of the proposed method, each surrogate update requires inversion of the regularized Gram matrix \(\mathbf{K}^{\bullet}_{t-1}+\eta_{\bullet}\mathbf{I}\) specified in Sec.~\ref{subsubsec:gp_modeling_general}.
Thus, at round \(t\), the dominant per-model update cost is \(O(t^3)\) in the worst case.
Since the number of constraint models is again on the order of \(J_{\mathrm{tot}}\), the per-round complexity is upper-bounded by \(O(J_{\mathrm{tot}} t^3)\).
Summing this from \(t=1\) to \(T\), the total computational complexity becomes
\(
O(J_{\mathrm{tot}} T^4).
\)

Hence, compared with our previous method~\cite{kobayashi2025icccn}, 
the current exact-GP implementation significantly increases the computational cost from \(O(J_{\mathrm{tot}} p^2 T)\) to \(O(J_{\mathrm{tot}} T^4)\).
This suggests that, although the proposed method improves modeling flexibility, further acceleration techniques such as sparse/approximate \ac{gp} methods~\cite{quinonero-candela2005,titsias2009} will be important.

\subsection{Rationale for the Proxy Design}
\label{subsec:discussion_why_proxy}
The proxy constraint is learned from resource-consumption observations for
successful requests and resources included in the requested \ac{ns} topology,
thereby avoiding the many zero observations described in \textit{(L3)}.  It
replaces the unknown provisioning-success probability \(p(s,\mathbf{x})\) by
its upper bound of one.  An alternative is to estimate
\(p(s,\mathbf{x})\) and the conditional mean demand
\(m_{d,j}^{\Gamma}(s,\mathbf{x})\) separately and multiply the two estimates.
This alternative could make the constraint less conservative, but a confidence
bound for the resulting constraint would have to account for errors in both
estimates and in their product.  Moreover, estimating a probability from the
binary provisioning outcome generally requires numerical approximation rather
than the closed-form calculations used for \ac{gp} regression of the
demand~\cite{rasmussen2006gpml}.  We therefore use the current proxy because it
requires only the demand model and allows us to derive the confidence bounds
used in our theoretical analysis.  


\ifSubfilesClassLoaded{
  \bibliographystyle{ieee/IEEEtran}
  \bibliography{bib/references}
}{}

\end{document}

\documentclass[src/main.tex]{subfiles}

\begin{document}

\section{Conclusion}
This paper addressed online \ac{nsr} decomposition in hierarchical 5G management, where an \ac{e2e} controller must select request-conditioned continuous decompositions while respecting long-horizon multi-resource budgets.
We formulated the \ac{nsrdp} and introduced \ac{cckb} as an online solution for it.
\ac{cckb} combines primal--dual budget control with contextual \ac{gp} surrogates to learn nonlinear relationships between requests, decompositions, provisioning success, and resource demand without restricting the search to a finite set of decompositions.
To handle zero-dominated resource-consumption observations caused by rejected requests and resources outside the requested \ac{ns} topology, we derived a conservative proxy problem for the stationary-response relaxation of the \ac{nsrdp}.
We established high-probability finite-time regret and cumulative
constraint-violation bounds for the proxy problem and transferred both
guarantees to the relaxed \ac{nsrdp}.  The relaxed-problem regret bound
includes an explicit additive term that quantifies the optimality gap
induced by the proxy formulation.
In 5G network simulations, \ac{cckb} achieved higher mean Total Reward than \ac{lincbwk}~\cite{kobayashi2025icccn} in all 12 topology--bottleneck conditions and than \ac{config}~\cite{xu2023config} in 9 of the 12 conditions.
Extending these guarantees to state-dependent responses and improving scalability through approximate \ac{gp} updates remain important future directions.

\ifSubfilesClassLoaded{
  \bibliographystyle{ieee/IEEEtran}
  \bibliography{bib/references}
}{}

\end{document}

\bibliographystyle{ieee/IEEEtran}
\bibliography{bib/references}

\arxivonly{%
  \startsupplementresults
  \appendices
  \documentclass[src/main.tex]{subfiles}

\begin{document}

\section{Proof That Domain-Level Guarantees Imply the \ac{e2e} Guarantee}
\label{appendix:cond_independence_guarantee}
This appendix proves the implication in \eqref{eq:operational_guarantee_preservation}: for an admissible decomposition, satisfying every domain-level \ac{sla} guarantee implies satisfying the original \ac{e2e} guarantee.
The proof uses one assumption about target achievement after the evaluation conditions hold in every domain, together with the construction of the domain-level evaluation conditions in \eqref{eq:domain_evaluation_condition_construction}.
We first state the assumption and use the latency requirement to explain when the assumption is appropriate (Sec.~\ref{appendix:cond_ind_assumptions}).
We then prove \eqref{eq:operational_guarantee_preservation} (Sec.~\ref{appendix:cond_ind_guarantee_proof}).

\subsection{Assumption Used in the Proof}
\label{appendix:cond_ind_assumptions}
The proof requires the following assumption about the domain-level target events.

\begin{assumption}[Domain-level target events after all evaluation conditions hold]
\label{ass:cond_ind}
For each \(s \in \mathcal{S}\), \(i\in[M]\), and admissible decomposition \(\mathbf{x}\) for \(s\) such that
\(\Pr_{s,\mathbf{x}}(\bigcap_{d\in\mathcal{D}}H_{d,i}(s))>0\), the following conditions hold:
\begin{enumerate}[label=\textnormal{(\roman*)}]
    \item The events \(T_{d,i}(s,\mathbf{x}_T)\), \(d\in\mathcal{D}\), are mutually independent under
    \(\Pr_{s,\mathbf{x}}(\,\cdot\mid\bigcap_{d'\in\mathcal{D}}H_{d',i}(s))\).
    \item For every \(d\in\mathcal{D}\),
    \[
    \begin{aligned}
    &\Pr_{s,\mathbf{x}}\!\left(
    T_{d,i}(s,\mathbf{x}_T)
    \,\middle|\,
    \bigcap_{d'\in\mathcal{D}}H_{d',i}(s)
    \right)\\[-1mm]
    &\quad=
    \Pr_{s,\mathbf{x}}\!\left(
    T_{d,i}(s,\mathbf{x}_T)
    \mid H_{d,i}(s)
    \right).
    \end{aligned}
    \]
\end{enumerate}
\end{assumption}
Condition \textnormal{(i)} implies that the joint target-achievement probability,
conditional on all domain-level evaluation conditions, factorizes as
\[
\begin{aligned}
&
\Pr_{s,\mathbf{x}}\!\left(
\bigcap_{d\in\mathcal{D}}T_{d,i}(s,\mathbf{x}_T)
\,\middle|\,
\bigcap_{d\in\mathcal{D}}H_{d,i}(s)
\right)\\
&\quad=
\prod_{d\in\mathcal{D}}
\Pr_{s,\mathbf{x}}\!\left(
T_{d,i}(s,\mathbf{x}_T)
\,\middle|\,
\bigcap_{d'\in\mathcal{D}}H_{d',i}(s)
\right).
\end{aligned}
\]
Condition \textnormal{(ii)} states that whether the evaluation conditions also hold in the other domains does not change the target-achievement probability in domain \(d\).

\textbf{Latency Example and Scope:}
For the latency requirement, \(H_{d,i}(s)\) is the event that the packet is not dropped in domain \(d\), and \(T_{d,i}(s,\mathbf{x}_T)\) is the event that the delay in domain \(d\) does not exceed the limit allocated to domain \(d\).
Condition \textnormal{(i)} means that, among packets not dropped in any domain,
meeting or missing the allocated delay limit in one domain does not make the
limits in the other domains more or less likely to be met.
Condition \textnormal{(ii)} means that, among packets not dropped in domain \(d\),
selecting only those also not dropped in any other domain does not make the
delay limit in domain \(d\) more or less likely to be met.
Thus, packets that pass through all other domains are neither more nor less
likely to meet the delay limit in domain \(d\).
Conditions \textnormal{(i)} and \textnormal{(ii)} can be appropriate for the hierarchical \ac{ns} management architecture considered in this paper when each domain-specific controller operates independently, uses resources assigned only to the corresponding domain, and is not affected by a failure or traffic change that simultaneously affects several domains.

\subsection{Guarantee Preservation}
\label{appendix:cond_ind_guarantee_proof}
\begin{proof}
Fix \(s\in\mathcal{S}\), \(i\in[M]\), and an admissible decomposition \(\mathbf{x}\) for \(s\).
Equation~\eqref{eq:domain_evaluation_condition_construction} and \(\Pr_{s,\mathbf{x}}(H_i)>0\) ensure that \(\Pr_{s,\mathbf{x}}(\bigcap_{d\in\mathcal{D}}H_{d,i}(s))>0\).
Assumption~\ref{ass:cond_ind} gives
\begin{align}
&\Pr_{s,\mathbf{x}}\!\left(
\bigcap_{d\in\mathcal{D}}T_{d,i}(s,\mathbf{x}_T)
\,\middle|\,
\bigcap_{d\in\mathcal{D}}H_{d,i}(s)
\right)
\notag\\
&\overset{\textnormal{(i)}}{=}
\prod_{d\in\mathcal{D}}
\Pr_{s,\mathbf{x}}\!\left(
T_{d,i}(s,\mathbf{x}_T)
\,\middle|\,
\bigcap_{d'\in\mathcal{D}}H_{d',i}(s)
\right)
\notag \\
&\overset{\textnormal{(ii)}}{=}
\prod_{d\in\mathcal{D}}
\Pr_{s,\mathbf{x}}\!\left(
T_{d,i}(s,\mathbf{x}_T)\mid H_{d,i}(s)
\right).
\label{eq:cond_ind}
\end{align}
To prove \eqref{eq:operational_guarantee_preservation}, assume that all domain-level guarantees in \eqref{eq:domain_probabilistic_guarantee} hold.
\begin{align}
\Pr_{s,\mathbf{x}}(T_i \mid H_i)
&= \frac{\Pr_{s,\mathbf{x}}(T_i\cap H_i)}{\Pr_{s,\mathbf{x}}(H_i)} \notag \\
&\overset{\textnormal{(A1)}}{\ge} \frac{
\Pr_{s,\mathbf{x}}\!\left(
H_i \cap
\bigcap_{d\in\mathcal{D}} T_{d,i}(s,\mathbf{x}_T)
 \right)
}{
\Pr_{s,\mathbf{x}}(H_i)
} \notag \\
&=
\Pr_{s,\mathbf{x}}\!\left(
\bigcap_{d\in\mathcal{D}} T_{d,i}(s,\mathbf{x}_T)
\,\middle|\,
H_i
\right) \notag \\
&\overset{\eqref{eq:domain_evaluation_condition_construction}}{=}
\Pr_{s,\mathbf{x}}\!\left(
\bigcap_{d\in\mathcal{D}} T_{d,i}(s,\mathbf{x}_T)
\,\middle|\,
\bigcap_{d\in\mathcal{D}} H_{d,i}(s)
\right) \notag \\
&\overset{\eqref{eq:cond_ind}}{=} \prod_{d\in\mathcal{D}}
\Pr_{s,\mathbf{x}}\!\left(
T_{d,i}(s,\mathbf{x}_T) \mid H_{d,i}(s)
\right).
\label{eq:appendix_decomp_ind}
\end{align}
Using the domain-level guarantees in \eqref{eq:domain_probabilistic_guarantee} and condition \textnormal{(A2)} in Definition~\ref{def:admissible_decomp},
\begin{equation}
\Pr_{s,\mathbf{x}}(T_i \mid H_i)
\ge \prod_{d\in\mathcal{D}} g_{d,i}(s,\mathbf{x}_g)
\ge g_i.
\end{equation}
Hence, whenever all delegated inequalities in \eqref{eq:domain_probabilistic_guarantee} hold, the original \ac{e2e} requirement \(\Pr_{s,\mathbf{x}}(T_i \mid H_i)\ge g_i\) also holds.
Consequently, conditions \textnormal{(A1)} and \textnormal{(A2)} imply \eqref{eq:operational_guarantee_preservation} under Assumption~\ref{ass:cond_ind}.
\end{proof}


\end{document}

  \documentclass[src/main.tex]{subfiles}

\begin{document}

\section{Intermediate Results and Explicit Bounds}
\label{appendix:theory_intermediate_results}
This section records the intermediate results and explicit surrogate-error
scales underlying Corollary~\ref{cor:concrete_finite_time_proxy_performance}.
Table~\ref{tab:theory_analysis_stages} summarizes the proof stages.

\begin{table}[t]
    \caption{Problems and \ac{cckb} instantiations in the theoretical analysis.}
    \label{tab:theory_analysis_stages}
    \centering
    \footnotesize
    \setlength{\tabcolsep}{3pt}
    \renewcommand{\arraystretch}{1.2}
\begin{tabularx}{\columnwidth}{@{}
    >{\raggedright\arraybackslash}X
    >{\raggedright\arraybackslash}X
    >{\raggedright\arraybackslash}X@{}}
\toprule
\textbf{Analysis Stage} & \textbf{Benchmark Problem}
& \textbf{\ac{cckb} Instantiation} \\
\midrule
Generic finite-time guarantee
(Theorem~\ref{thm:general_cckb_regret_violation})
& Relaxed \ac{nsrdp}~\eqref{eq:relaxed_problem}
& Direct \\
Proxy finite-time guarantee
(Corollary~\ref{cor:proxy_regret_violation})
& Proxy problem~\eqref{eq:proxy_relaxed_problem}
& Proxy-based \\
Transferred finite-time guarantee
(Corollary~\ref{cor:relaxed_regret_violation})
& Relaxed \ac{nsrdp}~\eqref{eq:relaxed_problem}
& Proxy-based \\
Concrete guarantee with explicit surrogate-error bounds
(Corollary~\ref{cor:explicit_finite_time_proxy_performance})
& Proxy problem~\eqref{eq:proxy_relaxed_problem} and relaxed
\ac{nsrdp}~\eqref{eq:relaxed_problem}
& Proxy-based \\
\bottomrule
\end{tabularx}
\end{table}

\subsection{Slater Consequences and Generality}
\label{appendix:slater_and_generality}
The following corollary records the standard consequence of Slater's theorem
used in the finite-time analysis~\cite[Sec.~5.2.3]{boyd2004convex}.

\begin{corollary}[Consequence of Slater's condition]
\label{cor:cckb_slater_consequence}
Suppose that the relaxed \ac{nsrdp} in \eqref{eq:relaxed_problem} admits an
optimal policy.  Under Assumption~\ref{ass:cckb_slater}, its primal and dual
optimal values coincide.  Moreover, there exist an optimal dual vector
\(\boldsymbol\phi^\star\!\ge\!\mathbf0\) and, for the fixed horizon \(T\),
a finite bound \(\Lambda_T^{\mathrm{rel}}\) such that
\begin{equation}
\|\boldsymbol\phi^\star\|_1\le\Lambda_T^{\mathrm{rel}}<\infty.
\label{eq:cckb_rel_dual_norm_bound}
\end{equation}
\end{corollary}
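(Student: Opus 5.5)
The plan is to run the textbook Slater argument of~\cite[Sec.~5.2.3]{boyd2004convex}, but in the finite-dimensional image space rather than in the (infinite-dimensional) policy class \(\mathcal Q\). For \(q\in\mathcal Q\), write \(F(q):=\mathbb E_{s\sim\mathbb P}[v_q^f(s)]\) and \(\mathbf G(q):=\mathbb E_{s\sim\mathbb P}[\mathbf v_q^h(s)]\in\mathbb R^{J_{\mathrm{tot}}}\). By Assumption~\ref{ass:cckb_request_measurability} and the boundedness of \(f\) and \(h_{d,j}\), these expectations are well defined and finite.

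\textbf{Step 1 (convexity).} Both \(F\) and \(\mathbf G\) are affine under mixtures: for \(q_1,q_2\in\mathcal Q\) and \(\lambda\in[0,1]\), the policy \(\lambda q_1(\cdot\mid s)+(1-\lambda)q_2(\cdot\mid s)\) is again measurable and lies in \(\mathcal Q\), and \(F\) and \(\mathbf G\) evaluate to the same mixture of values. Therefore the set
\[
\mathcal S:=\{(\mathbf u,r)\in\mathbb R^{J_{\mathrm{tot}}}\times\mathbb R:\ \exists q\in\mathcal Q,\ \mathbf G(q)\le\mathbf u,\ F(q)\ge r\}
\]
is convex. The relaxed problem~\eqref{eq:relaxed_problem} is equivalent to \(\max_q F(q)\) subject to \(\mathbf G(q)\le\mathbf 0\), since the constraint in~\eqref{eq:relaxed_con} differs from \(\mathbf G(q)\le\mathbf 0\) only by the normalization by \(C_{d,j}^{\max}\). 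Hence \(\mathrm{OPT}^{\mathrm{rel}}=\sup\{r:(\mathbf 0,r)\in\mathcal S\}\), and the supremum is attained because an optimal policy exists.

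\textbf{Step 2 (separation).} The convex set \(\mathcal T:=\{(\mathbf u,r):\mathbf u<\mathbf 0,\ r>\mathrm{OPT}^{\mathrm{rel}}\}\) is disjoint from \(\mathcal S\). Indeed, if some \(q\) achieved \(\mathbf G(q)<\mathbf 0\) together with \(F(q)>\mathrm{OPT}^{\mathrm{rel}}\), then \(q\) would be feasible and would exceed the optimum, a contradiction.

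The separating hyperplane theorem in \(\mathbb R^{J_{\mathrm{tot}}+1}\) then gives \((\boldsymbol\phi,\mu)\ne\mathbf 0\) such that
\[
\mu r-\langle\boldsymbol\phi,\mathbf u\rangle\le\mu\,\mathrm{OPT}^{\mathrm{rel}}\quad\text{for all }(\mathbf u,r)\in\mathcal S.
\]
The monotone structure of \(\mathcal S\) (it is closed under increasing \(\mathbf u\) and decreasing \(r\)) forces \(\boldsymbol\phi\ge\mathbf 0\) and \(\mu\ge0\).

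\textbf{Step 3 (ruling out \(\mu=0\)).} This is the step I expect to be the only real obstacle, and Assumption~\ref{ass:cckb_slater} handles it. Plug in \((\mathbf G(q^\circ),F(q^\circ))\in\mathcal S\). If \(\mu=0\), the separation inequality gives \(-\langle\boldsymbol\phi,\mathbf G(q^\circ)\rangle\le0\). But \(\mathbf G(q^\circ)\le-\xi_T\mathbf 1\) and \(\boldsymbol\phi\ge\mathbf 0\) give \(-\langle\boldsymbol\phi,\mathbf G(q^\circ)\rangle\ge\xi_T\|\boldsymbol\phi\|_1>0\), since \(\boldsymbol\phi\ne\mathbf 0\). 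This is a contradiction, so \(\mu>0\).

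Normalize to \(\mu=1\) and set \(\boldsymbol\phi^\star:=\boldsymbol\phi\). For every \(q\in\mathcal Q\), this yields
\[
F(q)-\langle\boldsymbol\phi^\star,\mathbf G(q)\rangle\le\mathrm{OPT}^{\mathrm{rel}}.
\]
So the dual function \(D(\boldsymbol\phi):=\sup_q\{F(q)-\langle\boldsymbol\phi,\mathbf G(q)\rangle\}\) satisfies \(D(\boldsymbol\phi^\star)\le\mathrm{OPT}^{\mathrm{rel}}\). Weak duality gives the reverse inequality, because \(\langle\boldsymbol\phi,\mathbf G(q)\rangle\le0\) for every feasible \(q\) and every \(\boldsymbol\phi\ge\mathbf 0\). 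Hence the primal and dual optimal values coincide, and \(\boldsymbol\phi^\star\) is dual optimal. This argument never needs to decompose the supremum over \(q\) pointwise in \(s\), so no measurable-selection issue arises.

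\textbf{Step 4 (norm bound).} Apply the inequality from Step 3 to \(q^\circ\):
\[
\mathrm{OPT}^{\mathrm{rel}}\ge F(q^\circ)-\langle\boldsymbol\phi^\star,\mathbf G(q^\circ)\rangle\ge F(q^\circ)+\xi_T\|\boldsymbol\phi^\star\|_1.
\]
Since \(0\le f\le\bar p\), both \(F(q^\circ)\ge0\) and \(\mathrm{OPT}^{\mathrm{rel}}\le\bar p\). Therefore
\[
\|\boldsymbol\phi^\star\|_1\le\Lambda_T^{\mathrm{rel}}:=\frac{\mathrm{OPT}^{\mathrm{rel}}-F(q^\circ)}{\xi_T}\le\frac{\bar p}{\xi_T}<\infty,
\]
which is~\eqref{eq:cckb_rel_dual_norm_bound}. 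The same argument, with \(\mathbf h^{\mathrm{prx}}\) in place of \(\mathbf h\), gives \(\Lambda_T^{\mathrm{prx}}\) for the proxy problem.
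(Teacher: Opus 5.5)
Your proposal is correct and follows the paper's proof. Both use convexity of the image of \(\mathcal Q\) under the linear map \(q\mapsto(F(q),\mathbf G(q))\), then Slater's theorem for strong duality with an attained dual optimum, and then substitution of the Slater policy \(q^\circ\) into the Lagrangian to get \(\|\boldsymbol\phi^\star\|_1\le(\mathrm{OPT}^{\mathrm{rel}}-F(q^\circ))/\xi_T\le\bar p/\xi_T\). The only difference is that you carry out the separating-hyperplane argument explicitly in \(\mathbb R^{J_{\mathrm{tot}}+1}\), including ruling out \(\mu=0\), whereas the paper simply invokes~\cite[Sec.~5.2.3]{boyd2004convex} for that step.
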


The proof is provided in
\suppsecref{appendix:cckb_slater_consequence}.
The same conclusion applies to the proxy problem after replacing
\(\mathbf h\) with \(\mathbf h^{\mathrm{prx}}\), provided that the proxy
problem admits an optimal policy and Assumption~\ref{ass:cckb_slater} holds.
We denote the corresponding proxy dual-norm bound by \(\Lambda_T^{\mathrm{prx}}\).

\begin{remark}[Generality of the CCKB guarantee]
\label{rem:cckb_generality}
The finite-time guarantee in
Theorem~\ref{thm:general_cckb_regret_violation} is not limited to network
slicing and applies to other bounded contextual constrained decision-making
problems with the same structure.  Specifically, the theorem relies only on
the contextual objective and long-term-constraint structure in
\eqref{eq:relaxed_problem},
Assumptions~\ref{ass:independent_request_arrivals},
\ref{ass:cckb_kernel_regular}, and
\ref{ass:cckb_request_measurability}, and the
surrogate conditions in \eqref{eq:cckb_surrogate_direction} and
\eqref{eq:cckb_cumulative_surrogate_error}, rather than on
network-slicing-specific formulation details; the constraint-violation part
additionally uses Assumption~\ref{ass:cckb_slater}.
\end{remark}

\subsection{Proxy Instantiation}
\label{appendix:proxy_instantiation}
We now apply Theorem~\ref{thm:general_cckb_regret_violation} to the proxy
problem.  Let \(\mathcal E_{\mathrm{sur}}^{\mathrm{prx}}\) denote the event
on which (C1) and (C2) hold after replacing \(\mathbf h\) and
\(\bar{\mathbf h}_t\) with \(\mathbf h^{\mathrm{prx}}\) and
\(\bar{\mathbf h}^{\mathrm{prx}}_t\), respectively.  On this event, let
\(\mathcal W_f(T)\) and \(\boldsymbol{\mathcal W}_h(T)\) denote the
cumulative error bounds in \eqref{eq:cckb_cumulative_surrogate_error} after
these replacements.

\begin{corollary}[Finite-time bounds for proxy-based CCKB]
\label{cor:proxy_regret_violation}
Suppose that the proxy problem admits an optimal policy and that
Assumptions~\ref{ass:independent_request_arrivals},
\ref{ass:cckb_kernel_regular}, and~\ref{ass:cckb_request_measurability} hold.
Fix the failure probabilities as in
Theorem~\ref{thm:general_cckb_regret_violation}, choose \(\rho>0\), and set
\(V:=\sqrt{J_{\mathrm{tot}}T}/\rho\).

\emph{(i) Proxy regret:}
Suppose that the proxy versions of the one-sided bounds in
\eqref{eq:cckb_surrogate_direction} and the reward cumulative-error bound in
\eqref{eq:cckb_cumulative_surrogate_error} hold jointly with probability at
least \(1-\alpha_{\mathrm{sur}}\).  Then, with probability at least
\(1-\alpha_{\mathrm{ctx}}^f-\alpha_{\mathrm{ctx}}^h
-\alpha_{\mathrm{sur}}\),
\begin{equation}
\mathrm{Reg}^{\mathrm{prx}}(T)
\le B_{\mathrm{reg}}(T),
\label{eq:proxy_regret_bound_expanded_final}
\end{equation}
where \(B_{\mathrm{reg}}(T)\) is the bound in
\eqref{eq:general_cckb_regret_bound}, evaluated using the proxy reward
cumulative-error bound.

\emph{(ii) Proxy constraint violation:}
Suppose, in addition, that Assumption~\ref{ass:cckb_slater} holds for the
proxy problem.  For the fixed horizon \(T\), let \(\Lambda_T^{\mathrm{prx}}\) be a finite
bound on the norm of an optimal dual vector for that problem, choose
\(\rho\ge2\Lambda_T^{\mathrm{prx}}\), and suppose that
\(\Pr(\mathcal E_{\mathrm{sur}}^{\mathrm{prx}})
\ge1-\alpha_{\mathrm{sur}}\).  Then, with probability at least
\(1-\alpha_{\mathrm{ctx}}^f-\alpha_{\mathrm{ctx}}^h
-\alpha_{\mathrm{sur}}\),
\begin{equation}
\mathrm{Vio}^{\mathrm{prx}}_{d,j}(T)
\le B_{\mathrm{vio}}(T)
\label{eq:proxy_violation_bound_expanded_final}
\end{equation}
for every \(d\in\mathcal D\) and \(j\in[J_d]\), where
\(B_{\mathrm{vio}}(T)\) is the bound in
\eqref{eq:general_cckb_violation_bound}, evaluated using both proxy
cumulative-error bounds.
\end{corollary}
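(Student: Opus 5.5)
The plan is to obtain Corollary~\ref{cor:proxy_regret_violation} as a direct instantiation of Theorem~\ref{thm:general_cckb_regret_violation}, replacing the relaxed constraint vector \(\mathbf h\) everywhere by the proxy vector \(\mathbf h^{\mathrm{prx}}\). Remark~\ref{rem:cckb_generality} says that the theorem uses only a specific set of structural ingredients. These are the contextual objective/long-term-constraint form of \eqref{eq:relaxed_problem}, i.i.d.\ contexts (Assumption~\ref{ass:independent_request_arrivals}), bounded ranges, measurability, the surrogate conditions (C1)--(C2), and, for violations, Slater's condition. The first step is therefore to check these ingredients for the proxy problem \eqref{eq:proxy_relaxed_problem}:
\begin{itemize}
\item \emph{Same form.} The proxy problem has the same form as \eqref{eq:relaxed_problem} with \(h_{d,j}\) replaced by \(h^{\mathrm{prx}}_{d,j}\). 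The reward \(f\) and the policy class \(\mathcal Q\) are unchanged.
\item \emph{Ranges.} The range \(h^{\mathrm{prx}}_{d,j}\in[-1/T,1-1/T]\) was established right after \eqref{eq:method_mean_functions}. Hence \(h^{\mathrm{prx}}_{d,j}\in\mathcal F_{h_{d,j}}\), and clipping \(\hat h^{\mathrm{prx}}_{t,d,j}\) to this interval produces a surrogate in the same class that Algorithm~\ref{alg:component1} uses.
\item \emph{Measurability.} Write \(h^{\mathrm{prx}}_{d,j}(\cdot,\mathbf x)\) as the zero-extended \(m^{\Gamma}_{d,j}(\cdot,\mathbf x)\) divided by \(C^{\max}_{d,j}\), minus \(1/T\). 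Assumption~\ref{ass:cckb_request_measurability} makes the zero extension and the indicator measurable, so \(h^{\mathrm{prx}}_{d,j}(\cdot,\mathbf x)\) is measurable.
\end{itemize}

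Next, we would go through the theorem's proof structure with these substitutions. The regret argument has two parts:
\begin{itemize}
\item \emph{Deterministic primal--dual inequality.} This compares \(\sum_t f(z_t)\) against \(\sum_t v^f_{q^\star}(s_t)\) for an optimal proxy policy \(q^\star\) attaining \(\mathrm{OPT}^{\mathrm{prx}}\). It uses only (C1), the first inequality of (C2), the dual update with \(V=\sqrt{J_{\mathrm{tot}}T}/\rho\), and the cap \(\rho\).
\item \emph{Request-sequence concentration.} This uses Propositions~\ref{prop:context_concentration_fixed_q} and~\ref{prop:context_concentration_fixed_q_constraint}, applied to the bounded functions \(v^f_{q^\star}(s_t)\) and \(\mathbf v^{h^{\mathrm{prx}}}_{q^\star}(s_t)\). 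These are i.i.d.\ in \(t\) because \(q^\star\) is fixed.
\end{itemize}
Because these propositions require only boundedness and measurability, they apply verbatim. For (i), the benchmark \(T\,\mathrm{OPT}^{\mathrm{prx}}\) replaces \(T\,\mathrm{OPT}^{\mathrm{rel}}\), and feasibility of \(q^\star\) for \eqref{eq:proxy_relaxed_con} replaces relaxed feasibility in the step that bounds \(\langle\boldsymbol\phi_t,\mathbf v^{h}_{q^\star}\rangle\). A union bound over the surrogate event and the two concentration events then gives the probability \(1-\alpha^f_{\mathrm{ctx}}-\alpha^h_{\mathrm{ctx}}-\alpha_{\mathrm{sur}}\).

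For (ii), we would invoke Corollary~\ref{cor:cckb_slater_consequence} for the proxy problem, as stated after it. This gives an optimal dual \(\boldsymbol\phi^\star\) with \(\|\boldsymbol\phi^\star\|_1\le\Lambda^{\mathrm{prx}}_T\) together with strong duality. The condition \(\rho\ge2\Lambda^{\mathrm{prx}}_T\) then allows the standard argument that moves the dual comparator to \(\boldsymbol\phi^\star+\rho'\mathbf e_{d,j}\), with the extra weight \(\rho'\) placed on coordinate \((d,j)\). This extracts \([\sum_t h^{\mathrm{prx}}_{d,j}(z_t)]_+\) on the event \(\mathcal E^{\mathrm{prx}}_{\mathrm{sur}}\), which yields \(B_{\mathrm{vio}}(T)\) with the proxy \(\boldsymbol{\mathcal W}_h(T)\).

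The main obstacle is confirming that nothing in the theorem's proof silently uses the specific definition \(h_{d,j}=c_{d,j}/C^{\max}_{d,j}-1/T\) rather than a generic bounded function. Places to check are the RKHS membership in Assumption~\ref{ass:cckb_kernel_regular}, which is stated for \(h_{d,j}\), and the Slater consequence. For the RKHS issue, I would argue that the theorem's proof uses the RKHS assumption only through the hypothesized surrogate events (C1)--(C2), whose probability is assumed here directly, and never through the kernel itself. Everything else is a notational substitution.
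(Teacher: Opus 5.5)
\paragraph{Assessment.} Your overall route is the paper's route: instantiate Theorem~\ref{thm:general_cckb_regret_violation} with \(\mathbf h\) replaced by \(\mathbf h^{\mathrm{prx}}\). Your checks of the range, of measurability, and of the proxy Slater consequence are the right ones.

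\paragraph{The gap.} It lies in how you dispose of the obstacle you flagged. The theorem's proof does not use Assumption~\ref{ass:cckb_kernel_regular} only through the surrogate events (C1)--(C2). Part~(ii) also relies on Lemma~\ref{lem:cckb_requestwise_lagrangian_optimality} (request-wise Lagrangian optimality). This lemma is the contextual correction: summed over the realized requests and combined with complementary slackness, it gives \(\mathrm{Reg}^{\mathrm{rel}}_{\mathrm{seq}}(T)+\langle\boldsymbol\phi^\star,\sum_t\mathbf h(z_t)\rangle\ge\Delta_{\mathrm{ctx}}^h(q^\star,\{\boldsymbol\phi^\star\}_{t=1}^T,T)\). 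That lower bound is exactly what the coordinate-isolation step with \(\boldsymbol\phi^\star+\tfrac\rho2\mathbf e_{d,j}\) needs. The lemma in turn uses continuity of \(\mathbf x\mapsto f(s,\mathbf x)-\langle\boldsymbol\phi^\star,\mathbf h(s,\mathbf x)\rangle\) on the compact \(\mathcal X\) to obtain a measurable request-wise maximizer. This yields \(\sup_{q\in\mathcal Q}L(q,\boldsymbol\phi^\star)=\mathbb E_{s\sim\mathbb P}[\max_{\mathbf x}\ell_{\boldsymbol\phi^\star}(s,\mathbf x)]\), and hence that \(q^\star\) is supported on request-wise maximizers for \(\mathbb P\)-a.e.\ \(s\). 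In the relaxed case, this continuity comes precisely from the RKHS membership of \(h_{d,j}\) and the continuity of the kernels in their action arguments. The non-contextual CKB shortcut (plugging the empirical action distribution into the Lagrangian) is not available here, because the realized requests \(s_t\) are not distributed as \(\mathbb P\).

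\paragraph{The fix.} You need to re-verify continuity for the proxy, case by case in the fixed request \(s\).
\begin{itemize}
\item For fixed \(s\), the indicator \(\mathbf 1^{\Gamma}_{d,j}(s)\) is fixed.
\item On-path, \(h^{\mathrm{prx}}_{d,j}(s,\cdot)=m^{\Gamma}_{d,j}(s,\cdot)/C^{\max}_{d,j}-1/T\). This is continuous because \(m^{\Gamma}_{d,j}\) lies in the RKHS of a kernel continuous in its action arguments. Note this uses the \(m^{\Gamma}_{d,j}\) clause of Assumption~\ref{ass:cckb_kernel_regular}, not its \(h_{d,j}\) clause.
\item Off-path, \(h^{\mathrm{prx}}_{d,j}(s,\cdot)\) is the constant \(-1/T\).
\end{itemize}
Together with your measurability check and the compactness of \(\mathcal X\), the measurable-selection argument goes through, and the lemma holds for the proxy problem. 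This is exactly the remark the paper makes right after that lemma. With this step added, your instantiation is complete. Also include, in part~(ii), the extra concentration event for the constant sequence \(\boldsymbol\phi^\star\) with \(B=\Lambda_T^{\mathrm{prx}}\), obtained by splitting \(\alpha_{\mathrm{ctx}}^h\) in half.
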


\subsection{Finite-Time Transfer to the Relaxed \ac{nsrdp}}
\label{subsubsec:proxy_transfer_guarantees}
The per-round optimality gap is defined in
\eqref{eq:proxy_optimality_gap_definition}.
Lemma~\ref{lem:proxy_feasible_implies_relaxed_feasible} shows that the proxy
feasible set is contained in the relaxed feasible set.  Because the two
problems have the same objective,
\(\Delta_{\mathrm{prx}}(T)\ge0\).
The following corollary uses \(\Delta_{\mathrm{prx}}(T)\) to relate the
relaxed and proxy regret and violation metrics for the same realized
request--action sequence.

\begin{corollary}[Finite-time transfer to the relaxed \ac{nsrdp}]
\label{cor:relaxed_regret_violation}
\emph{(i) Regret:}
On the event in part~\emph{(i)} of
Corollary~\ref{cor:proxy_regret_violation},
\begin{align}
\mathrm{Reg}^{\mathrm{rel}}(T)
&=\mathrm{Reg}^{\mathrm{prx}}(T)+T\Delta_{\mathrm{prx}}(T)
\notag\\
&\le B_{\mathrm{reg}}(T)+T\Delta_{\mathrm{prx}}(T).
\label{eq:relaxed_regret_transfer}
\end{align}
Here, \(\mathrm{Reg}^{\mathrm{prx}}(T)\) is the cumulative mean-reward
difference from the optimal proxy policy, whereas
\(T\Delta_{\mathrm{prx}}(T)\) is the optimal-value difference caused by
using the proxy problem.

\emph{(ii) Constraint violation:}
On the event in part~\emph{(ii)} of
Corollary~\ref{cor:proxy_regret_violation},
\begin{equation}
\mathrm{Vio}^{\mathrm{rel}}_{d,j}(T)
\le\mathrm{Vio}^{\mathrm{prx}}_{d,j}(T)
\le B_{\mathrm{vio}}(T),
\label{eq:relaxed_violation_transfer}
\end{equation}
for every \(d\in\mathcal D\) and \(j\in[J_d]\).
\end{corollary}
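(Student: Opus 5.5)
The plan is to compare the two metrics term by term along the same realized sequence \((s_t,\mathbf x_t)_{t=1}^T\), and then invoke Corollary~\ref{cor:proxy_regret_violation}.

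\emph{Part (i).} The relaxed and proxy problems share the objective, and both regrets subtract the same cumulative mean reward \(\sum_{t=1}^T f(s_t,\mathbf x_t)\). Subtracting \eqref{eq:policy_proxy_regret} from \eqref{eq:policy_regret} therefore leaves only the benchmark difference:
\[
\mathrm{Reg}^{\mathrm{rel}}(T)-\mathrm{Reg}^{\mathrm{prx}}(T)=T(\mathrm{OPT}^{\mathrm{rel}}-\mathrm{OPT}^{\mathrm{prx}})=T\Delta_{\mathrm{prx}}(T),
\]
by \eqref{eq:proxy_optimality_gap_definition}. This identity is deterministic and holds for every realization. Both optima are finite because optimal policies are assumed to exist and \(f\in[0,\bar p]\). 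Nonnegativity of \(\Delta_{\mathrm{prx}}(T)\) is not needed for the identity; it follows from Lemma~\ref{lem:proxy_feasible_implies_relaxed_feasible}. On the event of Corollary~\ref{cor:proxy_regret_violation}(i) we have \(\mathrm{Reg}^{\mathrm{prx}}(T)\le B_{\mathrm{reg}}(T)\), and this gives \eqref{eq:relaxed_regret_transfer}.

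\emph{Part (ii).} We use the pointwise domination \eqref{eq:pointwise_domination_gorig_g}, \(h_{d,j}(s,\mathbf x)\le h^{\mathrm{prx}}_{d,j}(s,\mathbf x)\). It comes from \(c_{d,j}=p\,m^{\Gamma}_{d,j}\le m^{\Gamma}_{d,j}\) via \eqref{eq:method_consumption_factorization}, and that factorization also covers the zero-extended cases \(p=0\) and off-path resources. Summing over \(t\) gives \(\sum_t h_{d,j}(s_t,\mathbf x_t)\le\sum_t h^{\mathrm{prx}}_{d,j}(s_t,\mathbf x_t)\). Because \([\cdot]_+\) is nondecreasing, we obtain \(\mathrm{Vio}^{\mathrm{rel}}_{d,j}(T)\le\mathrm{Vio}^{\mathrm{prx}}_{d,j}(T)\) for every \((d,j)\) and every realization. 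On the event of Corollary~\ref{cor:proxy_regret_violation}(ii), the second bound \(\mathrm{Vio}^{\mathrm{prx}}_{d,j}(T)\le B_{\mathrm{vio}}(T)\) holds simultaneously for all \((d,j)\).

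The only step needing care is the domination inequality. We must make sure it holds at every input, including those where the success-conditioned expectation is undefined. The zero-extension convention in \eqref{eq:method_path_gated_mean} handles this: \(c_{d,j}=0\) there, so \(c_{d,j}\le m^{\Gamma}_{d,j}=0\) holds. Everything else is bookkeeping, with no new probabilistic event introduced.
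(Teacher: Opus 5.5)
Your proposal is correct and follows the paper's proof: part (i) is the deterministic identity obtained by subtracting \eqref{eq:policy_proxy_regret} from \eqref{eq:policy_regret} and using \eqref{eq:proxy_optimality_gap_definition}, followed by the bound in part~\emph{(i)} of Corollary~\ref{cor:proxy_regret_violation}. Part (ii) sums the pointwise domination \eqref{eq:pointwise_domination_gorig_g} over the realized sequence, applies the nondecreasing map $[\,\cdot\,]_+$, and invokes part~\emph{(ii)} of Corollary~\ref{cor:proxy_regret_violation}, exactly as you do.
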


\begin{proof}
The regret identity follows directly from
\eqref{eq:policy_regret}, \eqref{eq:policy_proxy_regret}, and
\eqref{eq:proxy_optimality_gap_definition}.  For the violation bound,
\eqref{eq:pointwise_domination_gorig_g} gives
\[
\sum_{t=1}^T h_{d,j}(s_t,\mathbf x_t)
\le
\sum_{t=1}^T h^{\mathrm{prx}}_{d,j}(s_t,\mathbf x_t).
\]
Applying the nondecreasing function \([\,\cdot\,]_+\) and then
the corresponding part of
Corollary~\ref{cor:proxy_regret_violation} proves the claim.
\end{proof}

\subsection{Explicit Surrogate Errors and Performance Bounds}
\label{appendix:explicit_proxy_bounds}
Corollary~\ref{cor:explicit_finite_time_proxy_performance}, the final result
of this subsection, gives concrete finite-time bounds on the proxy regret,
its transfer to \(\mathrm{Reg}^{\mathrm{rel}}(T)\), and
\(\mathrm{Vio}^{\mathrm{rel}}_{d,j}(T)\) for proxy-based \ac{cckb} under
the \ac{gp} assumptions in Sec.~\ref{subsec:theory}.  To establish this result, the following
proposition
shows that the reward and proxy-constraint surrogates satisfy the conditions in
\eqref{eq:cckb_surrogate_direction} and
\eqref{eq:cckb_cumulative_surrogate_error} with high probability and gives
explicit finite-time bounds on the cumulative surrogate errors.

\begin{proposition}[Concrete bounds on cumulative surrogate errors]
\label{prop:proxy_surrogate_error_control}
Fix a horizon \(T\) and failure probabilities
\(\alpha_f,\alpha_h\in(0,1)\) whose sum is less than one.
Suppose that Assumptions~\ref{ass:inter_round_stationarity},
\ref{ass:cckb_kernel_regular},~\ref{ass:reward_gp_regular},
and~\ref{ass:success_only_gp_regular} hold.
Define
\begin{equation}
\label{eq:reward_surrogate_error_scale}
\Phi_f(T;\alpha_f)
:=\sqrt{T\gamma_T^f}\cdot\Bigl(B^f+\widetilde\sigma^f
\sqrt{\gamma_T^f+1+\log\tfrac{1}{\alpha_f}}\Bigr).
\end{equation}
If Algorithm~\ref{alg:component1} uses the exploration widths in
\eqref{eq:concrete_reward_exploration_width} and
\eqref{eq:concrete_proxy_demand_exploration_width}, then, with probability
at least \(1-\alpha_f-\alpha_h\), the proxy versions of the one-sided bounds
in \eqref{eq:cckb_surrogate_direction} hold and
\begin{equation*}
\mathcal W_f(T)
=\mathcal O\!\left(\Phi_f(T;\alpha_f)\right).
\end{equation*}

Suppose, in addition, that Assumption~\ref{ass:valuable_decomposition} holds
and \(M_T\ge\rho\).  Fix \(\alpha_w\in(0,1)\) such that
\(\alpha_f+\alpha_h+\alpha_w<1\).  For each resource \((d,j)\), define
\begin{equation}
\begin{aligned}
&\Psi_{d,j}(T;\alpha_f,\alpha_h,\alpha_w)\\
&:=\frac{1}{a_T}\Bigg[
\Phi_f(T;\alpha_f)\\
&\quad+
\frac{\bar p}{C_{d,j}^{\max}}
\left(
\sqrt{
\frac{T\gamma_T^{m_{d,j}^{\Gamma}}}
{\log\!\left(1+\eta_{m_{d,j}^{\Gamma}}^{-1}\right)}}
+\sqrt{T\log\frac{J_{\mathrm{tot}}}{\alpha_w}}
\right)\\
&\qquad\times
\left(
B^{m_{d,j}^{\Gamma}}
+\widetilde\sigma_{d,j}
\sqrt{\gamma_T^{m_{d,j}^{\Gamma}}+1
+\log\frac{J_{\mathrm{tot}}}{\alpha_h}}
\right)
\Bigg].
\end{aligned}
\label{eq:proxy_constraint_surrogate_error_scale}
\end{equation}
Let
\(
\boldsymbol{\Psi}_h(T;\alpha_f,\alpha_h,\alpha_w)
:=
(\Psi_{d,j}(T;\alpha_f,\alpha_h,\alpha_w))_
{d\in\mathcal D,\;j\in[J_d]}.
\)
Then, with probability at least \(1-\alpha_f-\alpha_h-\alpha_w\),
\(\mathcal E_{\mathrm{sur}}^{\mathrm{prx}}\) holds, the preceding bound on
\(\mathcal W_f(T)\) remains valid, and
\begin{equation*}
\bigl[\boldsymbol{\mathcal W}_h(T)\bigr]_{d,j}
=\mathcal O\!\left(
\Psi_{d,j}(T;\alpha_f,\alpha_h,\alpha_w)\right)
\end{equation*}
for every resource \((d,j)\).
\end{proposition}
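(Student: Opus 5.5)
The plan is to verify the two parts of the proposition in turn. First I establish uniform GP confidence bounds, which give (C1) and, after a standard potential argument, the bound on \(\mathcal W_f(T)\). Then I handle the constraint side, where the real difficulty is that the demand GP is updated only on successful on-path rounds.

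\textbf{Step 1 (confidence events and (C1)).} For the reward, I apply the self-normalized RKHS confidence bound of the GP-UCB instantiation of \ac{ckb}~\cite[Corollary~1]{zhou2022kernelized_constraints}. The inputs are Assumption~\ref{ass:cckb_kernel_regular} (\(\|f\|\le B^f\)), Assumption~\ref{ass:reward_gp_regular}, and the width \eqref{eq:concrete_reward_exploration_width}. This gives \(|f(z)-\mu^f_{t-1}(z)|\le\beta_t^f\sigma^f_{t-1}(z)\) for all \(t,z\) with probability \(1-\alpha_f\).

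For each resource \((d,j)\), I apply the same bound to the subsequence of rounds with \(Y_\tau=1\) and \(\mathbf 1^{\Gamma}_{d,j}(s_\tau)=1\). Membership of a round in this subsequence is determined before its observation enters the dataset, so the retained rounds, indexed by the sample count \(N^{m_{d,j}^\Gamma}_{t-1}\), still form a filtration. Assumption~\ref{ass:success_only_gp_regular} supplies conditional sub-Gaussianity on this filtration, and the domain restriction in the definition of \(\gamma^{m_{d,j}^\Gamma}\) matches. With failure probability \(\alpha_h/J_{\mathrm{tot}}\) per resource and a union bound, all demand LCBs are valid simultaneously with probability \(1-\alpha_h\).

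On these events, (C1) follows in three cases:
\begin{itemize}
\item Reward: \(\hat f_t\ge f\) and \(f\in[0,\bar p]\), so clipping to \([0,\bar p]\) preserves \(\bar f_t\ge f\).
\item On-path constraints: \(\widehat m^\Gamma_{t,d,j}\le m^\Gamma_{d,j}\) gives \(\hat h^{\mathrm{prx}}_{t,d,j}\le h^{\mathrm{prx}}_{d,j}\). Since \(h^{\mathrm{prx}}_{d,j}\ge-1/T\), clipping to \([-1/T,1-1/T]\) preserves the inequality.
\item Off-path constraints: both sides equal \(-1/T\) by \eqref{eq:method_path_gated_mean} and \eqref{eq:constraint_gp_posterior}.
\end{itemize}

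\textbf{Step 2 (\(\mathcal W_f\)).} On the confidence event, clipping is 1-Lipschitz and \(\beta_t^f\) is nondecreasing in \(t\), so
\[
\bar f_t(z_t)-f(z_t)\le 2\beta^f_T\,\sigma^f_{t-1}(z_t).
\]
Cauchy--Schwarz and the elliptical-potential bound \(\sum_t(\sigma^f_{t-1}(z_t))^2\le C\gamma_T^f\), valid because \(k_f\le1\), then give \(\mathcal W_f(T)=\mathcal O(\Phi_f(T;\alpha_f))\).

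\textbf{Step 3 (\(\boldsymbol{\mathcal W}_h\); main obstacle).} Similarly,
\[
h^{\mathrm{prx}}_{d,j}(z_t)-\bar h^{\mathrm{prx}}_{t,d,j}(z_t)\le 2\beta^{m}_T\,\sigma^{m}_{t-1}(z_t)\,\mathbf 1^\Gamma_{d,j}(s_t)/C^{\max}_{d,j}.
\]
The difficulty is that \(\sigma^m\) shrinks only on rounds with \(Y_t=1\). On-path rejected rounds could therefore accumulate uncontrolled width, and the potential bound applies only to \(\sum_t \mathbf 1^\Gamma Y_t\sigma^m_{t-1}(z_t)\), which is at most \(\sqrt{N\gamma^m/\log(1+\eta^{-1})}\). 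To handle this, I decompose
\[
\mathbf 1^\Gamma\sigma^m=\mathbf 1^\Gamma Y_t\sigma^m+\mathbf 1^\Gamma(p(z_t)-Y_t)\sigma^m+\mathbf 1^\Gamma(1-p(z_t))\sigma^m.
\]
The martingale term is bounded by Azuma--Hoeffding (\(\sigma^m\le1\), predictable given \((\mathcal H_{t-1},z_t)\)). It contributes \(\sqrt{T\log(J_{\mathrm{tot}}/\alpha_w)}\) per resource after a union bound with total budget \(\alpha_w\).

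The third term is not small in general. So instead of bounding it separately, I bound the whole sum through a lower bound on \(p(z_t)\), which is where Assumption~\ref{ass:valuable_decomposition} and \(M_T\ge\rho\) enter.

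\emph{Lower-bounding the selected reward.} Let \(\mathbf x^\circ\) be the valuable decomposition for \(s_t\). Optimality of \(\mathbf x_t\) and (C1) at \(\mathbf x^\circ\) give
\[
\mathcal A_t(\mathbf x_t\mid s_t)\ge \bar f_t(s_t,\mathbf x^\circ)-\langle\boldsymbol\phi_t,\bar{\mathbf h}^{\mathrm{prx}}_t(s_t,\mathbf x^\circ)\rangle\ge f(s_t,\mathbf x^\circ)-\langle\boldsymbol\phi_t,\mathbf h^{\mathrm{prx}}(s_t,\mathbf x^\circ)\rangle .
\]
Since \(0\le\phi_{t,d,j}\le\rho\le M_T\), this is at least \(f-M_T\sum m^\Gamma/C^{\max}\ge a_T\). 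On the other side, \(\bar h\ge-1/T\) gives \(\mathcal A_t(\mathbf x_t\mid s_t)\le\bar f_t(z_t)+\rho J_{\mathrm{tot}}/T\). Hence \(\bar f_t(z_t)\ge a_T-\rho J_{\mathrm{tot}}/T\); the \(\rho J_{\mathrm{tot}}/T\) term is lower order and will be absorbed.

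\emph{Converting to a success-probability bound.} Because \(f(z_t)=\kappa_{\mathrm{price}}(s_t)p(z_t)\le\bar p\,p(z_t)\), I get \(a_T\le\bar p\,p(z_t)+(\bar f_t-f)(z_t)\). Multiplying by \(\sigma^m_{t-1}(z_t)\mathbf 1^\Gamma\le1\) and summing,
\[
a_T\sum_t\mathbf 1^\Gamma\sigma^m\le \bar p\sum_t\mathbf 1^\Gamma p(z_t)\sigma^m+\mathcal W_f(T).
\]
The first term on the right is the potential bound plus the Azuma term from the decomposition above. Multiplying by \(2\beta^m_T/C^{\max}_{d,j}\) gives exactly the structure of \(\Psi_{d,j}\) up to constants; the \(\Phi_f\) contribution in \(\Psi_{d,j}\) comes from the \(\mathcal W_f(T)\) term.

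A final union bound over the reward, demand, and Azuma events gives probability \(1-\alpha_f-\alpha_h-\alpha_w\). I expect Step 3 to be the main obstacle. The lower bound on \(p(z_t)\) requires \(\|\boldsymbol\phi_t\|_\infty\le M_T\), which holds via the projection onto \([0,\rho]^{J_{\mathrm{tot}}}\) together with \(M_T\ge\rho\).
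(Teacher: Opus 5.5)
Your Steps 1 and 2 coincide with the paper's proof. Step 3 also uses the same two ingredients: the reward-UCB floor from Assumption~\ref{ass:valuable_decomposition} with $M_T\ge\rho$, and the success-weighted width bound, which is the paper's Lemma~\ref{lem:selective_width}. The gap is in how you combine them.

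You multiply the floor $a_T\le\bar p\,p(z_t)+(\bar f_t-f)(z_t)$ by $\omega_{t,d,j}=\mathbf 1^{\Gamma}_{d,j}(s_t)\sigma^{m_{d,j}^{\Gamma}}_{t-1}(z_t)$ and bound $\sum_t\omega_{t,d,j}$. Only afterwards do you multiply by $2\beta_T^{m_{d,j}^{\Gamma}}/C_{d,j}^{\max}$. This puts that factor on the reward-error term as well:
\[
\bigl[\boldsymbol{\mathcal W}_h(T)\bigr]_{d,j}\lesssim\frac{2\beta_T^{m_{d,j}^{\Gamma}}}{a_T\,C_{d,j}^{\max}}\Bigl(\bar p\sum_{t=1}^T p(z_t)\,\omega_{t,d,j}+\mathcal W_f(T)\Bigr).
\]
The second piece is of order $\beta_T^{m_{d,j}^{\Gamma}}\Phi_f(T;\alpha_f)/(a_T C_{d,j}^{\max})$, roughly $\sqrt T\,\gamma_T^f\sqrt{\gamma_T^{m_{d,j}^{\Gamma}}}/(a_T C_{d,j}^{\max})$ up to logarithms. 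That is a product of the reward and demand scales. By contrast, $\Psi_{d,j}$ contains $\Phi_f/a_T$ additively, with no factor $\beta_T^{m_{d,j}^{\Gamma}}/C_{d,j}^{\max}$. So your claim that this gives ``exactly the structure of $\Psi_{d,j}$ up to constants'' is false. The loss is not cosmetic: with information gains $\gamma_T\asymp T^{c}$, your bound is sublinear only for $c<1/3$, instead of $c<1/2$.

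The missing idea is to control the constraint error on low-success rounds by its range, not by the posterior width. Both $h^{\mathrm{prx}}_{d,j}$ and $\bar h^{\mathrm{prx}}_{t,d,j}$ lie in $[-1/T,1-1/T]$, so on your confidence event $e_t:=h^{\mathrm{prx}}_{d,j}(z_t)-\bar h^{\mathrm{prx}}_{t,d,j}(z_t)\in[0,1]$. Multiply the floor by $e_t$ instead. Use the width bound only in the first product and $e_t\le1$ in the second:
\[
a_T e_t\le\frac{2\bar p\,\beta_T^{m_{d,j}^{\Gamma}}}{C_{d,j}^{\max}}\,p(z_t)\,\omega_{t,d,j}+\bigl(\bar f_t(z_t)-f(z_t)\bigr).
\]
Summing over $t$ and inserting your Azuma/potential bound gives exactly $\mathcal O(\Psi_{d,j})$.

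This is a smoothed form of the paper's argument, which partitions the rounds in two:
\begin{itemize}
\item On $\{p(z_t)\ge a_T/(2\bar p)\}$, the inequality $1\le 2\bar p\,p(z_t)/a_T$ lets the paper use the width bound.
\item On $\{p(z_t)<a_T/(2\bar p)\}$, we have $f(z_t)<a_T/2$, which forces $\bar f_t(z_t)-f(z_t)\ge a_T/2$ and hence $e_t\le1\le\frac{2}{a_T}(\bar f_t-f)(z_t)$.
\end{itemize}

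A smaller point: your floor $\bar f_t(z_t)\ge a_T-\rho J_{\mathrm{tot}}/T$ leaves an additive $\rho J_{\mathrm{tot}}/a_T$ that is not in $\Psi_{d,j}$. The term $\|\boldsymbol\phi_t\|_1/T$ appears on both sides, including inside $-\langle\boldsymbol\phi_t,\mathbf h^{\mathrm{prx}}(s_t,\mathbf x^\circ)\rangle$. If you keep it on both sides it cancels, giving $\bar f_t(z_t)\ge a_T$ exactly. The paper achieves this through the shifted surrogate $\bar g^{\mathrm{prx}}_{t,d,j}=\bar h^{\mathrm{prx}}_{t,d,j}+1/T$.
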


The proof is provided in
\suppsecref{appendix:proxy_surrogate_control_proof}.

The following proposition bounds the optimality gap introduced by
conservative proxying.
\begin{proposition}[Proxy optimality gap]
\label{prop:proxy_optimality_gap}
Suppose that Assumptions~\ref{ass:cckb_kernel_regular},
\ref{ass:cckb_request_measurability}, and~\ref{ass:valuable_decomposition}
hold, the relaxed \ac{nsrdp} admits an optimal policy, the proxy problem
admits an optimal policy, and Assumption~\ref{ass:cckb_slater} holds for
\(\mathbf h^{\mathrm{prx}}\).  Under these conditions, the proxy problem and
the relaxed \ac{nsrdp} admit optimal dual vectors.  Let
\(\boldsymbol\phi_{\mathrm{prx}}^\star\) and
\(\boldsymbol\phi_{\mathrm{rel}}^\star\) denote such vectors, and, for the fixed
horizon \(T\), let finite bounds \(\Lambda_T^{\mathrm{prx}}\) and
\(\Lambda_T^{\mathrm{rel}}\) satisfy
\(
\|\boldsymbol\phi_{\mathrm{prx}}^\star\|_1\le\Lambda_T^{\mathrm{prx}},
\|\boldsymbol\phi_{\mathrm{rel}}^\star\|_1\le\Lambda_T^{\mathrm{rel}}.
\)
If \(\Lambda_T^{\mathrm{rel}}\le M_T\), then
\begin{equation}
0\le\Delta_{\mathrm{prx}}(T)
\le
\frac{1}{T}
\sum_{d\in\mathcal D}\sum_{j\in[J_d]}
\phi_{\mathrm{prx},d,j}^{\star}
\left(\frac{2\bar p}{a_T}-1\right).
\label{eq:proxy_optimality_gap}
\end{equation}
Consequently,
\begin{equation}
\Delta_{\mathrm{prx}}(T)
\le
\frac{\Lambda_T^{\mathrm{prx}}}{T}
\left(\frac{2\bar p}{a_T}-1\right).
\label{eq:proxy_optimality_gap_norm_bound}
\end{equation}
\end{proposition}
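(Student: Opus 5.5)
The lower bound $\Delta_{\mathrm{prx}}(T)\ge0$ comes directly from Lemma~\ref{lem:proxy_feasible_implies_relaxed_feasible}. The two problems share the objective, and the proxy feasible set is contained in the relaxed one, so $\mathrm{OPT}^{\mathrm{prx}}\le\mathrm{OPT}^{\mathrm{rel}}$. The existence of the dual vectors $\boldsymbol\phi^\star_{\mathrm{prx}}$ and $\boldsymbol\phi^\star_{\mathrm{rel}}$ and of the finite bounds $\Lambda_T^{\mathrm{prx}}$ and $\Lambda_T^{\mathrm{rel}}$ is Corollary~\ref{cor:cckb_slater_consequence}, applied to each problem. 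For the relaxed problem I will check that Slater holds: the proxy Slater policy $q^\circ$ also works there by \eqref{eq:pointwise_domination_gorig_g}.

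For the upper bound I would use proxy strong duality. It gives
\[
\mathrm{OPT}^{\mathrm{prx}}=\sup_{q\in\mathcal Q}\mathbb E_{s\sim\mathbb P}\bigl[v^f_q(s)-\langle\boldsymbol\phi^\star_{\mathrm{prx}},\mathbf v^{h^{\mathrm{prx}}}_q(s)\rangle\bigr].
\]
I then evaluate the right-hand side at an optimal relaxed policy $q^\star_{\mathrm{rel}}$. This yields
\[
\mathrm{OPT}^{\mathrm{prx}}\ge\mathrm{OPT}^{\mathrm{rel}}-\sum_{d,j}\phi^\star_{\mathrm{prx},d,j}\,\mathbb E_s\mathbb E_{\mathbf x\sim q^\star_{\mathrm{rel}}}\bigl[h^{\mathrm{prx}}_{d,j}(s,\mathbf x)\bigr].
\]
The whole claim therefore reduces to the per-resource estimate
\[
\mathbb E\bigl[m^\Gamma_{d,j}/C^{\max}_{d,j}\bigr]\le \frac{2\bar p}{a_T T}\quad\text{under }q^\star_{\mathrm{rel}}.
\]
Since $h^{\mathrm{prx}}_{d,j}=m^\Gamma_{d,j}/C^{\max}_{d,j}-1/T$, this estimate gives exactly the factor $\frac1T\bigl(\frac{2\bar p}{a_T}-1\bigr)$. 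The norm form \eqref{eq:proxy_optimality_gap_norm_bound} then follows by bounding $\sum_{d,j}\phi^\star_{\mathrm{prx},d,j}\le\Lambda_T^{\mathrm{prx}}$.

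To prove the estimate, I would lower-bound the success probability on the support of a suitably chosen relaxed optimizer. Relaxed strong duality shows that $\mathbb P$-a.e. the relaxed optimal policy can be taken supported on (near-)maximizers of the Lagrangian
\[
\mathbf x\mapsto f(s,\mathbf x)-\sum_{d,j}\phi^\star_{\mathrm{rel},d,j}\,c_{d,j}(s,\mathbf x)/C^{\max}_{d,j}.
\]
By the factorization \eqref{eq:method_consumption_factorization}, $c_{d,j}\le m^\Gamma_{d,j}$. Combining this with $\|\boldsymbol\phi^\star_{\mathrm{rel}}\|_1\le\Lambda_T^{\mathrm{rel}}\le M_T$ and Assumption~\ref{ass:valuable_decomposition}, the maximal Lagrangian value at $s$ is at least $a_T$. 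Any $\mathbf x$ within $a_T/2$ of the maximum therefore satisfies $f(s,\mathbf x)\ge a_T/2$, since the penalty term is nonnegative. Because $f=\kappa_{\mathrm{price}}\,p\le\bar p\,p$, this forces $p(s,\mathbf x)\ge a_T/(2\bar p)$. On the same support, \eqref{eq:method_consumption_factorization} gives
\[
m^\Gamma_{d,j}=c_{d,j}/p\le(2\bar p/a_T)\,c_{d,j}.
\]
Taking expectations and using relaxed feasibility $\mathbb E[c_{d,j}]\le C^{\max}_{d,j}/T$ completes the estimate. I would use the $a_T/2$-slack version rather than exact maximizers so that I can select a measurable near-maximizer via Assumption~\ref{ass:cckb_request_measurability} and continuity in $\mathbf x$; this slack is the source of the factor $2$.

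The main obstacle is this last step. I must justify that a policy supported on Lagrangian $a_T/2$-maximizers is still relaxed-optimal and relaxed-feasible, so that $\mathrm{OPT}^{\mathrm{rel}}$ is attained by it, or at least approximated within a vanishing error. My first attempt is the saddle-point characterization. Any optimal $q$ achieves the dual value, so the set where $q^\star_{\mathrm{rel}}$ places mass on Lagrangian values more than $a_T/2$ below the supremum must have zero $\mathbb P\otimes q$-measure; this uses complementary slackness together with the fact that the dual value equals the integral of the pointwise supremum. If that fails because of attainment or measurability issues, I would instead work with an $\varepsilon$-optimal feasible $q$ and let $\varepsilon\to0$. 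In that version, the mass that $q$ places outside the near-maximizer set is controlled by $\varepsilon/(a_T/2)$, and that contribution is absorbed in the limit.
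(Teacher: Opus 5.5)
Your proposal is correct and follows the paper's proof essentially step for step: proxy strong duality evaluated at $q^\star_{\mathrm{rel}}$, a support restriction on $q^\star_{\mathrm{rel}}$ from the shifted relaxed Lagrangian together with Assumption~\ref{ass:valuable_decomposition}, the factorization $m^\Gamma_{d,j}=c_{d,j}/p$, and relaxed feasibility. The one difference is that the paper adopts your ``first attempt'' (the saddle-point argument, stated as Lemma~\ref{lem:cckb_requestwise_lagrangian_optimality} with an exact measurable maximizer obtained from compactness of $\mathcal X$). That argument shows $q^\star_{\mathrm{rel}}$ is supported on exact Lagrangian maximizers, which gives $p\ge a_T/\bar p$, so the factor $2$ is only slack and is not forced by near-maximizers.
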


The proof is provided in
\suppsecref{appendix:proxy_optimality_gap_proof}.

\begin{corollary}[Concrete finite-time bounds for proxy-based CCKB]
\label{cor:explicit_finite_time_proxy_performance}
Fix \(\alpha_f,\alpha_h,\alpha_{\mathrm{ctx}}^f,
\alpha_{\mathrm{ctx}}^h\in(0,1)\) such that
\(\alpha_{\mathrm{ctx}}^f+\alpha_{\mathrm{ctx}}^h
+\alpha_f+\alpha_h<1\).
Suppose that Assumptions~\ref{ass:inter_round_stationarity},
\ref{ass:independent_request_arrivals},~\ref{ass:cckb_kernel_regular},
\ref{ass:cckb_request_measurability},~\ref{ass:reward_gp_regular},
and~\ref{ass:success_only_gp_regular} hold and that the proxy problem
admits an optimal policy.  Choose \(\rho>0\), set
\(V:=\sqrt{J_{\mathrm{tot}}T}/\rho\), and use the exploration widths in
\eqref{eq:concrete_reward_exploration_width} and
\eqref{eq:concrete_proxy_demand_exploration_width}.

\emph{(i) Proxy regret:}
With probability at least
\(1-\alpha_{\mathrm{ctx}}^f-\alpha_{\mathrm{ctx}}^h
-\alpha_f-\alpha_h\),
\begin{align}
\mathrm{Reg}^{\mathrm{prx}}(T)
&\le B_{\mathrm{prx}}(T)\notag\\
&:=\mathcal O\left(
\begin{aligned}[c]
&\bar p\sqrt{T\log\tfrac{1}{\alpha_{\mathrm{ctx}}^f}}
+\rho J_{\mathrm{tot}}\sqrt{T\log\tfrac{1}{\alpha_{\mathrm{ctx}}^h}}\\
&{}+\Phi_f(T;\alpha_f)+\rho\sqrt{J_{\mathrm{tot}}T}
\end{aligned}
\right).
\label{eq:explicit_gp_proxy_regret_bound}
\end{align}

\emph{(ii) Relaxed \ac{nsrdp} regret:}
If the relaxed \ac{nsrdp} also admits an optimal policy, then, on the same
event,
\begin{align}
\mathrm{Reg}^{\mathrm{rel}}(T)
&=\mathrm{Reg}^{\mathrm{prx}}(T)+T\Delta_{\mathrm{prx}}(T)\notag\\
&\le B_{\mathrm{prx}}(T)+T\Delta_{\mathrm{prx}}(T).
\label{eq:explicit_gp_relaxed_regret_transfer}
\end{align}
If, in addition, the conditions of
Proposition~\ref{prop:proxy_optimality_gap} hold, then
\begin{equation}
\mathrm{Reg}^{\mathrm{rel}}(T)
\le B_{\mathrm{prx}}(T)
+\Lambda_T^{\mathrm{prx}}\left(\frac{2\bar p}{a_T}-1\right).
\label{eq:explicit_gp_relaxed_regret_bound}
\end{equation}

\emph{(iii) Constraint violation:}
Suppose, in addition, that
Assumption~\ref{ass:valuable_decomposition} holds with \(M_T\ge\rho\), that
Assumption~\ref{ass:cckb_slater} holds for the proxy constraints, and that,
for the fixed horizon \(T\), let
\(\boldsymbol\phi_{\mathrm{prx}}^\star\) be an optimal proxy dual vector and
\(\Lambda_T^{\mathrm{prx}}\) a finite bound satisfying
\(\|\boldsymbol\phi_{\mathrm{prx}}^\star\|_1\le\Lambda_T^{\mathrm{prx}}<\infty\).
Choose \(\rho\ge2\Lambda_T^{\mathrm{prx}}\), and fix
\(\alpha_w\in(0,1)\) such that
\(\alpha_{\mathrm{ctx}}^f+\alpha_{\mathrm{ctx}}^h
+\alpha_f+\alpha_h+\alpha_w<1\).  Then, with probability at least
\(1-\alpha_{\mathrm{ctx}}^f-\alpha_{\mathrm{ctx}}^h
-\alpha_f-\alpha_h-\alpha_w\),
\begin{equation}
\mathrm{Vio}^{\mathrm{rel}}_{d,j}(T)
\le\mathcal O\left(
\begin{aligned}[c]
&\frac{\Phi_f(T;\alpha_f)}{\rho}
+J_{\mathrm{tot}}\sqrt{T\log\tfrac{1}{\alpha_{\mathrm{ctx}}^h}}\\
&{}+\left\|\boldsymbol{\Psi}_h
(T;\alpha_f,\alpha_h,\alpha_w)\right\|_1
+\sqrt{J_{\mathrm{tot}}T}
\end{aligned}
\right)
\label{eq:explicit_gp_relaxed_violation_bound}
\end{equation}
for every resource \((d,j)\).
\end{corollary}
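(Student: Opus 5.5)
This corollary is an assembly of the earlier results: we plug the explicit surrogate-error bounds of Proposition~\ref{prop:proxy_surrogate_error_control} into the abstract guarantees of Corollary~\ref{cor:proxy_regret_violation}. We then transfer those guarantees through Corollary~\ref{cor:relaxed_regret_violation}, and bound the proxy gap with Proposition~\ref{prop:proxy_optimality_gap}.

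\textbf{Part (i).} First we check the hypotheses of Corollary~\ref{cor:proxy_regret_violation}(i), with surrogate budget \(\alpha_{\mathrm{sur}}:=\alpha_f+\alpha_h\). The first half of Proposition~\ref{prop:proxy_surrogate_error_control} applies under Assumptions~\ref{ass:inter_round_stationarity}, \ref{ass:cckb_kernel_regular}, \ref{ass:reward_gp_regular}, \ref{ass:success_only_gp_regular} and the exploration widths \eqref{eq:concrete_reward_exploration_width}--\eqref{eq:concrete_proxy_demand_exploration_width}. It shows that, with probability at least \(1-\alpha_f-\alpha_h\), the proxy one-sided bounds (C1) hold together with the reward part of (C2), where \(\mathcal W_f(T)=\mathcal O(\Phi_f(T;\alpha_f))\). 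Hence the required event has probability at least \(1-\alpha_{\mathrm{sur}}\), and \(\alpha_{\mathrm{ctx}}^f+\alpha_{\mathrm{ctx}}^h+\alpha_{\mathrm{sur}}<1\) by hypothesis. Substituting \(\mathcal W_f(T)\) into \(B_{\mathrm{reg}}(T)\) from \eqref{eq:general_cckb_regret_bound} gives \eqref{eq:explicit_gp_proxy_regret_bound}, and the union bound gives the stated confidence level.

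\textbf{Part (ii).} Equation \eqref{eq:explicit_gp_relaxed_regret_transfer} is Corollary~\ref{cor:relaxed_regret_violation}(i) evaluated on the same event. The identity there is deterministic once both optima exist, since the two regrets differ only in the benchmark constant. Under the conditions of Proposition~\ref{prop:proxy_optimality_gap}, \eqref{eq:proxy_optimality_gap_norm_bound} gives \(T\Delta_{\mathrm{prx}}(T)\le\Lambda_T^{\mathrm{prx}}(2\bar p/a_T-1)\). Substituting this into the transfer inequality yields \eqref{eq:explicit_gp_relaxed_regret_bound}.

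\textbf{Part (iii).} Now set \(\alpha_{\mathrm{sur}}:=\alpha_f+\alpha_h+\alpha_w\). The second half of Proposition~\ref{prop:proxy_surrogate_error_control} applies because Assumption~\ref{ass:valuable_decomposition} holds with \(M_T\ge\rho\). It gives \(\Pr(\mathcal E^{\mathrm{prx}}_{\mathrm{sur}})\ge1-\alpha_{\mathrm{sur}}\) with \([\boldsymbol{\mathcal W}_h(T)]_{d,j}=\mathcal O(\Psi_{d,j})\), and \(\mathcal W_f\) as before. With the proxy Slater condition and \(\rho\ge2\Lambda_T^{\mathrm{prx}}\), Corollary~\ref{cor:proxy_regret_violation}(ii) bounds \(\mathrm{Vio}^{\mathrm{prx}}_{d,j}(T)\) by \(B_{\mathrm{vio}}(T)\). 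We expand \(B_{\mathrm{vio}}(T)\) from \eqref{eq:general_cckb_violation_bound} and use \(\|\boldsymbol{\mathcal W}_h\|_1=\mathcal O(\|\boldsymbol\Psi_h\|_1)\). Corollary~\ref{cor:relaxed_regret_violation}(ii) then transfers the bound to \(\mathrm{Vio}^{\mathrm{rel}}_{d,j}\) simultaneously for all \((d,j)\).

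The only delicate point is bookkeeping. The surrogate event must be the one produced by Proposition~\ref{prop:proxy_surrogate_error_control} (and include \(\alpha_w\) in part (iii)), the \(\mathcal O(\cdot)\) constants must be uniform in \(T\), and the concentration and surrogate failure events must be union-bounded only once, which the theorem already does internally. Everything else is direct substitution.
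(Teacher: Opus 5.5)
Your proposal is correct and follows the paper's proof essentially step by step. You plug the first and second halves of Proposition~\ref{prop:proxy_surrogate_error_control} into parts (i) and (ii) of Corollary~\ref{cor:proxy_regret_violation}, transfer to the relaxed problem via Corollary~\ref{cor:relaxed_regret_violation}, and bound \(T\Delta_{\mathrm{prx}}(T)\) with Proposition~\ref{prop:proxy_optimality_gap}, with the surrogate budgets \(\alpha_f+\alpha_h\) and \(\alpha_f+\alpha_h+\alpha_w\) yielding the stated confidence levels.
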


For fixed \(\rho\), if \(\Phi_f(T;\alpha_f)=o(T)\), the proxy-regret bound in
\eqref{eq:explicit_gp_proxy_regret_bound} is sublinear in \(T\).  The relaxed
regret in \eqref{eq:explicit_gp_relaxed_regret_transfer} retains the separate
term \(T\Delta_{\mathrm{prx}}(T)\).

The proof is provided in
\suppsecref{appendix:concrete_finite_time_guarantee_proof}.

\end{document}

  \documentclass[src/main.tex]{subfiles}

\begin{document}

\section{Concentration Bounds for the Realized Request Sequence}
\label{appendix:context_concentration_proof}
This section states and proves two bounds that compare quantities evaluated on the realized request sequence with their expectations under the request distribution \(\mathbb P\).
Both proofs use standard concentration inequalities and follow the same three steps:
\textit{(i)} express the target difference as a centered sum,
\textit{(ii)} bound the range of each summand, and
\textit{(iii)} apply a concentration inequality.
For a fixed contextual policy, Proposition~\ref{prop:context_concentration_fixed_q} applies Hoeffding's inequality because, under Assumption~\ref{ass:independent_request_arrivals}, the reward terms form an i.i.d.\ sum.
Proposition~\ref{prop:context_concentration_fixed_q_constraint} instead applies the Azuma--Hoeffding inequality because the weights on the resource-consumption terms may depend on earlier rounds.

\begin{proposition}[Reward concentration over requests]
\label{prop:context_concentration_fixed_q}
For any \(q\in\mathcal Q\), define
\[
\Delta_{\mathrm{ctx}}^{f}(q,T)
:=
\sum_{t=1}^{T}\left(
\mathbb{E}_{s\sim\mathbb{P}}[v_q^f(s)]-v_q^f(s_t)
\right).
\]
Then, for any \(\delta\in(0,1)\), with probability at least \(1-\delta\),
\[
\left|\Delta_{\mathrm{ctx}}^{f}(q,T)\right|
\le
\bar p\sqrt{\frac{T}{2}\log\frac{2}{\delta}}.
\]
\end{proposition}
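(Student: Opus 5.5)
The plan follows the three steps announced at the start of this appendix. \emph{(i) Centered sum.} Fix \(q\in\mathcal Q\). Define \(X_t:=v_q^f(s_t)\) for \(t\in[T]\), so that
\[
\Delta_{\mathrm{ctx}}^{f}(q,T)=\sum_{t=1}^{T}\bigl(\mathbb E[X_t]-X_t\bigr).
\]
Since \(q\) is a fixed stationary policy, \(v_q^f\) is a deterministic function of the request alone. It does not depend on the history, the algorithm's actions, or the dual variables. By Assumption~\ref{ass:independent_request_arrivals}, the requests \(s_1,\dots,s_T\) are i.i.d.\ with law \(\mathbb P\). Hence \(X_1,\dots,X_T\) are i.i.d., with common mean \(\mathbb E_{s\sim\mathbb P}[v_q^f(s)]\).

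\emph{(ii) Measurability and range.} Before applying a concentration inequality, we must check that \(X_t\) is a genuine random variable. This is the only step that needs some care. By Assumption~\ref{ass:cckb_request_measurability}, \(f(\cdot,\mathbf x)\) is measurable for each fixed \(\mathbf x\). Moreover, \(f\) is continuous in \(\mathbf x\), since it lies in an RKHS whose kernel is continuous in the action argument (Assumption~\ref{ass:cckb_kernel_regular}). A function that is measurable in one argument and continuous in the other is a Carathéodory function, hence jointly measurable on \(\mathcal S\times\mathcal X\). Combined with the measurability of the kernel \(q(\cdot\mid s)\) (the class \(\mathcal Q\) consists of measurable policies), this makes \(s\mapsto\int f(s,\mathbf x)\,q(d\mathbf x\mid s)\) measurable. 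For the range, \(f\in\mathcal F_f\) means \(0\le f\le\bar p\), since \(f(s,\mathbf x)=\kappa_{\mathrm{price}}(s)\Pr(Y=1\mid s,\mathbf x)\) with \(0\le\kappa_{\mathrm{price}}\le\bar p\). Averaging over \(q(\cdot\mid s)\) gives \(X_t\in[0,\bar p]\) almost surely.

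\emph{(iii) Hoeffding.} Apply the two-sided Hoeffding inequality to the i.i.d.\ variables \(X_t\in[0,\bar p]\). For every \(\epsilon>0\),
\[
\Pr\bigl(|\Delta_{\mathrm{ctx}}^{f}(q,T)|\ge\epsilon\bigr)\le2\exp\!\Bigl(-\frac{2\epsilon^2}{T\bar p^2}\Bigr).
\]
Set the right-hand side equal to \(\delta\) and solve for \(\epsilon\). This gives \(\epsilon=\bar p\sqrt{(T/2)\log(2/\delta)}\), which is exactly the claimed bound, holding with probability at least \(1-\delta\).

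The main obstacle is step (ii), because it is the only place where the paper's assumptions are actually needed. The conclusion holds for any fixed \(q\) (in the analysis, this will be \(q^\star\)). It is not uniform over \(\mathcal Q\), and no union bound is required.
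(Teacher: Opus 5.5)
Your proof is correct and follows the paper's argument: the paper also uses Assumption~\ref{ass:independent_request_arrivals} to write \(\Delta_{\mathrm{ctx}}^f(q,T)\) as a sum of i.i.d.\ centered terms \(\mu_q^f-v_q^f(s_t)\) of range width at most \(\bar p\), then applies two-sided Hoeffding with \(\varepsilon=\bar p\sqrt{(T/2)\log(2/\delta)}\). Your Carath\'eodory measurability check is extra care that the paper leaves implicit, since it is already needed for \(\mathbb E_{s\sim\mathbb P}[v_q^f(s)]\) to be defined, so calling it the main obstacle overstates it, though it is harmless.
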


\begin{proof}
Fix any \(q\in\mathcal Q\), and let
\(
\mu_q^f:=\mathbb E_{s\sim\mathbb P}\!\left[v_q^f(s)\right].
\)
Define
\[
D_t^f:=\mu_q^f-v_q^f(s_t).
\]

\textit{(i) Centered i.i.d.\ sum.}
By Assumption~\ref{ass:independent_request_arrivals}, \(s_1,\dots,s_T\) are i.i.d.\ from \(\mathbb P\), so
\(D_1^f,\dots,D_T^f\) are i.i.d.\ and \(\mathbb E[D_t^f]=0\).
Moreover,
\[
\Delta_{\mathrm{ctx}}^f(q,T)=\sum_{t=1}^T D_t^f.
\]

\textit{(ii) Range bound.}
Since
\(
v_q^f(s)=\mathbb E_{\mathbf x\sim q(\cdot\mid s)}[f(s,\mathbf x)]
\)
and \(0\le f(s,\mathbf x)\le \bar p\), we have
\[
v_q^f(s_t)\in[0,\bar p]
\quad\Rightarrow\quad
D_t^f\in[\mu_q^f-\bar p,\mu_q^f].
\]
Hence each \(D_t^f\) has range width at most \(\bar p\).

\textit{(iii) Hoeffding.}
Applying Hoeffding's inequality~\cite{hoeffding1963probability}, for any \(\varepsilon>0\),
\[
\Pr\!\left(
\left|
\sum_{t=1}^{T}D_t^f
\right|
\ge \varepsilon
\right)
\le
2\exp\!\left(
-\frac{2\varepsilon^2}{T\bar p^2}
\right).
\]
Setting
\(
\varepsilon:=\bar p\sqrt{\frac{T}{2}\log\frac{2}{\delta}}
\)
yields
\[
\Pr\!\left(
\left|
\Delta_{\mathrm{ctx}}^f(q,T)
\right|
\le
\bar p\sqrt{\frac{T}{2}\log\frac{2}{\delta}}
\right)
\ge 1-\delta.
\]
\end{proof}

\begin{proposition}[Resource-consumption concentration over requests]
\label{prop:context_concentration_fixed_q_constraint}
For any \(q\in\mathcal Q\) and constant \(B\ge0\), let
\(\{\boldsymbol{\psi}_t\}_{t=1}^{T}\subseteq\mathbb R_+^{J_{\mathrm{tot}}}\)
be a sequence such that \(\boldsymbol\psi_t\) is measurable with respect to
the history available before \(s_t\) is observed and
\(\|\boldsymbol\psi_t\|_1\le B\) almost surely for every \(t\).  Define
\[
\Delta_{\mathrm{ctx}}^{h}(q,\boldsymbol{\psi},T)
:=
\sum_{t=1}^{T}\left\langle
\boldsymbol\psi_t,
\mathbf v_q^h(s_t)-\mathbb{E}_{s\sim\mathbb{P}}[\mathbf v_q^h(s)]
\right\rangle.
\]
Then, for any \(\delta\in(0,1)\), with probability at least \(1-\delta\),
\[
\left|\Delta_{\mathrm{ctx}}^{h}(q,\boldsymbol\psi,T)\right|
\le
B\sqrt{\frac{T}{2}\log\frac{2}{\delta}}.
\]
\end{proposition}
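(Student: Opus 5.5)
We follow the same three steps as in Proposition~\ref{prop:context_concentration_fixed_q}, but replace Hoeffding's inequality by the Azuma--Hoeffding inequality for bounded martingale differences. Let \(\mathcal F_{t-1}\) be the \(\sigma\)-field generated by the history available before \(s_t\) is observed, so that \(\boldsymbol\psi_t\) is \(\mathcal F_{t-1}\)-measurable. Set \(\boldsymbol\mu_q^h:=\mathbb E_{s\sim\mathbb P}[\mathbf v_q^h(s)]\) and define
\[
D_t^h:=\bigl\langle\boldsymbol\psi_t,\mathbf v_q^h(s_t)-\boldsymbol\mu_q^h\bigr\rangle,
\]
so that \(\Delta_{\mathrm{ctx}}^h(q,\boldsymbol\psi,T)=\sum_{t=1}^T D_t^h\). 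Assumption~\ref{ass:cckb_request_measurability} ensures that \(\mathbf v_q^h(s)\) is a measurable function of \(s\), so \(D_t^h\) is a well-defined random variable.

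\textit{(i) Martingale-difference structure.} By Assumption~\ref{ass:independent_request_arrivals}, \(s_t\) is drawn from \(\mathbb P\) independently of \(\mathcal F_{t-1}\). The policy \(q\) is fixed, so \(\mathbf v_q^h(s_t)\) is also independent of \(\mathcal F_{t-1}\), with mean \(\boldsymbol\mu_q^h\). Pulling the \(\mathcal F_{t-1}\)-measurable weight \(\boldsymbol\psi_t\) out of the conditional expectation gives
\[
\mathbb E[D_t^h\mid\mathcal F_{t-1}]=\bigl\langle\boldsymbol\psi_t,\mathbb E[\mathbf v_q^h(s_t)]-\boldsymbol\mu_q^h\bigr\rangle=0.
\]
Hence \((D_t^h)_{t}\) is a martingale-difference sequence adapted to \((\mathcal F_t)_t\), since \(D_t^h\) is \(\mathcal F_t\)-measurable.

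\textit{(ii) Conditional range bound.} Each component \(h_{d,j}\) takes values in \([-1/T,1-1/T]\), so every coordinate of \(\mathbf v_q^h(s_t)\) lies in an interval of width \(1\). Because \(\boldsymbol\psi_t\ge\mathbf 0\), \(\langle\boldsymbol\psi_t,\mathbf v_q^h(s_t)\rangle\) lies in
\[
\Bigl[-\tfrac1T\|\boldsymbol\psi_t\|_1,\;(1-\tfrac1T)\|\boldsymbol\psi_t\|_1\Bigr].
\]
This is an \(\mathcal F_{t-1}\)-measurable interval of width \(\|\boldsymbol\psi_t\|_1\le B\). Subtracting the \(\mathcal F_{t-1}\)-measurable shift \(\langle\boldsymbol\psi_t,\boldsymbol\mu_q^h\rangle\) shows that, conditionally on \(\mathcal F_{t-1}\), \(D_t^h\) lies in a predictable interval of width at most \(B\).

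\textit{(iii) Azuma--Hoeffding.} We apply the Azuma--Hoeffding inequality in the form that uses predictable ranges (Hoeffding's lemma applied conditionally). For every \(\varepsilon>0\),
\[
\Pr\Bigl(\bigl|\textstyle\sum_t D_t^h\bigr|\ge\varepsilon\Bigr)\le 2\exp\bigl(-2\varepsilon^2/(TB^2)\bigr).
\]
Setting \(\varepsilon=B\sqrt{(T/2)\log(2/\delta)}\) yields the claimed bound with probability at least \(1-\delta\).

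The main subtlety is step (ii). The range of \(D_t^h\) is random, so we must use the predictable-range version of the inequality rather than a version with fixed constants \([a_t,b_t]\). This width bound uses the nonnegativity of \(\boldsymbol\psi_t\) together with the unit width of the interval \(\mathcal F_{h_{d,j}}\). Without nonnegativity one could only bound the width by \(2B\) via Hölder's inequality, which would lose a constant factor in the stated bound.
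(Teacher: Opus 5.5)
Your proposal is correct and follows the paper's proof step for step. It uses the same martingale-difference sequence \(D_t^h\) with respect to \(\mathcal F_t=\sigma(\mathcal H_t)\), the same predictable conditional range of width \(\|\boldsymbol\psi_t\|_1\le B\), and Azuma--Hoeffding with the same choice of \(\varepsilon\); the paper also dispatches \(B=0\) separately, where your exponent \(-2\varepsilon^2/(TB^2)\) is undefined.

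One small correction to your closing aside: nonnegativity of \(\boldsymbol\psi_t\) is not actually needed for the width bound. A coordinatewise argument gives a predictable interval of width \(\sum_i|\psi_{t,i}|=\|\boldsymbol\psi_t\|_1\) for signed weights too; only the crude H\"older bound loses the factor \(2\).
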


\begin{proof}
Fix any \(q\in\mathcal Q\), a constant \(B\ge0\), and a sequence
\(\{\boldsymbol\psi_t\}_{t=1}^T\subseteq\mathbb R_+^{J_{\mathrm{tot}}}\)
satisfying the conditions in Proposition~\ref{prop:context_concentration_fixed_q_constraint}.
Define
\(
\boldsymbol{\mu}_q^h:=\mathbb E_{s\sim\mathbb P}[\mathbf v_q^h(s)]\in\mathbb R^{J_{\mathrm{tot}}}.
\)
Let
\(
\mathcal F_t:=\sigma(\mathcal H_t)
\)
for \(t=0,\ldots,T\) be the natural filtration generated by the history in
\eqref{eq:nsrdp_history}.
Define
\[
D_t^h
:=
\left\langle
\boldsymbol{\psi}_t,
\mathbf v_q^h(s_t)-\boldsymbol{\mu}_q^h
\right\rangle.
\]

\textit{(i) Centered martingale-difference sum.}
By assumption, \(\boldsymbol\psi_t\) is \(\mathcal F_{t-1}\)-measurable.
By Assumption~\ref{ass:independent_request_arrivals} and the causal round protocol, \(s_t\sim\mathbb P\) is independent of \(\mathcal F_{t-1}\), so
\[
\mathbb E\!\left[
\mathbf v_q^h(s_t)\mid\mathcal F_{t-1}
\right]
=
\boldsymbol{\mu}_q^h.
\]
The \(\mathcal F_{t-1}\)-measurability of \(\boldsymbol\psi_t\) then gives
\[
\mathbb E\!\left[D_t^h\mid \mathcal F_{t-1}\right]
=
\left\langle
\boldsymbol{\psi}_t,
\boldsymbol{\mu}_q^h-\boldsymbol{\mu}_q^h
\right\rangle
=0.
\]
Thus, \(\{D_t^h\}_{t=1}^{T}\) is a martingale-difference sequence with
respect to \(\{\mathcal F_t\}_{t=0}^{T}\), and
\[
\Delta_{\mathrm{ctx}}^h(q,\boldsymbol\psi,T)
=
\sum_{t=1}^T D_t^h.
\]

\textit{(ii) Conditional range bound.}
For each coordinate \((d,j)\), by \(-\frac{1}{T}\le h_{d,j}(s,\mathbf x)\le 1-\frac{1}{T}\), we have
\[
v_q^{h_{d,j}}(s_t)
\in
\left[-\frac{1}{T},1-\frac{1}{T}\right].
\]
Because every coordinate of \(\boldsymbol\psi_t\) is nonnegative,
conditional on \(\mathcal F_{t-1}\),
\[
\begin{aligned}
D_t^h\in\biggl[
&-\frac{1}{T}\|\boldsymbol\psi_t\|_1
-\left\langle\boldsymbol\psi_t,\boldsymbol\mu_q^h\right\rangle,\\
&\left(1-\frac{1}{T}\right)\|\boldsymbol\psi_t\|_1
-\left\langle\boldsymbol\psi_t,\boldsymbol\mu_q^h\right\rangle
\biggr],
\end{aligned}
\]
so the conditional range width of \(D_t^h\) is at most
\(\|\boldsymbol\psi_t\|_1\le B\).

\textit{(iii) Azuma--Hoeffding.}
If \(B=0\), then \(\boldsymbol\psi_t=\mathbf0\) almost surely for every \(t\), so
\(\sum_{t=1}^T D_t^h=0\) and the bound is immediate.
Otherwise, applying the Azuma--Hoeffding
inequality for martingale differences with bounded conditional
ranges~\cite{azuma1967weighted,hoeffding1963probability}, for any
\(\varepsilon>0\),
\[
\Pr\!\left(
\left|
\sum_{t=1}^{T}D_t^h
\right|
\ge
\varepsilon
\right)
\le
2\exp\!\left(
-\frac{2\varepsilon^2}
{TB^2}
\right).
\]
Setting
\(
\varepsilon
:=
B\sqrt{\frac{T}{2}\log\frac{2}{\delta}}
\)
yields
\[
\Pr\!\left(
\left|
\Delta_{\mathrm{ctx}}^h(q,\boldsymbol\psi,T)
\right|
\le
B\sqrt{\frac{T}{2}\log\frac{2}{\delta}}
\right)
\ge
1-\delta.
\]
\end{proof}

\ifSubfilesClassLoaded{
  \bibliographystyle{ieee/IEEEtran}
  \bibliography{bib/references}
}{}

\end{document}

  \documentclass[src/main.tex]{subfiles}

\begin{document}

\section{Success-Weighted Posterior Width under Selective Constraint Updates}
\label{app:selective_width}
This section states and proves the success-weighted posterior-width result
used to bound the cumulative constraint-surrogate error.
We first state the result and then present the proof idea and formal proof.

\begin{lemma}[Success-weighted posterior-width bound under selective updates]
\label{lem:selective_width}
Fix a resource \((d,j)\).  Suppose that Assumptions~\ref{ass:inter_round_stationarity},
\ref{ass:cckb_kernel_regular}, and~\ref{ass:success_only_gp_regular}
hold.  Define
\[
\omega_{t,d,j}
:=
\begin{cases}
\sigma_{t-1}^{m_{d,j}^{\Gamma}}(z_t),
& \mathbf1_{d,j}^{\Gamma}(s_t)=1,\\
0,
& \mathbf1_{d,j}^{\Gamma}(s_t)=0.
\end{cases}
\]
Thus, the posterior standard deviation is evaluated only for inputs satisfying
\(\mathbf1_{d,j}^{\Gamma}(s_t)=1\).  Then, for every \(\delta\in(0,1)\), with
probability at least \(1-\delta\),
\[
\sum_{t=1}^{T}p(s_t,\mathbf x_t)\omega_{t,d,j}
\le
\sqrt{
\frac{2T\gamma_T^{m_{d,j}^{\Gamma}}}
{\log\!\left(1+\eta_{m_{d,j}^{\Gamma}}^{-1}\right)}
}
+\sqrt{2T\log\frac{1}{\delta}}.
\]
\end{lemma}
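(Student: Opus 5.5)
The idea is to replace the success probability $p(s_t,\mathbf x_t)$ by the realized indicator $Y_t$. After that substitution, the sum runs only over rounds that actually enter the dataset $\mathcal T^{m_{d,j}^{\Gamma}}$, and the standard GP-UCB width-sum bound applies to those rounds.

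Accordingly, split
\[
\sum_{t=1}^T p(s_t,\mathbf x_t)\omega_{t,d,j}
=\sum_{t=1}^T Y_t\,\omega_{t,d,j}
+\sum_{t=1}^T\bigl(p(s_t,\mathbf x_t)-Y_t\bigr)\omega_{t,d,j}.
\]

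For the second (martingale) term, let $\mathcal F_t=\sigma(\mathcal H_t)$. The width $\omega_{t,d,j}$ is a function of $\mathcal H_{t-1}$ and $z_t$ only, since the posterior uses data up to round $t-1$ and the on-path indicator depends on $s_t$.
\begin{itemize}
\item By Assumption~\ref{ass:inter_round_stationarity}, $\mathbb E[Y_t\mid\mathcal H_{t-1},z_t]=p(z_t)$.
\item Hence $D_t:=(p(z_t)-Y_t)\omega_{t,d,j}$ is a martingale difference with respect to the filtration $\sigma(\mathcal H_{t-1},z_t)$.
\item Because $k_\bullet(z,z)\le1$, we have $\omega_{t,d,j}\in[0,1]$, so $D_t$ lies in an interval of width at most $1$.
\end{itemize}
Azuma--Hoeffding, in the same form as in Proposition~\ref{prop:context_concentration_fixed_q_constraint} but one-sided, then gives $\sum_t D_t\le\sqrt{(T/2)\log(1/\delta)}\le\sqrt{2T\log(1/\delta)}$ with probability at least $1-\delta$. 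This is the only probabilistic step, and it accounts for the $\delta$-term.

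For the first term, the summand $Y_t\omega_{t,d,j}$ is nonzero only when $Y_t=1$ and $\mathbf1^{\Gamma}_{d,j}(s_t)=1$, which is exactly when $z_t$ is appended to $\mathcal T^{m_{d,j}^{\Gamma}}$. List these rounds as $t_1<\dots<t_N$ with $N\le T$.
\begin{itemize}
\item Between consecutive selected rounds the GP posterior is unchanged. So $\omega_{t_n,d,j}$ is the posterior standard deviation $\sigma_{n-1}$ of the GP trained on the first $n-1$ selected points $z_{t_1},\dots,z_{t_{n-1}}$, evaluated at $z_{t_n}$.
\item All $z_{t_n}$ lie in $\operatorname{dom}(m_{d,j}^{\Gamma})$, so the standard information-gain argument applies directly to this subsequence:
\[
\sum_{n=1}^N\sigma_{n-1}^2(z_{t_n})\le\frac{2}{\log(1+\eta^{-1})}\gamma_N^{m_{d,j}^{\Gamma}}.
\]
This uses $\sigma^2\le 1$ together with $x\le \frac{\eta^{-1}}{\log(1+\eta^{-1})}\log(1+x)\cdot\eta$ for $x\in[0,\eta^{-1}]$, combined with the log-det telescoping identity.
\item Cauchy--Schwarz then gives $\sum_n\sigma_{n-1}\le\sqrt{N\sum_n\sigma_{n-1}^2}$.
\item Finally, $N\le T$ and $\gamma_N\le\gamma_T$ yield the first term of the bound, deterministically.
\end{itemize}

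The main obstacle is the bookkeeping in the first step. One must confirm that the posterior used at round $t$, evaluated at on-path inputs, is exactly the GP conditioned only on previously selected points, so that the elliptical-potential sum runs over the selected subsequence without gaps. One must also track the regularization constant $\eta_{m_{d,j}^{\Gamma}}$ so that the normalization $\log(1+\eta^{-1})$ appears as stated. The union bound over resources is not needed here, since $(d,j)$ is fixed.
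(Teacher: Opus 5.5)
Your proposal is correct and follows the paper's proof essentially step for step. Both use the split $p(s_t,\mathbf x_t)\omega_{t,d,j}=Y_t\omega_{t,d,j}+(p(s_t,\mathbf x_t)-Y_t)\omega_{t,d,j}$, with Azuma--Hoeffding on the martingale part, and both bound the first part by log-det telescoping, concavity, and Cauchy--Schwarz on the retained subsequence. Your use of the conditional range width $\omega_{t,d,j}\le 1$ given $(\mathcal H_{t-1},z_t)$ even gives the sharper deviation $\sqrt{(T/2)\log(1/\delta)}$ before relaxing, whereas the paper uses $|D_t|\le1$.

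The only slip is your scalar inequality. It should read $\sigma^2=\eta x\le \log(1+x)/\log(1+\eta^{-1})$ for $x=\eta^{-1}\sigma^2\in[0,\eta^{-1}]$. With $x$ rather than $\eta x$ on the left, as you wrote it, the inequality fails whenever $\eta<1$.
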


\noindent\textbf{Proof idea.}
A standard information-gain argument bounds the cumulative posterior width
over the rounds in which the constraint model is updated, namely, those
satisfying \(Y_t=1\) and \(\mathbf1_{d,j}^{\Gamma}(s_t)=1\).  The desired
bound weights the width on every on-path round by its provisioning-success
probability.  A martingale concentration argument transfers the standard
bound from the model-update rounds to this success-probability-weighted sum.

\begin{proof}[Proof of Lemma~\ref{lem:selective_width}]
For each round \(t\in[T]\), let \(X_t\) be the posterior standard deviation
counted only when successful provisioning produces an update of the constraint
model for resource \((d,j)\):
\[
X_t
:=
Y_t\omega_{t,d,j}.
\]

\textit{(i) Width on model-update rounds.}
By the definition of \(X_t\),
\[
\sum_{t=1}^{T}X_t
=
\sum_{\substack{t\in[T]\\
Y_t=1,\;\mathbf1_{d,j}^{\Gamma}(s_t)=1}}
\sigma_{t-1}^{m_{d,j}^{\Gamma}}(z_t).
\]
If \(N_T^{m_{d,j}^{\Gamma}}=0\), this sum is zero and the desired bound is
immediate.  Otherwise, enumerate the update rounds as
\(\tau_1<\cdots<\tau_{N_T^{m_{d,j}^{\Gamma}}}\).  For
\(q\in[N_T^{m_{d,j}^{\Gamma}}]\), let
\[
\mathbf Z_q
:=
\left(z_{\tau_1},\ldots,z_{\tau_q}\right),
\qquad
\mathbf Z_0:=\emptyset,
\]
and define
\[
v_q
:=
\left(
\sigma_{\tau_q-1}^{m_{d,j}^{\Gamma}}(z_{\tau_q})
\right)^2,
\qquad
q\in\left[N_T^{m_{d,j}^{\Gamma}}\right].
\]
Because the constraint posterior changes only in update rounds, \(v_q\) is
the posterior variance immediately before the \(q\)-th retained observation.
The Schur-complement determinant identity gives
\begin{align*}
&\det\!\left(
\mathbf I
+\eta_{m_{d,j}^{\Gamma}}^{-1}
\mathbf K_{\mathbf Z_q}^{m_{d,j}^{\Gamma}}
\right)\\
&\quad=
\det\!\left(
\mathbf I
+\eta_{m_{d,j}^{\Gamma}}^{-1}
\mathbf K_{\mathbf Z_{q-1}}^{m_{d,j}^{\Gamma}}
\right)
\left(1+\eta_{m_{d,j}^{\Gamma}}^{-1}v_q\right).
\end{align*}
Multiplying this identity over the update rounds, taking logarithms, and using
the definition of maximum information gain in
\eqref{eq:maximum_information_gain} give
\begin{align*}
\frac{1}{2}
\sum_{q=1}^{N_T^{m_{d,j}^{\Gamma}}}
\log\!\left(
1+\eta_{m_{d,j}^{\Gamma}}^{-1}v_q
\right)
&=
\frac{1}{2}\log\det\!\left(
\mathbf I
+\eta_{m_{d,j}^{\Gamma}}^{-1}
\mathbf K_{\mathbf Z_{N_T^{m_{d,j}^{\Gamma}}}}^{m_{d,j}^{\Gamma}}
\right)\\
&\le
\gamma_{N_T^{m_{d,j}^{\Gamma}}}^{m_{d,j}^{\Gamma}}.
\end{align*}
Assumption~\ref{ass:cckb_kernel_regular} and the fact that posterior
variance does not exceed prior variance imply \(0\le v_q\le1\).  By the
concavity of \(v\mapsto\log(1+\eta^{-1}v)\), for every \(\eta>0\) and
\(v\in[0,1]\),
\[
\log(1+\eta^{-1}v)
\ge
v\log(1+\eta^{-1}).
\]
Applying this inequality with \(\eta=\eta_{m_{d,j}^{\Gamma}}\), summing over
the update rounds, and combining it with the preceding information-gain bound
give
\begin{align*}
\log\!\left(1+\eta_{m_{d,j}^{\Gamma}}^{-1}\right)
\sum_{q=1}^{N_T^{m_{d,j}^{\Gamma}}}v_q
&\le
\sum_{q=1}^{N_T^{m_{d,j}^{\Gamma}}}
\log\!\left(1+\eta_{m_{d,j}^{\Gamma}}^{-1}v_q\right)\\
&\le
2\gamma_{N_T^{m_{d,j}^{\Gamma}}}^{m_{d,j}^{\Gamma}}.
\end{align*}
Therefore,
\begin{equation}
\sum_{q=1}^{N_T^{m_{d,j}^{\Gamma}}}v_q
\le
\frac{
2\gamma_{N_T^{m_{d,j}^{\Gamma}}}^{m_{d,j}^{\Gamma}}
}
{\log\!\left(1+\eta_{m_{d,j}^{\Gamma}}^{-1}\right)}.
\label{eq:update_round_variance_sum}
\end{equation}
To bound the cumulative posterior width over the model-update rounds, namely,
\(\sum_{t=1}^{T}X_t\), we apply the cumulative posterior-width argument
of~\cite[Lemma~4]{chowdhury2017kernelized} to the retained update sequence.
Specifically, we combine the variance-sum bound in
\eqref{eq:update_round_variance_sum} with the Cauchy--Schwarz inequality:
\begin{equation}
\begin{aligned}
\sum_{t=1}^{T}X_t
&=
\sum_{q=1}^{N_T^{m_{d,j}^{\Gamma}}}\sqrt{v_q}\\
&\overset{(a)}{\le}
\sqrt{
N_T^{m_{d,j}^{\Gamma}}
\sum_{q=1}^{N_T^{m_{d,j}^{\Gamma}}}v_q
}\\
&\overset{(b)}{\le}
\sqrt{
\frac{
2N_T^{m_{d,j}^{\Gamma}}
\gamma_{N_T^{m_{d,j}^{\Gamma}}}^{m_{d,j}^{\Gamma}}
}
{\log\!\left(1+\eta_{m_{d,j}^{\Gamma}}^{-1}\right)}
}.
\end{aligned}
\label{eq:width_transfer_replacement_a}
\end{equation}
Here, (a) follows from the Cauchy--Schwarz inequality, and (b) follows from
\eqref{eq:update_round_variance_sum}.

\textit{(ii) Transfer to the success-probability-weighted width.}
Using \(X_t=Y_t\omega_{t,d,j}\), we obtain
\[
\begin{aligned}
&\mathbb E[X_t\mid\mathcal H_{t-1},s_t,\mathbf x_t]\\
&\quad=
\mathbb E[Y_t\omega_{t,d,j}
\mid\mathcal H_{t-1},s_t,\mathbf x_t]\\
&\quad\overset{(a)}{=}
\omega_{t,d,j}
\Pr(Y_t=1\mid\mathcal H_{t-1},s_t,\mathbf x_t)\\
&\quad\overset{(b)}{=}
\omega_{t,d,j}p(s_t,\mathbf x_t).
\end{aligned}
\]
Here, (a) holds because \(\omega_{t,d,j}\) is determined by
\((\mathcal H_{t-1},s_t,\mathbf x_t)\) and \(Y_t\) is binary.
Equality (b) follows from the stationary response law in
Assumption~\ref{ass:inter_round_stationarity}; when
\(\mathbf1_{d,j}^{\Gamma}(s_t)=0\), both sides are zero by the definition of
\(\omega_{t,d,j}\).  Define
\[
D_t
:=
X_t-\mathbb E[X_t\mid\mathcal H_{t-1},s_t,\mathbf x_t].
\]
By definition,
\(\mathbb E[D_t\mid\mathcal H_{t-1},s_t,\mathbf x_t]=0\), which also gives
\(\mathbb E[D_t\mid\mathcal H_{t-1}]=0\).  Thus,
\(\{D_t\}_{t=1}^T\) is a martingale difference sequence.
By definition, \(\omega_{t,d,j}=0\) when
\(\mathbf1_{d,j}^{\Gamma}(s_t)=0\).  Otherwise, the fact that a posterior
variance does not exceed the corresponding prior variance and the kernel
normalization in Assumption~\ref{ass:cckb_kernel_regular} give
\[
0
\le
(\omega_{t,d,j})^2
\le
k_{m_{d,j}^{\Gamma}}(z_t,z_t)
\le 1.
\]
Since \(Y_t\in\{0,1\}\), this gives \(0\le X_t\le1\) and
\(|D_t|\le1\).
The Azuma--Hoeffding inequality~\cite{azuma1967weighted}
therefore implies that, with probability at least \(1-\delta\),
\begin{equation}
\sum_{t=1}^{T}D_t
\ge
-\sqrt{2T\log\frac{1}{\delta}}.
\label{eq:width_transfer_azuma_step}
\end{equation}

Summing the conditional-expectation equality over \(t\), using the definition
of \(D_t\), and applying
\eqref{eq:width_transfer_azuma_step} gives
\[
\begin{aligned}
\sum_{t=1}^{T}p(s_t,\mathbf x_t)\omega_{t,d,j}
&=
\sum_{t=1}^{T}
\mathbb E[X_t\mid\mathcal H_{t-1},s_t,\mathbf x_t]\\
&=
\sum_{t=1}^{T}X_t-\sum_{t=1}^{T}D_t\\
&\le
\sum_{t=1}^{T}X_t
+\sqrt{2T\log\frac{1}{\delta}}.
\end{aligned}
\]
Combining this inequality with \eqref{eq:width_transfer_replacement_a}
yields, with probability at least \(1-\delta\),
\[
\begin{aligned}
\sum_{t=1}^{T}p(s_t,\mathbf x_t)\omega_{t,d,j}
&\le
\sqrt{
\frac{
2N_T^{m_{d,j}^{\Gamma}}
\gamma_{N_T^{m_{d,j}^{\Gamma}}}^{m_{d,j}^{\Gamma}}
}
{\log\!\left(1+\eta_{m_{d,j}^{\Gamma}}^{-1}\right)}
}
+\sqrt{2T\log\frac{1}{\delta}}.
\end{aligned}
\]
Finally, \(N_T^{m_{d,j}^{\Gamma}}\le T\) and the maximum information gain is
nondecreasing in its sample count.  Substituting these two bounds gives the
claim in Lemma~\ref{lem:selective_width}.
\end{proof}

\end{document}

  \documentclass[src/main.tex]{subfiles}

\begin{document}

\section{Proofs of the CCKB and Proxy Guarantees}
\label{appendix:proxy_regret_proof}
This section establishes the theoretical guarantees underlying the general
\ac{cckb} analysis and its proxy instantiation.
We first prove Corollary~\ref{cor:cckb_slater_consequence}, which
establishes strong duality and bounds
\(\|\boldsymbol\phi^\star\|_1\)
(Sec.~\ref{appendix:cckb_slater_consequence}).
We then establish request-wise Lagrangian optimality
(Sec.~\ref{appendix:cckb_requestwise_lagrangian_optimality}).
Afterward, we prove Theorem~\ref{thm:general_cckb_regret_violation}, which gives
finite-time bounds on the relaxed regret \(\mathrm{Reg}^{\mathrm{rel}}(T)\) and
constraint violations \(\mathrm{Vio}^{\mathrm{rel}}_{d,j}(T)\) for general
\ac{cckb}
(Sec.~\ref{appendix:general_cckb_theorem_proof}).
Next, we prove Proposition~\ref{prop:proxy_surrogate_error_control}, which
bounds the cumulative proxy surrogate errors \(\mathcal W_f(T)\) and
\([\boldsymbol{\mathcal W}_h(T)]_{d,j}\)
(Sec.~\ref{appendix:proxy_surrogate_control_proof}).
We then prove Proposition~\ref{prop:proxy_optimality_gap}, which
bounds the optimality gap
\(\mathrm{OPT}^{\mathrm{rel}}-\mathrm{OPT}^{\mathrm{prx}}\)
(Sec.~\ref{appendix:proxy_optimality_gap_proof}).
Finally, we combine these two propositions to prove
Corollary~\ref{cor:explicit_finite_time_proxy_performance}
(Sec.~\ref{appendix:concrete_finite_time_guarantee_proof}).

\subsection{Proof of Corollary~\ref{cor:cckb_slater_consequence}}
\label{appendix:cckb_slater_consequence}

For the proof, write
\[
F(q):=\mathbb E_{s\sim\mathbb P}[v_q^f(s)],
\qquad
\mathbf G(q):=\mathbb E_{s\sim\mathbb P}[\mathbf v_q^h(s)].
\]
Using these quantities, define the Lagrangian and the dual function,
respectively, as
\[
L(q,\boldsymbol\phi)
:=F(q)-\langle\boldsymbol\phi,\mathbf G(q)\rangle,
\qquad
D(\boldsymbol\phi):=\sup_{q\in\mathcal Q}L(q,\boldsymbol\phi).
\]

\begin{proof}
\emph{1) Strong Duality:}
The relaxed benchmark \eqref{eq:relaxed_problem} depends on a policy \(q\)
only through the finite-dimensional vector \((F(q),\mathbf G(q))\).
By their definitions, \(F\) and \(\mathbf G\) are linear in \(q\).
Because \(\mathcal Q\) is closed under policy mixtures, its image under the
linear map \(q\mapsto(F(q),\mathbf G(q))\) is therefore
convex~\cite[Sec.~2.3.2]{boyd2004convex}.
The problem is therefore convex with finitely many inequality constraints.
Applying the standard Slater theorem~\cite[Sec.~5.2.3]{boyd2004convex} under
Assumption~\ref{ass:cckb_slater} gives strong duality and ensures that the
dual optimum is attained.  Let \(q^\star\) be an optimal policy, whose
existence is assumed in Corollary~\ref{cor:cckb_slater_consequence}, and let
\(\boldsymbol\phi^\star\) be an optimal dual vector.  Their optimality and
strong duality give
\[
F(q^\star)=\mathrm{OPT}^{\mathrm{rel}}=D(\boldsymbol\phi^\star).
\]

\emph{2) Finite Norm of an Optimal Dual Vector:}
Substituting the Slater policy \(q^\circ\) into the dual function and using
\(\mathbf G(q^\circ)\le-\xi_T\mathbf1\) give
\[
\begin{aligned}
\mathrm{OPT}^{\mathrm{rel}}
=D(\boldsymbol\phi^\star)
&\ge L(q^\circ,\boldsymbol\phi^\star)\\
&=F(q^\circ)
-\langle\boldsymbol\phi^\star,\mathbf G(q^\circ)\rangle\\
&\ge F(q^\circ)+\xi_T\|\boldsymbol\phi^\star\|_1.
\end{aligned}
\]
Rearranging and using
\(0\le F(q^\circ)\le\mathrm{OPT}^{\mathrm{rel}}\le\bar p\) give
\[
\|\boldsymbol\phi^\star\|_1
\le
\frac{\mathrm{OPT}^{\mathrm{rel}}-F(q^\circ)}{\xi_T}
\le
\frac{\bar p}{\xi_T}
<\infty.
\]
Hence, a finite bound \(\Lambda_T^{\mathrm{rel}}\) satisfying
\(\|\boldsymbol\phi^\star\|_1\le\Lambda_T^{\mathrm{rel}}\) exists.
\end{proof}

For any optimal policy \(q^\star\) and optimal dual vector
\(\boldsymbol\phi^\star\), primal feasibility and strong duality imply
complementary slackness:
\begin{equation}
\left\langle\boldsymbol\phi^\star,
\mathbb E_{s\sim\mathbb P}[\mathbf v_{q^\star}^h(s)]
\right\rangle
=0.
\label{eq:cckb_complementary_slackness}
\end{equation}

\subsection{Request-Wise Lagrangian Optimality}
\label{appendix:cckb_requestwise_lagrangian_optimality}

To derive the upper bound on \(\mathrm{Vio}^{\mathrm{rel}}_{d,j}(T)\), we
compare the benchmark policy \(q^\star\) with the selected action
\(\mathbf x_t\) for the same request in terms of
\(f(s,\mathbf x)-\langle\boldsymbol\phi^\star,\mathbf h(s,\mathbf x)\rangle\).
The following lemma shows that the corresponding conditional value under
\(q^\star\) is no smaller than the value of any action for almost every
request.

\begin{lemma}[Request-wise Lagrangian optimality]
\label{lem:cckb_requestwise_lagrangian_optimality}
Suppose that the relaxed \ac{nsrdp} admits an optimal policy, that
Assumptions~\ref{ass:cckb_kernel_regular}
and~\ref{ass:cckb_request_measurability} hold, and that
Assumption~\ref{ass:cckb_slater} holds.
Let \((q^\star,\boldsymbol\phi^\star)\) be a primal--dual optimal pair.
Then, for \(\mathbb P\)-almost every \(s\) and every
\(\mathbf x\in\mathcal X\),
\begin{equation}
\begin{aligned}
v_{q^\star}^f(s)
-\langle\boldsymbol\phi^\star,\mathbf v_{q^\star}^h(s)\rangle
&=\max_{\mathbf x'\in\mathcal X}
\left\{f(s,\mathbf x')
-\langle\boldsymbol\phi^\star,\mathbf h(s,\mathbf x')\rangle\right\}\\
&\ge f(s,\mathbf x)
-\langle\boldsymbol\phi^\star,\mathbf h(s,\mathbf x)\rangle.
\end{aligned}
\label{eq:cckb_contextwise_lagrangian_maximum}
\end{equation}
\end{lemma}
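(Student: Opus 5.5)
The argument rests on strong duality from Corollary~\ref{cor:cckb_slater_consequence} together with the request-wise separability of the Lagrangian. Write
\[
\ell^\star(s,\mathbf x):=f(s,\mathbf x)-\langle\boldsymbol\phi^\star,\mathbf h(s,\mathbf x)\rangle,
\qquad
\bar\ell^\star(s):=\sup_{\mathbf x\in\mathcal X}\ell^\star(s,\mathbf x).
\]
For every \(q\in\mathcal Q\), linearity gives \(L(q,\boldsymbol\phi^\star)=\mathbb E_{s\sim\mathbb P}\bigl[\mathbb E_{\mathbf x\sim q(\cdot\mid s)}\ell^\star(s,\mathbf x)\bigr]\). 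Pointwise we have \(v_{q^\star}^f(s)-\langle\boldsymbol\phi^\star,\mathbf v_{q^\star}^h(s)\rangle=\mathbb E_{\mathbf x\sim q^\star(\cdot\mid s)}\ell^\star(s,\mathbf x)\le\bar\ell^\star(s)\). The plan is to show that the expectations of the two sides agree, which forces equality almost surely. Once that holds, the inequality part of \eqref{eq:cckb_contextwise_lagrangian_maximum} is immediate from the definition of the maximum.

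\textbf{Step 1: attainment and measurability.} By Assumption~\ref{ass:cckb_kernel_regular}, \(f\) and \(h_{d,j}\) are RKHS functions whose kernels are continuous in the action argument. They are therefore continuous in \(\mathbf x\), and since \(\mathcal X\) is a compact product of simplices, the supremum \(\bar\ell^\star(s)\) is a maximum. Assumption~\ref{ass:cckb_request_measurability} makes \(\ell^\star(\cdot,\mathbf x)\) measurable for each \(\mathbf x\), so \(\ell^\star\) is Carathéodory. A measurable maximum theorem (Kuratowski--Ryll-Nardzewski, or a supremum over a countable dense subset of \(\mathcal X\)) then gives two things: \(\bar\ell^\star\) is measurable, and there is a measurable selector \(\mathbf x^\dagger(s)\in\arg\max_{\mathbf x}\ell^\star(s,\mathbf x)\). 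The deterministic policy \(q^\dagger(\cdot\mid s)=\delta_{\mathbf x^\dagger(s)}\) therefore belongs to \(\mathcal Q\). I expect this step to be the main obstacle, because the rest is bookkeeping.

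\textbf{Step 2: expectation equality.} Strong duality, the optimality of \(\boldsymbol\phi^\star\), and the complementary slackness \eqref{eq:cckb_complementary_slackness} together give
\[
L(q^\star,\boldsymbol\phi^\star)=F(q^\star)=\mathrm{OPT}^{\mathrm{rel}}=D(\boldsymbol\phi^\star)\ge L(q^\dagger,\boldsymbol\phi^\star)=\mathbb E_{s\sim\mathbb P}[\bar\ell^\star(s)].
\]
Combining this with the pointwise bound above yields
\[
\mathbb E_{s\sim\mathbb P}\Bigl[\bar\ell^\star(s)-\mathbb E_{\mathbf x\sim q^\star(\cdot\mid s)}\ell^\star(s,\mathbf x)\Bigr]\le 0,
\]
with a nonnegative integrand. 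All quantities are bounded because \(f\in[0,\bar p]\), \(h\in[-1/T,1-1/T]\), and \(\boldsymbol\phi^\star\) has finite norm, so every expectation is finite. The integrand therefore vanishes \(\mathbb P\)-almost surely, which is exactly the equality in \eqref{eq:cckb_contextwise_lagrangian_maximum}.
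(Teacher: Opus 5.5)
Your proposal is correct and is essentially the paper's proof. Both arguments use the measurable maximum theorem to obtain a deterministic maximizing policy in $\mathcal Q$, then use strong duality and complementary slackness to show that $L(q^\star,\boldsymbol\phi^\star)$ attains $\mathbb E_{s\sim\mathbb P}[\max_{\mathbf x}\ell^\star(s,\mathbf x)]$, so the nonnegative request-wise gap has zero mean and vanishes $\mathbb P$-almost surely.
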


\begin{proof}
For fixed \(\boldsymbol\phi^\star\), define
\[
\ell_{\boldsymbol\phi^\star}(s,\mathbf x)
:=f(s,\mathbf x)
-\langle\boldsymbol\phi^\star,\mathbf h(s,\mathbf x)\rangle.
\]
Assumptions~\ref{ass:cckb_kernel_regular}
and~\ref{ass:cckb_request_measurability} make this function measurable in
\(s\) and continuous in \(\mathbf x\).  Because \(\mathcal X\) in
\eqref{eq:exact_split_decomposition_space} is compact, a maximizer exists for
every request and can be selected measurably
\cite[Thm.~18.19]{aliprantis2006infinite}.  The corresponding deterministic
policy belongs to \(\mathcal Q\).
For every \(q\in\mathcal Q\),
\[
\begin{aligned}
L(q,\boldsymbol\phi^\star)
&=\mathbb E_{s\sim\mathbb P}\!\left[
\mathbb E_{\mathbf x\sim q(\cdot\mid s)}
[\ell_{\boldsymbol\phi^\star}(s,\mathbf x)]
\right]\\
&\le
\mathbb E_{s\sim\mathbb P}\!\left[
\max_{\mathbf x\in\mathcal X}
\ell_{\boldsymbol\phi^\star}(s,\mathbf x)
\right],
\end{aligned}
\]
where the inequality holds because an average cannot exceed the maximum.
The measurable policy above attains the upper bound; hence,
\[
\sup_{q\in\mathcal Q}L(q,\boldsymbol\phi^\star)
=\mathbb E_{s\sim\mathbb P}
\left[\max_{\mathbf x\in\mathcal X}
\ell_{\boldsymbol\phi^\star}(s,\mathbf x)\right].
\]
By strong duality and complementary slackness in
\eqref{eq:cckb_complementary_slackness}, \(q^\star\) attains the supremum on
the left.  Therefore,
\[
\mathbb E_{s\sim\mathbb P}\!\left[
\max_{\mathbf x\in\mathcal X}
\ell_{\boldsymbol\phi^\star}(s,\mathbf x)
-\mathbb E_{\mathbf x\sim q^\star(\cdot\mid s)}
[\ell_{\boldsymbol\phi^\star}(s,\mathbf x)]
\right]
=0.
\]
The integrand is nonnegative and therefore equals zero for
\(\mathbb P\)-almost every \(s\).  Expanding
\(\ell_{\boldsymbol\phi^\star}\) gives
\eqref{eq:cckb_contextwise_lagrangian_maximum}.
\end{proof}

The same conclusion applies to the proxy problem under the corresponding
assumptions, after replacing \(\mathbf h\) with
\(\mathbf h^{\mathrm{prx}}\).  For a fixed request, the path indicators are
fixed and kernel regularity gives continuity in \(\mathbf x\), while
Assumption~\ref{ass:cckb_request_measurability} gives measurability in \(s\).
Thus, the same measurable-selection argument applies.

\subsection{Proof of Theorem~\ref{thm:general_cckb_regret_violation}}
\label{appendix:general_cckb_theorem_proof}
The proof follows the primal--dual argument used for general \ac{ckb}
guarantees~\cite[Theorem~3.3]{zhou2022kernelized_constraints}.
First, we construct the joint high-probability event used in the analysis
(Sec.~\ref{appendix:cckb_joint_event}).
Second, we establish the \ac{cckb} master inequality
(Sec.~\ref{appendix:cckb_master_inequality}).
Finally, we combine the master inequality with the primal--dual relations to
derive the regret and constraint-violation bounds
(Sec.~\ref{appendix:cckb_final_bounds}).

\subsubsection{Joint High-Probability Event}
\label{appendix:cckb_joint_event}
For \(q\in\mathcal Q\), a nonnegative predictable sequence
\(\boldsymbol\psi=\{\boldsymbol\psi_t\}_{t=1}^T\), and \(B\ge0\), define the
events
\begin{align}
\mathcal E_{\mathrm{ctx}}^f(q,\delta)
&:=\left\{|\Delta_{\mathrm{ctx}}^f(q,T)|
\le\bar p\sqrt{\frac T2\log\frac2\delta}\right\},
\label{eq:def_cckb_context_reward_event}\\
\mathcal E_{\mathrm{ctx}}^h(q,\boldsymbol\psi,B,\delta)
&:=\left\{|\Delta_{\mathrm{ctx}}^h(q,\boldsymbol\psi,T)|
\le B\sqrt{\frac T2\log\frac2\delta}\right\}.
\label{eq:def_cckb_context_constraint_event}
\end{align}
Let \(q^\star\) be an optimal policy.  For the regret bound, set
\[
\begin{aligned}
\mathcal E_{\mathrm{ctx}}^{\mathrm{reg}}
:={}&\mathcal E_{\mathrm{ctx}}^f
(q^\star,\alpha_{\mathrm{ctx}}^f)\\
&\cap\mathcal E_{\mathrm{ctx}}^h
\left(q^\star,\{\boldsymbol\phi_t\}_{t=1}^T,
\rho J_{\mathrm{tot}},\alpha_{\mathrm{ctx}}^h\right).
\end{aligned}
\]
Because \(q^\star\) is fixed,
Proposition~\ref{prop:context_concentration_fixed_q} applies to the first
event.  For the second event, Algorithm~\ref{alg:component1} determines
\(\boldsymbol\phi_t\) before observing \(s_t\) and ensures
\(\|\boldsymbol\phi_t\|_1\le\rho J_{\mathrm{tot}}\), so
Proposition~\ref{prop:context_concentration_fixed_q_constraint} applies.
A union bound gives
\[
\Pr(\mathcal E_{\mathrm{ctx}}^{\mathrm{reg}})
\ge1-\alpha_{\mathrm{ctx}}^f-\alpha_{\mathrm{ctx}}^h.
\]
Thus, the first two concentration events, together with the surrogate event
specified in part~\emph{(i)} of
Theorem~\ref{thm:general_cckb_regret_violation}, are sufficient for the
regret bound.

For the constraint-violation bound, let
\((q^\star,\boldsymbol\phi^\star)\) be the primal--dual pair in
Corollary~\ref{cor:cckb_slater_consequence}, let
\(\{\boldsymbol\phi^\star\}_{t=1}^T\) denote the constant sequence, and set
\[
\begin{aligned}
\mathcal E_{\mathrm{ctx}}^{\mathrm{vio}}
:={}&\mathcal E_{\mathrm{ctx}}^f
(q^\star,\alpha_{\mathrm{ctx}}^f)\\
&\cap\mathcal E_{\mathrm{ctx}}^h
\left(q^\star,\{\boldsymbol\phi_t\}_{t=1}^T,
\rho J_{\mathrm{tot}},\frac{\alpha_{\mathrm{ctx}}^h}{2}\right)\\
&\cap\mathcal E_{\mathrm{ctx}}^h
\left(q^\star,\{\boldsymbol\phi^\star\}_{t=1}^T,
\Lambda_T^{\mathrm{rel}},\frac{\alpha_{\mathrm{ctx}}^h}{2}\right).
\end{aligned}
\]
The constant sequence is predictable, and
\(\|\boldsymbol\phi^\star\|_1\le\Lambda_T^{\mathrm{rel}}\).  Applying the same two
concentration propositions and a union bound over all three events gives
\[
\Pr(\mathcal E_{\mathrm{ctx}}^{\mathrm{vio}})
\ge1-\alpha_{\mathrm{ctx}}^f-\alpha_{\mathrm{ctx}}^h.
\]
Intersecting this event with \(\mathcal E_{\mathrm{sur}}^{\mathrm{vio}}\) gives the
probability stated in part~\emph{(ii)} of the theorem.

\subsubsection{CCKB Master Inequality}
\label{appendix:cckb_master_inequality}
For brevity, write \(z_t=(s_t,\mathbf x_t)\), and define
\begin{equation}
\mathrm{Reg}^{\mathrm{rel}}_{\mathrm{seq}}(T)
:=\sum_{t=1}^T\bigl(v_{q^\star}^f(s_t)-f(z_t)\bigr).
\label{eq:def_cckb_sequential_regret}
\end{equation}

The following lemma is the vector-valued, contextual counterpart of the key
decomposition in the original \ac{ckb}
analysis~\cite[Theorem~3.3]{zhou2022kernelized_constraints}.  The scalar
constraint term is replaced by an inner product with the vector of contextual
constraints, and \(\Delta_{\mathrm{ctx}}^h\) accounts for the realized request
sequence.

\begin{lemma}[CCKB master inequality]
\label{lem:vector_ckb_master}
Set \(V=\sqrt{J_{\mathrm{tot}}T}/\rho\).  If the one-sided bounds
in \eqref{eq:cckb_surrogate_direction} and the reward cumulative-error bound
in \eqref{eq:cckb_cumulative_surrogate_error} hold, then the following
inequality holds for \(\boldsymbol\phi=\mathbf0\), with the term involving
\(\boldsymbol{\mathcal W}_h(T)\) equal to zero.  On
\(\mathcal E_{\mathrm{sur}}^{\mathrm{vio}}\), it holds for every
\(\boldsymbol\phi\in[0,\rho]^{J_{\mathrm{tot}}}\):
\begin{equation}
\begin{split}
&\mathrm{Reg}^{\mathrm{rel}}_{\mathrm{seq}}(T)
+\left\langle\boldsymbol\phi,
\sum_{t=1}^T\mathbf h(z_t)\right\rangle\\
&\le
\mathcal W_f(T)
+\left\langle\boldsymbol\phi,
\boldsymbol{\mathcal W}_h(T)\right\rangle
+\Delta_{\mathrm{ctx}}^h
\bigl(q^\star,\{\boldsymbol\phi_t\}_{t=1}^T,T\bigr)\\
&\quad+\frac{V\|\boldsymbol\phi\|_2^2}{2}
+\frac{\rho\sqrt{J_{\mathrm{tot}}T}}{2}.
\end{split}
\label{eq:general_cckb_master_bound}
\end{equation}
\end{lemma}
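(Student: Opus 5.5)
The plan follows the one-step primal--dual argument of \ac{ckb}~\cite[Theorem~3.3]{zhou2022kernelized_constraints}. It has three parts: a per-round inequality from the primal step, a per-round drift inequality from the dual step, and a summation with telescoping and the choice of \(V\). Fix \(\boldsymbol\phi\in[0,\rho]^{J_{\mathrm{tot}}}\); the case \(\boldsymbol\phi=\mathbf0\) is handled by the same chain, with the \(\boldsymbol{\mathcal W}_h\) term simply never introduced.

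\emph{Primal step.} By line~\ref{algline:component1:primal-end}, \(\mathbf x_t\) maximizes \(\bar f_t(s_t,\cdot)-\langle\boldsymbol\phi_t,\bar{\mathbf h}_t(s_t,\cdot)\rangle\). Its value therefore dominates the average of this acquisition function under \(q^\star(\cdot\mid s_t)\), since a maximum exceeds any average. I then apply (C1) inside that average: \(\bar f_t\ge f\), \(\bar{\mathbf h}_t\le\mathbf h\) and \(\boldsymbol\phi_t\ge\mathbf0\). This yields
\[
v_{q^\star}^f(s_t)-\langle\boldsymbol\phi_t,\mathbf v_{q^\star}^h(s_t)\rangle
\le\bar f_t(z_t)-\langle\boldsymbol\phi_t,\bar{\mathbf h}_t(z_t)\rangle .
\]
Next I subtract \(f(z_t)\) from both sides and add \(\langle\boldsymbol\phi,\mathbf h(z_t)\rangle\), written as \(\langle\boldsymbol\phi,\bar{\mathbf h}_t(z_t)\rangle+\langle\boldsymbol\phi,\mathbf h(z_t)-\bar{\mathbf h}_t(z_t)\rangle\). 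Summing over \(t\) gives \(\mathrm{Reg}^{\mathrm{rel}}_{\mathrm{seq}}(T)+\langle\boldsymbol\phi,\sum_t\mathbf h(z_t)\rangle\) on the left. On the right it gives
\[
\sum_t(\bar f_t-f)(z_t)
+\Bigl\langle\boldsymbol\phi,\sum_t(\mathbf h-\bar{\mathbf h}_t)(z_t)\Bigr\rangle
+\sum_t\langle\boldsymbol\phi_t,\mathbf v_{q^\star}^h(s_t)\rangle
+\sum_t\langle\boldsymbol\phi-\boldsymbol\phi_t,\bar{\mathbf h}_t(z_t)\rangle .
\]
By (C2), the first two sums are at most \(\mathcal W_f(T)\) and \(\langle\boldsymbol\phi,\boldsymbol{\mathcal W}_h(T)\rangle\); the second bound uses \(\boldsymbol\phi\ge\mathbf0\). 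For the third sum, I add and subtract the mean: it equals \(\Delta_{\mathrm{ctx}}^h(q^\star,\{\boldsymbol\phi_t\},T)+\sum_t\langle\boldsymbol\phi_t,\mathbb E_{s\sim\mathbb P}[\mathbf v_{q^\star}^h(s)]\rangle\). The last piece is nonpositive because \(q^\star\) is feasible and \(\boldsymbol\phi_t\ge\mathbf0\).

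\emph{Dual step.} Projection onto the convex box \([0,\rho]^{J_{\mathrm{tot}}}\), which contains \(\boldsymbol\phi\), is nonexpansive. Hence
\[
\|\boldsymbol\phi_{t+1}-\boldsymbol\phi\|_2^2\le\|\boldsymbol\phi_t-\boldsymbol\phi\|_2^2+\tfrac{2}{V}\langle\boldsymbol\phi_t-\boldsymbol\phi,\bar{\mathbf h}_t(z_t)\rangle+\tfrac1{V^2}\|\bar{\mathbf h}_t(z_t)\|_2^2 .
\]
Rearranging bounds \(\langle\boldsymbol\phi-\boldsymbol\phi_t,\bar{\mathbf h}_t(z_t)\rangle\) by \(\tfrac V2(\|\boldsymbol\phi_t-\boldsymbol\phi\|_2^2-\|\boldsymbol\phi_{t+1}-\boldsymbol\phi\|_2^2)+\tfrac1{2V}\|\bar{\mathbf h}_t(z_t)\|_2^2\). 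Summing, the differences telescope. With \(\boldsymbol\phi_1=\mathbf0\), the telescoped part is at most \(V\|\boldsymbol\phi\|_2^2/2\). The clipping to \([-1/T,1-1/T]\) gives \(\|\bar{\mathbf h}_t\|_2^2\le J_{\mathrm{tot}}\), so the remainder is at most \(TJ_{\mathrm{tot}}/(2V)\). With \(V=\sqrt{J_{\mathrm{tot}}T}/\rho\), this equals \(\rho\sqrt{J_{\mathrm{tot}}T}/2\).

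Collecting the pieces gives \eqref{eq:general_cckb_master_bound}. The main obstacle I expect is the primal-step comparison: \(\mathbf x_t\) is a pointwise maximizer, while \(q^\star(\cdot\mid s_t)\) is a randomized policy. I would justify the inequality pointwise in \(s_t\), using measurability of \(q^\star\) and boundedness of the clipped surrogates so the averages are well defined. I would also check carefully that \(\boldsymbol\phi_t\) is \(\mathcal H_{t-1}\)-measurable. That measurability is what lets the third sum be written as \(\Delta_{\mathrm{ctx}}^h\) with a predictable weight sequence, matching Proposition~\ref{prop:context_concentration_fixed_q_constraint}.
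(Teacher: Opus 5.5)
Your proposal is correct and follows essentially the same route as the paper: the same four-way decomposition, namely the primal comparison of \(\mathbf x_t\) against \(q^\star(\cdot\mid s_t)\) via (C1), the cumulative surrogate errors via (C2), the projected-dual telescoping with \(\|\bar{\mathbf h}_t\|_2^2\le J_{\mathrm{tot}}\) and \(V=\sqrt{J_{\mathrm{tot}}T}/\rho\), and the feasibility-based split of \(\sum_t\langle\boldsymbol\phi_t,\mathbf v^h_{q^\star}(s_t)\rangle\) into \(\Delta^h_{\mathrm{ctx}}\) plus a nonpositive term. One small correction: rewriting that last sum in terms of \(\Delta^h_{\mathrm{ctx}}\) is purely algebraic, so this lemma does not need \(\boldsymbol\phi_t\) to be predictable; that property is only used later, when Proposition~\ref{prop:context_concentration_fixed_q_constraint} is applied.
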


\begin{proof}
Fix \(\boldsymbol\phi\in[0,\rho]^{J_{\mathrm{tot}}}\).  We decompose the
left-hand side into four terms that can be controlled by the primal selection
rule, the surrogate-error bounds, the dual update, and the feasibility of
\(q^\star\), respectively:
\begin{equation}
\begin{split}
&\mathrm{Reg}^{\mathrm{rel}}_{\mathrm{seq}}(T)
+\left\langle\boldsymbol\phi,\sum_{t=1}^T\mathbf h(z_t)\right\rangle\\
&=T_1+T_2
+\sum_{t=1}^T\left\langle
\boldsymbol\phi-\boldsymbol\phi_t,\bar{\mathbf h}_t(z_t)\right\rangle
+\sum_{t=1}^T\left\langle
\boldsymbol\phi_t,\mathbf v_{q^\star}^h(s_t)\right\rangle,
\end{split}
\label{eq:general_cckb_master_decomposition}
\end{equation}
where
\[
\begin{split}
T_1:={}&\sum_{t=1}^T
\left(v_{q^\star}^f(s_t)
-\langle\boldsymbol\phi_t,\mathbf v_{q^\star}^h(s_t)\rangle\right)\\
&-\sum_{t=1}^T
\left(\bar f_t(z_t)
-\langle\boldsymbol\phi_t,\bar{\mathbf h}_t(z_t)\rangle\right),\\
T_2:={}&\sum_{t=1}^T\bigl(\bar f_t(z_t)-f(z_t)\bigr)
+\sum_{t=1}^T\left\langle\boldsymbol\phi,
\mathbf h(z_t)-\bar{\mathbf h}_t(z_t)\right\rangle.
\end{split}
\]
The surrogate directions in
\eqref{eq:cckb_surrogate_direction} and the primal maximization in
Algorithm~\ref{alg:component1} give \(T_1\le0\).  On
\(\mathcal E_{\mathrm{sur}}^{\mathrm{vio}}\), the cumulative bounds in
\eqref{eq:cckb_cumulative_surrogate_error} give
\(
T_2\le\mathcal W_f(T)
+\left\langle\boldsymbol\phi,
\boldsymbol{\mathcal W}_h(T)\right\rangle.
\)
For \(\boldsymbol\phi=\mathbf0\), the constraint-surrogate term in \(T_2\)
vanishes, so the same inequality follows from only the reward
cumulative-error bound.
The projected dual update implies
\[
\begin{split}
\left\langle\boldsymbol\phi-\boldsymbol\phi_t,
\bar{\mathbf h}_t(z_t)\right\rangle
\le{}&\frac V2\left(
\|\boldsymbol\phi_t-\boldsymbol\phi\|_2^2
-\|\boldsymbol\phi_{t+1}-\boldsymbol\phi\|_2^2\right)\\
&+\frac{\|\bar{\mathbf h}_t(z_t)\|_2^2}{2V}.
\end{split}
\]
Summation, \(\boldsymbol\phi_1=\mathbf0\), and
\(\|\bar{\mathbf h}_t(z_t)\|_2^2\le J_{\mathrm{tot}}\) bound the corresponding
sum by
\(V\|\boldsymbol\phi\|_2^2/2
+\rho\sqrt{J_{\mathrm{tot}}T}/2\).
For the fourth term, feasibility of \(q^\star\) gives
\(\mathbb E_{s\sim\mathbb P}[\mathbf v_{q^\star}^h(s)]\le\mathbf0\).
Together with \(\boldsymbol\phi_t\ge\mathbf0\), the definition of
\(\Delta_{\mathrm{ctx}}^h\) therefore gives
\[
\begin{aligned}
\sum_{t=1}^T\langle\boldsymbol\phi_t,
\mathbf v_{q^\star}^h(s_t)\rangle
&=\Delta_{\mathrm{ctx}}^h
\bigl(q^\star,\{\boldsymbol\phi_t\}_{t=1}^T,T\bigr)\\
&\quad+
\sum_{t=1}^T\left\langle\boldsymbol\phi_t,
\mathbb E_{s\sim\mathbb P}[\mathbf v_{q^\star}^h(s)]\right\rangle\\
&\le\Delta_{\mathrm{ctx}}^h
\bigl(q^\star,\{\boldsymbol\phi_t\}_{t=1}^T,T\bigr).
\end{aligned}
\]
Substitution into \eqref{eq:general_cckb_master_decomposition} proves the
claim.
\end{proof}

\subsubsection{Regret and Constraint-Violation Bounds}
\label{appendix:cckb_final_bounds}
\begin{proof}[Proof of Theorem~\ref{thm:general_cckb_regret_violation}]
\emph{1) Regret Bound:}
The population regret decomposes as
\begin{equation}
\mathrm{Reg}^{\mathrm{rel}}(T)
=\Delta_{\mathrm{ctx}}^f(q^\star,T)
+\mathrm{Reg}^{\mathrm{rel}}_{\mathrm{seq}}(T).
\label{eq:general_cckb_regret_split}
\end{equation}
On \(\mathcal E_{\mathrm{ctx}}^{\mathrm{reg}}\) and the surrogate
event specified in part~\emph{(i)} of the theorem, set
\(\boldsymbol\phi=\mathbf0\) in
Lemma~\ref{lem:vector_ckb_master}, and combine
\eqref{eq:general_cckb_regret_split} with the reward concentration bound in
\eqref{eq:def_cckb_context_reward_event}, instantiated with
\(q=q^\star\) and \(\delta=\alpha_{\mathrm{ctx}}^f\), and the constraint
concentration bound in \eqref{eq:def_cckb_context_constraint_event},
instantiated with \(q=q^\star\),
\(\boldsymbol\psi=\{\boldsymbol\phi_t\}_{t=1}^T\),
\(B=\rho J_{\mathrm{tot}}\), and
\(\delta=\alpha_{\mathrm{ctx}}^h\).  Because
\(\langle\mathbf0,\boldsymbol{\mathcal W}_h(T)\rangle=0\), this step does not
use the constraint cumulative-error bound.  This proves
\eqref{eq:general_cckb_regret_bound}.

\emph{2) Contextual Correction for the Violation Bound:}
For the violation bound, we reuse the coordinate-isolation argument of
Efroni et al.~\cite{efroni2020exploration}, as adopted in the scalar
\ac{ckb} analysis~\cite[Thm.~3.3]{zhou2022kernelized_constraints}.  Once
\eqref{eq:general_cckb_master_bound} is available, the coordinate choice and
the subsequent algebra are unchanged.  We only need to account for two
differences: the benchmark policy is defined under \(\mathbb P\), whereas
the sequential regret is evaluated on the realized requests, and the master
inequality \eqref{eq:general_cckb_master_bound} contains a vector-valued surrogate error.

We first establish the correction for the realized requests.  Applying
Lemma~\ref{lem:cckb_requestwise_lagrangian_optimality} to \(\mathbf x_t\),
summing
over \(t\), expanding \(\Delta_{\mathrm{ctx}}^h\), and then using
complementary slackness in \eqref{eq:cckb_complementary_slackness} give
\begin{equation}
\begin{aligned}
&\mathrm{Reg}^{\mathrm{rel}}_{\mathrm{seq}}(T)
+\left\langle\boldsymbol\phi^\star,
\sum_{t=1}^T\mathbf h(z_t)\right\rangle\\
&\quad\ge\sum_{t=1}^T
\left\langle\boldsymbol\phi^\star,
\mathbf v_{q^\star}^h(s_t)\right\rangle\\
&\quad\overset{(a)}{=}\Delta_{\mathrm{ctx}}^h
\bigl(q^\star,\{\boldsymbol\phi^\star\}_{t=1}^T,T\bigr)
+T\left\langle\boldsymbol\phi^\star,
\mathbb E_{s\sim\mathbb P}[\mathbf v_{q^\star}^h(s)]
\right\rangle\\
&\quad\overset{(b)}{=}\Delta_{\mathrm{ctx}}^h
\bigl(q^\star,\{\boldsymbol\phi^\star\}_{t=1}^T,T\bigr)\\
&\quad\overset{(c)}{\ge}-\Lambda_T^{\mathrm{rel}}\sqrt{\frac T2
\log\frac4{\alpha_{\mathrm{ctx}}^h}}.
\end{aligned}
\label{eq:cckb_contextual_closing_correction}
\end{equation}
Here, (a) follows from the definition of \(\Delta_{\mathrm{ctx}}^h\),
(b) from complementary slackness in
\eqref{eq:cckb_complementary_slackness}, and (c) from the
constraint-concentration event in
\eqref{eq:def_cckb_context_constraint_event}, instantiated with
\(B=\Lambda_T^{\mathrm{rel}}\) and \(\delta=\alpha_{\mathrm{ctx}}^h/2\).
\eqref{eq:cckb_contextual_closing_correction} is the additional step required
by the contextual setting.

\emph{3) Coordinate Isolation and Violation Bound:}
We now apply the standard coordinate-isolation step.  Fix \((d,j)\), let
\(\mathbf e_{d,j}\) be its standard basis vector, and set
\[
\boldsymbol\phi^{(d,j)}
:=\boldsymbol\phi^\star+\frac\rho2\mathbf e_{d,j}.
\]
Because \(\|\boldsymbol\phi^\star\|_1\le\Lambda_T^{\mathrm{rel}}\le\rho/2\), this vector lies
in \([0,\rho]^{J_{\mathrm{tot}}}\) and can be substituted into
\eqref{eq:general_cckb_master_bound}.  Its left-hand side decomposes as
\begin{equation}
\begin{aligned}
&\mathrm{Reg}^{\mathrm{rel}}_{\mathrm{seq}}(T)
+\left\langle\boldsymbol\phi^{(d,j)},
\sum_{t=1}^T\mathbf h(z_t)\right\rangle\\
&=
\mathrm{Reg}^{\mathrm{rel}}_{\mathrm{seq}}(T)
+\left\langle\boldsymbol\phi^\star,
\sum_{t=1}^T\mathbf h(z_t)\right\rangle
+\frac\rho2\sum_{t=1}^T h_{d,j}(z_t)\\
&\ge\frac\rho2\sum_{t=1}^T h_{d,j}(z_t)
-\Lambda_T^{\mathrm{rel}}\sqrt{\frac T2
\log\frac4{\alpha_{\mathrm{ctx}}^h}}.
\end{aligned}
\label{eq:cckb_coordinate_lhs_lower_bound}
\end{equation}
Here, the final inequality follows from
\eqref{eq:cckb_contextual_closing_correction}.

Moreover,
\(\|\boldsymbol\phi^{(d,j)}\|_2\le\rho\) and each of its coordinates is at
most \(\rho\).  On \(\mathcal E_{\mathrm{ctx}}^{\mathrm{vio}}\), the
constraint-concentration event for \(\{\boldsymbol\phi_t\}_{t=1}^T\) gives
\begin{equation}
\Delta_{\mathrm{ctx}}^h
\bigl(q^\star,\{\boldsymbol\phi_t\}_{t=1}^T,T\bigr)
\le\rho J_{\mathrm{tot}}\sqrt{\frac T2
\log\frac4{\alpha_{\mathrm{ctx}}^h}}.
\label{eq:cckb_algorithmic_dual_concentration_bound}
\end{equation}
Substituting \eqref{eq:cckb_algorithmic_dual_concentration_bound} into the
right-hand side of \eqref{eq:general_cckb_master_bound} and comparing it
with the left-hand-side lower bound in
\eqref{eq:cckb_coordinate_lhs_lower_bound} give
\[
\begin{aligned}
\frac\rho2\sum_{t=1}^T h_{d,j}(z_t)
&\le \mathcal W_f(T)
+\left\langle\boldsymbol\phi^{(d,j)},
\boldsymbol{\mathcal W}_h(T)\right\rangle\\
&\quad+\rho J_{\mathrm{tot}}\sqrt{\frac T2
\log\frac4{\alpha_{\mathrm{ctx}}^h}}\\
&\quad+\frac{V\|\boldsymbol\phi^{(d,j)}\|_2^2}{2}
+\frac{\rho\sqrt{J_{\mathrm{tot}}T}}{2}\\
&\quad+\Lambda_T^{\mathrm{rel}}\sqrt{\frac T2
\log\frac4{\alpha_{\mathrm{ctx}}^h}}.
\end{aligned}
\]
Taking the positive part, multiplying by \(2/\rho\), and using
\(V=\sqrt{J_{\mathrm{tot}}T}/\rho\) and \(\Lambda_T^{\mathrm{rel}}/\rho\le1/2\) yield
\[
\begin{aligned}
\left[\sum_{t=1}^T h_{d,j}(z_t)\right]_+
&\le
\underbrace{\frac{2\mathcal W_f(T)}{\rho}}
_{\mathcal O(\mathcal W_f(T)/\rho)}\\
&\quad+
\underbrace{\frac2\rho
\left\langle\boldsymbol\phi^{(d,j)},
\boldsymbol{\mathcal W}_h(T)\right\rangle}
_{\mathcal O(\|\boldsymbol{\mathcal W}_h(T)\|_1)}\\
&\quad+
\underbrace{2J_{\mathrm{tot}}\sqrt{\frac T2
\log\frac4{\alpha_{\mathrm{ctx}}^h}}}
_{\mathcal O(J_{\mathrm{tot}}\sqrt{T\log(1/\alpha_{\mathrm{ctx}}^h)})}\\
&\quad+
\underbrace{\frac{V\|\boldsymbol\phi^{(d,j)}\|_2^2}{\rho}
+\sqrt{J_{\mathrm{tot}}T}}
_{\mathcal O(\sqrt{J_{\mathrm{tot}}T})}\\
&\quad+
\underbrace{\frac{2\Lambda_T^{\mathrm{rel}}}{\rho}\sqrt{\frac T2
\log\frac4{\alpha_{\mathrm{ctx}}^h}}}
_{\mathcal O(\sqrt{T\log(1/\alpha_{\mathrm{ctx}}^h)})}.
\end{aligned}
\]
Since \(J_{\mathrm{tot}}\ge1\), collecting these terms gives
\[
\begin{split}
\left[\sum_{t=1}^T h_{d,j}(z_t)\right]_+
=\mathcal O\!\left(
\frac{\mathcal W_f(T)}{\rho}
+\left\|\boldsymbol{\mathcal W}_h(T)\right\|_1\right.\\
\left.{}+J_{\mathrm{tot}}
\sqrt{T\log\tfrac1{\alpha_{\mathrm{ctx}}^h}}
+\sqrt{J_{\mathrm{tot}}T}
\right),
\end{split}
\]
which proves \eqref{eq:general_cckb_violation_bound}.
\end{proof}

Assumption~\ref{ass:cckb_slater} guarantees that a finite \(\Lambda_T^{\mathrm{rel}}\)
exists for each fixed horizon, as required by the constraint-violation part
of Theorem~\ref{thm:general_cckb_regret_violation}.

\subsection{Proof of
Proposition~\ref{prop:proxy_surrogate_error_control}}
\label{appendix:proxy_surrogate_control_proof}

\begin{proof}[Proof of Proposition~\ref{prop:proxy_surrogate_error_control}]
\emph{1) Confidence Events:}
Applying the standard kernelized-bandit confidence
theorem~\cite{srinivas2012information,chowdhury2017kernelized} to the reward
observations gives an event \(\mathcal E_f\) with probability at least
\(1-\alpha_f\) on which, simultaneously for every \(t\in[T]\) and
\(z\in\mathcal S\times\mathcal X\),
\begin{equation}
|f(z)-\mu_{t-1}^f(z)|
\le\beta_t^f(\alpha_f)\sigma_{t-1}^f(z).
\label{eq:proxy_reward_confidence}
\end{equation}
Apply the same theorem to the retained subsequence of each constraint model,
allocate failure probability \(\alpha_h/J_{\mathrm{tot}}\) to each model,
and take a union bound.  This gives an event \(\mathcal E_h\) with
probability at least \(1-\alpha_h\) on which, simultaneously for every
\(t\in[T]\), resource \((d,j)\), and \(z=(s,\mathbf x)\) satisfying
\(\mathbf1_{d,j}^{\Gamma}(s)=1\),
\begin{equation}
\left|m_{d,j}^{\Gamma}(z)
-\mu_{t-1}^{m_{d,j}^{\Gamma}}(z)\right|
\le
\beta_t^{m_{d,j}^{\Gamma}}
\!\left(\frac{\alpha_h}{J_{\mathrm{tot}}}\right)
\sigma_{t-1}^{m_{d,j}^{\Gamma}}(z).
\label{eq:proxy_constraint_confidence}
\end{equation}
No confidence relation is needed off path:
the zero extensions in \eqref{eq:method_path_gated_mean} and
\eqref{eq:constraint_gp_posterior} make both the proxy constraint and its
surrogate equal \(-1/T\) there.

\emph{2) Pointwise Surrogate Bounds:}
On \(\mathcal E_f\), the definition of the reward \ac{ucb} in
\eqref{eq:reward_ucb} and the reward confidence relation in
\eqref{eq:proxy_reward_confidence} bound the gap between \(\hat f_t\) and
\(f\).  Since \(f(z)\in[0,\bar p]\), clipping
\(\hat f_t\) to this interval preserves its optimistic direction and cannot
increase the gap.  Therefore, the following bound holds uniformly in \(z\)
and, in particular, at \(z_t\):
\begin{equation}
0\le\bar f_t(z_t)-f(z_t)
\le2\beta_t^f(\alpha_f)\sigma_{t-1}^f(z_t).
\label{eq:proxy_reward_pointwise_error}
\end{equation}
On \(\mathcal E_h\), the demand \ac{lcb} in
\eqref{eq:constraint_gp_posterior} and the constraint confidence relation in
\eqref{eq:proxy_constraint_confidence} similarly bound the gap between
\(m_{d,j}^{\Gamma}\) and
\(\widehat m_{t,d,j}^{\Gamma}\).  Substitution into
\eqref{eq:constraint_surrogate}, followed by clipping to the known proxy
constraint range, gives, for every \((d,j)\),
\begin{equation}
0\le h^{\mathrm{prx}}_{d,j}(z_t)-\bar h^{\mathrm{prx}}_{t,d,j}(z_t)
\le
\frac{2\beta_t^{m_{d,j}^{\Gamma}}
(\alpha_h/J_{\mathrm{tot}})}{C_{d,j}^{\max}}
\omega_{t,d,j}.
\label{eq:proxy_constraint_pointwise_error}
\end{equation}

\emph{3) Cumulative Reward-Surrogate Error:}
The cumulative reward-surrogate error is bounded as
\begin{equation}
\begin{aligned}
\sum_{t=1}^T\bigl(\bar f_t(z_t)-f(z_t)\bigr)
&\overset{(a)}{\le}
2\sum_{t=1}^T
\beta_t^f(\alpha_f)\sigma_{t-1}^f(z_t)\\
&\overset{(b)}{\le}
2\beta_T^f(\alpha_f)
\sum_{t=1}^T\sigma_{t-1}^f(z_t)\\
&\overset{(c)}{=}
\mathcal O\!\left(
\beta_T^f(\alpha_f)\sqrt{T\gamma_T^f}
\right)\\
&\overset{(d)}{=}
\mathcal O\!\left(\Phi_f(T;\alpha_f)\right).
\end{aligned}
\label{eq:proxy_reward_cumulative_error}
\end{equation}
Here, (a) follows from \eqref{eq:proxy_reward_pointwise_error}, (b) from
the monotonicity of \(\beta_t^f(\alpha_f)\), (c) from the cumulative
posterior-width bound~\cite[Lemma~4]{chowdhury2017kernelized}, and (d) from
the definitions of \(\beta_T^f(\alpha_f)\) and \(\Phi_f(T;\alpha_f)\).
The event \(\mathcal E_f\cap\mathcal E_h\) has probability at least
\(1-\alpha_f-\alpha_h\), so the pointwise directions and
\eqref{eq:proxy_reward_cumulative_error} prove the first part of the
proposition without Assumption~\ref{ass:valuable_decomposition}.

\emph{4) Cumulative Constraint-Surrogate Error:}
For the remainder of the proof, additionally suppose that
Assumption~\ref{ass:valuable_decomposition} holds and \(M_T\ge\rho\).

\emph{Auxiliary reward-UCB floor:}
We first establish an auxiliary lower bound on the reward UCB \(\bar f_t\).
Define the shifted proxy quantities
\[
g^{\mathrm{prx}}_{d,j}(s,\mathbf x)
:=h^{\mathrm{prx}}_{d,j}(s,\mathbf x)+\frac1T
=\frac{m_{d,j}^{\Gamma}(s,\mathbf x)}{C_{d,j}^{\max}}
\]
and
\begin{equation}
\bar g^{\mathrm{prx}}_{t,d,j}(s,\mathbf x)
:=\bar h^{\mathrm{prx}}_{t,d,j}(s,\mathbf x)+\frac1T.
\label{eq:shifted_proxy_surrogate_definition}
\end{equation}
On \(\mathcal E_f\cap\mathcal E_h\), the pointwise bounds and clipping give
\[
0\le\bar g^{\mathrm{prx}}_{t,d,j}(s,\mathbf x)
\le g^{\mathrm{prx}}_{d,j}(s,\mathbf x)\le1.
\]
For the realized request \(s_t\),
Assumption~\ref{ass:valuable_decomposition} gives a pointwise comparator
\(\mathbf x_t^\circ\in\mathcal X\) such that
\begin{equation}
f(s_t,\mathbf x_t^\circ)
-M_T\sum_{d\in\mathcal D}\sum_{j\in[J_d]}
g^{\mathrm{prx}}_{d,j}(s_t,\mathbf x_t^\circ)
\ge a_T.
\label{eq:valuable_decomposition_pointwise_comparator}
\end{equation}
Using \(\mathbf x_t^\circ\) as a comparator yields
\begin{equation}
\begin{aligned}
\bar f_t(z_t)
&\overset{\mathrm{(a)}}{\ge}
\bar f_t(z_t)
-\left\langle\boldsymbol\phi_t,
\bar{\mathbf g}^{\mathrm{prx}}_t(z_t)\right\rangle\\
&\overset{\mathrm{(b)}}{\ge}
\bar f_t(s_t,\mathbf x_t^\circ)
-\left\langle\boldsymbol\phi_t,
\bar{\mathbf g}^{\mathrm{prx}}_t(s_t,\mathbf x_t^\circ)\right\rangle\\
&\overset{\mathrm{(c)}}{\ge}
f(s_t,\mathbf x_t^\circ)
-\rho\sum_{d\in\mathcal D}\sum_{j\in[J_d]}
g^{\mathrm{prx}}_{d,j}(s_t,\mathbf x_t^\circ)\\
&\overset{\mathrm{(d)}}{\ge}
f(s_t,\mathbf x_t^\circ)
-M_T\sum_{d\in\mathcal D}\sum_{j\in[J_d]}
g^{\mathrm{prx}}_{d,j}(s_t,\mathbf x_t^\circ)\\
&\overset{\mathrm{(e)}}{\ge}a_T.
\end{aligned}
\label{eq:proxy_reward_ucb_floor}
\end{equation}
Here, (a) follows from the nonnegativity of \(\boldsymbol\phi_t\) and
\(\bar{\mathbf g}^{\mathrm{prx}}_t\); (b) follows because, by
\eqref{eq:shifted_proxy_surrogate_definition}, replacing
\(\bar{\mathbf h}^{\mathrm{prx}}_t\) with \(\bar{\mathbf g}^{\mathrm{prx}}_t\)
subtracts the action-independent constant \(\|\boldsymbol\phi_t\|_1/T\) from
the acquisition and hence preserves its maximizer; (c) from reward optimism,
\(\boldsymbol\phi_t\in[0,\rho]^{J_{\mathrm{tot}}}\), and
\(0\le\bar{\mathbf g}^{\mathrm{prx}}_t\le\mathbf g^{\mathrm{prx}}\);
(d) from \(M_T\ge\rho\) and the nonnegativity of
\(\mathbf g^{\mathrm{prx}}\); and (e) from
\eqref{eq:valuable_decomposition_pointwise_comparator}.

\emph{Partition of rounds:}
Fix a resource \((d,j)\).  \eqref{eq:proxy_constraint_pointwise_error}
gives
\[
0\le h^{\mathrm{prx}}_{d,j}(z_t)
-\bar h^{\mathrm{prx}}_{t,d,j}(z_t)
\le\frac{2\beta_t^{m_{d,j}^{\Gamma}}(\alpha_h/J_{\mathrm{tot}})}
{C_{d,j}^{\max}}\omega_{t,d,j}.
\]
Split the rounds into
\[
\mathcal H:=\left\{t:p(s_t,\mathbf x_t)\ge\frac{a_T}{2\bar p}\right\}
\]
and
\[
\mathcal L:=\left\{t:p(s_t,\mathbf x_t)<\frac{a_T}{2\bar p}\right\}.
\]
Accordingly, the cumulative error decomposes into its contributions over
\(\mathcal H\) and \(\mathcal L\), which we bound separately.

\emph{High-success rounds (\(t\in\mathcal H\)):}
Apply Lemma~\ref{lem:selective_width} to every resource with failure
probability \(\alpha_w/J_{\mathrm{tot}}\).  A union bound gives an event
\(\mathcal E_w\) with probability at least \(1-\alpha_w\) on which all these
bounds hold.  On this event, the pointwise bound and the monotonicity of the
exploration parameters in
\eqref{eq:concrete_proxy_demand_exploration_width} give
\[
\begin{aligned}
&\sum_{t\in\mathcal H}
\bigl(h^{\mathrm{prx}}_{d,j}(z_t)
-\bar h^{\mathrm{prx}}_{t,d,j}(z_t)\bigr)\\
&\quad\le
\frac{2\beta_T^{m_{d,j}^{\Gamma}}
(\alpha_h/J_{\mathrm{tot}})}{C_{d,j}^{\max}}
\sum_{t\in\mathcal H}\omega_{t,d,j}\\
&\quad\le
\frac{4\bar p\beta_T^{m_{d,j}^{\Gamma}}
(\alpha_h/J_{\mathrm{tot}})}
{a_TC_{d,j}^{\max}}
\sum_{t\in\mathcal H}
p(s_t,\mathbf x_t)\omega_{t,d,j}\\
&\quad\le
\frac{4\bar p\beta_T^{m_{d,j}^{\Gamma}}
(\alpha_h/J_{\mathrm{tot}})}
{a_TC_{d,j}^{\max}}
\sum_{t=1}^T p(s_t,\mathbf x_t)\omega_{t,d,j}\\
&\quad=\mathcal O\!\Biggl(
\frac{\bar p}{a_TC_{d,j}^{\max}}
\left(
B^{m_{d,j}^{\Gamma}}
+\widetilde\sigma_{d,j}
\sqrt{\gamma_T^{m_{d,j}^{\Gamma}}+1
+\log\frac{J_{\mathrm{tot}}}{\alpha_h}}
\right)\\
&\hspace{18mm}\times\Biggl[
\sqrt{\frac{T\gamma_T^{m_{d,j}^{\Gamma}}}
{\log(1+\eta_{m_{d,j}^{\Gamma}}^{-1})}}+\sqrt{T\log\frac{J_{\mathrm{tot}}}{\alpha_w}}
\Biggr]\Biggr).
\end{aligned}
\]
The second inequality uses
\(1\le 2\bar p\,p(s_t,\mathbf x_t)/a_T\) for \(t\in\mathcal H\);
the third uses the nonnegativity of \(p(s_t,\mathbf x_t)\omega_{t,d,j}\);
and the final equality follows from
\eqref{eq:concrete_proxy_demand_exploration_width} and
Lemma~\ref{lem:selective_width}.

\emph{Low-success rounds (\(t\in\mathcal L\)):}
For every \(t\in\mathcal L\), the definition of \(\mathcal L\) gives
\[
\begin{aligned}
f(z_t)
&=\kappa_{\mathrm{price}}(s_t)p(s_t,\mathbf x_t)\\
&\le\bar p\,p(s_t,\mathbf x_t)<\frac{a_T}{2}.
\end{aligned}
\]
Combining this inequality with \eqref{eq:proxy_reward_ucb_floor} gives
\[
\bar f_t(z_t)-f(z_t)\ge\frac{a_T}{2}.
\]
Because the proxy target and surrogate both lie in
\([-1/T,1-1/T]\), their pointwise difference satisfies
\[
0\le h^{\mathrm{prx}}_{d,j}(z_t)
-\bar h^{\mathrm{prx}}_{t,d,j}(z_t)
\le1
\le\frac{2}{a_T}\bigl(\bar f_t(z_t)-f(z_t)\bigr).
\]
Therefore,
\[
\sum_{t\in\mathcal L}
\bigl(h^{\mathrm{prx}}_{d,j}(z_t)
-\bar h^{\mathrm{prx}}_{t,d,j}(z_t)\bigr)
\le\frac{2}{a_T}\mathcal W_f(T)
=\mathcal O\!\left(\frac{\Phi_f(T;\alpha_f)}{a_T}\right).
\]
Combining the two parts yields
\[
\bigl[\boldsymbol{\mathcal W}_h(T)\bigr]_{d,j}
=\mathcal O\!\left(
\Psi_{d,j}(T;\alpha_f,\alpha_h,\alpha_w)\right)
\]
simultaneously for all resources.

\emph{5) Joint Event:}
The joint event \(\mathcal E_f\cap\mathcal E_h\cap\mathcal E_w\) has
probability at least
\(1-\alpha_f-\alpha_h-\alpha_w\).  Consequently, under the proxy substitutions
\(h_{d,j}\leftarrow h^{\mathrm{prx}}_{d,j}\) and
\(\bar h_{t,d,j}\leftarrow\bar h^{\mathrm{prx}}_{t,d,j}\), the directional
conditions in \eqref{eq:cckb_surrogate_direction} and the cumulative-error
conditions in \eqref{eq:cckb_cumulative_surrogate_error} hold.  Thus,
\(\mathcal E_{\mathrm{sur}}^{\mathrm{prx}}\) holds on this joint event, which proves the
proposition.
\end{proof}

\subsection{Proof of Proposition~\ref{prop:proxy_optimality_gap}}
\label{appendix:proxy_optimality_gap_proof}
\begin{proof}[Proof of Proposition~\ref{prop:proxy_optimality_gap}]
\emph{1) Dual Existence and Norm Bounds:}
We first establish the existence of the optimal dual vectors
\(\boldsymbol\phi_{\mathrm{prx}}^\star\) and
\(\boldsymbol\phi_{\mathrm{rel}}^\star\), together with their finite norm
bounds \(\Lambda_T^{\mathrm{prx}}\) and \(\Lambda_T^{\mathrm{rel}}\).
Corollary~\ref{cor:cckb_slater_consequence} applies directly to the proxy
problem and yields an optimal dual vector
\(\boldsymbol\phi_{\mathrm{prx}}^\star\) with finite norm bound \(\Lambda_T^{\mathrm{prx}}\).
Let \(q^\circ\) and \(\xi_T\) denote the Slater policy and margin for the
proxy constraints, respectively.  Because
\(\mathbf h\le\mathbf h^{\mathrm{prx}}\),
\[
\begin{aligned}
\mathbb E_{s\sim\mathbb P}[\mathbf v_{q^\circ}^h(s)]
\le\mathbb E_{s\sim\mathbb P}
[\mathbf v_{q^\circ}^{h^{\mathrm{prx}}}(s)]\le-\xi_T\mathbf1.
\end{aligned}
\]
Hence, \(q^\circ\) is also a Slater policy for the relaxed constraints with
the same margin.  The corollary therefore also applies to the relaxed
\ac{nsrdp} and yields an optimal dual vector
\(\boldsymbol\phi_{\mathrm{rel}}^\star\) with finite norm bound
\(\Lambda_T^{\mathrm{rel}}\).

\emph{2) Support Restriction:}
We first show that an optimal relaxed policy assigns probability only to
decompositions satisfying
\(p(s,\mathbf x)\ge a_T/(2\bar p)\).
\eqref{eq:method_success_upper_bound} and the definitions of
\(h_{d,j}\) and \(h^{\mathrm{prx}}_{d,j}\) give
\[
0\le h_{d,j}(s,\mathbf x)+\frac1T
\le h^{\mathrm{prx}}_{d,j}(s,\mathbf x)+\frac1T.
\]
Moreover,
\(\|\boldsymbol\phi_{\mathrm{rel}}^\star\|_\infty
\le\|\boldsymbol\phi_{\mathrm{rel}}^\star\|_1
\le\Lambda_T^{\mathrm{rel}}\le M_T\).
Hence, Assumption~\ref{ass:valuable_decomposition} permits the same
shift-and-comparator argument used to establish
\eqref{eq:proxy_reward_ucb_floor}, with
\((\bar f_t,\bar{\mathbf h}^{\mathrm{prx}}_t,\boldsymbol\phi_t)\) replaced by
\((f,\mathbf h,\boldsymbol\phi_{\mathrm{rel}}^\star)\).  Thus, every
request-wise relaxed-Lagrangian maximizer \(\mathbf x\) satisfies
\(f(s,\mathbf x)\ge a_T\) for \(\mathbb P\)-almost every \(s\).
Since \(f(s,\mathbf x)\le\bar p\,p(s,\mathbf x)\), every such maximizer satisfies
\(p(s,\mathbf x)\ge a_T/\bar p\).
Lemma~\ref{lem:cckb_requestwise_lagrangian_optimality} states that an optimal
relaxed policy \(q_{\mathrm{rel}}^\star\) can put probability mass only on
request-wise Lagrangian maximizers.  Therefore,
\begin{equation}
p(s,\mathbf x)\ge\frac{a_T}{\bar p}
\ge\frac{a_T}{2\bar p}
\label{eq:relaxed_optimal_support_success}
\end{equation}
for \(\mathbb P\)-almost every \(s\) and
\(q_{\mathrm{rel}}^\star(\cdot\mid s)\)-almost every \(\mathbf x\).

\emph{3) Proxy Constraint Bound:}
We next bound the proxy-constraint values of the optimal relaxed policy.
Fix a resource \((d,j)\).  On the support of
\(q_{\mathrm{rel}}^\star\),
\eqref{eq:method_consumption_factorization} and
\eqref{eq:relaxed_optimal_support_success} give
\begin{equation}
\begin{aligned}
h^{\mathrm{prx}}_{d,j}(s,\mathbf x)
&=\frac{c_{d,j}(s,\mathbf x)}
{p(s,\mathbf x)C_{d,j}^{\max}}-\frac1T\\
&\overset{\mathrm{(a)}}{\le}\frac{2\bar p}{a_T}
\frac{c_{d,j}(s,\mathbf x)}{C_{d,j}^{\max}}-\frac1T\\
&=\frac{2\bar p}{a_T}
\left(\frac{c_{d,j}(s,\mathbf x)}{C_{d,j}^{\max}}-\frac1T\right)
+\frac1T\left(\frac{2\bar p}{a_T}-1\right)\\
&\overset{\mathrm{(b)}}{=}\frac{2\bar p}{a_T}h_{d,j}(s,\mathbf x)
+\frac1T\left(\frac{2\bar p}{a_T}-1\right).
\end{aligned}
\label{eq:proxy_gap_normalized_bound}
\end{equation}
Here, (a) follows from \eqref{eq:relaxed_optimal_support_success}, and (b)
follows from the definition of \(h_{d,j}\) in
\eqref{eq:method_exact_normalized_mean}.
Taking expectations under \(q_{\mathrm{rel}}^\star\) gives
\begin{equation}
\begin{aligned}
&\mathbb E_{s\sim\mathbb P}
\mathbb E_{\mathbf x\sim q_{\mathrm{rel}}^\star(\cdot\mid s)}
[h^{\mathrm{prx}}_{d,j}(s,\mathbf x)]\\
&\quad\overset{\mathrm{(a)}}{\le}
\frac{2\bar p}{a_T}
\mathbb E_{s\sim\mathbb P}
\mathbb E_{\mathbf x\sim q_{\mathrm{rel}}^\star(\cdot\mid s)}
[h_{d,j}(s,\mathbf x)]
+\frac1T\left(\frac{2\bar p}{a_T}-1\right)\\
&\quad\overset{\mathrm{(b)}}{\le}
\frac1T\left(\frac{2\bar p}{a_T}-1\right).
\end{aligned}
\label{eq:proxy_gap_policy_constraint_bound}
\end{equation}
Here, (a) follows from \eqref{eq:proxy_gap_normalized_bound}, and (b)
follows from the feasibility of \(q_{\mathrm{rel}}^\star\) for the relaxed
\ac{nsrdp}.

\emph{4) Optimality-Gap Bound:}
We finally convert the preceding proxy-constraint bound into an
optimality-gap bound through the proxy dual problem.
For \(q\in\mathcal Q\), define the proxy constraint vector by
\[
\bigl[\mathbf G^{\mathrm{prx}}(q)\bigr]_{d,j}
:=
\mathbb E_{s\sim\mathbb P}
\mathbb E_{\mathbf x\sim q(\cdot\mid s)}
[h^{\mathrm{prx}}_{d,j}(s,\mathbf x)].
\]
It follows that
\[
\begin{aligned}
\mathrm{OPT}^{\mathrm{prx}}
&\overset{\mathrm{(a)}}{=}
\sup_{q\in\mathcal Q}
\left\{F(q)-\langle\boldsymbol\phi_{\mathrm{prx}}^\star,
\mathbf G^{\mathrm{prx}}(q)\rangle\right\}\\
&\overset{\mathrm{(b)}}{\ge}
F(q_{\mathrm{rel}}^\star)
-\langle\boldsymbol\phi_{\mathrm{prx}}^\star,
\mathbf G^{\mathrm{prx}}(q_{\mathrm{rel}}^\star)\rangle\\
&\overset{\mathrm{(c)}}{=}
\mathrm{OPT}^{\mathrm{rel}}
-\langle\boldsymbol\phi_{\mathrm{prx}}^\star,
\mathbf G^{\mathrm{prx}}(q_{\mathrm{rel}}^\star)\rangle.
\end{aligned}
\]
Here, (a) follows from strong duality and the dual optimality of
\(\boldsymbol\phi_{\mathrm{prx}}^\star\), (b) follows because
\(q_{\mathrm{rel}}^\star\in\mathcal Q\), and (c) follows from the
optimality of \(q_{\mathrm{rel}}^\star\) for the relaxed \ac{nsrdp} and the
identical objectives of the relaxed and proxy problems.  Rearranging this
inequality and applying \eqref{eq:proxy_gap_policy_constraint_bound}
componentwise give
\[
\begin{aligned}
\mathrm{OPT}^{\mathrm{rel}}
&\le\mathrm{OPT}^{\mathrm{prx}}
+\langle\boldsymbol\phi_{\mathrm{prx}}^\star,
\mathbf G^{\mathrm{prx}}(q_{\mathrm{rel}}^\star)\rangle\\
&\overset{\mathrm{(d)}}{\le}\mathrm{OPT}^{\mathrm{prx}}
+\frac1T\sum_{d\in\mathcal D}\sum_{j\in[J_d]}
\phi_{\mathrm{prx},d,j}^\star
\left(\frac{2\bar p}{a_T}-1\right).
\end{aligned}
\]
Here, (d) also uses
\(\boldsymbol\phi_{\mathrm{prx}}^\star\ge\mathbf0\).
Subtracting \(\mathrm{OPT}^{\mathrm{prx}}\) proves the upper bound in
\eqref{eq:proxy_optimality_gap}.  The lower bound follows from
Lemma~\ref{lem:proxy_feasible_implies_relaxed_feasible}.  Finally,
\eqref{eq:proxy_optimality_gap_norm_bound} follows from
\(\|\boldsymbol\phi_{\mathrm{prx}}^\star\|_1\le\Lambda_T^{\mathrm{prx}}\) and
\(\boldsymbol\phi_{\mathrm{prx}}^\star\ge\mathbf0\).
\end{proof}

\subsection{Proof of
Corollary~\ref{cor:explicit_finite_time_proxy_performance}}
\label{appendix:concrete_finite_time_guarantee_proof}

\begin{proof}
\emph{1) Explicit Finite-Time Bounds:}

\emph{(i) Proxy Regret:}
The first part of
Proposition~\ref{prop:proxy_surrogate_error_control} gives the proxy
one-sided bounds and
\[
\mathcal W_f(T)
=\mathcal O\left(\Phi_f(T;\alpha_f)\right)
\]
with failure probability at most \(\alpha_f+\alpha_h\).
Substitution into part~\emph{(i)} of
Corollary~\ref{cor:proxy_regret_violation} proves
\eqref{eq:explicit_gp_proxy_regret_bound}.

\emph{(ii) Relaxed \ac{nsrdp} Regret:}
Part~\emph{(i)} of
Corollary~\ref{cor:relaxed_regret_violation} then gives
\eqref{eq:explicit_gp_relaxed_regret_transfer}.  Under its additional
conditions, Proposition~\ref{prop:proxy_optimality_gap} gives
\[
T\Delta_{\mathrm{prx}}(T)
\le
\Lambda_T^{\mathrm{prx}}
\left(\frac{2\bar p}{a_T}-1\right),
\]
which proves \eqref{eq:explicit_gp_relaxed_regret_bound}.

\emph{(iii) Constraint Violation:}
Under the additional conditions in part~\emph{(iii)}, the second part of
Proposition~\ref{prop:proxy_surrogate_error_control} also gives
\[
\left\|\boldsymbol{\mathcal W}_h(T)\right\|_1
=\mathcal O\left(
\left\|\boldsymbol{\Psi}_h
(T;\alpha_f,\alpha_h,\alpha_w)\right\|_1
\right)
\]
with joint failure probability at most
\(\alpha_f+\alpha_h+\alpha_w\).  Substitution into part~\emph{(ii)} of
Corollary~\ref{cor:proxy_regret_violation}, followed by part~\emph{(ii)} of
Corollary~\ref{cor:relaxed_regret_violation}, proves
\eqref{eq:explicit_gp_relaxed_violation_bound}.  Combining each surrogate
event with its two request-concentration events gives the stated
probabilities.

\emph{2) Simplified Rates:}
For the rates in Corollary~\ref{cor:concrete_finite_time_proxy_performance},
hold the kernel, norm, noise, and regularization parameters fixed.
For every nonzero modeled kernel, monotonicity gives
\(\gamma_T^\bullet\ge\gamma_1^\bullet>0\), so
\eqref{eq:reward_surrogate_error_scale} and
\eqref{eq:proxy_constraint_surrogate_error_scale} give
\begin{align*}
\Phi_f(T;\alpha_f)
&=\widetilde{\mathcal O}(\gamma_T^f\sqrt T),\\
\Psi_{d,j}(T;\alpha_f,\alpha_h,\alpha_w)
&=\widetilde{\mathcal O}\!\left(
\frac{\sqrt T}{a_T}\left[
\gamma_T^f+\frac{\bar p}{C_{d,j}^{\max}}\gamma_T^{m_{d,j}^{\Gamma}}
\right]\right).
\end{align*}
An identically zero kernel models only the zero function and has zero
posterior width, so its surrogate-error contribution vanishes directly.
Substituting the rate for \(\Phi_f\) into
\eqref{eq:explicit_gp_proxy_regret_bound} gives
\eqref{eq:concrete_gp_proxy_regret_bound}.
For constraint violation, summing the per-resource rates for \(\Psi_{d,j}\)
and using \(\sqrt{J_{\mathrm{tot}}}\le J_{\mathrm{tot}}\) in
\eqref{eq:explicit_gp_relaxed_violation_bound} gives
\eqref{eq:concrete_gp_relaxed_violation_bound}.
The regret-transfer identity and the proxy-optimality-gap term are unaffected
by this simplification, which gives
\eqref{eq:concrete_gp_relaxed_regret_transfer} and
\eqref{eq:concrete_gp_relaxed_regret_bound}.
\end{proof}

\end{document}

  \documentclass[src/main.tex]{subfiles}

\begin{document}

\section{Experimental Network Slice Request Instantiation}
\label{appendix:experimental_nsr_instantiation}
Table~\ref{tab:slice_parameter_basis} provides the complete parameter mapping and class-conditional
values for the experimental setup in Sec.~\ref{subsubsec:nsr_setup}.

\begin{table*}[t]
\caption{Explicit mapping from experimental parameters to \ac{nsr} tuple elements and decompositions}
\label{tab:slice_parameter_basis}
\centering
\small
\renewcommand{\arraystretch}{1.1} 
\setlength{\tabcolsep}{4pt} 
\begin{tabularx}{\textwidth}{@{}>{\raggedright\arraybackslash}m{0.036\textwidth}>{\raggedright\arraybackslash}m{0.29\textwidth}>{\raggedright\arraybackslash}m{0.156\textwidth}>{\raggedright\arraybackslash}m{0.156\textwidth}>{\raggedright\arraybackslash}m{0.3\textwidth}@{}}
\toprule
\textbf{NSR} & \textbf{Instantiation Rule} & \textbf{\ac{urllc} (5QI 86)} & \textbf{\ac{embb} (5QI 6)} & \textbf{Decomposition} \\
\midrule
\(A\) & Active-cell ratio \(\rho_A\) & \(\rho_A=0.25\) & \(\rho_A=0.75\) & No decomposition. \\
\midrule
\multirow{2}{*}{\(\boldsymbol{\theta}\)} &
  Packet-arrival rate per \acs{gnb} (packets/s) & 20,000 & 15,000 &
  \multirow{2}{*}{\parbox[c]{0.3\textwidth}{No decomposition.}} \\
 & Packet-size mean \(m_B^{(1)}\) (bits) & 1,600 & 9,600 & \\
\midrule
\multirow{6}{*}{\(\mathbf{R}\)} &
  \(R_1\): Latency component \newline \(\displaystyle R_1=(\mathcal{N}_t,\{D_t\le \delta_t\})\), where \(D_t\) is the delay random variable (s). &
  \(\delta_t\): target delay; \newline \(\delta_t\in[2,5]~\mathrm{ms}\) &
  \(\delta_t\): target delay; \newline \(\delta_t=0.3~\mathrm{s}\) &
  Choose \(w_{t,d}\) (\(\sum_d w_{t,d}=1\));\newline split delay budget as \(\delta_{t,d}=w_{t,d}\delta_t\). \\[3ex] 
 &
  \(R_2\): Throughput component \newline \(\displaystyle R_2=(\Omega,\{\Theta_t\ge \theta_t\})\), where \(\Theta_t\) is the throughput random variable (Mbit/s). &
  \(\theta_t\): target rate; \newline \(\theta_t\in[10,25]~\mathrm{Mbit/s}\) &
  \(\theta_t\): target rate; \newline \(\theta_t\in[100,120]~\mathrm{Mbit/s}\) &
  Fixed per-domain check \(\Theta_{t,d}\ge\theta_t\); \newline no event decomposition parameter. \\[3ex] 
 &
  \(R_3\): Non-drop component \newline \(\displaystyle R_3=(\Omega,\mathcal{N}_t)\) (fixed template) &
  No class-specific \newline parameter &
  No class-specific \newline parameter &
  No event decomposition. \\
\midrule
\multirow{5}{*}{\(\mathbf{g}\)} &
  \(g_1\): Guarantee for latency component \(R_1\) & \(g_1 = 99.99\%\) & \(g_1 = 98\%\) & Choose \(\eta_{t,d,1}\) (\(\sum_d\eta_{t,d,1}=1\)); \newline split as \(g_{t,d,1}=g_{t,1}^{\eta_{t,d,1}}\). \\[2ex]
 &
  \(g_2\): Guarantee for throughput component \(R_2\) & \(g_2\in[99,99.9]\%\) & \(g_2\in[95,98]\%\) & Choose \(\eta_{t,d,2}\) (\(\sum_d\eta_{t,d,2}=1\)); \newline split as \(g_{t,d,2}=g_{t,2}^{\eta_{t,d,2}}\). \\[2ex]
 &
  \(g_3\): Guarantee for non-drop component \(R_3\) & \(g_3 = 99.999\%\) & \(g_3 = 99.9999\%\) & Choose \(\eta_{t,d,3}\) (\(\sum_d\eta_{t,d,3}=1\)); \newline split as \(g_{t,d,3}=g_{t,3}^{\eta_{t,d,3}}\). \\
\bottomrule
\end{tabularx}
\end{table*}

\end{document}

  \documentclass[src/main.tex]{subfiles}

\begin{document}

\section{Implementation Details for Comparison Methods}
\label{appendix:comparison_shared_components}
For the joint request--decomposition input
\(z=(s_t,\mathbf{x})\in\mathcal{S}\times\mathcal{X}\), this section details
the construction of the standardized kernel input \(\tilde{\mathbf{u}}(z)\)
used by the \acp{gp} underlying the raw reward estimate \(\hat f_t\) and the
raw resource-wise constraint estimates \(\widehat h_{t,d,j}\)
(Sec.~\ref{appendix:comparison_feature_map}).
It also specifies how the constraint outputs are processed and which
observation-noise models are used to fit these \acp{gp}
(Sec.~\ref{appendix:comparison_noise_models}) in the comparison setup
summarized in Sec.~\ref{subsec:comparison_methods}.
\supplementonly{The exact kernels used by the comparison methods are defined in
Sec.~\ref{appendix:comparison_kernel_definitions}.}

\subsection{Feature Map for Decomposed SLA Requirements}
\label{appendix:comparison_feature_map}
The kernel input \(\tilde{\mathbf{u}}(z)\) is constructed in two stages:
\[
z
\overset{\text{(i)}}{\longmapsto}
\mathbf{u}^{\mathrm{raw}}(z)
\overset{\text{(ii)}}{\longmapsto}
\tilde{\mathbf{u}}(z).
\]
Stage~(i) transforms the decomposed \ac{sla} requirements so that larger
values consistently represent stricter requirements, despite differences in
their original units and directions, and concatenates the transformed values
into \(\mathbf{u}^{\mathrm{raw}}(z)\).
Stage~(ii) standardizes every dimension using
the current training inputs of each \ac{gp}, producing
\(\tilde{\mathbf{u}}(z)\) whose dimensions have comparable numerical scales
for kernel evaluation.

\textbf{(i) Fixed Input Transformation:}
In the experiment of Sec.~\ref{subsubsec:exp_decomposition_rules},
\(\mathbf{x}=(\mathbf{w},\boldsymbol{\eta}_1,\boldsymbol{\eta}_2,\boldsymbol{\eta}_3)\), where
\(\mathbf{w},\boldsymbol{\eta}_1,\boldsymbol{\eta}_2,\boldsymbol{\eta}_3
\in\Delta^{|\mathcal{D}|-1}\).
For the numerical values expressed in the units listed in
Table~\ref{tab:slice_parameter_basis}, we transform the decomposed requirements
so that a larger value represents a stricter domain-level \ac{sla} requirement:
\[
\begin{aligned}
\textnormal{(latency)}\quad
&-\log(\delta_t w_d),\\
\textnormal{(throughput)}\quad
&\log(\theta_t),\\
\textnormal{(guarantee)}\quad
&-\log(1-g_{t,i}^{\eta_{d,i}}),
\qquad i\in\{1,2,3\}.
\end{aligned}
\]
These transforms assign larger values to stricter requirements and are finite
over the parameter ranges in Table~\ref{tab:slice_parameter_basis}.

We concatenate the latency and throughput components over
\(d\in\mathcal{D}\), and the guarantee components over
\((d,i)\in\mathcal{D}\times\{1,2,3\}\), to form the unstandardized kernel
feature vector \(\mathbf{u}^{\mathrm{raw}}(z)\).
The coverage set \(A_t\) and traffic profile \(\boldsymbol{\theta}_t\) are not
included in this input transformation.
Their use in constraint-\ac{gp} training and prediction is described in
Sec.~\ref{appendix:comparison_noise_models}.

\textbf{(ii) Standardization Using Training Data:}
At every scheduled refit, each reward or constraint \ac{gp} computes the mean
and standard deviation of each dimension of
\(\mathbf{u}^{\mathrm{raw}}(z)\) over its current training inputs.
The \ac{gp} uses these statistics to standardize the raw input vector \(\mathbf{u}^{\mathrm{raw}}(z)\).
The resulting vector is denoted by \(\tilde{\mathbf{u}}(z)\).
The statistics remain fixed until the next scheduled refit and are applied to
both the training inputs and candidate inputs.
Dimensions that are constant in the current training inputs are mapped to zero.

  \ifsupplementversion
    \input{src/sections/variants/app_comparison_kernel_definitions_supplement}%
  \fi

\subsection{Output Processing and Observation-Noise Models for Surrogate Fitting}
\label{appendix:comparison_noise_models}
For \(z_t=(s_t,\mathbf{x}_t)\), both the reward and constraint \acp{gp} use
Gaussian likelihoods.
The reward observation is modeled as
\begin{equation}
\begin{aligned}
\textnormal{(reward)}\quad
&W_t
=
f(z_t)+\varepsilon_t^f.
\end{aligned}
\label{eq:comparison_reward_observation}
\end{equation}
For the constraint observations, define the aggregate mean offered traffic
rate of the round-\(t\) request as
\(L_t:=|A_t|\lambda_t m_{B,t}^{(1)}\), where \(|A_t|\) is the number of
covered \acp{gnb}, \(\lambda_t\) is the packet-arrival rate per \ac{gnb}, and
\(m_{B,t}^{(1)}\) is the mean packet size.
This quantity is used to express the observed resource consumption per unit of
offered traffic rate before fitting the constraint \acp{gp}.
Define the traffic-normalized observation and its corresponding mean as
\begin{equation}
V_{t,d,j}
:=
\frac{U_{t,d,j}}{L_t},
\qquad
\mu^V_{d,j}(z_t)
:=
\frac{m^\Gamma_{d,j}(z_t)}{L_t}.
\label{eq:comparison_traffic_normalized_observation}
\end{equation}
For the successful, on-path rounds included in the constraint training set,
the observation model is
\begin{equation}
\begin{aligned}
\textnormal{(constraint)}\quad
&V_{t,d,j}
=
\mu^V_{d,j}(z_t)+\varepsilon_{t,d,j}^{V}.
\end{aligned}
\label{eq:comparison_constraint_observation}
\end{equation}

\textbf{Noise Models and Fitting:}
The reward \ac{gp} uses one noise variance shared by all observations
(homoscedastic noise), whereas each constraint \ac{gp} uses a noise variance
estimated separately for each observation (heteroscedastic noise), as detailed
below.
Since Gaussian random variables are sub-Gaussian, the homoscedastic and
heteroscedastic Gaussian likelihood models are consistent with the noise
conditions in Assumptions~\ref{ass:reward_gp_regular}
and~\ref{ass:success_only_gp_regular}.

We next describe how the reward and constraint \acp{gp} model observation
noise and fit their parameters.
At each scheduled refit, the kernel and likelihood parameters are fitted by
maximizing the marginal likelihood.
In the descriptions below, \(\tau\in[t-1]\) indexes a training round for the
reward or constraint model used in round \(t\).

\subsubsection{Reward Estimate (Homoscedastic Noise)}
\label{appendix:comparison_reward_noise_model}
For the reward \ac{gp} in Sec.~\ref{subsubsec:gp_reward_model}, the fitting
model for \eqref{eq:comparison_reward_observation} is
\begin{equation}
\varepsilon_\tau^f\sim\mathcal{N}(0,v^f),
\end{equation}
where \(v^f\) is one noise variance fitted and shared by all reward
observations.

\subsubsection{Constraint Estimates (Traffic Normalization and Heteroscedastic Noise)}
\label{appendix:comparison_constraint_noise_model}
The constraint estimates combine two operations with distinct purposes.
\textit{(i)} Traffic normalization removes the increase in resource consumption caused by
a higher offered traffic rate, whereas \textit{(ii)} the heteroscedastic noise model
represents the residual variation that remains after normalization.
The two operations are described below.

\textbf{(i) Traffic Normalization and Output Rescaling:}
Resource demand depends not only on the decomposed \ac{sla} requirements but
also directly on the coverage set \(A_t\) and traffic profile
\(\boldsymbol{\theta}_t\).
For example, increasing the number of covered \acp{gnb} in \(A_t\) or the
packet-arrival rate specified by \(\boldsymbol{\theta}_t\) increases the
aggregate traffic that the slice must carry, even when the decomposed
requirements remain fixed.
The resulting increase in traffic raises the demand on the \ac{an}, \ac{tn},
and \ac{cn} resources that carry that traffic.

In our experimental implementation of the constraint \acp{gp}, we represent
this dominant traffic-volume effect through the aggregate offered traffic rate
\(L_t=|A_t|\lambda_t m_{B,t}^{(1)}\) and model the total demand as scaling
with \(L_t\).
Because this known multiplicative structure can be encoded directly, we do
not include \(A_t\) or \(\boldsymbol{\theta}_t\) in the kernel input.
This choice avoids requiring each resource-wise \ac{gp} to infer the same
traffic-volume scaling from a limited number of observations and instead
focuses kernel learning on the relationship between the decomposed
\ac{sla} requirements and the resource demand per unit of offered traffic.

The fitting and prediction procedures are summarized below.
\begin{description}[itemsep=0.4em]
\item[\textbf{(Fitting):}]
For resource \((d,j)\), fitting uses the traffic-normalized observations
\(V_{\tau,d,j}\).
Before fitting, observations above the empirical 99th percentile of the
training set are replaced by that percentile to prevent a small number of
unusually large values from dominating the fit.
After this preprocessing, the \ac{gp} is fitted, yielding a posterior mean and
standard deviation on the \(V_{t,d,j}\) scale.

\item[\textbf{(Prediction):}]
At prediction, the posterior mean and standard deviation on the
\(V_{t,d,j}\) scale are multiplied by \(L_t/C_{d,j}^{\max}\).
This rescaling converts the per-unit demand into the request's total resource
demand expressed as a fraction of the resource capacity.
The rescaled posterior quantities are then used to form a lower confidence
bound.
\end{description}

\textbf{(ii) Heteroscedastic Noise Estimation:}
Each constraint \ac{gp} models the residual in
\eqref{eq:comparison_constraint_observation} using a separate variance
\(v_\tau\) for each training round \(\tau\):
\begin{equation}
\varepsilon_{\tau,d,j}^{V}
\sim
\mathcal{N}(0,v_\tau).
\label{eq:comparison_constraint_heteroscedastic_likelihood}
\end{equation}
The observation-specific variances allow the amount of unexplained variation
in resource-consumption observations to differ across request--decomposition
pairs; the single variance \(v^f\) used by the reward \ac{gp} would not
represent these differences.
To estimate \(v_\tau\) in
\eqref{eq:comparison_constraint_heteroscedastic_likelihood}, we first fit a constraint \ac{gp} with one shared noise variance to
\(V_{\tau,d,j}\).
Using the mean \(\hat V^{(0)}_{d,j}(z_\tau)\) predicted by the initial \ac{gp}
at each training input, we compute
\begin{equation}
r_\tau
:=
V_{\tau,d,j}-\hat V^{(0)}_{d,j}(z_\tau),
\qquad
\xi_\tau^{v}
:=
\log\!\left(r_\tau^2+\epsilon_{\log}\right).
\end{equation}
Here, \(\epsilon_{\log}>0\) is a stabilizing offset that keeps the logarithmic
target finite when \(r_\tau=0\).
A second \ac{gp} with one shared noise variance is fitted to \(\xi_\tau^{v}\).
The mean \(\hat\xi_\tau^{v}\) predicted by the second \ac{gp} determines
\begin{equation}
v_\tau
=
\operatorname{clip}_{[v_{\min},\,v_{\max}]}
\!\left(\exp(\hat\xi_\tau^{v})\right).
\end{equation}
Here, \(0<v_{\min}<v_{\max}\) are the lower and upper clipping bounds on the
observation-specific noise variance, respectively.
In all experiments, we set
\(\epsilon_{\log}=10^{-8}\), \(v_{\min}=2\times10^{-3}\), and
\(v_{\max}=1\).
The constraint \ac{gp} is then refitted once using \(v_\tau\) as the fixed
noise variance for training round \(\tau\).


\end{document}

  \documentclass[src/main.tex]{subfiles}

\begin{document}

\section{The Performance Evaluation Model}
\label{appendix:performance_evaluation_model}
\label{appendix:performance_pipeline_overview}
This section specifies the experimental implementation of the
domain-specific controllers.
The experimental performance model evaluates downlink user-plane traffic from
the selected \acp{upf} to the active \acp{gnb}.
Accordingly, each selected route is oriented from CN through TN to AN and
includes the downlink radio-side output resource at its \ac{gnb} endpoint,
which abstracts the final transmission to the served terminal.
Given the path-induced resource set \(\Gamma_d(s_t)\) and the delegated request
\(s_d(s_t,\mathbf{x}_t)\), each controller works backward from the \ac{sla}
targets specified in \(s_d(s_t,\mathbf{x}_t)\) through an M/M/1/\(K\) queueing
model to determine the required capacity of each resource in
\(\Gamma_d(s_t)\).
As specified in Sec.~\ref{subsubsec:nsr_setup}, the experiments instantiate
\(M=3\), with \(R_{t,1}\), \(R_{t,2}\), and \(R_{t,3}\) representing conditional
latency, throughput, and non-drop, respectively.
The simulator performs this calculation from CN through TN to AN, passes the
output traffic rate of each resource to the next resource, and admits the
request only when every traversed resource is feasible.

In this section,
we first describe how the simulator passes traffic through the resources from
CN to AN and admits a request only if every required resource can support it
(Sec.~\ref{appendix:performance_sequential_composition}).
We then explain how a controller determines the capacity required at one
resource and checks it against the available capacity
(Sec.~\ref{appendix:performance_resource_level_evaluator}).
Finally, we specify the M/M/1/\(K\)-based conditional-latency, throughput, and
non-drop probabilities evaluated at each resource and their comparison with
the delegated probability targets
(Sec.~\ref{appendix:performance_per_resource_metrics}).
Algorithm~\ref{alg:appendix_phi_mapping} summarizes the complete per-request
procedure.
Table~\ref{tab:performance_model_symbol_map} summarizes the principal symbols
used throughout this section.

\begin{table}[t]
\caption{Principal symbols for the experimental controller implementation}
\label{tab:performance_model_symbol_map}
\centering
\small
\setlength{\tabcolsep}{3pt}
\begin{tabular}{@{} >{\raggedright\arraybackslash}p{0.3\columnwidth} >{\raggedright\arraybackslash}p{0.65\columnwidth} @{}}
    \toprule
    \multicolumn{2}{c}{\textbf{Network-level Composition and Admission}} \\
    \midrule
    \(\boldsymbol{\lambda}_t\) & Simulator state collecting the current resource-level input and output traffic rates. \\
    \(\lambda_{t,d,j}^{\mathrm{in}},\lambda_{t,d,j}^{\mathrm{out}}\) & Input and success-branch output packet-arrival rates at resource \((d,j)\). \\
    \(Y_{t,d,j}\), \(Y_{t,d}\), \(Y_t\) & Resource-level, domain-level, and \ac{e2e} feasibility indicators. \\
    \(B_{t,d,j}\) & Domain demand report containing \(\widetilde U_{t,d,j}\) for \(j\in\Gamma_d(s_t)\) and zero otherwise. \\
    \(C^{\mathrm{rem}}_{t,d,j}\) & Remaining capacity; \(\mathbf C_t^{\mathrm{rem}}:=(C^{\mathrm{rem}}_{t,d,j})_{d,j}\). \\
    \midrule
    \multicolumn{2}{c}{\textbf{Per-resource Feasibility and Allocation}} \\
    \midrule
    \(\delta_{t,\mathrm{TN},j}^{\mathrm{edge}}, g_{t,\mathrm{TN},i}^{\mathrm{edge}}\) & TN edge-level thresholds produced by the deterministic TN mapping (\(i\in\{1,2,3\}\)). \\
    \(\bar\alpha_{t,d,j}\) & Physically available ratio \(C^{\mathrm{rem}}_{t,d,j}/C_{d,j}^{\max}\). \\
    \(\widehat\beta^{\mathrm{req}}_{t,d,j}\) & Feasible-side bisection estimate of the minimum required ratio within the physical range. \\
    \(\widetilde\beta_{t,d,j}\) & Trial required ratio after the nonnegative random overhead. \\
    \(\widetilde U_{t,d,j}\) & Trial-demand value \(\widetilde\beta_{t,d,j}C_{d,j}^{\max}\) for an evaluated resource. \\
    \midrule
    \multicolumn{2}{c}{\textbf{Queueing and \ac{sla} Metrics}} \\
    \midrule
    \(\mu_{t,d,j}(\beta)\) & Service rate under the candidate capacity \(\beta C_{d,j}^{\max}\). \\
    \(\rho_{t,d,j}(\beta)\) & Traffic intensity for M/M/1/\(K_{d,j}\) at candidate ratio \(\beta\). \\
    \(p_{t,d,j,n}(\beta)\) & Stationary probability that queue length is \(n\). \\
    \(\Lambda^{\mathrm{out}}_{t,d,j}(\beta)\) & Accepted-packet rate at resource \((d,j)\) under candidate ratio \(\beta\). \\
    \(\Theta_{t,d,j,T_{\mathrm{win}}}\) & Windowed throughput random variable at resource \((d,j)\). \\
    \bottomrule
\end{tabular}
\end{table}

\begin{algorithm}[t]
\caption{Sequential Domain Composition and Atomic Admission for One Request}
\label{alg:appendix_phi_mapping}
\begin{algorithmic}[1]
\Require \(s_t\), \(\mathbf{x}_t\);
\(\{\Gamma_d(s_t),s_d(s_t,\mathbf{x}_t)\}_{d\in\mathcal D}\);
current remaining capacities \(\mathbf C_t^{\mathrm{rem}}\)
\State \(\boldsymbol{\lambda}_t\gets\Call{InitializeSources}{s_t}\)
\For{\(d=\mathrm{CN},\mathrm{TN},\mathrm{AN}\), in this order}
    \State
    \(\begin{aligned}
    &(Y_{t,d},\boldsymbol{\lambda}_t,\mathbf B_{t,d})\\[-1mm]
    &\quad\gets\Call{EvaluateDomain}{
    d,\Gamma_d(s_t),s_d(s_t,\mathbf{x}_t),\boldsymbol{\lambda}_t}
    \end{aligned}\)
    \If{\(Y_{t,d}=0\)}
        \State \Return \(0,\mathbf0,\mathbf C_t^{\mathrm{rem}}\)
    \EndIf
\EndFor
\State
\(\begin{aligned}
&(Y_t,\mathbf U_t,\mathbf C_{t+1}^{\mathrm{rem}})\\[-1mm]
&\quad\gets\textsc{AtomicAdmission}\bigl(
\{\mathbf B_{t,d}\}_{d\in\mathcal D},
\mathbf C_t^{\mathrm{rem}}\bigr)
\end{aligned}\)
\State \Return \(Y_t,\mathbf U_t,\mathbf C_{t+1}^{\mathrm{rem}}\)
\Statex
\Procedure{EvaluateDomain}{$d,\Gamma_d(s_t),s_{t,d},
\boldsymbol{\lambda}_t$}
\State \(Y_{t,d}\gets1\)
\State \(\mathbf B_{t,d}\gets\mathbf0\)
\For{each \(j\in\Gamma_d(s_t)\), in topological order}
    \State
    \(\begin{aligned}
    &(Y_{t,d,j},\widetilde U_{t,d,j},
    \lambda^{\mathrm{out}}_{t,d,j})\\[-1mm]
    &\quad\gets \Call{EvaluateResource}{
    s_{t,d},\lambda^{\mathrm{in}}_{t,d,j},
    C^{\mathrm{rem}}_{t,d,j}}
    \end{aligned}\)
    \State \(Y_{t,d}\gets Y_{t,d}Y_{t,d,j}\)
    \State \(B_{t,d,j}\gets\widetilde U_{t,d,j}\)
    \If{\(Y_{t,d,j}=0\)}
        \State \Return \(Y_{t,d}\), \(\boldsymbol{\lambda}_t\), and
        \(\mathbf B_{t,d}\)
    \EndIf
    \State
    \(\boldsymbol{\lambda}_t
    \gets\Call{PropagateRate}{
    d,j,\lambda^{\mathrm{out}}_{t,d,j},\boldsymbol{\lambda}_t}\)
\EndFor
\State \Return \(Y_{t,d}\), \(\boldsymbol{\lambda}_t\), and
\(\mathbf B_{t,d}\)
\EndProcedure
\Statex
\Procedure{AtomicAdmission}{$\{\mathbf B_{t,d}\}_{d\in\mathcal D},
\mathbf C_t^{\mathrm{rem}}$}
\State \(\mathbf U_t\gets(\mathbf B_{t,d})_{d\in\mathcal D}\)
\State \(\mathbf C_{t+1}^{\mathrm{rem}}
\gets\mathbf C_t^{\mathrm{rem}}-\mathbf U_t\)
\State \Return \(1,\mathbf U_t,\mathbf C_{t+1}^{\mathrm{rem}}\)
\EndProcedure
\end{algorithmic}
\end{algorithm}

\subsection{Network-level Composition and Atomic Admission}
\label{appendix:performance_sequential_composition}
At round \(t\), each \(\Phi_d\) receives the path-induced resource set
\(\Gamma_d(s_t)\) and delegated request \(s_d(s_t,\mathbf{x}_t)\).
Because the input traffic rate of each resource is determined by the output
traffic rates of its preceding resources, the simulator evaluates resources
sequentially from CN through TN to AN and propagates each output rate
downstream (Sec.~\ref{appendix:performance_traffic_propagation}).
This sequencing only carries the upstream traffic rates downstream; each
controller otherwise uses its delegated inputs and local operational state
independently.
Within each domain \(d\), the resource-level success indicators
\(Y_{t,d,j}\) are combined into \(Y_{t,d}\).
Evaluation stops at the first resource-level failure and returns \(Y_t=0\) and
\(\mathbf U_t=\mathbf0\) without evaluating the remaining resources.
If every traversed resource is feasible, the domain demand reports are
committed atomically as the realized resource-consumption vector \(\mathbf U_t\)
(Sec.~\ref{appendix:performance_atomic_admission}).

\subsubsection{Source Initialization and Traffic Propagation}
\label{appendix:performance_traffic_propagation}
To determine the traffic offered to each resource, the simulator first assigns
the request traffic to the selected CN roots
(\textsc{InitializeSources}).
It then passes the output traffic rate of each successfully evaluated resource
to the resources that follow it on the selected paths
(\textsc{PropagateRate}).
These two steps update \(\boldsymbol{\lambda}_t\) as the simulator proceeds from
CN through TN to AN.

\textbf{(i) \textsc{InitializeSources}.}
This procedure initializes \(\boldsymbol{\lambda}_t\) by assigning the CN-root
input rates separately for each selected \ac{upf}; all other resource-level
rates are populated by subsequent calls to \textsc{PropagateRate}.
Let \(\iota_t(a)\in\mathcal U\) be the \ac{upf} selected for active \ac{gnb}
\(a\in A_t\).
For each \(c\in\mathcal U\), define the active \acp{gnb} assigned to \(c\) as
\[
A_t(c):=
\{a\in A_t\mid \iota_t(a)=c\}.
\]
For each \(a\in A_t(c)\), let \(P_t(c,a)\) denote the selected directed
downlink path from \(c\) to \(a\).
For each CN root edge \(j\) leaving \(c\), define
\begin{equation}
A_t(c,j)
:=
\{a\in A_t(c)\mid (\mathrm{CN},j)\in P_t(c,a)\}.
\end{equation}
The offered source rate associated with \ac{upf} \(c\) is
\begin{equation}
\lambda^{\mathrm{src}}_{t,c}
:=
|A_t(c)|\lambda_t,
\end{equation}
where \(\lambda_t\) is the common offered packet-arrival rate per active
\ac{gnb}, specified by the traffic/service profile
\(\boldsymbol{\theta}_t\) of request \(s_t\)
(Definition~\ref{def:network_slice_request}).
For each CN root edge \(j\) leaving \(c\), the input rate is initialized as
\begin{equation}
\lambda^{\mathrm{in}}_{t,\mathrm{CN},j}
:=
|A_t(c,j)|\lambda_t.
\end{equation}
Thus, traffic is partitioned across the selected \acp{upf} and their root
edges rather than replicated.

\textbf{(ii) \textsc{PropagateRate}.}
\textsc{PropagateRate} distributes the output traffic rate
\(\lambda^{\mathrm{out}}_{t,d,j}\) of a successfully evaluated resource
\((d,j)\) among the resources \((d',j')\) that follow it on the selected paths
and records their resulting input rates
\(\lambda^{\mathrm{in}}_{t,d',j'}\) in \(\boldsymbol{\lambda}_t\).
In the four experimental topologies, the selected paths form a directed tree
rooted at each \ac{upf}.
Thus, if resource \((d,j)\) feeds resource \((d',j')\), every active \ac{gnb}
whose path contains \((d',j')\) also has a path containing \((d,j)\).
Because all active \acp{gnb} in a request have the same offered rate
\(\lambda_t\), the output of \((d,j)\) is divided in proportion to the number
of these \acp{gnb} served through each following resource.
To express this number, define
\begin{equation}
n_t(d,j)
:=
\left|
\left\{
a\in A_t\mid (d,j)\in P_t(\iota_t(a),a)
\right\}
\right|.
\end{equation}
The fraction of the output of \((d,j)\) directed to \((d',j')\) is therefore
\(n_t(d',j')/n_t(d,j)\).
The corresponding contribution and the total input rate of \((d',j')\) are
\begin{equation}
\begin{aligned}
\lambda^{\mathrm{contrib}}_{t,(d,j)\rightarrow(d',j')}
&:=
\lambda^{\mathrm{out}}_{t,d,j}
\frac{n_t(d',j')}{n_t(d,j)},\\
\lambda^{\mathrm{in}}_{t,d',j'}
&:=
\sum_{(d,j)\rightarrow(d',j')}
\lambda^{\mathrm{contrib}}_{t,(d,j)\rightarrow(d',j')}.
\end{aligned}
\end{equation}
Only resources traversed by the current request are processed, so the
denominator \(n_t(d,j)\) is positive whenever this propagation rule is used.

\subsubsection{Domain Demand Reporting and Atomic Admission}
\label{appendix:performance_atomic_admission}
The simulator does not reduce any resource capacity while evaluating the
individual resource requirements.
Instead, \textsc{EvaluateDomain} aggregates the resource-level indicators
\(\{Y_{t,d,j}\}_{j\in\Gamma_d(s_t)}\) into \(Y_{t,d}\) and, when all of them
equal \(1\), constructs the domain demand report \(\mathbf B_{t,d}\).
\textsc{EvaluateDomain} and the outer domain loop terminate at the first local
failure.
If all domain evaluations succeed, \textsc{AtomicAdmission} commits
\(\{\mathbf B_{t,d}\}_{d\in\mathcal D}\).
This prevents a request that fails in one domain from consuming capacity.

\textbf{(i) \textsc{EvaluateDomain}.}
Let \(Y_{t,d,j}\in\{0,1\}\) indicate whether resource \((d,j)\) can support its
trial demand \(\widetilde U_{t,d,j}\).
Conceptually, the domain-level indicator combines these resource-level
indicators:
\begin{equation}
Y_{t,d}
:=
\prod_{j\in\Gamma_d(s_t)}Y_{t,d,j}.
\end{equation}
The implementation evaluates this conjunction by short circuit: once some
\(Y_{t,d,j}=0\), it returns \(Y_{t,d}=0\) and skips the remaining resources.
Here, \(\widetilde U_{t,d,j}\) is the internal trial-demand value returned by
\textsc{EvaluateResource}, whereas \(B_{t,d,j}\) is the corresponding
domain-level report exposed by \textsc{EvaluateDomain}, consistent with the
controller interface in Sec.~\ref{subsubsec:per_round_feasibility}.
When \(Y_{t,d}=1\), every traversed resource has been evaluated, and the
implementation uses the zero-extension
\(\widetilde U_{t,d,j}:=0\) for \(j\notin\Gamma_d(s_t)\).
The complete domain demand report is then
\begin{equation}
B_{t,d,j}
:=
\widetilde U_{t,d,j},
\qquad j\in[J_d].
\end{equation}

\textbf{(ii) \textsc{AtomicAdmission}.}
The \ac{e2e} admission result is the conjunction of the domain-level
indicators:
\begin{equation}
Y_t
:=
\prod_{d\in\mathcal D}Y_{t,d}.
\end{equation}
The algorithm evaluates this conjunction by short circuit.
If \(Y_t=0\), it returns \(\mathbf U_t=\mathbf0\) and leaves
\(\mathbf C_t^{\mathrm{rem}}\) unchanged.
If \(Y_t=1\), all domain demand reports are complete, and
\textsc{AtomicAdmission} commits them jointly.
Let \(C^{\mathrm{rem}}_{t,d,j}\) denote the capacity of resource \((d,j)\)
available immediately before round \(t\), initialized as
\(C^{\mathrm{rem}}_{1,d,j}:=C_{d,j}^{\max}\), and write
\(\mathbf C_t^{\mathrm{rem}}:=(C^{\mathrm{rem}}_{t,d,j})_{d,j}\).
On the successful branch, the realized consumption and remaining capacity are
\begin{equation}
\begin{aligned}
U_{t,d,j}
&:=
B_{t,d,j},\\
C^{\mathrm{rem}}_{t+1,d,j}
&:=
C^{\mathrm{rem}}_{t,d,j}-U_{t,d,j}.
\end{aligned}
\end{equation}
Hence, if any traversed resource is infeasible, no capacity is consumed and all
remaining capacities are unchanged.

\subsection{Per-resource Required Capacity and Feasibility Check}
\label{appendix:performance_resource_level_evaluator}
For each resource \((d,j)\), the controller first tests whether the remaining
capacity can satisfy the delegated \ac{sla} checks under the current input
traffic and, if so, determines the minimum required capacity within that
physical range.
This calculation uses the delegated request
\(s_{t,d}:=s_d(s_t,\mathbf{x}_t)\), the input rate
\(\lambda^{\mathrm{in}}_{t,d,j}\), and the remaining capacity
\(C^{\mathrm{rem}}_{t,d,j}\).
Upon local success, it also determines the output rate
\(\lambda^{\mathrm{out}}_{t,d,j}\) passed to the resources evaluated next.
Algorithm~\ref{alg:appendix_phi_mapping} implements this calculation as
\textsc{EvaluateResource}.
The steps below assume \(\lambda^{\mathrm{in}}_{t,d,j}>0\)
(the implementation skips a zero-input resource and returns
\((Y_{t,d,j},\widetilde U_{t,d,j},\lambda^{\mathrm{out}}_{t,d,j})=(1,0,0)\)).

We first translate the delegated domain-level \ac{sla} requirements in
\(s_{t,d}\) into resource-level checks for the traversed resources
(Sec.~\ref{appendix:resource_level_check_construction}).
We then test the physical upper ratio and, when that upper ratio is feasible,
use bisection to estimate the minimum required ratio within the physical range
(Sec.~\ref{appendix:capacity_bounded_minimum_ratio}).
Finally, we add the experimental allocation variation, form the internal trial
demand \(\widetilde U_{t,d,j}\), check it against the remaining capacity, and
determine \(Y_{t,d,j}\) and, upon local success,
\(\lambda^{\mathrm{out}}_{t,d,j}\)
(Sec.~\ref{appendix:performance_stage2_output_construction}).

\subsubsection{Constructing Per-resource \ac{sla} Checks}
\label{appendix:resource_level_check_construction}
For every resource on the selected path, the evaluator derives resource-level
conditional-latency, throughput, and non-drop checks from the delegated
domain-level requirements.

\textbf{AN and CN (Direct Use):}
An experimental path contains one resource from each of the AN and CN domains.
The corresponding delegated domain-level checks are therefore applied directly
to those resources.

\textbf{TN (Fixed Uniform Split):}
A TN path may contain several TN resources, so the controller translates each
delegated TN-domain requirement into checks for the traversed resources.
This translation applies the same target- and guarantee-allocation construction
as the admissible decomposition in
Definition~\ref{def:admissible_decomp} and
Table~\ref{tab:sla_decomposition_constructions}.
Let \(H_{\max}^{\mathrm{TN}}\) be the maximum number of TN-managed resources on
any selected \ac{upf}--\ac{gnb} path in the environment.
Its values for the Tree, Tree dual-\ac{upf}, Ring, and Ring dual-\ac{upf}
configurations are \(1\), \(1\), \(3\), and \(2\), respectively.
For the current TN-domain latency threshold \(\delta_{t,\mathrm{TN}}\) and
probability targets \(g_{t,\mathrm{TN},i}\), the controller specializes both
the latency-target weight and each guarantee-allocation exponent uniformly to
\(1/H_{\max}^{\mathrm{TN}}\), yielding
\begin{equation}
\begin{aligned}
\delta_{t,\mathrm{TN},j}^{\mathrm{edge}}
&:=
\frac{\delta_{t,\mathrm{TN}}}{H_{\max}^{\mathrm{TN}}},
\\
g_{t,\mathrm{TN},i}^{\mathrm{edge}}
&:=
\left(g_{t,\mathrm{TN},i}\right)^{1/H_{\max}^{\mathrm{TN}}},
\qquad i\in\{1,2,3\}.
\end{aligned}
\end{equation}
On a selected path containing \(n_{\mathrm{hop}}\le H_{\max}^{\mathrm{TN}}\) TN resources,
and under the simulator's approximation that their success events are
independent, the two preservation relations are
\begin{equation}
\begin{aligned}
\sum_{j\text{ on path}}\delta_{t,\mathrm{TN},j}^{\mathrm{edge}}
&=
\frac{n_{\mathrm{hop}}}{H_{\max}^{\mathrm{TN}}}\delta_{t,\mathrm{TN}}
\le
\delta_{t,\mathrm{TN}},\\
\prod_{j\text{ on path}}g_{t,\mathrm{TN},i}^{\mathrm{edge}}
&=
\left(g_{t,\mathrm{TN},i}\right)^{n_{\mathrm{hop}}/H_{\max}^{\mathrm{TN}}}
\ge
g_{t,\mathrm{TN},i},
\quad i\in\{1,2,3\}.
\end{aligned}
\end{equation}
Thus, this split satisfies the same target-preservation and
guarantee-allocation relations as an admissible decomposition.

\subsubsection{Physical Feasibility and Minimum-Ratio Search}
\label{appendix:capacity_bounded_minimum_ratio}
To determine the resource allocation for the delegated request \(s_{t,d}\),
the controller first searches for the minimum allocation ratio, measured
relative to the initial capacity \(C_{d,j}^{\max}\), needed at resource
\((d,j)\).
It uses \(\beta\ge0\) as this normalized search variable, so the corresponding
candidate capacity is \(\beta C_{d,j}^{\max}\).
For the current \(s_{t,d}\) and input rate
\(\lambda_{t,d,j}^{\mathrm{in}}\), let
\(\mathrm{Feasible}_{t,d,j}(\beta)\in\{0,1\}\) indicate whether the
candidate capacity \(\beta C_{d,j}^{\max}\) satisfies the resource-local
\ac{sla} requirements.
Its formal definition is given in
Sec.~\ref{appendix:resource_local_feasibility_predicate}.
Under that definition, \(\mathrm{Feasible}_{t,d,j}(\beta)\) is nondecreasing
in \(\beta\).
Accordingly, the evaluation has two steps: (i) the controller first tests the
physical upper ratio and then, only when that ratio is feasible, (ii) searches for
the minimum required ratio within the physically available range.

\textbf{(i) Physical Upper-Ratio Check.}
The physical upper ratio before evaluating the current request is
\begin{equation}
\bar\alpha_{t,d,j}
:=
\frac{C^{\mathrm{rem}}_{t,d,j}}{C_{d,j}^{\max}}
\in[0,1].
\end{equation}
The stated range follows because the remaining capacity is initialized to
\(C_{d,j}^{\max}\) and only decreases upon admission.
The implementation sets
\(\mathrm{Feasible}_{t,d,j}(0)=0\) and first evaluates
\(\mathrm{Feasible}_{t,d,j}(\bar\alpha_{t,d,j})\).
If \(\mathrm{Feasible}_{t,d,j}(\bar\alpha_{t,d,j})=0\), monotonicity
implies that \textit{no ratio within the physically available range is
feasible}.
The procedure therefore returns
\((Y_{t,d,j},\widetilde U_{t,d,j},\lambda^{\mathrm{out}}_{t,d,j})
=(0,0,0)\); the short-circuit rules in
Algorithm~\ref{alg:appendix_phi_mapping} then terminate the request evaluation.

\textbf{(ii) Capacity-Bounded Minimum-Ratio Search.}
If \(\mathrm{Feasible}_{t,d,j}(\bar\alpha_{t,d,j})=1\), the controller
applies bisection over \([0,\bar\alpha_{t,d,j}]\).
The feasible side of the final interval is returned as
\(\widehat\beta^{\mathrm{req}}_{t,d,j}\), the numerical estimate of the
minimum required ratio within the physical range.

\subsubsection{Trial Demand, Local Feasibility, and Output Rate}
\label{appendix:performance_stage2_output_construction}
Using the physical upper ratio \(\bar\alpha_{t,d,j}\) and the result of the
preceding minimum-ratio search, the \textsc{EvaluateResource} procedure
(i) constructs the trial demand \(\widetilde U_{t,d,j}\), (ii) determines the
resource-level feasibility indicator \(Y_{t,d,j}\), and, only upon local
success, (iii) computes the output rate \(\lambda^{\mathrm{out}}_{t,d,j}\)
propagated to downstream resources.
These three steps are detailed below.

\textbf{(i) Trial Demand Construction.}
To represent variation in a domain-specific controller's allocation decisions
that is not captured by the analytical queueing calculation, the procedure adds
a small nonnegative random overhead to
\(\widehat\beta^{\mathrm{req}}_{t,d,j}\).
After
\(\mathrm{Feasible}_{t,d,j}(\bar\alpha_{t,d,j})=1\) has been established, the
procedure samples
\begin{equation}
\varepsilon_{t,d,j}
\sim
\operatorname{Uniform}(0,\varepsilon_{\max,d})
\end{equation}
independently across resources and rounds, with
\(\varepsilon_{\max,d}=0.05\) for every \(d\in\mathcal D\) in all experiments,
and forms
\begin{equation}
\begin{aligned}
\widetilde\beta_{t,d,j}
&:=
\widehat\beta^{\mathrm{req}}_{t,d,j}e^{\varepsilon_{t,d,j}},\\
\widetilde U_{t,d,j}
&:=
\widetilde\beta_{t,d,j}C_{d,j}^{\max}.
\end{aligned}
\end{equation}
The feasible-side estimate \(\widehat\beta^{\mathrm{req}}_{t,d,j}\) represents
the computed minimum required ratio, so the nonnegative multiplier models
random allocation overhead above that value.
This perturbation is the simulator-level implementation of the allocation
variation described above; it is distinct from the stationary-response law
\(\nu_{s,\mathbf{x}}\) for the
final provisioning outcome \((Y,\mathbf U)\).

\textbf{(ii) Local Feasibility Check.}
Since
\(\widetilde\beta_{t,d,j}
=\widehat\beta^{\mathrm{req}}_{t,d,j}e^{\varepsilon_{t,d,j}}\)
and \(\varepsilon_{t,d,j}\ge0\), we have
\(\widetilde\beta_{t,d,j}\ge
\widehat\beta^{\mathrm{req}}_{t,d,j}\).
The monotonicity of \(\mathrm{Feasible}_{t,d,j}(\beta)\) ensures that
\(\widetilde\beta_{t,d,j}\) also satisfies the resource-local \ac{sla}
requirements; local feasibility additionally requires
\(\widetilde\beta_{t,d,j}\le\bar\alpha_{t,d,j}\).
Within this branch, the resource-level feasibility indicator is therefore
\begin{equation}
Y_{t,d,j}
:=
\mathbf{1}\{\widetilde\beta_{t,d,j}\le\bar\alpha_{t,d,j}\}.
\end{equation}
If \(Y_{t,d,j}=0\), the request evaluation terminates before computing or
propagating an output rate; the fixed return signature uses zero for its unused
third component.

\textbf{(iii) Output-Rate Evaluation.}
When \(Y_{t,d,j}=1\), the procedure evaluates and returns the output rate at the
trial ratio:
\begin{equation}
\lambda^{\mathrm{out}}_{t,d,j}
:=
\Lambda^{\mathrm{out}}_{t,d,j}
\!\left(\widetilde\beta_{t,d,j}\right).
\end{equation}
Here, \(\Lambda^{\mathrm{out}}_{t,d,j}(\beta)\) denotes the accepted-packet
rate under candidate ratio \(\beta\), formally defined in
\eqref{eq:resource_accepted_packet_rate}.

\subsection{M/M/1/K-based SLA Metric Calculation}
\label{appendix:performance_per_resource_metrics}
This subsection first instantiates the M/M/1/\(K\) model for each resource from the
environment configuration, request-dependent traffic descriptors, current
input rate \(\lambda_{t,d,j}^{\mathrm{in}}\), and candidate allocation ratio
\(\beta\) (Sec.~\ref{appendix:mm1k_modeling}).
It then uses the instantiated model to determine
\(\mu_{t,d,j}(\beta)\), \(\rho_{t,d,j}(\beta)\), the queue-length
distribution, and the accepted-packet rate
\(\Lambda_{t,d,j}^{\mathrm{out}}(\beta)\)
(Sec.~\ref{appendix:per_resource_queueing_evaluation}).
Finally, these quantities determine the three \ac{sla}-success probabilities
(Sec.~\ref{appendix:per_resource_sla_probability_models}), which are
combined to define \(\mathrm{Feasible}_{t,d,j}(\beta)\)
(Sec.~\ref{appendix:resource_local_feasibility_predicate}).

\subsubsection{M/M/1/K Model Instantiation}
\label{appendix:mm1k_modeling}
This subsubsection instantiates the common M/M/1/\(K\) model used
for every traversed resource \((d,j)\).
The environment configuration assigns each resource its maximum capacity
\(C_{d,j}^{\max}\), queue depth \(K_{d,j}\), and propagation distance
\(\ell_{d,j}\), and fixes the propagation speed \(v_{\mathrm{prop}}\) and
throughput window \(T_{\mathrm{win}}\).
For request \(s_t\), the traffic/service profile
\(\boldsymbol{\theta}_t\) specifies the common per-\ac{gnb} source
packet-arrival rate \(\lambda_t\) and the mean packet size
\(m_{B,t}^{(1)}\) in bits.
At each resource, the experimental model treats packet arrivals as a Poisson
process with the current input rate \(\lambda_{t,d,j}^{\mathrm{in}}\) and
packet sizes as independent exponential random variables.
Consequently, the second packet-size moment is
\(m_{B,t}^{(2)}=2(m_{B,t}^{(1)})^2\).

\subsubsection{Queueing Quantities and Accepted Rate}
\label{appendix:per_resource_queueing_evaluation}
For each resource \((d,j)\), candidate ratio \(\beta>0\), and round-specific
input rate \(\lambda_{t,d,j}^{\mathrm{in}}\), the service rate
\(\mu_{t,d,j}(\beta)\) and traffic intensity
\(\rho_{t,d,j}(\beta)\) used by the capacity-bounded feasibility search are
\begin{equation}
\begin{aligned}
\mu_{t,d,j}(\beta)
&:=
\frac{\beta C_{d,j}^{\max}}{m_{B,t}^{(1)}},\\
\rho_{t,d,j}(\beta)
&:=
\frac{\lambda_{t,d,j}^{\mathrm{in}}}{\mu_{t,d,j}(\beta)}.
\end{aligned}
\end{equation}
\(Q_{d,j}\in\{0,\ldots,K_{d,j}\}\) denotes the steady-state number of packets
in the M/M/1/\(K_{d,j}\) system at resource \((d,j)\), including the packet in
service.
Here and below, \(\Pr_{\beta}\) denotes probability under the current request,
input rate, and candidate ratio \(\beta\).
For \(\beta>0\), let
\(p_{t,d,j,n}(\beta):=\Pr_{\beta}(Q_{d,j}=n)\).
Writing \(\rho:=\rho_{t,d,j}(\beta)\), it is given by
\begin{equation}
p_{t,d,j,n}(\beta)
=
\begin{cases}
\dfrac{1-\rho}{1-\rho^{K_{d,j}+1}}\,\rho^n,
& \rho \neq 1,\\[2mm]
\dfrac{1}{K_{d,j}+1}, & \rho = 1,
\end{cases}
\end{equation}
for \(n\in\{0,\ldots,K_{d,j}\}\).
For \(\beta>0\), the resulting accepted-packet rate is
\begin{equation}
\Lambda^{\mathrm{out}}_{t,d,j}(\beta)
:=
\lambda^{\mathrm{in}}_{t,d,j}
\left(1-p_{t,d,j,K_{d,j}}(\beta)\right).
\label{eq:resource_accepted_packet_rate}
\end{equation}

\subsubsection{Per-resource SLA Probability Models}
\label{appendix:per_resource_sla_probability_models}
The queueing quantities above determine the non-drop, conditional-latency,
and throughput probabilities used by the resource-local feasibility
predicate \(\mathrm{Feasible}_{t,d,j}\).

\textbf{(i) Non-drop Probability.}
Define the per-resource non-drop event by
\(\mathcal{N}_{d,j}:=\{Q_{d,j}<K_{d,j}\}\).
For candidate ratio \(\beta>0\), the non-drop probability is
\begin{equation}
\Pr_{\beta}(\mathcal{N}_{d,j})
=
1-p_{t,d,j,K_{d,j}}(\beta).
\end{equation}

\textbf{(ii) Conditional Latency Probability.}
By the \ac{pasta} property~\cite{wolff1982pasta}, an arriving packet observes
the stationary queue-length distribution \(p_{t,d,j,n}(\beta)\).
Hence, its queue-length distribution conditional on
\(\mathcal{N}_{d,j}\) is
\begin{equation}
\begin{aligned}
p^{\mathcal N}_{t,d,j,n}(\beta)
&:=
\Pr_{\beta}(Q_{d,j}=n\mid\mathcal{N}_{d,j})\\
&=
\frac{p_{t,d,j,n}(\beta)}
{1-p_{t,d,j,K_{d,j}}(\beta)}.
\end{aligned}
\end{equation}
This definition applies for \(n=0,\ldots,K_{d,j}-1\).
The Erlang mixture models the system delay
\(D^{\mathrm{sys}}_{d,j}\), which includes both waiting and service time.
Let \(\ell_{d,j}\) be the propagation distance of resource \((d,j)\), and define
\begin{equation}
\tau^{\mathrm{prop}}_{d,j}
:=
\frac{\ell_{d,j}}{v_{\mathrm{prop}}},
\qquad
D_{d,j}
:=
D^{\mathrm{sys}}_{d,j}+\tau^{\mathrm{prop}}_{d,j}.
\end{equation}
For a positive input rate and any latency threshold \(\delta\), the conditional
latency probability is
\begin{equation}
\begin{aligned}
&\Pr_{\beta}(D_{d,j}\le\delta\mid\mathcal{N}_{d,j})\\
&\quad =
\sum_{n=0}^{K_{d,j}-1}
p^{\mathcal N}_{t,d,j,n}(\beta)
F_{\mathrm{Erlang}(n+1,\mu_{t,d,j}(\beta))}
\!\left(\delta-\tau^{\mathrm{prop}}_{d,j}\right).
\end{aligned}
\end{equation}
Here, for \(k\in\mathbb{N}\), \(\mu>0\), and
\(Z\sim\mathrm{Erlang}(k,\mu)\), the CDF
\(F_{\mathrm{Erlang}(k,\mu)}\) is given by
\begin{equation}
\begin{aligned}
F_{\mathrm{Erlang}(k,\mu)}(x)
&=\Pr(Z\le x)\\
&=
\begin{cases}
0, & x<0,\\
1-e^{-\mu x}\displaystyle\sum_{m=0}^{k-1}\dfrac{(\mu x)^m}{m!},
& x\ge 0.
\end{cases}
\end{aligned}
\end{equation}

\textbf{(iii) Throughput Probability.}
Let \(T_{\mathrm{win}}>0\) denote the duration of the observation window used
for the throughput check.
In all experiments, we set \(T_{\mathrm{win}}=0.1~\mathrm{s}\).
The random variable \(\Theta_{t,d,j,T_{\mathrm{win}}}\) represents the average
accepted-packet bit rate over this window, obtained by dividing the total size
of the packets accepted during the window by \(T_{\mathrm{win}}\):
\begin{equation}
\Theta_{t,d,j,T_{\mathrm{win}}}
:=
\frac{1}{T_{\mathrm{win}}}
\sum_{m=1}^{N_{t,d,j}(T_{\mathrm{win}})} L^{\mathrm{pkt}}_{t,m},
\end{equation}
where \(N_{t,d,j}(\cdot)\) is the accepted-packet counting process and
\(\{L^{\mathrm{pkt}}_{t,m}\}_m\) are packet sizes, independent of one another
and of \(N_{t,d,j}(\cdot)\), with moments \(m_{B,t}^{(1)}\) and
\(m_{B,t}^{(2)}\).
To obtain a tractable throughput probability, the experimental model adopts
two approximations for a candidate ratio \(\beta>0\):
it models \(N_{t,d,j}(\cdot)\) as a Poisson process with rate
\(\Lambda^{\mathrm{out}}_{t,d,j}(\beta)\) and applies a normal approximation
to the resulting compound-Poisson sum
\(T_{\mathrm{win}}\Theta_{t,d,j,T_{\mathrm{win}}}\).
Under these approximations,
\begin{equation}
\Theta_{t,d,j,T_{\mathrm{win}}}
\approx
\mathcal{N}\!\left(
\Lambda^{\mathrm{out}}_{t,d,j}(\beta)m_{B,t}^{(1)},\;
\frac{\Lambda^{\mathrm{out}}_{t,d,j}(\beta)m_{B,t}^{(2)}}{T_{\mathrm{win}}}
\right).
\end{equation}
Hence, 
\begin{equation}
\Pr_{\beta}(\Theta_{t,d,j,T_{\mathrm{win}}}\ge\theta)
\approx
1-\Phi\!\left(
\frac{\theta-\Lambda^{\mathrm{out}}_{t,d,j}(\beta)m_{B,t}^{(1)}}
{\sqrt{\Lambda^{\mathrm{out}}_{t,d,j}(\beta)m_{B,t}^{(2)}
/T_{\mathrm{win}}}}
\right).
\end{equation}

The two approximations above are motivated as follows.
\noindent\textbf{(Poisson Approximation.)}
The model assumes that finite-buffer blocking is sufficiently rare for its
effect on packet-arrival timing at downstream resources to be negligible.
This low-blocking assumption is expressed as
\begin{equation}
\begin{aligned}
p_{t,d,j,K_{d,j}}(\beta)
&=
\Pr_{\beta}(Q_{d,j}=K_{d,j})\\
&=
1-\Pr_{\beta}(\mathcal{N}_{d,j})
\ll 1.
\end{aligned}
\end{equation}
Under this Poisson approximation, the throughput probability is evaluated
using only the accepted-packet rate
\(\Lambda^{\mathrm{out}}_{t,d,j}(\beta)\) in
\eqref{eq:resource_accepted_packet_rate}, without tracking the queue-state
dependence of accepted packet arrivals.
Because this rate incorporates the mean packet loss predicted by the
M/M/1/\(K_{d,j}\) model, the approximation retains its mean effect while
neglecting its effect on downstream packet-arrival timing.

\noindent\textbf{(Normal Approximation.)}
The model assumes that the expected number of accepted packets
\(\Lambda^{\mathrm{out}}_{t,d,j}(\beta)T_{\mathrm{win}}\) is sufficiently
large.
Because the independent packet sizes have a finite second moment, the central
limit theorem then supports the normal approximation to the compound-Poisson
sum \(T_{\mathrm{win}}\Theta_{t,d,j,T_{\mathrm{win}}}\).

\subsubsection{Resource-local Feasibility Predicate}
\label{appendix:resource_local_feasibility_predicate}
For the current delegated request \(s_{t,d}\) and a resource with positive
input rate \(\lambda_{t,d,j}^{\mathrm{in}}\), the controller evaluates each
candidate ratio \(\beta>0\) by combining conditional-latency, throughput, and
non-drop checks.
To express their domain-specific targets uniformly, define
\begin{equation}
\begin{aligned}
\bar{\delta}_{t,d,j}
&:=
\begin{cases}
\delta_{t,\mathrm{TN},j}^{\mathrm{edge}}, & d=\mathrm{TN},\\
\delta_{t,d}, & d\in\{\mathrm{AN},\mathrm{CN}\},
\end{cases}\\
\bar g_{t,d,j,i}
&:=
\begin{cases}
g_{t,\mathrm{TN},i}^{\mathrm{edge}}, & d=\mathrm{TN},\\
g_{t,d,i}, & d\in\{\mathrm{AN},\mathrm{CN}\},
\end{cases}
\end{aligned}
\end{equation}
where \(i=1,2,3\) correspond to conditional latency, throughput, and non-drop,
respectively.
The derivation of the TN-specific targets is given in
Sec.~\ref{appendix:resource_level_check_construction}.

Using these targets, \textsc{EvaluateResource} evaluates the predicate
\begin{equation}
\begin{aligned}
\mathrm{Feasible}_{t,d,j}(\beta)
&:=
\mathbf{1}\!\left\{
\mathcal{C}^{(D)}_{t,d,j}(\beta)\wedge
\mathcal{C}^{(\Theta)}_{t,d,j}(\beta)
\wedge \mathcal{C}^{(N)}_{t,d,j}(\beta)
\right\},
\end{aligned}
\end{equation}
where
\begin{equation}
\begin{aligned}
\mathcal{C}^{(D)}_{t,d,j}(\beta)
&:=
\Pr_{\beta}(D_{d,j}\le \bar{\delta}_{t,d,j}\mid \mathcal{N}_{d,j})
\ge \bar g_{t,d,j,1},
\\
\mathcal{C}^{(\Theta)}_{t,d,j}(\beta)
&:=
\Pr_{\beta}(\Theta_{t,d,j,T_{\mathrm{win}}}\ge \theta_t)
\ge \bar g_{t,d,j,2},
\\
\mathcal{C}^{(N)}_{t,d,j}(\beta)
&:=
\Pr_{\beta}(\mathcal{N}_{d,j})\ge \bar g_{t,d,j,3}.
\end{aligned}
\end{equation}

As shown below, \(\mathrm{Feasible}_{t,d,j}(\beta)\) is nondecreasing in
\(\beta\), which justifies applying bisection to compute
\(\widehat\beta^{\mathrm{req}}_{t,d,j}\), the numerical estimate of the
minimum feasible ratio (Sec.~\ref{appendix:capacity_bounded_minimum_ratio}).

\textbf{Monotonicity for Bisection.}
Increasing \(\beta\) increases the service rate
\(\mu_{t,d,j}(\beta)\), which does not decrease the conditional-latency or
non-drop probability.
It also does not decrease the accepted-packet rate
\(\Lambda^{\mathrm{out}}_{t,d,j}(\beta)\), and hence the throughput probability
under the normal approximation.
Thus, under the queueing and approximation models above, the probabilities
defining \(\mathcal{C}^{(D)}_{t,d,j}(\beta)\),
\(\mathcal{C}^{(\Theta)}_{t,d,j}(\beta)\), and
\(\mathcal{C}^{(N)}_{t,d,j}(\beta)\) are nondecreasing in \(\beta\).
Therefore, \(\mathrm{Feasible}_{t,d,j}(\beta)\) is also nondecreasing.

\subsection{NR-Parameterized Access-Link Capacity}
\label{appendix:access_link_service_capacity}
Each downlink is represented by a
finite-buffer queue with 128 packet slots and zero propagation distance. Its
capacity is calculated independently per \ac{gnb} from the following profile:
273 \acp{rb}, numerology \(\mu=3\) (120-kHz subcarrier spacing), 256-QAM
(\(Q_m=8\)), eight MIMO layers, maximum coding-rate factor
\(R_{\max}=948/1024\), downlink overhead \(\mathrm{OH}=0.18\), scaling
factor \(f=1\), and downlink duty factor \(r_{\mathrm{DL}}=1\).

We convert these parameters to the nominal downlink rate using the
TS~38.306 approximate maximum-data-rate expression~\cite{3gpp:38306}. For one
downlink, its bit/s form is
\begin{equation}
\label{eq:custom_access_capacity_reference}
C_{\mathrm{AN}}^{\max}
=
v_{\mathrm{layers}} Q_m f R_{\max}
\left(\frac{N_{\mathrm{RB}}\cdot 12}{T_s^{\mu}}\right)
(1-\mathrm{OH})r_{\mathrm{DL}},
\end{equation}
where \(T_s^{\mu}=10^{-3}/(14\cdot 2^{\mu})\) is the average OFDM symbol
duration under the normal cyclic prefix. Substitution gives
\begin{equation}
\label{eq:custom_access_capacity_value}
C_{\mathrm{AN}}^{\max}
\approx17.83\ \mathrm{Gbit/s},
\end{equation}
per \ac{gnb}.

The pair \(N_{\mathrm{RB}}=273\) and \(\mu=3\) is not a transmission-bandwidth
configuration specified in TS~38.104: at 120-kHz subcarrier spacing, its listed
maximum is 264 \acp{rb} for a 400-MHz channel~\cite{3gpp:38104}. We therefore
do not present this profile as a standards-compliant NR carrier. Instead, it is
an NR-parameterized custom wideband experimental profile.


For candidate allocation ratio \(\beta\), the access-queue service rate is
\begin{equation}
\label{eq:custom_access_queue_service_rate}
\mu_{t,\mathrm{AN},j}(\beta)
=
\frac{\beta C_{\mathrm{AN}}^{\max}}{m_{B,t}^{(1)}}.
\end{equation}
\eqref{eq:custom_access_queue_service_rate} is consistent with the
general per-resource definition in
Sec.~\ref{appendix:per_resource_queueing_evaluation}.


\end{document}

  \documentclass[src/main.tex]{subfiles}

\begin{document}

\section{Total Reward and Per-Class Results under Mixed Traffic}
\label{appendix:mixed_traffic_reward_class_results}

This section complements the Successful-Slice Count results in
Sec.~\ref{subsec:traffic_mixture_tree} with Total Reward and its class-specific
components from the same runs.  Let \(N_{r}^{\mathrm{U}}\) and
\(N_{r}^{\mathrm{B}}\) be the admitted \ac{urllc} and \ac{embb} counts in run
\(r\), respectively.  Under the class prices in
Sec.~\ref{subsubsec:nsr_setup}, Total Reward is
\(R_r=N_{r}^{\mathrm{U}}+50N_{r}^{\mathrm{B}}\).  Fig.~\ref{fig:mixed_traffic_reward_class}
shows the two class contributions.

\begin{figure*}[t]
\centering
\subfloat[Tree topology.\label{fig:mixed_traffic_reward_class_tree}]{%
\includegraphics[width=0.49\textwidth]{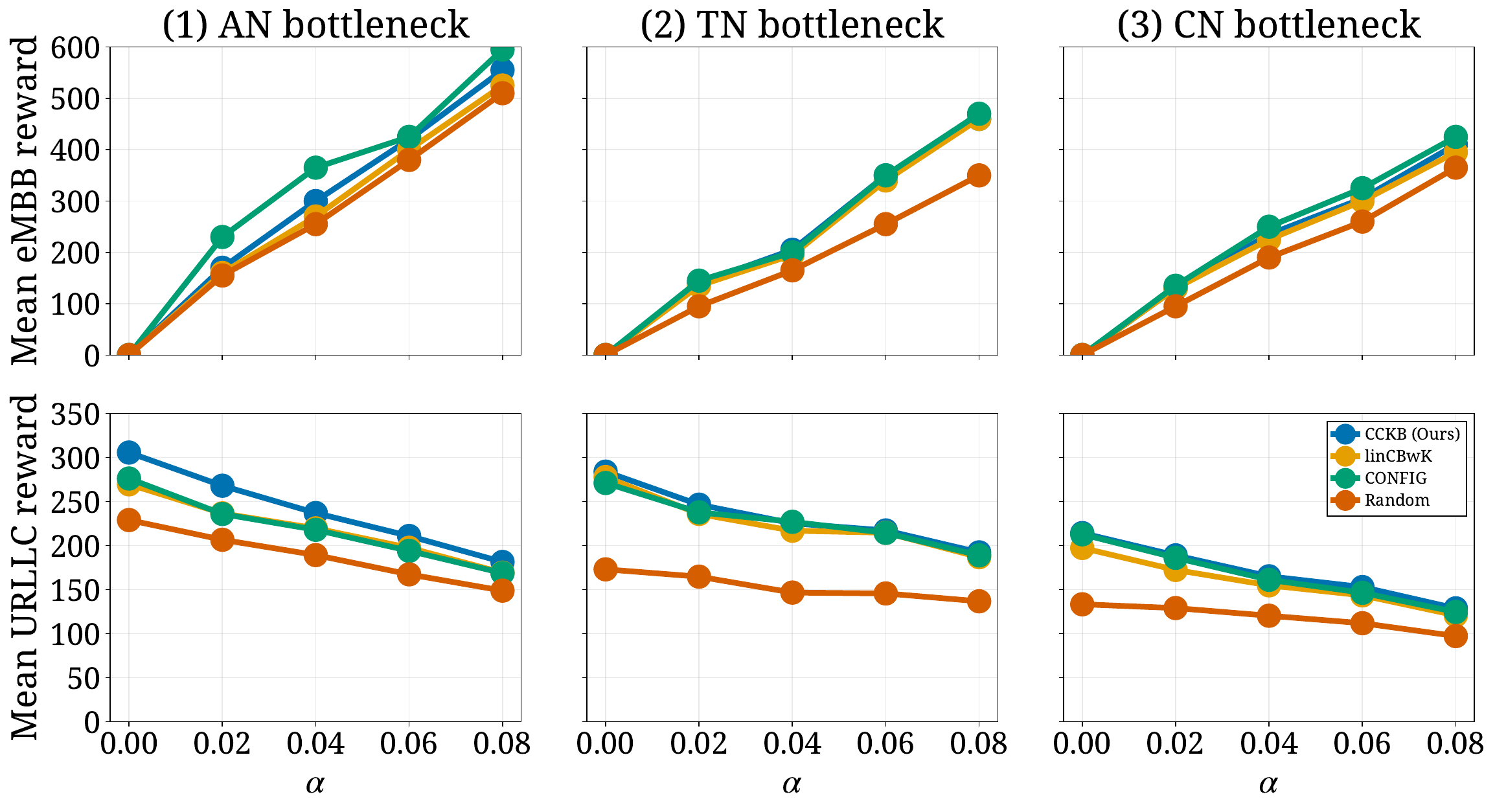}}
\hfill
\subfloat[Ring topology.\label{fig:mixed_traffic_reward_class_ring}]{%
\includegraphics[width=0.49\textwidth]{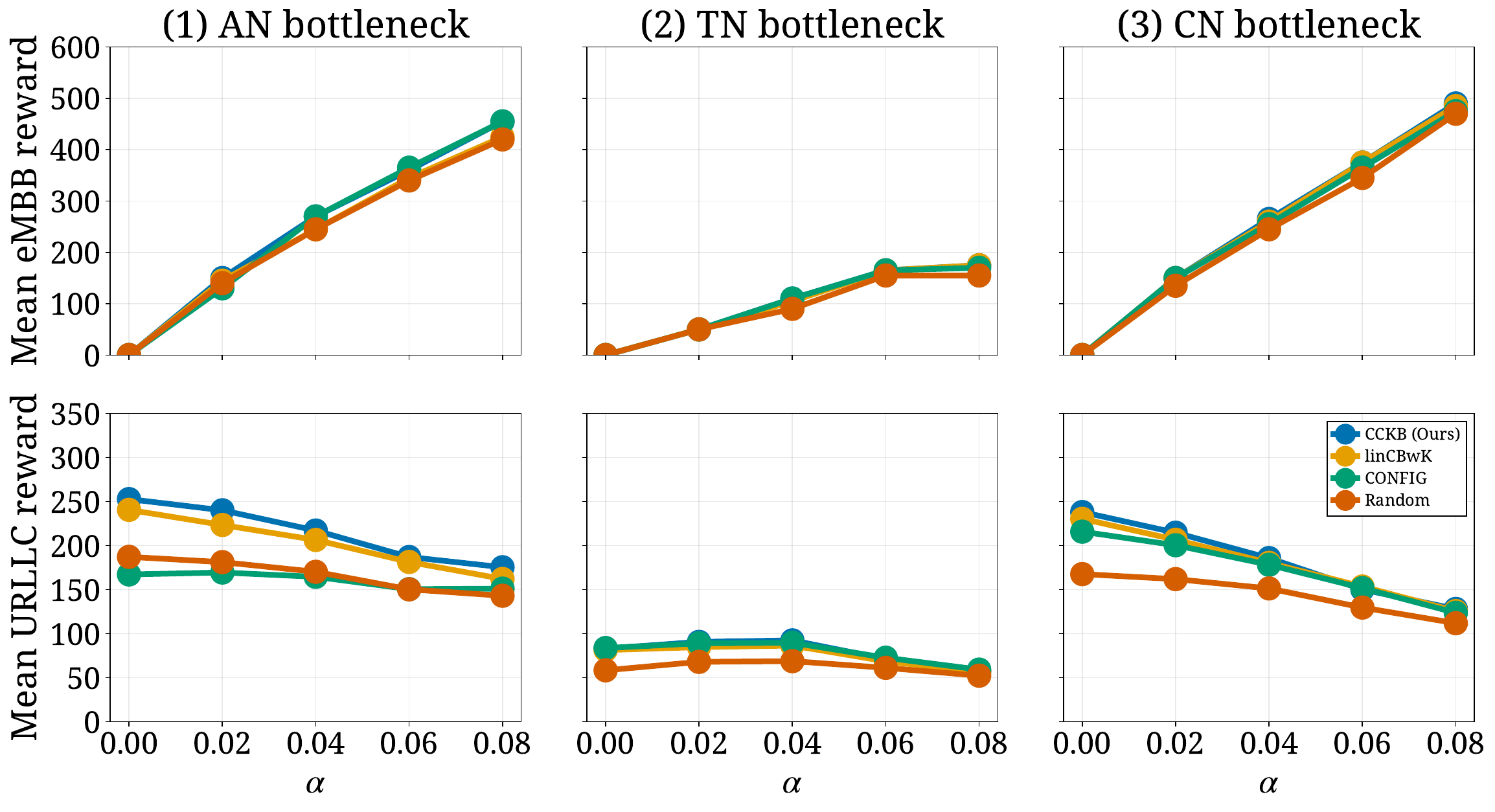}}
\caption{Mean class-specific reward contributions over 10 runs under mixed
traffic.  In each subfigure, the upper panels show the \ac{embb} contribution
\(50N_r^{\mathrm{B}}\), and the lower panels show the \ac{urllc} contribution
\(N_r^{\mathrm{U}}\), under AN, TN, and CN bottlenecks.  The sum of the two
contributions is Total Reward.}
\label{fig:mixed_traffic_reward_class}
\end{figure*}

Across the 24 mixed-traffic points, \ac{cckb} attains higher Total Reward than
\ac{lincbwk} at 23 points, with a mean difference of +19, and than Random at
all 24 points, with a mean difference of +77.  Against \ac{config}, \ac{cckb}
is higher at 15 points, with a mean difference of +6.  Averaged over the mixing
ratios, \ac{cckb} admits more \ac{urllc} slices in each topology--bottleneck
condition; eight of the nine points where \ac{config} attains higher Total
Reward are driven by the \ac{embb} component rather than by \ac{urllc}
admissions.

\end{document}

  \documentclass[src/main.tex]{subfiles}

\begin{document}

\section{Hyperparameter Robustness}
\label{appendix:hyperparameter_robustness}

\subsection{Setup}
\label{appendix:hyperparameter_robustness_setup}
The setup follows Sec.~\ref{subsubsec:ablation_relaxation_setup}.

\textbf{Sweep Design:}
In the \(\rho\) sweep, we fix \(c_{\beta}=0.1\) and vary
\(\rho\in\{0.1,0.4,0.7,1.0\}\).
In the \(c_{\beta}\) sweep, we fix \(\rho=1.0\) and vary
\(c_{\beta}\in\{0.1,0.4,0.7,1.0\}\).

\subsection{Result}
\label{appendix:hyperparameter_robustness_result}

The two sweeps exhibit different patterns.

\begin{figure}[t]
\centering
\includegraphics[width=\linewidth]{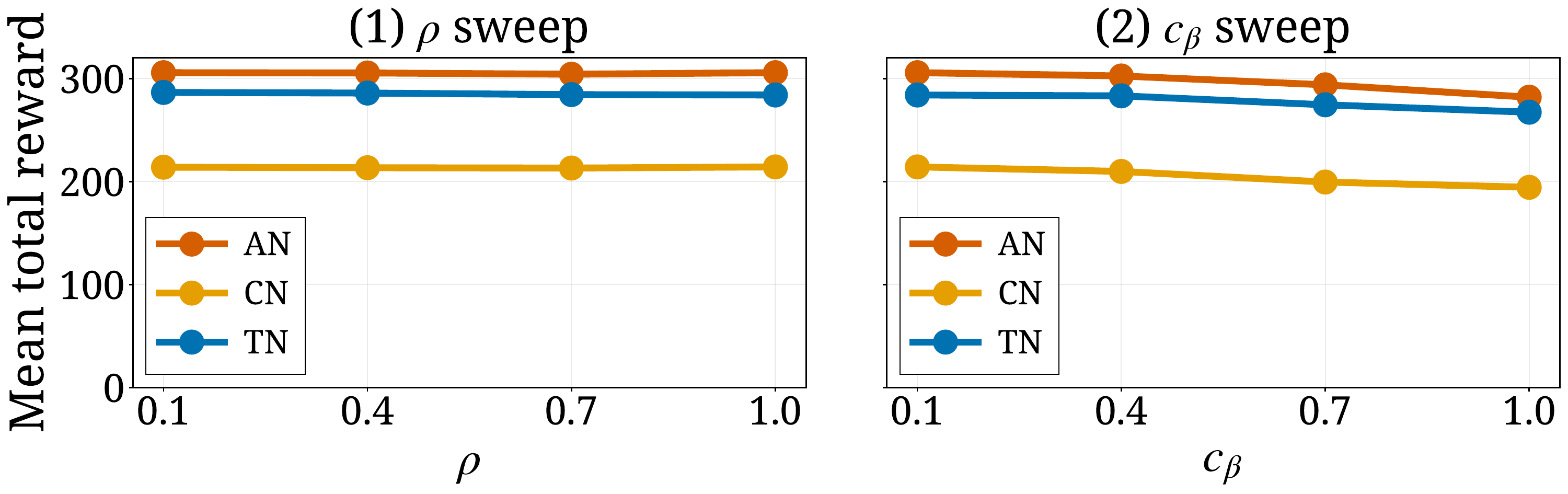}
\caption{Hyperparameter sensitivity of the proposed algorithm. (1) Effect of the dual-cap parameter \(\rho\) with \(c_{\beta}=0.1\). (2) Effect of the exploration-scale parameter \(c_{\beta}\) with \(\rho=1.0\). Each curve reports mean Total Reward over 10 runs for each bottleneck condition.}
\label{fig:hyperparameter_total_reward_comparison}
\end{figure}

The \(\rho\) sweep shows little visible change in Total Reward across the tested range
(Fig.~\ref{fig:hyperparameter_total_reward_comparison}).
The reward curves remain nearly flat for all three bottleneck conditions, suggesting that the proposed method is relatively insensitive to the precise value of the dual-cap parameter in this setting.

For the \(c_{\beta}\) sweep, larger values consistently degrade performance across all three bottleneck settings.
This trend suggests that, in the present problem setting, emphasizing exploration too strongly is not beneficial.

\end{document}

}

\end{document}